\newtheorem{theorem}{Theorem}[section]
\newtheorem{lemma}[theorem]{Lemma}
\newtheorem{corollary}[theorem]{Corollary}
\newtheorem{proposition}[theorem]{Proposition}
\newtheorem{fact}[theorem]{Fact}
\newtheorem{definition}{Definition}[section]
\newtheorem{remark}{Remark}[section]
\newtheorem{problem}{Problem}
\newtheorem{question}{Question}
\crefname{question}{Question}{Questions}
\newcommand{\braket}[2]{\left< #1 \vphantom{#2} \middle| #2 \vphantom{#1} \right>} 
\DeclarePairedDelimiter\rbra{\lparen}{\rparen}
\DeclarePairedDelimiter\sbra{\lbrack}{\rbrack}
\DeclarePairedDelimiter\cbra{\{}{\}}
\DeclarePairedDelimiter\abs{\lvert}{\rvert}
\DeclarePairedDelimiter\Abs{\lVert}{\rVert}
\DeclarePairedDelimiter\ket{\lvert}{\rangle}
\DeclarePairedDelimiter\bra{\langle}{\rvert}
\newcommand{\set}[2] {\left\{\, #1 \colon #2 \,\right\}}
\newcommand{\tr} {\operatorname{tr}}
\newcommand{\polylog} {\operatorname{polylog}}
\newcommand{\rank} {\operatorname{rank}}
\newcommand{\spanspace} {\operatorname{span}}
\newcommand{\HA}[0]{\mathcal{H}_{\textup{A}}}
\newcommand{\HB}[0]{\mathcal{H}_{\textup{B}}}
\newcommand{\HAB}[0]{\mathcal{H}_{\textup{AB}}}
\newcommand{\ketbra}[2]{\ensuremath{\ket{#1}\!\bra{#2}}}
\renewcommand{\braket}[2]{\ensuremath{\langle {#1} \vert {#2} \rangle}}
\newcommand{\kett}[1]{|#1\rangle\!\rangle}
\newcommand{\bbra}[1]{\langle\!\langle#1|}
\newcommand{\kettbbra}[2]{\ensuremath{\kett{#1}\!\bbra{#2}}}
\newcommand{\bbrakett}[2]{\ensuremath{\langle\!\langle{#1}\vert{#2}\rangle\!\rangle}}
\newcommand{\TA}[0]{\textup{A}}
\newcommand{\TB}[0]{\textup{B}}
\newcommand{\TAB}[0]{\textup{AB}}
\newcommand{\footremember}[2]{%
    \footnote{#2}
    \newcounter{#1}
    \setcounter{#1}{\value{footnote}}%
}
\begin{document}

\title{Local Test for Unitarily Invariant Properties \\ of Bipartite Quantum States}

\author{
    Kean Chen \footremember{1}{Kean Chen is with the Department of Computer and Information Science, University of Pennsylvania, Philadelphia, United States (e-mail: \url{keanchen.gan@gmail.com}).}
    \and Qisheng Wang \footremember{2}{Qisheng Wang is with the School of Informatics, University of Edinburgh, Edinburgh, United Kingdom (e-mail: \url{QishengWang1994@gmail.com}). Part of the work was done when the author was with the Graduate School of Mathematics, Nagoya University, Nagoya, Japan.}
    \and Zhicheng Zhang \footremember{3}{Zhicheng Zhang is with the Centre for Quantum Software and Information, University of Technology Sydney, Sydney, Australia (e-mail: \url{iszczhang@gmail.com}).}
}
\date{}

\maketitle

\begin{abstract}
    We study the power of local test for bipartite quantum states.  
    Our central result is that,
    for properties of bipartite pure states,
    unitary invariance on one part implies an \textit{optimal} (over all global testers) local tester acting only on the other part.
    As an application, we show that
    \begin{itemize}
        \item Purified samples offer no advantage in property testing of mixed states.
        \item A matching lower bound $\Omega(r^2/\varepsilon^2)$ for testing the Schmidt rank of bipartite states with perfect completeness, settling an open question raised in the survey of \hyperlink{cite.MdW16}{Montanaro and de Wolf (ToC 2016)}.
        \item A lower bound $\Omega((\sqrt{n}+\sqrt{r})\cdot\sqrt{r}/\varepsilon^2)$ for testing whether an $n$-partite state is a matrix product state of bond dimension $r$ or $\varepsilon$-far, improving the prior lower bounds $\Omega(\sqrt{n}/\varepsilon^2)$ by \hyperlink{cite.soleimanifar2022testing}{Soleimanifar and Wright (SODA 2022)} and $\Omega(\sqrt{r})$ by \hyperlink{cite.aaronson2022quantum}{Aaronson et al.\ (ITCS 2024)}. 
        \item A matching lower bound $\Omega(d/\varepsilon^2)$ for testing whether a $d$-dimensional bipartite state is maximally entangled or $\varepsilon$-far, showing that the algorithm of \hyperlink{cite.OW15}{O'Donnell and Wright (STOC 2015)} is optimal for this task.
    \end{itemize}
    We also show other applications in sample complexity and query complexity. 
    In addition, our central result can be extended when
    the tested state is mixed: 
    one-way LOCC is sufficient to realize the optimal tester.
\end{abstract}

\newpage
\tableofcontents
\newpage

\section{Introduction}

Bipartite quantum states are the most basic objects
for quantum entanglement (cf.\ \cite{HHHH09}) to manifest (e.g., EPR pairs~\cite{EPR35})
and to be exploited (e.g., superdense coding~\cite{BJ92} and quantum teleportation~\cite{BBC+93}).
In this paper, we consider the task of testing a property $\mathcal{P}=(\mathcal{P}^{\textup{yes}}, \mathcal{P}^{\textup{no}})$ of bipartite pure states;
that is, given samples of a bipartite pure state,
determine which one of the disjoint sets $\mathcal{P}^{\textup{yes}}$ or $\mathcal{P}^{\textup{no}}$ it falls in.
There is a folklore duality:
\begin{equation}
    \label{eq:dual}
    \mathcal{P}\text{ is unitarily invariant on one part} \Longleftrightarrow \mathcal{P}\text{ is locally testable on the other part}.
\end{equation}
This duality, however, does not concern the complexity of the local tester.
Is it efficient?
To this question we give a very positive answer:
\begin{itemize}
    \item Local tester is able to achieve the \textit{optimal} sample complexity over all global testers.
\end{itemize}
Furthermore, one may wonder if this optimality still applies when the provided states are mixed rather than pure (while the property to be verified, i.e., \(\mathcal{P}^{\textup{yes}}\), remains pure). In this case, we show that one-way LOCC (local operations and classical communication) suffices for an optimal tester.

\subsection{Motivations}
\label{sub:moti}

The concern of the complexity of local testers for bipartite quantum states is further motivated by four explicit questions of individual interest.
The first question is about purified samples in testing properties of mixed quantum states and the last three questions are about testing properties of the entanglement spectra of bipartite pure states.
They will be finally addressed by our central result.

\paragraph{Purified samples in mixed state testing.}

Property testing of bipartite quantum states is closely related to that of mixed quantum states via the following observation.
Testing the property of a mixed state given its purified samples (which are bipartite states) is always no harder than given mixed samples directly, as one can just ignore the ancilla qubits of the purification and use the ordinary tester for mixed states.
Although mixed state testing has been extensively studied in the literature, e.g., \cite{MdW16,OW15,BOW19}, 
to the best of our knowledge, we are not aware of any mixed state testing problem that could benefit from using purified samples of mixed states.
We therefore ask the following question.
\begin{question} \label{q:purification}
    Can we test mixed states more efficiently given purified samples?
\end{question}

\paragraph{Entanglement spectrum.}

A bipartite pure quantum state can be expressed via the Schmidt decomposition as:
\begin{equation} \label{eq:schmidt-decomp}
   \ket{\psi}_{\textup{AB}} = \sum_{j} \lambda_j \ket{\phi_j}_{\textup{A}} \ket{\gamma_j}_{\textup{B}}
\end{equation}
for some orthonormal bases $\cbra{\ket{\phi_j}_{\textup{A}}}$ and $\cbra{\ket{\gamma_j}_{\textup{B}}}$.
The set of Schmidt coefficients $\cbra{\lambda_j}$ is known as the entanglement spectrum,
which generalizes the entanglement entropy~\cite{BBPS96} 
and encodes richer information of the bipartite entanglement~\cite{LH08,CL08,YQ10}.

Testing properties of entanglement spectra is of great interest in quantum physics, e.g., as a probe to test topological properties of many-body quantum states \cite{KP06,LW06,PTBO10}.
From the perspective of quantum computational complexity, a fundamental problem is how many samples of a bipartite pure state are required to test a property of its entanglement spectrum.
This is generally known as the sample complexity for quantum property testing \cite{MdW16}.
The sample complexity of many property testing problems related to the entanglement spectrum are not yet well understood.
In the following, we introduce three representative questions in this area. 

\begin{itemize}
\item \textbf{Schmidt rank}. The Schmidt rank of a bipartite pure state is the number of its non-zero Schmidt coefficients, which can be understood as a measure of entanglement --- the entanglement entropy induced by the Tsallis entropy \cite{Tsa88}. 
Hence, testing the Schmidt rank is a way to quantify the quantum entanglement.
An approach to this problem is by reducing to testing the rank of the reduced density operator of the bipartite state, and the latter can be done by the rank tester proposed in \cite{OW15}.
However, there is no known general lower bound for this problem other than the trivial $\Omega\rbra{1}$.
Regarding the importance of this problem, it was raised as an open question in \cite{MdW16}.
\begin{question} [{\cite[Question 8]{MdW16}}] \label{q:schmidt-rank}
    Can we show a general lower bound for testing the Schmidt rank?
\end{question}

\item \textbf{Bond dimension of matrix product states}. 
A generalization of the Schmidt rank to the multipartite case is the bond dimension of matrix product states (MPS, cf.\ \cite{PGVWC07}).
Any $n$-partite pure state can be represented in the form of MPS:
as a sequential contraction of $n$ local tensors (see \cref{eq:mat-prd-state} for the formal definition).
The bond dimension of an MPS is the maximum dimension of the shared indices between neighboring tensors,
which is also a measure of the strength of entanglement.
When $n=2$, the bond dimension is exactly the Schmidt rank.
Testing whether a state is an MPS of bond dimension $r$ or $\varepsilon$-far (in trace distance) is of great interest. 
In \cite{soleimanifar2022testing}, they provided a tester for this task with sample complexity $O\rbra{nr^2/\varepsilon^2}$ (with perfect completeness);
they also showed an $\Omega\rbra{\sqrt{n}/\varepsilon^2}$ lower bound.
Later in \cite{aaronson2022quantum}, they proved
an $\Omega\rbra{\sqrt{r}}$ lower bound. 
\begin{question} \label{q:mps}
    Can we show a better lower bound for testing MPS?
\end{question}

\item \textbf{Maximal entanglement}. A bipartite pure state is maximally entangled if and only if its reduced density operator (on either part) is maximally mixed.
Hence, the tester proposed in \cite{OW15} for the maximal mixedness of $d$-dimensional mixed states with sample complexity $O\rbra{d}$ directly implies a sample upper bound $O\rbra{d}$ for testing the maximal entanglement of bipartite pure states. 
\begin{question} \label{q:max-entangled}
    Can we test the maximal entanglement more efficiently by a global tester on bipartite states instead of on reduced states?
\end{question}
\end{itemize}

\subsection{Our Results}

In this paper, we show that, for properties of bipartite pure states,
unitary invariance on one part implies an \textit{optimal} (over all global testers) local tester acting only on the other part.
This suggests a \textit{canonical} local tester for entanglement spectra and reveals the limitations of purifications, allowing us to give answers to the aforementioned questions as well as to prove new lower bounds for a series of quantum property testing problems. 

To illustrate our results, we first introduce the notions of \textit{local} testers and \textit{unitarily invariant} properties (see \cref{sec:def-testers} for general definitions). 

\begin{definition} [Local testers for bipartite states]
    Let $\HAB = \HA \otimes \HB$ be a bipartite Hilbert space. 
    A (global) tester $\mathcal{T}$ with sample complexity $N$ for bipartite states in $\HAB$ acts on $\HAB^{\otimes N}$. The tester $\mathcal{T}$ is \emph{local} on $\HA$, if it acts
    non-trivially only on $\HA^{\otimes N}$. 
\end{definition}

\begin{definition} [Unitarily invariant properties] \label{def:local-unitary-invariance}
    A property $\mathcal{P} = \rbra{\mathcal{P}^{\textup{yes}}, \mathcal{P}^{\textup{no}}}$ of bipartite pure states in $\HAB$ is \emph{unitarily invariant} on $\HA$, if $\rbra{U_\textup{A} \otimes I_{\textup{B}}} \ket{\psi}_{\textup{AB}} \in \mathcal{P}^X$ for every $\ket{\psi}_{\textup{AB}} \in \mathcal{P}^X$, unitary operator $U_{\textup{A}}$ on $\HA$, and $X \in \cbra{\textup{yes}, \textup{no}}$, where $I_{\textup{B}}$ is the identity operator on $\HB$.
\end{definition}

Now we formally state the central result as follows. 

\begin{theorem} [Unitary invariance implies optimal local testability, \cref{lemma-3191738} restated] \label{thm:key}
    For any property $\mathcal{P}$ of bipartite pure states in $\HAB$, if $\mathcal{P}$ is unitarily invariant on $\HB$, then there is a local tester on $\HA$ for $\mathcal{P}$ with \emph{optimal} sample complexity (over all global testers) with any soundness and completeness.
\end{theorem}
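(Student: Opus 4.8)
The plan is to reduce any global tester to a local one on $\HA$ by exploiting the symmetrization afforded by unitary invariance on $\HB$. Given $N$ samples of $\ket{\psi}_{\TAB}$, the key observation is that the Schmidt decomposition \eqref{eq:schmidt-decomp} lets us write $\ket{\psi}_{\TAB}^{\otimes N}$ with all the "$\mathrm{B}$-side" information encoded only through the Schmidt coefficients $\cbra{\lambda_j}$, since the orthonormal bases $\cbra{\ket{\gamma_j}_{\TB}}$ can be moved to a fixed reference basis by a unitary on $\HB$ alone — and unitary invariance on $\HB$ means such a rotation changes neither membership in $\mathcal P^{\textup{yes}}$ nor in $\mathcal P^{\textup{no}}$. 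First I would fix, for each admissible tuple of Schmidt coefficients, a canonical representative state, and argue that any tester must behave identically (in acceptance probability) on $\ket{\psi}_{\TAB}$ and on its canonical form, because the two are related by a local unitary on $\HB$ that the tester — being a fixed measurement — cannot "see" in a way that distinguishes yes from no; more precisely, a tester is a two-outcome POVM $\cbra{M_{\textup{acc}}, M_{\textup{rej}}}$ on $\HAB^{\otimes N}$, and we will symmetrize it.

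The main technical step is a twirling / symmetrization argument. Starting from an optimal global tester with acceptance POVM element $M$, I would replace $M$ by its twirl $\widetilde M = \int (I_{\TA}\otimes U_{\TB})^{\otimes N}\, M\, (I_{\TA}\otimes U_{\TB}^\dagger)^{\otimes N}\, dU$ over the Haar measure on $\HB$. Because $\mathcal P$ is unitarily invariant on $\HB$, the acceptance probability of $\widetilde M$ on any input state in $\mathcal P^{\textup{yes}}$ (resp.\ $\mathcal P^{\textup{no}}$) is an average of acceptance probabilities of $M$ on states that are all in $\mathcal P^{\textup{yes}}$ (resp.\ $\mathcal P^{\textup{no}}$) — so completeness and soundness are preserved, with the same parameters, and the sample complexity $N$ is unchanged. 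By Schur–Weyl duality, $\widetilde M$ lies in the commutant of $I_{\TA}^{\otimes N}\otimes \mathcal U(\HB)^{\otimes N}$, hence is a combination of $I_{\TA}^{\otimes N}\otimes P_\lambda$ where the $P_\lambda$ are projectors onto the irreducible $\mathrm S_N$-isotypic components on $\HB^{\otimes N}$ (Young subspaces), with operator coefficients acting on $\HA^{\otimes N}\otimes (\text{multiplicity space})$. The point is that the reduced state of $\ket{\psi}_{\TAB}^{\otimes N}$ on $\HA^{\otimes N}$, together with the Schur sampling outcome $\lambda$ on the $\mathrm B$ side, already determines everything $\widetilde M$ depends on — and the $\mathrm B$-side Schur label $\lambda$ is itself a function of the spectrum of $\rho_{\TA}=\tr_{\TB}\ketbra{\psi}{\psi}_{\TAB}$, which is accessible by measuring the $\HA^{\otimes N}$ registers alone (weak Schur sampling on $\HA$). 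Therefore the measurement statistics of $\widetilde M$ can be reproduced by a POVM acting only on $\HA^{\otimes N}$; I would make this precise by exhibiting the explicit local POVM and checking its acceptance probability matches that of $\widetilde M$ on every pure bipartite input, using that the joint distribution of the $\mathrm A$- and $\mathrm B$-side Schur labels is supported on the diagonal $\lambda_{\TA}=\lambda_{\TB}$ (a standard consequence of the Schmidt decomposition applied to $\ket{\psi}_{\TAB}^{\otimes N}$ and Schur–Weyl).

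I expect the main obstacle to be the last identification — showing rigorously that the twirled acceptance operator, though a priori acting on both $\HA^{\otimes N}$ and $\HB^{\otimes N}$, has acceptance probability on bipartite pure inputs that factors through an $\HA^{\otimes N}$-only measurement. The subtlety is that $\widetilde M$ still has nontrivial action on the $\mathrm B$-side multiplicity/Young registers, so one cannot simply trace out $\HB$; instead one must use the precise structure of $\ket{\psi}_{\TAB}^{\otimes N}$ in the Schur basis — namely that in the Schur decomposition the $\mathrm A$- and $\mathrm B$-side carry "mirror" irreps and the state is, within each $\lambda$-sector, maximally entangled across the two copies of the irrep $V_\lambda$ up to the weights $\lambda_j$ — to collapse the $\mathrm B$-dependence onto data already determined by the $\mathrm A$-registers. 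Once that correspondence is set up, transferring optimality is immediate: any local tester is in particular a global tester, so the optimal local sample complexity is at least the optimal global one, and the construction above shows it is at most that, giving equality. I would also remark that the same twirling works verbatim with Haar measure replaced by a unitary $N$-design, so the construction is efficient whenever the original tester is; and for the mixed-state extension one replaces the global twirl by the corresponding one-way LOCC channel (Schur-sample on $\HB$, send the label to the $\HA$ side), which is exactly what the commutant structure licenses.
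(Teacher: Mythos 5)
Your proposal is correct and follows essentially the same route as the paper: twirl the optimal global tester over Haar-random unitaries on $\HB^{\otimes N}$ (which preserves completeness, soundness, and sample complexity by unitary invariance), invoke the bipartite Schur--Weyl decomposition, and exploit the fact that $\ket{\psi}_{\TAB}^{\otimes N}$ is supported on the diagonal $\lambda$-sectors with a maximally entangled state across the paired $\mathbb{S}_N$-irreps (the paper's \cref{lemma-3262205}), which is exactly what lets the twirled acceptance operator be contracted against $\kett{I_{\mathcal{V}_\lambda}}$ and partial-traced over $\mathcal{W}_{\lambda,\TB}$ to yield the explicit local POVM of \cref{eq:def-hatT}. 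The remaining work you defer (writing down the local POVM and checking $0 \sqsubseteq \hat{\mathcal{T}} \sqsubseteq I$ plus the matching acceptance probabilities) is routine given the structure you have identified.
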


Our central result in \cref{thm:key} is non-trivial,
and significantly relies on 
(i) the unitary invariance of $\mathcal{P}$ on one part ($\HB$); and
(ii) $\mathcal{P}$ is a property of bipartite pure states.
To see the necessity of condition (i),
note that not all properties of bipartite pure states have a local tester. 
For example, consider property $\mathcal{P}$ with
$\mathcal{P}^{\textup{yes}}=\cbra{\ket{\psi}_{\textup{A}}\otimes \ket{\phi}_{\textup{B}}: \ket{\psi}_{\textup{A}}\in \HA}$ and
$\mathcal{P}^{\textup{no}}=\cbra{\ket{\psi}_{\textup{A}}\otimes \ket{\gamma}_{\textup{B}}: \ket{\psi}_{\textup{A}}\in \HA}$ for some fixed states $\ket{\phi}_{\textup{B}}\neq \ket{\gamma}_{\textup{B}}$.
Then $\mathcal{P}$ does not satisfy condition (i),
and it is easy to see there is no local tester on $\HA$ for $\mathcal{P}$,
because the tester has to obtain knowledge about Bob's system. 
To see the necessity of condition (ii),
consider, for example, property $\mathcal{P}$ of mixed quantum states in $\mathcal{D}\rbra{\mathbb{C}^d \otimes \mathbb{C}^d}$ with
$\mathcal{P}^{\textup{yes}}=\cbra{\ketbra{\psi}{\psi}_{\textup{AB}}:\ket{\psi}_{\textup{AB}}\textup{ is maximally entangled}}$
and $\mathcal{P}^{\textup{no}}=\cbra{I_{\textup{A}}/d\otimes I_{\textup{B}}/d}$.
Then $\mathcal{P}$ does not satisfy condition (ii),
and there is also no local tester for $\mathcal{P}$,
because the reduced density operator of any state in $\mathcal{P}^{\textup{yes}}$ or $\mathcal{P}^{\textup{no}}$
on Alice's system is $I_{\textup{A}}/d$.

It is natural to consider whether \cref{thm:key} can be extended to the case when the tested bipartite quantum state
is a mixed state. As will be shown later in \cref{sec-432205},
in this more general case,
as long as $\mathcal{P}^{\textup{yes}}$ only consists of pure states,
one-way LOCC tester (with details explained in~\cref{sec-432205}) suffices to achieve the optimality.

Moreover, \cref{thm:key} leads to a series of implications and applications:
\begin{itemize}
    \item We identify a canonical tester for entanglement spectra (in \cref{sec:entanglement-spectrum-testing});
    \item We reveal the limitations of purifications in mixed state testing (in \cref{sec:mixed-state-testing});
    \item We prove new sample lower bounds for bipartite/multipartite state testing (in \cref{sec:sample-intro});
    \item We prove new query lower bounds for unitary property testing (in \cref{sec:query-intro}).
\end{itemize}

\subsubsection{Entanglement spectrum testing} \label{sec:entanglement-spectrum-testing}

Testing properties of entanglement spectra is a basic problem in bipartite state testing.
Examples include the aforementioned questions: testing the Schmidt rank (\cref{q:schmidt-rank}) and maximal entanglement (\cref{q:max-entangled}) of bipartite pure states.
In the following, we characterize a \textit{canonical} tester for entanglement spectra of bipartite pure states, which can be taken to consist of \textit{locally} performing weak Schur sampling and classically postprocessing the results. 

\begin{corollary} [Optimal canonical tester for entanglement spectra] \label{corollary:canonical-tester-for-spectra}
    There is an optimal canonical tester for entanglement spectra consisting of locally performing weak Schur sampling and classically postprocessing the results with any soundness and completeness. 
\end{corollary}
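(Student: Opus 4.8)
The plan is to deduce the statement from \cref{thm:key} together with the standard representation theory of $\rho^{\otimes N}$. First observe that a property $\mathcal{P}$ of entanglement spectra is, by definition, one whose yes/no label on a bipartite pure state depends only on the multiset of Schmidt coefficients $\cbra{\lambda_j}$; in particular each $\mathcal{P}^X$ is closed under $U_\TA \otimes U_\TB$ for all unitaries $U_\TA, U_\TB$, and hence, specializing $U_\TA = I_\TA$, the property $\mathcal{P}$ is unitarily invariant on $\HB$ in the sense of \cref{def:local-unitary-invariance}. Therefore \cref{thm:key} applies: for any prescribed completeness and soundness there is a tester $\mathcal{T}$ for $\mathcal{P}$ that is local on $\HA$ and whose sample complexity $N$ is optimal over all global testers. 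It remains to show that $\mathcal{T}$ may be taken to be ``weak Schur sampling on $\HA^{\otimes N}$ followed by classical postprocessing'' without losing optimality.

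Since $\mathcal{T}$ acts non-trivially only on $\HA^{\otimes N}$, its acceptance probability on $N$ copies of $\ket{\psi}_\TAB$ depends only on $\tr_{\HB^{\otimes N}}\rbra{\ketbra{\psi}{\psi}_\TAB}^{\otimes N} = \rho_\TA^{\otimes N}$, where $\rho_\TA = \tr_\TB \ketbra{\psi}{\psi}_\TAB$ has eigenvalues exactly $\cbra{\lambda_j^2}$ (see \cref{eq:schmidt-decomp}). Thus, after absorbing any ancillas, $\mathcal{T}$ is equivalently a two-outcome POVM $\cbra{M_{\textup{yes}}, M_{\textup{no}}}$ on $\HA^{\otimes N}$ that must decide a property of the spectrum of $\rho_\TA$; this induced property is well-defined (equal spectrum of $\rho_\TA$ $\iff$ equal Schmidt coefficients $\iff$ equal label) and is invariant under conjugation of $\rho_\TA$ by unitaries on $\HA$. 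Now symmetrize $M_{\textup{yes}}$ in two steps. (i) Average over the $S_N$-action permuting the $N$ tensor factors of $\HA^{\otimes N}$: because $\rho_\TA^{\otimes N}$ is permutation-invariant, no acceptance probability changes. (ii) Average over conjugation by $U^{\otimes N}$ over all unitaries $U$ on $\HA$: because the induced property is unitarily invariant, whenever $\rho_\TA$ is a yes-instance so is every $U \rho_\TA U^\dagger$, and likewise for no-instances, so this averaging can only raise the completeness and lower the soundness.

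The two averagings commute, since factor permutations commute with $U^{\otimes N}$, and the resulting POVM element commutes with both the $S_N$-action and every $U^{\otimes N}$; by Schur--Weyl duality it is therefore a scalar $c_\lambda \in \sbra{0,1}$ on each isotypic block $\mathbb{V}_\lambda \otimes \mathbb{S}_\lambda$, $\lambda \vdash N$. A two-outcome POVM of this form is exactly: perform weak Schur sampling on $\HA^{\otimes N}$, read off $\lambda$, and accept with probability $c_\lambda$, i.e.\ classical postprocessing of the weak Schur sampling outcome. The resulting tester is still local on $\HA$, still uses $N$ samples, and has completeness and soundness at least as good as those of $\mathcal{T}$, hence is still optimal; as the target parameters were arbitrary, this yields the claimed optimal canonical tester.

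I expect the only real work to be in the symmetrization step: one must check that averaging over $S_N$ and over the unitary group does not degrade the tester, and this is precisely where the two hypotheses of \cref{thm:key} re-enter --- unitary invariance of $\mathcal{P}$ legitimizes the $U(d)$-average, while purity of the tested state is what collapses the relevant system to $\rho_\TA^{\otimes N}$, so that $S_N$-averaging is free and Schur--Weyl duality applies directly. The representation-theoretic fact that ``scalars on all isotypic blocks'' coincides with ``weak Schur sampling plus classical postprocessing'' is standard and requires no new input.
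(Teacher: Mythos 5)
Your proposal is correct and follows essentially the same route as the paper: observe that spectrum properties are unitarily invariant on $\HB$, invoke \cref{thm:key} to get an optimal local tester on $\HA$, and then reduce to the canonical weak-Schur-sampling tester for unitarily invariant properties of the reduced state $\rho_\TA$. The only difference is that where the paper cites \cite[Lemma 20]{MdW16} for this last symmetrization step, you re-derive it inline via the $\mathbb{S}_N$/$\mathbb{U}_d$ averaging and Schur--Weyl duality, which is the standard proof of that lemma.
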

\begin{proof} [Proof sketch]
    This can be shown by \cref{thm:key}
    and by noting that any property of the entanglement spectrum of a bipartite pure state in $\HAB$ is unitarily invariant both on $\HA$ and on $\HB$, together with the canonical tester for unitarily invariant properties of mixed states given in \cite[Lemma 20]{MdW16}. 
\end{proof}

\subsubsection{Mixed state testing with purifications} \label{sec:mixed-state-testing}

Testing properties of mixed states has been extensively studied in the literature (cf.\ \cite[Section 4.2]{MdW16}).
However, it is unclear if purified samples of mixed states would aid testing, as raised in \cref{q:purification}.

For a property $\mathcal{Q} = \rbra{\mathcal{Q}^{\textup{yes}}, \mathcal{Q}^{\textup{no}}}$ of mixed states, its \textit{purified} version is defined by $\mathsf{Purify}\rbra{\mathcal{Q}} = \rbra{\mathsf{Purify}\rbra{\mathcal{Q}^{\textup{yes}}},\mathsf{Purify}\rbra{\mathcal{Q}^{\textup{no}}}}$, where for a set $\mathcal{R}$ of mixed states,
\begin{equation}
\mathsf{Purify}\rbra{\mathcal{R}} = \set{\ket{\psi}_{\textup{AB}} \in \HAB}{\tr_{\textup{B}}\rbra{\ketbra{\psi}{\psi}_{\textup{AB}}} \in \mathcal{R}}.
\end{equation}
The consideration of
the purified version $\mathsf{Purify}\rbra{\mathcal{Q}}$ can be understood by the observation that a purification should contain more information than the mixed state itself (and therefore may be exploited in the property testing).
An immediate relation is that the sample complexity (see \cref{def:sample-complexity}) of $\mathsf{Purify}\rbra{\mathcal{Q}}$ is no greater than that of $\mathcal{Q}$ as one can always ignore the ancilla system and solve $\mathsf{Purify}\rbra{\mathcal{Q}}$ directly via the tester for $\mathcal{Q}$. 
Surprisingly, it turns out that the extra information in the purifications do not benefit mixed state testing, thereby giving a negative answer to \cref{q:purification}.

\begin{corollary} [Purified samples offer no advantage in mixed state testing] \label{corollary:purification-useless}
For every property $\mathcal{Q}$ of mixed quantum states, $\mathsf{Purify}\rbra{\mathcal{Q}}$ and $\mathcal{Q}$ have the same sample complexity with any soundness and completeness.
\end{corollary}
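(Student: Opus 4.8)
The plan is to derive this directly from \cref{thm:key}. Throughout, regard $\mathcal{Q}$ as a property of mixed states on $\HA$ and let $\HB$ be an ancilla space with $\dim\HB\ge\dim\HA$, so that every state in $\mathcal{Q}^{\textup{yes}}\cup\mathcal{Q}^{\textup{no}}$ admits a purification in $\HAB=\HA\otimes\HB$; with this choice $\mathsf{Purify}(\mathcal{Q})$ is a (valid) property of bipartite pure states in $\HAB$, since the reduced state of a purification is unique and hence $\mathsf{Purify}(\mathcal{Q})^{\textup{yes}}$ and $\mathsf{Purify}(\mathcal{Q})^{\textup{no}}$ are disjoint. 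The first observation I would make is that $\mathsf{Purify}(\mathcal{Q})$ is unitarily invariant on $\HB$: for any $\ket{\psi}_{\textup{AB}}\in\HAB$ and any unitary $U_{\textup{B}}$ on $\HB$, the partial trace over $\HB$ is invariant under $U_{\textup{B}}$, so $\tr_{\textup{B}}\rbra*{(I_{\textup{A}}\otimes U_{\textup{B}})\ketbra{\psi}{\psi}_{\textup{AB}}(I_{\textup{A}}\otimes U_{\textup{B}})^\dagger}=\tr_{\textup{B}}\rbra*{\ketbra{\psi}{\psi}_{\textup{AB}}}$, whence membership in $\mathsf{Purify}(\mathcal{Q})^X$ is preserved. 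Therefore \cref{thm:key} applies: with any soundness and completeness, the optimal sample complexity of $\mathsf{Purify}(\mathcal{Q})$ over all global testers is already attained by some tester that is local on $\HA$.

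Next I would show that a tester local on $\HA$ for $\mathsf{Purify}(\mathcal{Q})$ is literally the same object as an ordinary tester for $\mathcal{Q}$, with identical sample complexity, completeness, and soundness. Such a local tester acts non-trivially only on $\HA^{\otimes N}$, so on input $\ket{\psi}_{\textup{AB}}^{\otimes N}$ it effectively receives $\rho^{\otimes N}$ with $\rho=\tr_{\textup{B}}\rbra*{\ketbra{\psi}{\psi}_{\textup{AB}}}$, and its acceptance probability depends only on $\rho$. Since $\ket{\psi}_{\textup{AB}}\in\mathsf{Purify}(\mathcal{Q})^X$ if and only if $\rho\in\mathcal{Q}^X$, running any tester for $\mathcal{Q}$ on the $\HA$-parts of the samples gives a valid local-on-$\HA$ tester for $\mathsf{Purify}(\mathcal{Q})$ with the same guarantees; conversely, given a local-on-$\HA$ tester for $\mathsf{Purify}(\mathcal{Q})$ and any $\rho\in\mathcal{Q}^X$, feeding it $N$ copies of an arbitrary purification $\ket{\psi}_{\textup{AB}}\in\HAB$ of $\rho$ (which exists by the choice of $\HB$) shows it accepts $\rho^{\otimes N}$ with the required probability, so it is a valid tester for $\mathcal{Q}$. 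Consequently the optimal sample complexity of $\mathcal{Q}$ equals the optimal sample complexity of $\mathsf{Purify}(\mathcal{Q})$ among testers local on $\HA$, which by the previous paragraph equals the optimal sample complexity of $\mathsf{Purify}(\mathcal{Q})$ over all global testers. This yields the claimed equality with any soundness and completeness; it is also consistent with (and in fact subsumes) the trivial inequality noted before the statement, that $\mathsf{Purify}(\mathcal{Q})$ has sample complexity no greater than $\mathcal{Q}$.

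The main obstacle is not a hard computation but making the correspondence in the second step airtight: one must verify that the "local on $\HA$" testers for $\mathsf{Purify}(\mathcal{Q})$ and the testers for $\mathcal{Q}$ coincide as families of admissible protocols (so that the minimizations over the sample size $N$ really range over the same set), and that completeness and soundness transfer in both directions without loss --- in particular that it suffices to restrict attention to purifications living inside the fixed ancilla space $\HB$, which is exactly why the mild assumption $\dim\HB\ge\dim\HA$ is used. Once this bookkeeping is settled, the corollary is immediate from \cref{thm:key}.
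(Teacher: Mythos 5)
Your proposal is correct and follows exactly the route of the paper's own (sketched) proof: observe that $\mathsf{Purify}\rbra{\mathcal{Q}}$ is unitarily invariant on $\HB$ because the partial trace over $\HB$ is unchanged by unitaries on $\HB$, apply \cref{thm:key} to obtain an optimal tester local on $\HA$, and identify local-on-$\HA$ testers for $\mathsf{Purify}\rbra{\mathcal{Q}}$ with ordinary testers for $\mathcal{Q}$ via $\tr\sbra{(M_\TA\otimes I_\TB)\ketbra{\psi}{\psi}_{\TAB}^{\otimes N}}=\tr\sbra{M_\TA\,\rho^{\otimes N}}$. The paper leaves these details implicit; your write-up supplies them correctly.
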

\begin{proof} [Proof sketch]
    This can be shown by \cref{thm:key}
    and by noting that $\mathsf{Purify}\rbra{\mathcal{Q}}$ is unitarily invariant on the ancilla qubits of purifications. 
\end{proof}

\cref{corollary:purification-useless} bridges two property testing scenarios: bipartite state testing and mixed state testing. 
This connection allows us to lift a series of sample lower bounds for mixed state testing to those for bipartite/multipartite state testing, which will be introduced in \cref{sec:sample-intro}. 

\subsubsection{New sample lower bounds} \label{sec:sample-intro}

We obtain new sample lower bounds for testing bipartite pure states by lifting the sample lower bounds for their corresponding properties of mixed states. 
The first three give answers to \cref{q:schmidt-rank,q:mps,q:max-entangled},
of which the importance has been already introduced in \cref{sub:moti}.
The fourth can be applied to prove a new quantum query lower bound in \cref{sec:query-intro} later.

\paragraph{Schmidt rank.}
We show the \textit{first} general lower bound for testing the Schmidt rank of bipartite pure states, thereby giving a positive answer to \cref{q:schmidt-rank}.\footnote{A closely related problem was considered in \cite[Lemma 13]{aaronson2022quantum}.
They showed that, 
determining whether a $d$-dimensional bipartite state has Schmidt rank at most $2^{\polylog \rbra{\log\rbra{d}}}$ (which is negligible compared to $d$) requires sample complexity $\omega\rbra{\polylog\rbra{d}}$. 
} 

\begin{corollary} [Schmidt rank, \cref{thm:schmidt-rank} restated] \label{corollary:schmidt-rank-intro}
    Any tester for determining whether the Schmidt rank of a bipartite pure state is at most $r$ or $\varepsilon$-far (in trace distance) requires sample complexity $\Omega\rbra{r/\varepsilon^2}$.
    Moreover, when perfect completeness is required, it requires sample complexity $\Omega\rbra{r^2/\varepsilon^2}$.
\end{corollary}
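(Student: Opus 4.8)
The plan is to transfer the known sample lower bound for testing the rank of mixed quantum states (e.g.\ \cite{OW15}) to the bipartite setting by means of \cref{corollary:purification-useless}. Let $\mathcal{Q} = \rbra{\mathcal{Q}^{\textup{yes}}, \mathcal{Q}^{\textup{no}}}$ be the property of $d$-dimensional mixed states whose yes-set collects the states of rank at most $r$ and whose no-set collects the states at trace distance more than $\varepsilon'$ from every rank-$\le r$ state, where $\varepsilon'$ will be taken to be $\Theta\rbra{\varepsilon^2}$. Since the Schmidt rank of $\ket{\psi}_{\textup{AB}}$ equals $\rank\rbra{\tr_{\textup{B}} \ketbra{\psi}{\psi}_{\textup{AB}}}$, the yes-instances of $\mathsf{Purify}\rbra{\mathcal{Q}}$ are exactly the bipartite pure states of Schmidt rank at most $r$. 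The crux is to show that the no-instances match too, i.e.\ that a bipartite pure state lies at trace distance more than $\varepsilon$ from all Schmidt-rank-$\le r$ pure states precisely when its reduced state on $\HA$ lies at trace distance more than $\Theta\rbra{\varepsilon^2}$ from all rank-$\le r$ density operators.

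I would establish this matching by explicit Eckart--Young type computations on the Schmidt spectrum. Ordering the squared Schmidt coefficients as $\lambda_1^2 \ge \lambda_2^2 \ge \cdots$ and writing $\delta = \sum_{j > r} \lambda_j^2$ for the tail mass, one shows two facts. First, the largest overlap $\abs{\braket{\psi}{\phi}}^2$ of $\ket{\psi}_{\textup{AB}}$ with a Schmidt-rank-$\le r$ pure state $\ket{\phi}$ equals $1 - \delta$: the upper bound follows by projecting onto the $\le r$-dimensional Schmidt subspace of $\HA$ that supports $\ket{\phi}$ and applying Cauchy--Schwarz, while the value is attained by truncating and renormalizing the top $r$ Schmidt components of $\ket{\psi}_{\textup{AB}}$; hence the nearest such pure state sits at trace distance $2\sqrt{\delta}$. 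Second, the nearest rank-$\le r$ density operator to $\rho_{\textup{A}} = \tr_{\textup{B}} \ketbra{\psi}{\psi}_{\textup{AB}}$ sits at trace distance $2\delta$, where the optimum keeps the top $r$ eigenvectors and optimality comes from the estimate $\Abs{\rho_{\textup{A}} - \sigma}_1 \ge 2\tr\rbra{\rbra{I - \Pi_\sigma}\rho_{\textup{A}}} \ge 2\delta$ valid for every rank-$\le r$ state $\sigma$, with $\Pi_\sigma$ the projector onto the support of $\sigma$. Therefore $2\sqrt{\delta} > \varepsilon$ is equivalent to $2\delta > \varepsilon^2/2$, so $\mathsf{Purify}\rbra{\mathcal{Q}}$ with $\varepsilon' = \varepsilon^2/2$ is precisely the Schmidt-rank property in question.

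Given this identification, \cref{corollary:purification-useless} says $\mathsf{Purify}\rbra{\mathcal{Q}}$ and $\mathcal{Q}$ have the same sample complexity with any soundness and completeness. Plugging in the known lower bounds $\Omega\rbra{r/\varepsilon'}$ for rank testing of mixed states and $\Omega\rbra{r^2/\varepsilon'}$ in the perfect-completeness regime, and substituting $\varepsilon' = \Theta\rbra{\varepsilon^2}$, yields the claimed bounds $\Omega\rbra{r/\varepsilon^2}$ and $\Omega\rbra{r^2/\varepsilon^2}$ respectively. One may equivalently argue straight from \cref{thm:key}: the Schmidt-rank property is unitarily invariant on $\HB$, so its optimal tester may be taken local on $\HA$; such a tester sees only $\rho_{\textup{A}}^{\otimes N}$ and thus solves mixed-state rank testing at distance $\Theta\rbra{\varepsilon^2}$.

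I expect the main obstacle to be the distance translation of the second paragraph: one must pin down both low-rank approximation problems --- the fidelity one for bipartite pure states and the trace-distance one for density operators --- exactly rather than up to constants, and check that the quadratic relation $\varepsilon' = \Theta\rbra{\varepsilon^2}$ is tight so that no polynomial factor in $r$ or $1/\varepsilon$ is lost in the reduction. Once that is done the perfect-completeness case is immediate, because \cref{corollary:purification-useless} (equivalently \cref{thm:key}) preserves soundness and completeness.
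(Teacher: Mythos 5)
Your proposal is correct and follows essentially the same route as the paper: reduce to mixed-state rank testing via $\mathsf{Purify}$ and \cref{corollary:purification-useless}, translate the distance parameter quadratically ($\varepsilon \mapsto \Theta(\varepsilon^2)$) by an Eckart--Young computation on the Schmidt spectrum, and invoke the $\Omega(r/\varepsilon)$ and $\Omega(r^2/\varepsilon)$ lower bounds for mixed-state rank testing from \cite{CHW07,OW15}. Two small remarks: your factors of $2$ stem from using the unnormalized trace norm (with the paper's $d_{\tr}$, which carries a $1/2$, one gets exactly $\varepsilon' = \varepsilon^2$, and in any case only the inclusion $\mathsf{Purify}(\textsc{Rank}^r_{\varepsilon^2}) \subseteq \textsc{SchmidtRank}^r_\varepsilon$ is needed, not the exact matching of no-instances), and the cited rank-testing lower bound requires $r \geq 2$, so the $r = 1$ case must be covered separately, which the paper does via a reduction from purity testing.
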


Using the method proposed in \cite{OW15} for testing the rank of mixed states, 
we can test the Schmidt rank of bipartite pure states with prefect completeness with sample complexity $O\rbra{r^2/\varepsilon^2}$ (see also \cite{soleimanifar2022testing}), which is shown to be optimal by \cref{corollary:schmidt-rank-intro}. 

Moreover, \cref{corollary:schmidt-rank-intro} can be generalized to multipartite case later in \cref{corollary:mps-intro}. 

\paragraph{Bond dimension of matrix product states.}
We show lower bounds for testing MPS (with open boundary condition~\cite{PGVWC07}, defined by \cref{eq:mat-prd-state}).

\begin{corollary} [Bond dimension of MPS, \cref{thm-414017} restated] \label{corollary:mps-intro}
    For any \(n\geq 2\) and \(r\geq 2\), any tester for determining whether an $n$-partite pure state is an MPS of bond dimension $r$ or $\varepsilon$-far (in trace distance) requires sample complexity $\Omega\rbra{(\sqrt{nr}+r)/\varepsilon^2}$.
    Moreover, when perfect completeness is required, it requires sample complexity $\Omega\rbra{(\sqrt{nr}+r^2)/\varepsilon^2}$. 
\end{corollary}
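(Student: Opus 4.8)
The plan is to prove the bound as the maximum of two separate lower bounds, since $\Omega(\sqrt{nr}+r)=\Omega(\max\{\sqrt{nr},r\})$ and $\Omega(\sqrt{nr}+r^2)=\Omega(\max\{\sqrt{nr},r^2\})$. The $r$ and $r^2$ terms will come from the bipartite ($n=2$) case, which is exactly Schmidt rank testing, i.e.\ from \cref{corollary:schmidt-rank-intro}; the $\sqrt{nr}$ term is the genuinely new ingredient and requires a fresh hard instance. Since the $\sqrt{nr}$ bound will be proved for two-sided error, it applies a fortiori to perfect-completeness testers, so together with the perfect-completeness version of \cref{corollary:schmidt-rank-intro} it yields the claimed perfect-completeness bound $\Omega((\sqrt{nr}+r^2)/\varepsilon^2)$, and with the two-sided version it yields $\Omega((\sqrt{nr}+r)/\varepsilon^2)$.

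\textbf{The $\Omega(r/\varepsilon^2)$ and $\Omega(r^2/\varepsilon^2)$ terms (padding reduction from Schmidt rank).} Given sample access to a bipartite pure state $\ket{\psi}$ on $\HA\otimes\HB$, I would append $n-2$ fresh ancillas in state $\ket{0}$, producing samples of the $n$-partite state $\ket{\Psi}:=\ket{\psi}\otimes\ket{0}^{\otimes(n-2)}$, and run the purported MPS tester on copies of $\ket{\Psi}$. If $\ket{\psi}$ has Schmidt rank at most $r$, then the bonds of $\ket{\Psi}$ are the Schmidt rank of $\ket{\psi}$ followed by $1$'s, so $\ket{\Psi}$ is an MPS of bond dimension at most $r$ and completeness (including perfect completeness) is preserved. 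Conversely, if $\ket{\Psi}$ is $\varepsilon$-close to some MPS $\ket{\Phi}$ of bond dimension $r$, then projecting the last $n-2$ sites of $\ket{\Phi}$ onto $\ket{0}$ yields a subnormalized vector $\ket{\tilde\phi}$ on $\HA\otimes\HB$ whose Schmidt rank is at most the first bond dimension of $\ket{\Phi}$, hence at most $r$, and $\braket{\psi}{\tilde\phi}=\braket{\Psi}{\Phi}$ while $\Abs{\ket{\tilde\phi}}\le1$; so $\ket{\psi}$ is $\varepsilon$-close to a Schmidt-rank-$\le r$ state. Contrapositively, $\varepsilon$-farness is preserved, and the lower bounds of \cref{corollary:schmidt-rank-intro} transfer.

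\textbf{The $\Omega(\sqrt{nr}/\varepsilon^2)$ term (hidden-cut perturbation).} This refines the $\Omega(\sqrt{n}/\varepsilon^2)$ lower bound of \cite{soleimanifar2022testing} to be sensitive to the bond dimension. I would take as the YES instance a fixed MPS of bond dimension $r$ — say a tensor product of $\Theta(n)$ maximally entangled qudit pairs laid out along the chain — and, for the NO ensemble, pick a uniformly random cut $i\in\{1,\dots,n-1\}$ and rotate the state by angle $\Theta(\varepsilon)$ towards a perturbation that is Haar-randomly supported on $\Theta(r)$ fresh Schmidt directions across cut $i$ (and unchanged across all other cuts). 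Farness is the easy part: each NO instance is, along cut $i$, a Schmidt-rank-$\le r$ state rotated by $\Theta(\varepsilon)$ into higher Schmidt rank, hence $\Omega(\varepsilon)$-far from the set of states with Schmidt rank $\le r$ across cut $i$, which contains all MPS of bond dimension $r$ — the same single-cut argument as in the padding reduction, only with the cut hidden. For the indistinguishability, by \cref{corollary:canonical-tester-for-spectra} it suffices to consider testers that locally perform weak Schur sampling and postprocess classically, reducing the task to distinguishing the induced classical distributions over tuples of Young diagrams; a second-moment ($\chi^2$) estimate then bounds the $N$-copy distinguishing advantage between the YES instance and the NO ensemble by $O(N^2\varepsilon^4/(nr))$, the extra factor $r$ over the Soleimanifar--Wright bound coming from the perturbation being spread over $\Theta(r)$ Schmidt directions rather than a constant number, whence $N=\Omega(\sqrt{nr}/\varepsilon^2)$.

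\textbf{Main obstacle.} The routine pieces are the two farness arguments and the completeness checks; the crux is the second-moment estimate in the $\sqrt{nr}$ bound. The $N$-copy null and planted ensembles are both heavily non-product (one averages over the unknown cut $i$ and over the Haar-random perturbation direction), so the hard work is controlling the cross-terms in the $N$-copy second moment and showing that the signal is diluted by the full factor $\Theta(nr)$ — $\Theta(n)$ from not knowing the cut and $\Theta(r)$ from the rank increment being camouflaged across many Schmidt directions — rather than only by $\Theta(n)$ or only by $\Theta(r)$. This is the representation-theoretic / Weingarten-type computation I expect to occupy most of the proof.
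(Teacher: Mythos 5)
Your padding reduction for the $r$ and $r^2$ terms is correct and is essentially the paper's own argument (\cref{lemma-416002}): pad with $\ket{0}^{\otimes(n-2)}$, preserve (perfect) completeness, and preserve $\varepsilon$-farness by projecting the last $n-2$ sites of a candidate bond-dimension-$r$ MPS onto $\ket{0}$ and noting the resulting subnormalized vector has Schmidt rank at most $r$; combined with \cref{corollary:schmidt-rank-intro} this gives $\Omega(r/\varepsilon^2)$ and, with perfect completeness, $\Omega(r^2/\varepsilon^2)$. So that half of your plan is sound.

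The $\Omega(\sqrt{nr}/\varepsilon^2)$ part, however, has a genuine gap, and it sits exactly where you flagged it. First, the reduction you invoke is not available as stated: \cref{corollary:canonical-tester-for-spectra} concerns properties of the entanglement spectrum of a \emph{bipartite} pure state (unitary invariance on both parts), and the average-case machinery (\cref{theorem-424241}) requires the yes/no \emph{distributions} to be unitarily invariant on a part. Your YES instance is a single fixed state (EPR pairs in a fixed basis), not an invariant distribution, and your hidden-cut NO ensemble is an $n$-partite object whose invariance structure, and whose reduction to ``tuples of Young diagrams'' with controllable correlations across cuts, is never established; it is also not clear the perturbation can add fresh Schmidt directions across cut $i$ while leaving all other cuts untouched. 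Second, the entire quantitative content --- the $\chi^2$ bound $O(N^2\varepsilon^4/(nr))$ for the $N$-copy ensembles --- is asserted rather than proved, and you acknowledge it is the crux. As it stands, the proposal therefore does not yield the $\sqrt{nr}$ term.

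For comparison, the paper (\cref{thm:mps-lb}) gets $\Omega((\sqrt{nr}+r)/\varepsilon^2)$ from a single, more structured hard pair: $\ket{\Psi}=\ket{\psi}_{\TAB}^{\otimes n/2}$ versus $\ket{\Phi}=\ket{\phi}_{\TAB}^{\otimes n/2}$, where $\ket{\psi}$ has Schmidt rank $r$ and $\ket{\phi}$ spreads the same perturbation weight $\theta=8\varepsilon^2/n$ over $d-1\geq 2r-1$ directions, so the signal is distributed over \emph{all} $n/2$ blocks rather than hidden at one cut. Farness follows from Eckart--Young plus the multiplicativity of the MPS overlap (\cref{lemma:gEYthm}); indistinguishability follows by twirling each block with $G\times\mathbb{U}_d$, using the single-part unitary invariance result in average-case form (\cref{corollary-4241439}) to pass to Alice's reduced states, bounding the per-block trace distance by $O(N\theta/r\cdot\min\{N\theta,1\})$ via a CHW07-type estimate (\cref{lemma-04261111,lemma-4242204}), and recombining blocks through fidelity multiplicativity. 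This avoids any hidden-cut second-moment computation and delivers the $\sqrt{nr}$ and $r$ terms simultaneously; if you want to salvage your route, the missing $\chi^2$/Weingarten estimate would have to be carried out in full, and the appeal to a canonical Schur-sampling tester replaced by an argument that actually applies to your $n$-partite, non-invariant ensembles.
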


Previously in \cite{soleimanifar2022testing}, they showed an $\Omega\rbra{\sqrt{n}/\varepsilon^2}$ lower bound for testing $n$-partite MPS of bond dimension $r$, while leaving the $r$-dependence open. Later in \cite{aaronson2022quantum}, they showed an $\Omega\rbra{\sqrt{r}}$ lower bound as long as $r \leq 2^{n/8}$.
In \cref{corollary:mps-intro}, we improve both the lower bounds in \cite{soleimanifar2022testing,aaronson2022quantum} by a factor of \(\sqrt{r}\) and our lower bound requires no restrictions between $r$ and $n$. 

When perfect completeness is required, a tester in \cite{soleimanifar2022testing} was proposed with sample complexity $O\rbra{nr^2/\varepsilon^2}$. 
In \cref{corollary:mps-intro}, we show a matching lower bound $\Omega\rbra{r^2/\varepsilon^2}$ with respect to both $r$ and $\varepsilon$ for any \(n\geq 2\) and \(r\geq 2\). 

\paragraph{Maximal entanglement.}
We show a matching lower bound for testing whether a bipartite pure state is maximally entangled, implying that the tester of \cite{OW15} for mixedness with sample complexity $O\rbra{d/\varepsilon^2}$ is optimal for testing maximal entanglement, thereby giving a negative answer to \cref{q:max-entangled}.\footnote{In \cite{hayashi2006study,AHL+14}, they consider testing whether a bipartite pure state is the specific maximally entangled state $\frac{1}{\sqrt{d}} \sum_i \ket{i} \ket{i}$ with LOCC or few qubits communication. The lower bound in \cref{corollary:maximal-entanglement-intro} does not apply to their case, because the property they considered is not unitarily invariant on either part.}
\begin{corollary} [Maximal entanglement, \cref{thm:maximal-entangle} restated] \label{corollary:maximal-entanglement-intro}
    Any tester for determining whether a bipartite pure state in Hilbert space $\mathbb{C}^d \otimes \mathbb{C}^d$ is maximally entangled or $\varepsilon$-far (in trace distance) requires sample complexity
    $\Omega\rbra{d/\varepsilon^2}$.
\end{corollary}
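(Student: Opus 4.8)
The plan is to transfer the $\Omega\rbra{d/\varepsilon^2}$ lower bound for testing maximal mixedness of $d$-dimensional mixed states from \cite[Theorem 1.10]{OW15} to the bipartite pure-state setting, using \cref{corollary:purification-useless} (which itself rests on \cref{thm:key}) as the bridge, and a short distance-translation argument to line the two problems up. Write $\mathcal{P} = \rbra{\mathcal{P}^{\textup{yes}}, \mathcal{P}^{\textup{no}}}$ for the property in the statement, where $\mathcal{P}^{\textup{yes}}$ is the set of all maximally entangled states in $\mathbb{C}^d \otimes \mathbb{C}^d$ and $\mathcal{P}^{\textup{no}}$ is the set of bipartite pure states whose trace distance to \emph{every} maximally entangled state is at least $\varepsilon$. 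Let $\mathcal{Q} = \rbra{\mathcal{Q}^{\textup{yes}}, \mathcal{Q}^{\textup{no}}}$ be the property of mixed states on $\mathbb{C}^d$ with $\mathcal{Q}^{\textup{yes}} = \cbra{I_d/d}$ and $\mathcal{Q}^{\textup{no}} = \set{\rho \in \mathcal{D}\rbra{\mathbb{C}^d}}{\text{trace distance of }\rho\text{ to }I_d/d\text{ is }\geq \varepsilon}$, whose sample complexity is $\Omega\rbra{d/\varepsilon^2}$ by \cite[Theorem 1.10]{OW15}.

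First I would compare $\mathsf{Purify}\rbra{\mathcal{Q}}$, taking the ancilla space to be $\mathbb{C}^d$ as well so that $\HAB = \mathbb{C}^d\otimes\mathbb{C}^d$, against $\mathcal{P}$, and show $\mathsf{Purify}\rbra{\mathcal{Q}}^{\textup{yes}} = \mathcal{P}^{\textup{yes}}$ and $\mathsf{Purify}\rbra{\mathcal{Q}}^{\textup{no}} \subseteq \mathcal{P}^{\textup{no}}$. The equality of the yes-sets is the standard observation that a bipartite pure state is maximally entangled precisely when its reduced state on $\HA$ equals $I_d/d$. For the inclusion, I would take $\ket{\psi}_{\textup{AB}}\in\mathsf{Purify}\rbra{\mathcal{Q}}^{\textup{no}}$, so $\rho_{\textup{A}} = \tr_{\textup{B}}\rbra{\ketbra{\psi}{\psi}_{\textup{AB}}}$ is at trace distance $\geq\varepsilon$ from $I_d/d$, and argue that for \emph{any} maximally entangled $\ket{\Phi}_{\textup{AB}}$ (whose reduced state on $\HA$ is $I_d/d$), Uhlmann's theorem gives $\abs{\braket{\psi}{\Phi}}\leq F\rbra{\rho_{\textup{A}}, I_d/d}$, where $F$ denotes the fidelity; hence the trace distance between $\ket{\psi}_{\textup{AB}}$ and $\ket{\Phi}_{\textup{AB}}$ equals $\sqrt{1-\abs{\braket{\psi}{\Phi}}^2}\geq \sqrt{1-F\rbra{\rho_{\textup{A}},I_d/d}^2}$, which by the Fuchs--van de Graaf inequality is at least the trace distance of $\rho_{\textup{A}}$ to $I_d/d$, which is $\geq\varepsilon$. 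Since $\ket{\Phi}_{\textup{AB}}$ was arbitrary, $\ket{\psi}_{\textup{AB}}\in\mathcal{P}^{\textup{no}}$.

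Given this, the lower bound would follow by a monotonicity argument: $\mathcal{P}$ and $\mathsf{Purify}\rbra{\mathcal{Q}}$ have identical yes-sets while every no-instance of $\mathsf{Purify}\rbra{\mathcal{Q}}$ is a no-instance of $\mathcal{P}$, so any tester for $\mathcal{P}$ is in particular a tester for $\mathsf{Purify}\rbra{\mathcal{Q}}$ with the same completeness and soundness; therefore the sample complexity of $\mathcal{P}$ is at least that of $\mathsf{Purify}\rbra{\mathcal{Q}}$, which by \cref{corollary:purification-useless} equals that of $\mathcal{Q}$, namely $\Omega\rbra{d/\varepsilon^2}$. For the matching upper bound --- and the resulting negative answer to \cref{q:max-entangled} --- one runs the maximal-mixedness tester of \cite{OW15} on the reduced state on $\HA$ of the sampled bipartite state.

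I expect the only genuinely non-routine point to be the distance translation in the second step: one must ensure that far-ness of the \emph{reduced} state forces far-ness of the \emph{bipartite} pure state from the entire family of maximally entangled states, not merely from a single one, and this is exactly what pairing Uhlmann's theorem with the Fuchs--van de Graaf inequality supplies. Alternatively, one could use the explicit identity that $\abs{\braket{\psi}{\Phi}}^2$, maximized over maximally entangled $\ket{\Phi}_{\textup{AB}}$, equals $\frac{1}{d}\rbra{\sum_j\lambda_j}^2 = F\rbra{\rho_{\textup{A}},I_d/d}^2$, where the $\lambda_j$ are the Schmidt coefficients of $\ket{\psi}_{\textup{AB}}$. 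Everything else is bookkeeping around \cref{corollary:purification-useless}, hence around \cref{thm:key}.
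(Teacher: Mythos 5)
Your proposal is correct and follows essentially the same route as the paper: reduce from $\textsc{Mixedness} = \cbra{I_{\textup{A}}/d}$ via $\mathsf{Purify}$, invoke \cref{corollary:purification-useless} to remove the purification, and import the $\Omega\rbra{d/\varepsilon^2}$ bound of \cite[Theorem 1.10]{OW15}, with the yes-set identification and the no-set inclusion handled exactly as in \cref{thm:maximal-entangle}. The only (harmless) difference is the far-ness translation: where you combine Uhlmann's theorem with Fuchs--van de Graaf to bound $d_{\tr}\rbra{\ket{\psi}_{\textup{AB}},\ket{\Phi}_{\textup{AB}}}$ from below, the paper gets the same inequality in one step from the contractivity of trace distance under the partial trace $\tr_{\textup{B}}$, since every maximally entangled state reduces to $I_{\textup{A}}/d$.
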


Compared to the full entanglement spectrum,
the entanglement entropy is a more concise quantification of the strength of entanglement. 
We also show a lower bound for testing whether the entanglement entropy is low or high. 

\begin{corollary} [Entanglement entropy, \cref{thm:entanglement-entropy} restated] \label{corollary:s-maximum-entropy}
    For any constant $\alpha > 0$, any tester for determining whether the $\alpha$-R\'enyi entanglement entropy of a bipartite pure state in Hilbert space $\mathbb{C}^d \otimes \mathbb{C}^d$ is low ($\leq a$) or high ($\geq b$) requires sample complexity
    $\Omega\rbra{d/\Delta+d^{1/\alpha-1}/\Delta^{1/\alpha}}$, where $\Delta = b - a$. 
    Note that $\alpha = 1$ means the case for the von Neumann entanglement entropy.
\end{corollary}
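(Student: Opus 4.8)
The plan is to reduce the entanglement-entropy testing problem on bipartite pure states to the corresponding Rényi-entropy testing problem on mixed states, using the machinery developed above. Concretely: the $\alpha$-Rényi entanglement entropy of a bipartite pure state $\ket{\psi}_{\textup{AB}}$ equals the $\alpha$-Rényi entropy of its reduced density operator $\rho_{\textup{A}} = \tr_{\textup{B}}(\ketbra{\psi}{\psi}_{\textup{AB}})$. Hence the property ``$\alpha$-Rényi entanglement entropy is low ($\leq a$)'' versus ``high ($\geq b$)'' on $\HAB$ is exactly $\mathsf{Purify}(\mathcal{Q})$ for the property $\mathcal{Q}$ on mixed states given by ``$\alpha$-Rényi entropy $\leq a$'' versus ``$\geq b$''. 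Here I need to check that the trace-distance gap is preserved in the right direction: if $\ket{\psi}_{\textup{AB}}$ is $\varepsilon$-far from every purification of a low-entropy state, this does not immediately translate to a statement about $\rho_{\textup{A}}$, but the promise version we actually test is defined directly in terms of the entropy value, so there is no far-ness condition to port — the two promise problems literally coincide under $\mathsf{Purify}$.

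The next step invokes \cref{corollary:purification-useless}: $\mathsf{Purify}(\mathcal{Q})$ and $\mathcal{Q}$ have the same sample complexity with any soundness and completeness. Therefore any sample lower bound for testing the $\alpha$-Rényi entropy of a $d$-dimensional mixed state (low $\leq a$ vs.\ high $\geq b$) immediately lifts to the same lower bound for testing the $\alpha$-Rényi entanglement entropy of a bipartite pure state in $\mathbb{C}^d \otimes \mathbb{C}^d$. Then I would cite the known mixed-state lower bound — for the von Neumann case ($\alpha = 1$) the $\Omega(d/\Delta)$ bound, and for general constant $\alpha$ the two-regime bound $\Omega(d/\Delta + d^{1/\alpha - 1}/\Delta^{1/\alpha})$ — which is exactly \cite[Theorem 6.1]{WZ24} as indicated in the (commented-out) table. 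The conclusion follows by combining these two facts.

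The main obstacle I anticipate is purely bookkeeping rather than conceptual: making sure the dimension matches (the mixed state lives in $\mathbb{C}^d$, the purification in $\mathbb{C}^d \otimes \mathbb{C}^d$, so the ancilla dimension $d$ is exactly large enough to realize every rank-$\leq d$ state on $\HA$, hence $\mathsf{Purify}(\mathcal{Q})$ is nonempty on both sides and the reduction is faithful), and verifying that the entropy quantities line up with the normalization conventions of \cite{WZ24} (in particular that $\Delta = b - a$ plays the same role). A secondary point worth a sentence: since \cref{corollary:purification-useless} preserves \emph{any} soundness and completeness, the lower bound holds even for testers without perfect completeness, matching the generality of the mixed-state result. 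I do not expect to need the full strength of \cref{thm:key} directly here — only its corollary on purifications — so the proof should be short, essentially a two-line reduction plus a citation.
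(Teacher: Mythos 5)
Your proposal is correct and follows essentially the same route as the paper's proof of \cref{thm:entanglement-entropy}: identify $\textsc{EntanglementEntropy}^{\alpha,a,b}$ with $\mathsf{Purify}\rbra{\textsc{Entropy}^{\alpha,a,b}}$ (an exact equality, since the promise is on entropy values rather than on trace-distance farness), apply \cref{corollary:purification-useless}, and cite the mixed-state lower bound $\Omega\rbra{d/\Delta+d^{1/\alpha-1}/\Delta^{1/\alpha}}$ from \cite[Theorem 6.1]{WZ24}. Your bookkeeping remarks (ancilla dimension $d$ sufficing to purify any state on $\HA$, and the bound holding for any soundness and completeness) are consistent with the paper's treatment.
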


The sample lower bound in \cref{corollary:s-maximum-entropy} will later be used to prove new query lower bounds in \cref{sec:query-intro}. 

\paragraph{Uniform Schmidt coefficients.}

We show a matching lower bound for distinguishing between the cases when the entanglement spectrum is uniform on $r$ or $r+\Delta$ non-zero Schmidt coefficients, implying that the tester of \cite{OW15} for distinguishing between the cases when the spectrum is uniform on $r$ and $r+\Delta$ eigenvalues with sample complexity $O\rbra{r^2/\Delta}$ is optimal for the bipartite case. 

\begin{corollary} [Uniform Schmidt coefficients, \cref{thm:uniform-schmidt} restated] \label{corollary:uniform-schmidt-intro}
    Any tester for determining whether the entanglement spectrum of a bipartite pure state is uniform on $r$ or $r + \Delta$ Schmidt coefficients requires sample complexity $\Omega^*\rbra{r^2/\Delta}$.\footnote{$\Omega^*\rbra{\cdot}$ suppresses quasi-polylogarithmic factors, e.g., $r^{o\rbra{1}}$.}
\end{corollary}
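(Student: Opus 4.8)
The plan is to read off this bound from \cref{corollary:purification-useless} (equivalently, from \cref{thm:key}) by identifying the bipartite property in question with the purified version of the corresponding mixed-state problem, whose sample complexity is already pinned down by \cite{OW15}.

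Let $\mathcal{P} = \rbra{\mathcal{P}^{\textup{yes}}, \mathcal{P}^{\textup{no}}}$ be the property at hand, where $\mathcal{P}^{\textup{yes}}$ consists of bipartite pure states in $\HAB$ whose entanglement spectrum $\cbra{\lambda_j}$ has exactly $r$ non-zero (hence equal) Schmidt coefficients and $\mathcal{P}^{\textup{no}}$ of those with exactly $r+\Delta$ equal non-zero coefficients; here $\dim\HA, \dim\HB \geq r+\Delta$. Since $\cbra{\lambda_j}$ is invariant under local unitaries $U_{\textup{A}}\otimes V_{\textup{B}}$, the property $\mathcal{P}$ is unitarily invariant on $\HB$ (and on $\HA$). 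Moreover, for a bipartite pure state $\ket{\psi}_{\textup{AB}}$ the reduced state $\rho_{\textup{A}} = \tr_{\textup{B}}\rbra{\ketbra{\psi}{\psi}_{\textup{AB}}}$ has eigenvalue multiset exactly $\cbra{\lambda_j^2}$, so $\ket{\psi}_{\textup{AB}} \in \mathcal{P}^{\textup{yes}}$ iff $\rho_{\textup{A}}$ is the maximally mixed state on some $r$-dimensional subspace of $\HA$, and $\ket{\psi}_{\textup{AB}} \in \mathcal{P}^{\textup{no}}$ iff $\rho_{\textup{A}}$ is maximally mixed on some $(r+\Delta)$-dimensional subspace. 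Conversely, every such maximally mixed state arises as a reduced state of an appropriate purification (using $\dim\HB \geq r+\Delta$). Hence $\mathcal{P} = \mathsf{Purify}\rbra{\mathcal{Q}}$, where $\mathcal{Q}$ is the mixed-state property of distinguishing whether an unknown state is maximally mixed of rank $r$ or of rank $r+\Delta$.

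By \cref{corollary:purification-useless}, $\mathsf{Purify}\rbra{\mathcal{Q}}$ and $\mathcal{Q}$ have the same sample complexity for every fixed soundness and completeness. It therefore suffices to quote the known lower bound for $\mathcal{Q}$: distinguishing the maximally mixed state on an $r$-dimensional subspace from the one on an $(r+\Delta)$-dimensional subspace requires $\Omega^*\rbra{r^2/\Delta}$ copies, as established by O'Donnell and Wright \cite{OW15} (the matching upper bound $O\rbra{r^2/\Delta}$ being \cite[Theorem 1.12]{OW15}). Combining the two, any tester for $\mathcal{P}$ --- in particular any global tester acting on copies of the bipartite state --- needs sample complexity $\Omega^*\rbra{r^2/\Delta}$, which is the claim.

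The only non-routine point is to confirm that the mixed-state bound of \cite{OW15} is available in exactly the form needed (the promise problem ``maximally mixed of rank $r$ versus rank $r+\Delta$'', with a constant completeness--soundness gap), and that \cref{corollary:purification-useless} transfers it with the quantitative bound intact; both are immediate, the latter because \cref{thm:key} preserves soundness and completeness. No fresh analysis of (local) weak Schur sampling or of symmetric-group character estimates is required here --- all of that has been front-loaded into \cref{thm:key} and the cited mixed-state lower bound.
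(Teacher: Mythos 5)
Your proposal is correct and follows essentially the same route as the paper: it identifies $\textsc{UniformSchmidtCoefficients}^{r,\Delta}$ with $\mathsf{Purify}\rbra{\textsc{UniformEigenvalues}^{r,\Delta}}$ via the Schmidt-coefficient/eigenvalue correspondence, invokes \cref{corollary:purification-useless} to equate the two sample complexities, and then quotes the $\Omega^*\rbra{r^2/\Delta}$ mixed-state lower bound of \cite[Theorem 1.12]{OW15}. The only cosmetic difference is a minor citation-placement quibble about which statement in \cite{OW15} carries the lower versus the matching upper bound, which does not affect the argument.
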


\paragraph{Productness.}
A multipartite pure state is product if it can be written as a tensor product of local states. 
A state that is not product is actually entangled. 
Testing the productness was first discussed in \cite{MKB05} and later extensively analyzed in \cite{HM10,soleimanifar2022testing,JW24}.
It is known that productness can be tested with sample complexity $O\rbra{1/\varepsilon^2}$ by the tester in \cite{HM10} (cf.\ \cite{MdW16}). 
We show a matching lower bound for testing the productness, which was already noted in \cite{soleimanifar2022testing}.\footnote{The matching lower bound was noted in the paragraph after \cite[Proposition 1.2]{soleimanifar2022testing}. 
Here, we prove it by a different reduction from purity testing. 
Note that a product state is actually an MPS of bond dimension $r = 1$. However, the lower bound $\Omega\rbra{r/\varepsilon^2}$ given in \cref{corollary:mps-intro} does not imply the lower bound $\Omega\rbra{1/\varepsilon^2}$ for testing productness because the condition $r \geq 2$ is required in the proof of \cref{thm:mps-lb}.}

\begin{corollary} [Productness, \cref{thm:product} restated] \label{corollary:productness-intro}
Any tester for determining whether a bipartite pure state is a product state or $\varepsilon$-far (in trace distance) requires sample complexity
$\Omega\rbra{1/\varepsilon^2}$.
\end{corollary}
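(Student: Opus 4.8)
The plan is to reduce to purity testing of mixed states via the purification connection of \cref{corollary:purification-useless} (itself a consequence of \cref{thm:key}). Concretely, I would identify the productness property with the purified version $\mathsf{Purify}\rbra{\mathcal{Q}}$ of the mixed-state property $\mathcal{Q}$ of ``being pure''. The quantitative input is the following: writing a bipartite pure state in Schmidt form $\ket{\psi}_{\textup{AB}} = \sum_j \lambda_j \ket{\phi_j}_{\textup{A}} \ket{\gamma_j}_{\textup{B}}$ with $\lambda_1$ the largest Schmidt coefficient, the reduced state $\rho_{\textup{A}} = \tr_{\textup{B}}\rbra{\ketbra{\psi}{\psi}_{\textup{AB}}}$ satisfies $\Abs{\rho_{\textup{A}}}_{\infty} = \lambda_1^2$, and the maximal squared overlap $\max_{\ket{a},\ket{b}} \abs{\rbra{\bra{a}\otimes\bra{b}}\ket{\psi}_{\textup{AB}}}^2$ of $\ket{\psi}_{\textup{AB}}$ with a product state also equals $\lambda_1^2$ (the geometric measure of entanglement). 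From this one reads off that the trace distance from $\ket{\psi}_{\textup{AB}}$ to the set of product states is $\sqrt{1-\lambda_1^2}$, whereas the trace distance from $\rho_{\textup{A}}$ to the set of pure states is $1-\lambda_1^2$; hence $\ket{\psi}_{\textup{AB}}$ is $\varepsilon$-far (in trace distance) from product if and only if $\rho_{\textup{A}}$ is $\varepsilon^2$-far from pure, and $\ket{\psi}_{\textup{AB}}$ is product if and only if $\rho_{\textup{A}}$ is pure.

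These two equivalences say precisely that the productness property with distance parameter $\varepsilon$ coincides with $\mathsf{Purify}\rbra{\mathcal{Q}_{\varepsilon^2}}$, where $\mathcal{Q}_{\delta} = \rbra{\mathcal{Q}^{\textup{yes}}, \mathcal{Q}^{\textup{no}}}$ has $\mathcal{Q}^{\textup{yes}}$ the set of pure states and $\mathcal{Q}^{\textup{no}}$ the set of mixed states that are $\delta$-far in trace distance from pure: the yes-sets agree (product bipartite pure states are exactly the purifications of pure states) and so do the no-sets. By \cref{corollary:purification-useless}, $\mathsf{Purify}\rbra{\mathcal{Q}_{\varepsilon^2}}$ and $\mathcal{Q}_{\varepsilon^2}$ have the same sample complexity with any soundness and completeness, so testing productness with parameter $\varepsilon$ has the same sample complexity as purity testing with trace-distance parameter $\varepsilon^2$. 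It then remains to quote the purity lower bound \cref{thm:purity}, namely $\Omega\rbra{1/\delta}$ samples for purity testing with trace-distance parameter $\delta$ (tight against the $O\rbra{1/\delta}$ repeated-swap-test tester with perfect completeness), and to set $\delta = \varepsilon^2$, which gives $\Omega\rbra{1/\varepsilon^2}$.

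In order, the steps are: (i) establish the geometric-measure identity $\max_{\ket{a},\ket{b}} \abs{\rbra{\bra{a}\otimes\bra{b}}\ket{\psi}_{\textup{AB}}}^2 = \lambda_1^2$ and deduce the two trace distances above; (ii) use this to translate the promise, obtaining the exact identification of ``productness with parameter $\varepsilon$'' with $\mathsf{Purify}\rbra{\mathcal{Q}_{\varepsilon^2}}$, being careful about the boundary convention for ``$\varepsilon$-far'' and about whether the distance is measured to the set of pure product states; (iii) invoke \cref{corollary:purification-useless} to transfer sample complexity; (iv) invoke \cref{thm:purity}. I expect steps (i)--(ii) to carry essentially all the content: the quadratic relation between the productness parameter $\varepsilon$ and the purity parameter $\varepsilon^2$ is exactly what turns $\Omega\rbra{1/\delta}$ into the desired $\Omega\rbra{1/\varepsilon^2}$, and the one thing to get right is that this is an honest equality (or at least a two-sided containment up to constants) of properties, so that \cref{corollary:purification-useless} --- rather than a merely one-directional reduction --- applies. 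Everything beyond that is immediate from \cref{thm:key}/\cref{corollary:purification-useless} together with \cref{thm:purity}.
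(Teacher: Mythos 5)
Your proposal is correct and follows essentially the same route as the paper: the paper proves $\mathsf{Purify}\rbra{\textsc{Purity}_{\varepsilon^2}} \subseteq \textsc{Productness}_\varepsilon$ by noting $\textsc{Purity}=\textsc{Rank}^1$, $\textsc{Productness}=\textsc{SchmidtRank}^1$ and invoking its Schmidt-rank reduction (Eckart--Young, which at $r=1$ is exactly your largest-Schmidt-coefficient/geometric-measure computation giving the $\varepsilon\mapsto\varepsilon^2$ translation), and then applies \cref{corollary:purification-useless} and \cref{thm:purity} just as you do. The only difference is cosmetic: you establish a two-sided identification of the properties, whereas the paper only needs the one-sided containment together with \cref{fact:SPSQ}.
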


\subsubsection{New query lower bounds} \label{sec:query-intro}

Beyond the aforementioned new quantum sample lower bounds, 
we also obtain new quantum query lower bounds. 

\paragraph{Entanglement entropy.}
The entanglement entropy problem with query access to a reflection oracle studied in \cite{SY23} is given as follows. 
Let $U_{\textup{AB}} = I_{\textup{AB}} - 2\ketbra{\psi}{\psi}_{\textup{AB}}$ be a unitary operator on $\HAB$, where $\ket{\psi}_{\textup{AB}} \in \HAB$. 
For $\alpha > 0$ and $0 < a < b \leq \ln\rbra{d}$, let $\textsc{QEntanglementEntropy}^{\alpha,a,b}$ be the problem to determine which is the case (promised that it is in either case):
\begin{itemize}
    \item Low entropy: $\mathrm{E}_\alpha\rbra{\ket{\psi}_{\textup{AB}}} \leq a$;
    \item High entropy: $\mathrm{E}_\alpha\rbra{\ket{\psi}_{\textup{AB}}} \geq b$,
\end{itemize}
where $\mathrm{E}_\alpha\rbra{\cdot}$ is the $\alpha$-R\'enyi entanglement entropy. 
The quantum query complexity for the entanglement entropy problem was recently studied in the literature (for the case of $\alpha = 2$):
\begin{itemize}
    \item It was shown in \cite{SY23} that
$\mathsf{Q}\rbra{\textsc{QEntanglementEntropy}^{2,a,b}} = \Omega\rbra{e^{a/4}}$, which implies an $\Omega\rbra{\sqrt[4]{d}}$ lower bound when $a$ is close to $\ln\rbra{d}$;
    \item It was shown in \cite{WZ23} that $\mathsf{Q}\rbra{\textsc{QEntanglementEntropy}^{2,a,b}} = \widetilde \Omega\rbra{1/\sqrt{\Delta}}$ where $\Delta = b-a$, and this was later improved to $\Omega\rbra{1/\sqrt{\Delta}}$ in \cite{Weg24}.
\end{itemize}
We show a query lower bound for the entanglement entropy problem for general $\alpha$ by lifting the sample lower bound given in \cref{corollary:s-maximum-entropy} via quantum sample-to-query lifting \cite{WZ23}.

\begin{corollary} [Entanglement entropy problem, \cref{thm:q-entropy} restated] \label{corollary:query-entanglement-entropy}
    For any constant $\alpha > 0$, any tester for $\textsc{QEntanglementEntropy}^{\alpha,a,b}$ requires query complexity $\widetilde \Omega\rbra{\sqrt{d/\Delta}+\sqrt{d^{1/\alpha-1}/\Delta^{1/\alpha}}}$, where $\Delta = b - a$.
\end{corollary}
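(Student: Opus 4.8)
The plan is to derive this query lower bound from the sample lower bound of \cref{corollary:s-maximum-entropy} by invoking the quantum sample-to-query lifting of \cite{WZ23}. Recall that this lifting asserts, roughly, the following: for any property testing problem whose instances are pure states in a Hilbert space, any quantum algorithm using $T$ applications of the reflection oracle $U_\psi = I - 2\ketbra{\psi}{\psi}$ can be simulated using $\widetilde{O}(T^2)$ copies of $\ket{\psi}$; consequently the reflection-oracle query complexity of such a problem is $\widetilde\Omega$ of the square root of its sample complexity.

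First, I would observe that $\textsc{QEntanglementEntropy}^{\alpha,a,b}$ is precisely the reflection-oracle version of the sample-model problem appearing in \cref{corollary:s-maximum-entropy}: an instance of the former is the unitary $U_{\textup{AB}} = I_{\textup{AB}} - 2\ketbra{\psi}{\psi}_{\textup{AB}}$, which encodes the bipartite pure state $\ket{\psi}_{\textup{AB}} \in \mathbb{C}^d \otimes \mathbb{C}^d$ and nothing else, and the promise (the $\alpha$-R\'enyi entanglement entropy $\mathrm{E}_\alpha(\ket{\psi}_{\textup{AB}})$ being at most $a$ or at least $b$) is identical in the two models. Hence the problem meets the hypotheses of the lifting, and the number of copies of $\ket{\psi}_{\textup{AB}}$ needed to decide this promise is $\Omega\rbra{d/\Delta + d^{1/\alpha-1}/\Delta^{1/\alpha}}$ by \cref{corollary:s-maximum-entropy}, where $\Delta = b - a$.

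Second, applying the lifting I would conclude that $\mathsf{Q}\rbra{\textsc{QEntanglementEntropy}^{\alpha,a,b}} = \widetilde\Omega\rbra{\sqrt{d/\Delta + d^{1/\alpha-1}/\Delta^{1/\alpha}}}$, and since $\sqrt{A+B} = \Theta(\sqrt{A}+\sqrt{B})$ for nonnegative $A,B$, this equals $\widetilde\Omega\rbra{\sqrt{d/\Delta} + \sqrt{d^{1/\alpha-1}/\Delta^{1/\alpha}}}$; the $\polylog$ overhead from the lifting (the ambient bipartite space has dimension $d^2$) and from constant error amplification is absorbed into $\widetilde\Omega$.

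The step I expect to require the most care is not any computation but verifying the applicability conditions of the sample-to-query lifting: one must confirm that $\textsc{QEntanglementEntropy}^{\alpha,a,b}$ is a \emph{pure-state} property testing problem in the precise sense demanded by \cite{WZ23} (the instance is the state $\ket{\psi}_{\textup{AB}}$ alone, with no extra structure attached to the oracle), and that there is no mismatch between the promise and error conventions used in the sample model of \cref{corollary:s-maximum-entropy} and those in the oracle model. Once the problem is put in that normalized form, the remainder is a direct substitution.
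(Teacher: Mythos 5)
Your proposal follows exactly the paper's route: take the sample lower bound of \cref{corollary:s-maximum-entropy} and lift it via \cite{WZ23}, then split $\sqrt{A+B}$ into $\sqrt{A}+\sqrt{B}$. The sample-bound step and the final arithmetic are fine and identical to the paper's proof of \cref{thm:q-entropy}.

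The gap is in the invocation of the lifting theorem. As it is actually available (see \cref{thm:qlifting}), the lifting lower-bounds the \emph{diamond} query complexity of \cref{def:diamond-query-comp}, where the algorithm is given an arbitrary unitary $\rbra{2,a,0}$-block-encoding of the state $\ketbra{\psi}{\psi}_{\textup{AB}}$ — not the reflection oracle $I_{\textup{AB}} - 2\ketbra{\psi}{\psi}_{\textup{AB}}$ that defines $\textsc{QEntanglementEntropy}^{\alpha,a,b}$. Your paraphrase of the lifting as a statement directly about reflection-oracle algorithms assumes away precisely the translation that has to be proved: to transfer the diamond lower bound you must show that any algorithm making $q$ queries to the reflection can be turned into a diamond tester making $O\rbra{q}$ queries, i.e.\ that from an arbitrary $\rbra{2,a,0}$-block-encoding of $\ketbra{\psi}{\psi}_{\textup{AB}}$ one can implement $I_{\textup{AB}} - 2\ketbra{\psi}{\psi}_{\textup{AB}}$ \emph{exactly} with $O\rbra{1}$ queries (and conversely, to see the two query complexities agree up to constants). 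The paper does this in \cref{lemma:equiv-oracles}, via the LCU lemma (\cref{lemma:lcu}), down-scaling (\cref{lemma:down-scaling}), and amplitude amplification (\cref{lemma:ampl-amplifi}), crucially using that $I - 2\ketbra{\psi}{\psi}$ is unitary so the amplification incurs no error (\cref{lemma:block-encoding-I-2A}). Your closing remark that one must "verify the applicability conditions" of the lifting points at this issue but does not supply the argument, so as written the proof is incomplete at exactly the step where the paper needs a dedicated lemma.
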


We can see that all prior lower bounds only consider the case of $\alpha = 2$. 
By comparison, \cref{corollary:query-entanglement-entropy} gives lower bounds for any $\alpha > 0$; especially, for the case of $\alpha = 2$, it gives a lower bound $\widetilde \Omega\rbra{\sqrt{d/\Delta}}$, which covers the aforementioned lower bounds $\Omega\rbra{\sqrt[4]{d}}$ and $\widetilde \Omega\rbra{1/\sqrt{\Delta}}$ by \cite{SY23,WZ23,Weg24}.
Moreover, it is worth noting that the query lower bound in \cref{corollary:query-entanglement-entropy} is multiplicative in $d$ and $\Delta$.

\subsubsection{Extension to bipartite mixed states}\label{sec-432205}

In the above we are only concerned about bipartite pure quantum states.
What if the tested quantum states can be mixed states?
Here, unitarily invariant properties in \cref{def:local-unitary-invariance} can be naturally extended to mixed states (see \cref{def:general-unitarily-invariant}).
We prove the following theorem as an extension of \cref{thm:key}:
as long as $\mathcal{P}^{\textup{yes}}$ only consists of pure states (while $\mathcal{P}^{\textup{no}}$ can contain mixed states),
the unitary invariance on one part still implies an optimal one-way LOCC tester (see \cref{def-45215}) over all global testers.

\begin{theorem}[Unitary invariance implies optimal LOCC tester for mixed input, \cref{lemma-3311612} restated]
    \label{thm:ext}
    For any property $\mathcal{P} = \rbra{\mathcal{P}^{\textup{yes}}, \mathcal{P}^{\textup{no}}}$ of bipartite mixed states
    that is unitarily invariant on $\HB$,
    if $\mathcal{P}^{\textup{yes}}$ only consists of pure states, 
    then there is a one-way LOCC tester for $\mathcal{P}$ 
    with \emph{optimal} sample complexity (over all global testers). 
\end{theorem}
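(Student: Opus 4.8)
The plan is to follow the proof of \cref{thm:key} and push it through to the mixed-input setting, paying for the loss of optimality with a single round of one-way classical communication from Bob to Alice. Fix a completeness $c$ and a soundness $s$, and let $\mathcal{T}$ be a global $(c,s)$-tester for $\mathcal{P}$ of optimal sample complexity $N$, implemented by a POVM $\cbra{M, I-M}$ on $\HAB^{\otimes N}$. Since both $\mathcal{P}^{\textup{yes}}$ and $\mathcal{P}^{\textup{no}}$ are closed under $I_{\textup{A}}\otimes U_{\textup{B}}$, replacing $M$ by $\bar M = \int (I_{\textup{A}}\otimes U_{\textup{B}})^{\otimes N}\,M\,(I_{\textup{A}}\otimes U_{\textup{B}}^{\dagger})^{\otimes N}\,\mathrm{d}U_{\textup{B}}$ and then averaging over permutations of the $N$ copies (which $\rho_{\textup{AB}}^{\otimes N}$ is invariant under) preserves both $c$ and $s$. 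By Schur--Weyl duality, $\bar M \in \mathcal{B}(\HA^{\otimes N})\otimes\mathbb{C}[S_N]_{\textup{B}}$ and commutes with the diagonal action $P_\sigma^{\textup{A}}\otimes P_\sigma^{\textup{B}}$; writing $\HB^{\otimes N} = \bigoplus_{\lambda\vdash N} W_\lambda^{\textup{B}}\otimes V_\lambda$, this gives $\bar M = \bigoplus_\lambda \bar M_\lambda$ with $\bar M_\lambda$ acting on $\HA^{\otimes N}\otimes V_\lambda$ and trivially on the multiplicity space $W_\lambda^{\textup{B}}$.

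Next I would record the structural fact that makes completeness free. By the Cauchy identity, $\textup{Sym}^N(\HAB)$, viewed as a subspace of $\HA^{\otimes N}\otimes\HB^{\otimes N}$, equals $\bigoplus_\lambda W_\lambda^{\textup{A}}\otimes W_\lambda^{\textup{B}}\otimes\ket{\Omega_\lambda}$, where $\ket{\Omega_\lambda}\in V_\lambda^{\textup{A}}\otimes V_\lambda^{\textup{B}}$ is the (essentially unique) $S_N$-invariant, hence maximally entangled, vector. Thus for a pure state $\ket{\psi}_{\textup{AB}}^{\otimes N}$ the weak Schur samplings of Alice and of Bob always return the same Young diagram $\lambda$, and their Specht registers are left maximally entangled. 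Combined with the permutation symmetry of $\bar M_\lambda$ — which forces its action on the line $\mathbb{C}\ket{\Omega_\lambda}$ inside $V_\lambda^{\textup{A}}\otimes V_\lambda^{\textup{B}}$ to be scalar — this implies that on any pure yes-state $\bar M$ collapses to a local POVM element $\tilde B_\lambda$ on Alice's multiplicity register $W_\lambda^{\textup{A}}$, with $0\le\tilde B_\lambda\le I$, reproducing the global acceptance probability $\tr\bigl(\bar M\,(\ketbra{\psi}{\psi}_{\textup{AB}})^{\otimes N}\bigr)$ exactly. Hence Alice alone, by weak Schur sampling followed by $\cbra{\tilde B_\lambda, I-\tilde B_\lambda}$, already achieves completeness; this is exactly the mechanism behind \cref{thm:key}, and it costs no communication.

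The remaining and main difficulty is soundness on genuinely mixed no-inputs, where Alice's local test can over-accept (for instance when a mixed $\rho_{\textup{AB}}\in\mathcal{P}^{\textup{no}}$ has the same marginal $\rho_{\textup{A}}$ as some yes-state), so Bob must intervene without disturbing completeness. The one-way LOCC tester I would use has Bob perform weak Schur sampling, obtain $\lambda_{\textup{B}}$, measure his Specht register $V_{\lambda_{\textup{B}}}$, and send $\lambda_{\textup{B}}$ together with the outcome to Alice; Alice performs her own weak Schur sampling and the local measurement above, and additionally \emph{rejects} whenever Bob's message is inconsistent with the yes-case correlation pattern (mismatched Young diagrams, or mismatched Specht outcomes). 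By the structural fact this check never fires on pure yes-states, so completeness is untouched, while on an arbitrary input the acceptance probability takes the form $\tr\bigl(\bar M\,\mathcal{C}(\rho_{\textup{AB}}^{\otimes N})\bigr)$ for an explicit CPTP map $\mathcal{C}$ — weak Schur sampling on $\textup{B}$ followed by the pinch that transfers the residual action to Alice via the transpose (``ricochet'') identity — which fixes pure power states. The crux is then to prove $\tr\bigl(\bar M\,\mathcal{C}(\rho_{\textup{AB}}^{\otimes N})\bigr)\le s$ whenever $\rho_{\textup{AB}}\in\mathcal{P}^{\textup{no}}$; I expect this to follow by invoking the closure of $\mathcal{P}^{\textup{no}}$ under local unitaries on $\HB$ once more, so that $\mathcal{C}(\rho_{\textup{AB}}^{\otimes N})$ decomposes as a mixture of tensor powers of states still lying in $\mathcal{P}^{\textup{no}}$ and the soundness of $\bar M$ applies term by term. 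Making this last step rigorous — choosing Bob's Specht-register measurement so that the induced $\mathcal{C}$ has exactly this form, and controlling the fact that the ricochet identity is exact only on $\textup{Sym}^N(\HAB)$ — is where I expect the real work to be; the rest is bookkeeping with the Schur--Weyl decomposition.
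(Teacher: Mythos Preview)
Your framework through the completeness argument is essentially the paper's: average over $U_{\textup{B}}^{\otimes N}$ and over simultaneous permutations, use Schur--Weyl on both sides, and exploit that pure tensor powers live on the maximally entangled vectors $\kett{I_{\mathcal{V}_\lambda}}$ in the paired Specht registers. So far so good.

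The soundness argument, however, has a genuine gap. You write that after Bob's Specht measurement the state $\mathcal{C}(\rho_{\textup{AB}}^{\otimes N})$ ``decomposes as a mixture of tensor powers of states still lying in $\mathcal{P}^{\textup{no}}$.'' This is false: weak Schur sampling plus a Specht-register measurement on $\rho_{\textup{AB}}^{\otimes N}$ is not implemented by a random local unitary on $\HB$, and the post-measurement state is generically \emph{not} of tensor-power form at all, so you cannot invoke soundness of $\bar M$ term by term. The ricochet trick only works exactly on $\mathrm{Sym}^N(\HAB)$, as you note, and there is no reason an arbitrary mixed no-instance lives there.

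The paper handles soundness by a completely different mechanism. Let $\Pi=\sum_\lambda \frac{1}{\dim\mathcal{V}_\lambda}\kettbbra{I_{\mathcal{V}_\lambda}}{I_{\mathcal{V}_\lambda}}\otimes I$ be the projector onto the ``pure tensor power'' subspace. The key algebraic fact (Lemma~\ref{lemma:442014}, item~\ref{lemma:itm:commute}) is that the doubly-symmetrized tester $\widetilde{\widetilde{\mathcal{T}}}$ \emph{commutes} with $\Pi$; this uses that the trivial representation occurs in $\mathcal{V}_\lambda\otimes\mathcal{V}_\mu$ only when $\lambda=\mu$, with multiplicity one. Commutation gives $\overline{\mathcal{T}}\coloneqq\Pi\widetilde{\widetilde{\mathcal{T}}}\Pi\sqsubseteq\widetilde{\widetilde{\mathcal{T}}}$, so $\overline{\mathcal{T}}$ has \emph{exact} soundness $s$ on every mixed no-instance, while completeness is preserved because pure yes-states lie in the support of $\Pi$. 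Then $\Pi$ is replaced by Hayashi's one-way LOCC approximation $L_\lambda$ of the maximally-entangled projector (Lemma~\ref{lemma-42239}); this introduces an additive error at most $1/(\dim\mathcal{V}_\lambda+1)\le 1/N$ in the acceptance probability, degrading soundness to $(s+c)/2$ after padding $N\ge 2(c-s)^{-1}$. A final amplification step (Lemma~\ref{lemma-4181503}) restores $(c,s)$ at the cost of a constant factor depending only on $c,s$. Note in particular that ``optimal'' in the theorem is up to this constant overhead; the paper does not claim exact optimality for the LOCC tester.
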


The LOCC setting (cf.~\cite{watrous2018theory,chitambar2014everything}) has been widely studied in the context of property testing of bipartite quantum states. 
For example, the verification of bipartite pure state was considered in \cite{hayashi2006study,AHL+14,wang2019optimal,zhu2019efficient,pallister2018optimal}, i.e., verifying whether a given quantum device successfully produces a designated bipartite state. 
In \cite{hayashi2006study}, the designated state is the standard maximally entangled state,
which was further studied by~\cite{AHL+14} using communication with few qubits.
Other tasks have also been studied, e.g., purity testing~\cite{matsumoto2010test} and inner product estimation~\cite{anshu2022distributed}.

\subsection{Techniques}
Now let us give a brief overview of the main techniques used in this paper to obtain the optimalities of local testers for pure states and one-way LOCC testers for mixed states. Our results are based on an extended Schur-Weyl duality on bipartite systems (see \cref{lemma-3241548}), which decomposes the tensor product of \(N\) bipartite systems into irreducible spaces of the group \(\mathbb{S}_N\times \mathbb{S}_N \times\mathbb{U}_d\times \mathbb{U}_d\) (the direct product of symmetric groups on \(N\) letters and unitary groups on \(d\)-dimensional Hilbert space), where the first \(\mathbb{S}_N\) and the first \(\mathbb{U}_d\) act on Alice's system, and the others act on Bob's system.
Specifically, 
\begin{equation}
\HAB^{\otimes N}\stackrel{(\mathbb{S}_N\times \mathbb{S}_N)\times (\mathbb{U}_d\times \mathbb{U}_d)}{\cong}\bigoplus_{\lambda_1,\lambda_2\vdash (N,d)} \mathcal{V}_{\lambda_1,\textup{A}}\otimes\mathcal{V}_{\lambda_2,\textup{B}} \otimes \mathcal{W}_{\lambda_1,\textup{A}} \otimes \mathcal{W}_{\lambda_2,\textup{B}},
\end{equation}
where \(\mathcal{V}_{\lambda,\textup{A}}=\mathcal{V}_{\lambda,\textup{B}}\coloneqq\mathcal{V}_\lambda\) and \(\mathcal{W}_{\lambda,\textup{A}}=\mathcal{W}_{\lambda,\textup{B}}\coloneqq\mathcal{W}_{\lambda}\) are irreducible representations of \(\mathbb{S}_N\) and \(\mathbb{U}_d\), respectively.

\paragraph{Local tester for pure states.} 
Our construction builds on two key observations. 
\begin{enumerate}
\item Our first observation is that if the property is unitarily invariant on Bob's system, we can safely ``discard'' those irreducible spaces of \(\mathbb{U}_d\) possessed by Bob (see \cref{eq-450110}), and force it to be maximally mixed states. This does help to decouple the bipartite states to some extent.
However, it is not yet sufficient for us to locally test the states on Alice's part, since Bob's irreducible spaces of \(\mathbb{S}_N\) may still be entangled with Alice's system. \label{item:ob1}
\item Our second observation is that the \textit{pure} tested state obeys a normal form on these irreducible spaces (see \cref{lemma-3262205}), due to the invariance of the tensor product of \(N\) pure samples under the group actions of \((\pi,\pi)\in\mathbb{S}_N\times\mathbb{S}_N\), called simultaneous-permutations. Then, we can see that the tensor state in any irreducible representation of \(\mathbb{S}_N\times\mathbb{S}_N\) must be maximally entangled in the standard basis, and therefore the isotypic representations (with respect to \(\mathbb{S}_N\)) of Alice and Bob can only be coupled when they are isomorphic (this fact was previously observed in \cite{matsumoto2007universal} and is closely related to the \((\textup{GL}_n,\textup{GL}_m)\)-duality~\cite{howe1987gl} in classical invariant theory). Therefore, we can define a local tester that mimics the behavior of the global tester acting and post-selecting on these maximally entangled states (see \cref{eq-450110}), which also has exactly the same sample complexity as the global tester. \label{item:ob2}
\end{enumerate}
Specifically, suppose that $\mathcal{T}$ is a (global) tester with sample complexity $N$ for a property $\mathcal{P}$ of bipartite pure states that is unitarily invariant on $\HB$.
Then, one can construct the following local tester on $\HA$:
\begin{equation}\label{eq-450110}
    \hat{\mathcal{T}} \coloneqq \sum_{\lambda\vdash (N,d)}\frac{1}{\dim(\mathcal{W}_\lambda)\dim(\mathcal{V}_\lambda)} \cdot \overbrace{ I_{\mathcal{V}_{\lambda,\TA}} \otimes \underbrace{\tr_{\mathcal{W}_{\lambda,\TB}}}_{\textup{Observation~\ref{item:ob1}}} \Bigg[ \underbrace{\bbra{I_{\mathcal{V}_\lambda}} 
\rbra*{ \int_{U\in\mathbb{U}_d}  U_{\textup{B}}^{\otimes N} \mathcal{T} U_{\textup{B}}^{\dag \otimes N} }  \kett{I_{\mathcal{V}_\lambda}} }_{\textup{Observation~\ref{item:ob2}}} \Bigg] }^{\HA^{\otimes N}} \otimes \overbrace{I_{\textup{B}}}^{\HB^{\otimes N}},
\end{equation}
and it can be shown that $\hat{\mathcal{T}}$ is a local tester for $\mathcal{P}$ with sample complexity $N$. 

\textit{Applications in proving new lower bounds}. 
The local tester allows us to investigate the lower bounds for property testing problems regarding bipartite pure states by reductions of property testing of mixed states (as shown in \cref{sec:sample-intro}). 
For example, the rank of mixed states reduces to the Schmidt rank of bipartite pure states.
Furthermore, these new sample lower bounds for testing bipartite pure states imply new query lower bounds when the oracle is a reflection operator by the sample-to-query lifting \cite{WZ23} 
(as shown in \cref{sec:query-intro}). 

\paragraph{One-way LOCC tester for mixed states.} 
When the provided states are mixed, the previous normal form no longer exists, as the tensor product of samples of a mixed state only commutes with the actions of simultaneous-permutations but is not invariant under them. Nevertheless, we can still make use of this property. That is, we can define a new tester that commutes with both the actions of \(\mathbb{U}_d\) on Bob's system and the actions of simultaneous-permutations (see \cref{eq-421128}). 
Notably, we prove that this tester can be embedded with a purity-tester~\cite{matsumoto2010test} without affecting its success probability (see \cref{lemma:442015}); here, the purity-tester can be viewed as the projector onto the subspace spanned by the samples of bipartite pure states, or the symmetric subspace if we consider the bipartite system as a whole system. 
The embedded tester turns out to have a normal form that fixes a maximally entangled state on each isomorphic pair of the irreducible spaces of \(\mathbb{S}_N\), and such maximally entangled state can be approximated by one-way LOCC as suggested in \cite{hayashi2006study}. Therefore, the embedded tester can be implemented through one-way LOCC in a similar manner, with only an overhead of a constant factor in the sample complexity (see \cref{lemma-43153}).

\subsection{Related Work}

\subsubsection{Comparison with \texorpdfstring{\cite{soleimanifar2022testing}}{[SW22]}}
A result prior to \cref{thm:key} was given in \cite[Theorem 5.2]{soleimanifar2022testing}, where they also showed an optimal local tester for unitarily invariant properties. However, their result requires unitary invariance on \textit{both} parts.
    By contrast, our stronger result applies to the case of \textbf{single-part} unitary invariance, which provides the following advantages. 
    \begin{enumerate}
        \item Better lower bounds: we are able to prove a better lower bound of $\Omega\rbra{\rbra{\sqrt{nr}+r}/\varepsilon^2}$ (see \cref{corollary:mps-intro}) for testing matrix product states than the lower bound $\Omega\rbra{\sqrt{n}/\varepsilon^2}$ in \cite[Theorem 1.4]{soleimanifar2022testing}. More technical details can be found in \cref{sec-722031} (especially \cref{lemma-4242204}). 
        \item More applications: 
        we are able to show that purified samples offer no advantage in property testing of mixed states (see \cref{corollary:purification-useless}); and our result was recently applied to complexity theory for quantum promise problems~\cite{CCHS24} (see recent developments in \cref{sec-5251957}).
    \end{enumerate}

\subsubsection{Recent Developments}\label{sec-5251957}
After the work described in this paper, 
there have been several recent developments.

Liu, Gong, Du, and Cai \cite{LGDC24} further investigated the property testing of a mixed state with and without its purifications. 
In particular, their Theorems 1 and 2 show that the purified samples can provide certain exponential advantages over the mixed samples when (i) the ancilla system for the purification is of constant dimension and (ii) only incoherent measurements are allowed.
Intuitively, such advantage relies on that when the ancilla dimension is constant, certain quantities like \(\tr(\rho^t)\) for integer \(t\) can be estimated by performing measurements only on the ancilla system.

Lovitz and Lowe \cite{LL24} further investigated the sample complexity of testing tree tensor network states (TTNS), with MPS a special case of TTNS. 
In particular, their Theorem 1.1 implies a matching (up to a logarithmic factor) lower bound \(\Omega(nr^2/(\varepsilon^2 \log n))\) for the MPS testing with one-sided error (i.e., perfect completeness; see \cref{def:tester}), thereby showing the optimality of the one-sided error algorithm in \cite{soleimanifar2022testing}. However, it is still open whether there exists a lower bound for the case of two-sided error (i.e., the default setting \cite{MdW16}; see \cref{def:tester}) better than the \(\Omega((\sqrt{nr}+r)/\varepsilon^2)\) provided in \cref{corollary:mps-intro}.

Chia, Chung, Huang, and Shih \cite{CCHS24} established the separation of quantum promise complexity classes ($\mathsf{pBQP/poly}\subsetneq \mathsf{pBQP/qpoly}$ and $\mathsf{mBQP/poly}\subsetneq \mathsf{mBQP/qpoly}$), and resolved an open problem in \cite{morimae2024unconditionally} concerning unconditional secure quantum commitment schemes, where our result (\cref{theorem-424241}) is used in their proof as an important step.


\subsection{Discussion}\label{sec-4201540}

In this paper we prove that for properties of bipartite pure quantum states,
the unitary invariance on one part implies the existence of a local tester on the other part that achieves the optimal sample complexity over all global testers.
An implication is that purified samples offer no advantage in property testing of mixed states, which is then applied to obtain a series of new quantum sample lower bounds.
In particular, we prove the first general lower bound for testing Schmidt rank, thereby answering an open question in~\cite{MdW16}. 
We improve the prior best lower bounds for testing MPS by \cite{soleimanifar2022testing,aaronson2022quantum}; furthermore, when perfect completeness is required, we show a matching lower bound with respect to the rank and precision.
We also show new sample lower bounds for testing maximal entanglement, entanglement entropy and uniform Schmidt coefficients.
Beyond the sample complexity, we are also able to prove a new quantum query lower bound for the entanglement entropy problem considered in~\cite{SY23}.
Finally, we extend our central result to the case of mixed state input,
and show that one-way LOCC is sufficient to realize the optimal tester. 

We list several open questions for future works as follows:
\begin{enumerate}
    \item
        \cref{thm:key} actually implies the following duality:
        \begin{equation}
            \text{unitary invariance on one part} \Longleftrightarrow \text{an optimal local tester on the other part},
        \end{equation}
        which strengthened the folklore duality in \cref{eq:dual}.
        In \cref{thm:ext}, we have shown that when the tested quantum states are mixed
        and $\mathcal{P}^{\textup{yes}}$ only consists of pure states,
        the unitary invariance on one part implies the existence of an optimal (over all global testers)
        one-way LOCC tester.
        Does the converse implication also hold?
        That is, can we prove that the existence of an optimal one-way LOCC tester implies both
        the unitary invariance and that $\mathcal{P}^{\textup{yes}}$ only consists of pure states?
    \item 
        In addition to the first question,
        one might be wondering if \cref{thm:key} can be extended to the property testing of mixed states
        in a more general sense,
        without restricting $\mathcal{P}^{\textup{yes}}$ to only consisting of pure states.
        In other words,
        for any property $\mathcal{P}$ of bipartite mixed states unitarily invariant on one part,
        is there an optimal (over all global testers) and canonical tester with some locality constraints for testing $\mathcal{P}$?
    
    \item 
        \cref{corollary:schmidt-rank-intro,corollary:mps-intro} provide the lower bounds $\Omega\rbra{r}$ for testing the Schmidt rank and \(\Omega(\sqrt{nr}+r)\) for the bond dimension of MPS (with two-sided error), respectively. 
        However, the current best upper bounds for these two problems are $O\rbra{r^2}$ due to \cite[Theorem 1.11]{OW15}, and \(O(nr^2)\) due to \cite[Theorem 1.3]{soleimanifar2022testing}.
        There are still gaps between the upper and lower bounds.
        One interesting question is whether our lower bounds can be further improved.
        It is worth noting that in \cite{LL24}, they showed a matching (up to a logarithmic factor) lower bound \(\Omega(nr^2/(\varepsilon^2 \log n))\) for testing MPS with \textit{one-sided} error. 
    
    \item 
        The applications of our results are not limited to the sample complexity of property testing.
        For example, 
        in \cref{corollary:query-entanglement-entropy}, 
        a new quantum query lower bound was proved for the entanglement entropy problem,
        by applying the quantum sample-to-query lifting~\cite{WZ23} to \cref{corollary:s-maximum-entropy}.
        Can we use similar approaches to show more quantum query lower bounds?

    \item
        The power and limitations of incoherent (single-copy) testers have been studied for different tasks (e.g., \cite{aharonov2022quantum,chen2022exponential,anshu2022distributed,chen2023unitarity,fawzi2023quantum,bluhm2024hamiltonian,liu2024separation}). 
        The optimal local tester constructed in our paper requires a coherent (multi-copy) measurement on the local samples possessed by Alice (or Bob). Therefore, one may ask whether the optimality still holds in the incoherent setting. That is, for unitarily invariant properties, can incoherent local tester achieve the optimal sample complexity over all incoherent global testers? \label{question-item}
\end{enumerate}

\subsection{Organization}

The remainder of the paper is organized as follows.
In \cref{sec:pre}, the basic notations and concepts of testing quantum states and the representation theory are introduced.
The central result (\cref{thm:key}) will be presented and proved in \cref{sec:opt-local-tester}.
The limitations of purification (\cref{corollary:purification-useless}) will be used in \cref{sec:sample-bounds}
to obtain a series of new sample lower bounds;
and in \cref{sec:query-bounds}, a new query lower bound will be further derived.
Finally, the extension of the central result (\cref{thm:ext})
will be presented and proved in \cref{sec:optimal-locc}.

\section{Preliminaries}
\label{sec:pre}

\subsection{Notations}
Suppose \(\mathcal{H}\) is a Hilbert space of dimension \(d\), equipped with a standard basis (e.g., the computational basis). 
We use $\ket{\psi}$ to denote a vector (or, a pure state) in $\mathcal{H}$, and use $\bra{\psi}$ to denote the Hermitian conjugate of $\ket{\psi}$. 
A mixed quantum state given by an ensemble $\cbra{\rbra{\lambda_j, \ket{\psi}_j}}_j$ of pure quantum states, where $\sum_j \lambda_j = 1$ and $\lambda_j \geq 0$, can be described by a density operator $\rho = \sum_j \lambda_j \ketbra{\psi_j}{\psi_j}$. 

For two mixed quantum states $\rho$ and $\sigma$, their fidelity is defined by $\mathrm{F}\rbra{\rho, \sigma} = \tr\sbra{\sqrt{\sqrt{\sigma}\rho\sqrt{\sigma}}}$ and their trace distance is defined by $d_{\tr}\rbra{\rho, \sigma} = \frac 1 2 \tr\rbra{\abs{\rho-\sigma}}$.
The two measures have the following relation.
\begin{lemma}[Fuchs–van de Graaf, {\cite[Theorem 1]{FvdG99}}] \label{lemma:fi-td}
    For mixed quantum states $\rho$ and $\sigma$, we have
    \begin{equation}
    1 - \mathrm{F}\rbra{\rho, \sigma} \leq d_{\tr}\rbra{\rho, \sigma} \leq \sqrt{1 - \mathrm{F}\rbra{\rho, \sigma}^2}.
\end{equation}
\end{lemma}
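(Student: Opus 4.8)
The plan is to treat the two inequalities separately, in both cases passing through Uhlmann's theorem, which identifies the fidelity with an optimal overlap of purifications: $\mathrm{F}\rbra{\rho,\sigma} = \max \abs{\braket{\psi}{\phi}}$, where the maximum ranges over all purifications $\ket{\psi}$ of $\rho$ and $\ket{\phi}$ of $\sigma$ in a common enlarged space. The two elementary facts I will repeatedly invoke are: (i) the closed form for the trace distance of two \emph{pure} states, $d_{\tr}\rbra{\ketbra{\psi}{\psi}, \ketbra{\phi}{\phi}} = \sqrt{1 - \abs{\braket{\psi}{\phi}}^2}$, obtained by diagonalizing the traceless rank-two operator $\ketbra{\psi}{\psi} - \ketbra{\phi}{\phi}$ inside the two-dimensional span of $\ket{\psi}$ and $\ket{\phi}$; and (ii) the contractivity (monotonicity) of both the trace distance and the fidelity under quantum channels, in particular under the partial trace and under any measurement channel.

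For the \textbf{upper bound} $d_{\tr}\rbra{\rho,\sigma} \leq \sqrt{1 - \mathrm{F}\rbra{\rho,\sigma}^2}$, I would first use Uhlmann's theorem to select purifications $\ket{\psi}, \ket{\phi}$ of $\rho, \sigma$ achieving $\abs{\braket{\psi}{\phi}} = \mathrm{F}\rbra{\rho,\sigma}$. Since $\rho$ and $\sigma$ arise from $\ketbra{\psi}{\psi}$ and $\ketbra{\phi}{\phi}$ by tracing out the purifying system, contractivity of the trace distance under the partial trace gives $d_{\tr}\rbra{\rho,\sigma} \leq d_{\tr}\rbra{\ketbra{\psi}{\psi}, \ketbra{\phi}{\phi}}$. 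Fact (i) then evaluates the right-hand side to $\sqrt{1 - \abs{\braket{\psi}{\phi}}^2} = \sqrt{1 - \mathrm{F}\rbra{\rho,\sigma}^2}$, completing this direction.

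For the \textbf{lower bound} $1 - \mathrm{F}\rbra{\rho,\sigma} \leq d_{\tr}\rbra{\rho,\sigma}$, I would reduce to the classical case. First I would establish the classical Fuchs--van de Graaf inequality for probability distributions $p, q$: writing $1 - \sum_m \sqrt{p_m q_m} = \tfrac12 \sum_m \rbra{\sqrt{p_m} - \sqrt{q_m}}^2$ (using normalization) and bounding each summand by $\tfrac12 \abs{\sqrt{p_m} - \sqrt{q_m}} \cdot \abs{\sqrt{p_m} + \sqrt{q_m}} = \tfrac12 \abs{p_m - q_m}$ gives $1 - \sum_m \sqrt{p_m q_m} \leq \tfrac12 \sum_m \abs{p_m - q_m}$. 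Next I would invoke the existence of a measurement $\cbra{E_m}$ that \emph{attains} the quantum fidelity, i.e.\ whose outcome distributions $p_m = \tr\rbra{E_m \rho}$ and $q_m = \tr\rbra{E_m \sigma}$ satisfy $\sum_m \sqrt{p_m q_m} = \mathrm{F}\rbra{\rho,\sigma}$, which is precisely the statement that the quantum fidelity is the minimum over all measurements of the classical fidelity. For this measurement, contractivity of the trace distance under the measurement channel gives $\tfrac12 \sum_m \abs{p_m - q_m} \leq d_{\tr}\rbra{\rho,\sigma}$, so chaining with the classical inequality yields $1 - \mathrm{F}\rbra{\rho,\sigma} = 1 - \sum_m \sqrt{p_m q_m} \leq \tfrac12 \sum_m \abs{p_m - q_m} \leq d_{\tr}\rbra{\rho,\sigma}$.

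The main obstacle is identifying the \emph{correct} measurement in the lower bound: one must use the fidelity-attaining POVM rather than the trace-distance-attaining one, because contractivity of fidelity points the wrong way for the latter. Since measuring can only increase the classical fidelity, the trace-distance-optimal measurement yields only $1 - \sum_m \sqrt{p_m q_m} \leq 1 - \mathrm{F}\rbra{\rho,\sigma}$, which leaves the desired bound unproven. Establishing that a fidelity-attaining POVM exists --- equivalently that the quantum fidelity is the minimum of the classical fidelity over all measurements and that this minimum is achieved --- is the one nontrivial external ingredient; it follows from a spectral argument on a suitable operator built from $\rho$ and $\sigma$ (the Fuchs--Caves measurement), with a limiting argument to accommodate a singular $\sigma$. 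Everything else reduces to the elementary pure-state identity in fact (i) and the two contractivity statements.
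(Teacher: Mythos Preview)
Your proposal is correct and follows the standard textbook argument for the Fuchs--van de Graaf inequalities. The paper itself does not prove this lemma; it is stated as a citation to \cite{FvdG99} and used as a black box, so there is no proof in the paper to compare against.
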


A linear map \(X \colon \mathcal{H}\rightarrow\mathcal{H}\) can be turned into a \(d\times d\) complex-valued matrix with respect to the standard basis.
The set of all \(d\times d\) complex-valued matrices can be formed as a \(d^2\)-dimensional vector space \(\mathbb{C}^{d^2}\) by simply flattening the matrices to vectors (row-major). For a \(d\times d\) matrix \(X\), we use \(\kett{X}\) to denote the corresponding element in this vector space. For example,
\begin{equation}\label{eq-327123}
\kett{\ketbra{\psi}{\phi}}=\ket{\psi}\ket{\phi^*}, \quad\quad\quad \kett{XYZ^\dag}=X\otimes Z^* \kett{Y},
\end{equation}
where \(\ket{\phi^*}\) and \(Z^*\) are the entry-wise complex conjugate of the pure state \(\ket{\phi}\) and linear operator \(Z\) (with respect to the standard basis), respectively. The inner product of this vector space is thus defined by \(\bbrakett{A}{B}=\tr(A^\dag B)\). 
We will use \(I_{\mathcal{H}}\) to denote the identity map on \(\mathcal{H}\). Thus, \(\kett{I_{\mathcal{H}}}=\sum_{i}\ket{i}\ket{i}\) is a (non-normalized) maximally entangled state with respect to the standard basis.
\begin{remark}
For convenience, we may use the vertical form \(\begin{matrix}X\\Y\end{matrix}\) to denote the tensor \(X\otimes Y\). For example, \cref{eq-327123} can be written as
\begin{equation}
\kett{\ketbra{\psi}{\phi}}=\begin{matrix}\ket{\psi}\\\ket{\phi^*}\end{matrix}\,, \quad\quad\quad \kett{XYZ^\dag}=\begin{matrix}X\\ Z^*\end{matrix} \kett{Y}.
\end{equation}
\end{remark}

\subsection{Quantum State Testing} \label{sec:def-testers}

Let $\mathcal{D}\rbra{\mathcal{H}}$ denote the set of mixed states in Hilbert space $\mathcal{H}$. 
A property of mixed quantum states in $\mathcal{D}\rbra{\mathcal{H}}$ is a pair of disjoint sets $\mathcal{P} = \rbra{\mathcal{P}^{\textup{yes}}, \mathcal{P}^{\textup{no}}}$ with $\mathcal{P}^{\textup{yes}} \cap \mathcal{P}^{\textup{no}} = \emptyset$ and $\mathcal{P}^{\textup{yes}} \cup \mathcal{P}^{\textup{no}} \subseteq \mathcal{D}\rbra{\mathcal{H}}$. 
The notion of quantum testers for properties of quantum states is given as follows. 

\begin{definition} [Testers for properties of quantum states]
    \label{def:tester}
    A tester $\mathcal{T}$ is described by a semi-definite operator $0 \sqsubseteq \mathcal{T} \sqsubseteq I$, where $\sqsubseteq$ is the L\"{o}wner order and $I$ is the identity operator. 
    For $0 \leq s < c \leq 1$, $\mathcal{T}$ is called a $\rbra{c,s}$-tester for property $\mathcal{P} = \rbra{\mathcal{P}^{\textup{yes}}, \mathcal{P}^{\textup{no}}}$ with sample complexity $N$, if
\begin{itemize}
    \item 
    \textbf{Completeness}: For every $\rho \in \mathcal{P}^{\textup{yes}}$, $\mathcal{T}$ accepts with probability $\tr\sbra{\mathcal{T} \rho^{\otimes N}} \geq c$,
    \item 
    \textbf{Soundness}: For every $\rho \in \mathcal{P}^{\textup{no}}$, $\mathcal{T}$ accepts with probability $\tr\sbra{\mathcal{T} \rho^{\otimes N}} \leq s$.
\end{itemize}
In particular,
\begin{itemize}
    \item A tester for $\mathcal{P}$ (in the default setting~\cite{MdW16}) means a $\rbra{2/3,1/3}$-tester for $\mathcal{P}$,
    \item A tester for $\mathcal{P}$ with perfect completeness means a $\rbra{1,1/3}$-tester for $\mathcal{P}$.
\end{itemize}
\end{definition}
\begin{definition} [Sample complexity] \label{def:sample-complexity}
For $0 \leq s < c \leq 1$, the $\rbra{c,s}$-sample complexity of property $\mathcal{P}$, denoted by $\mathsf{S}_{c,s}\rbra{\mathcal{P}}$, is the minimum sample complexity over all $\rbra{c,s}$-testers for $\mathcal{P}$. 
In particular, we write $\mathsf{S}\rbra{\mathcal{P}} \coloneqq \mathsf{S}_{2/3,1/3}\rbra{\mathcal{P}}$ (in the default setting) and $\mathsf{S}_1\rbra{\mathcal{P}} \coloneqq \mathsf{S}_{1,1/3}\rbra{\mathcal{P}}$ (with perfect completeness).
\end{definition}

\begin{remark}
    The notation $\mathsf{S}_1\rbra{\cdot}$ for the sample complexity with perfect completeness is due to the same reason for the notation $\mathsf{QMA}_1$ for the class of $\mathsf{QMA}$ with perfect completeness \cite{Bra11,GN16}. 
    It trivially holds that $\mathsf{S}\rbra{\mathcal{P}} \geq \mathsf{S}_1\rbra{\mathcal{P}}$.
\end{remark}

For two properties $\mathcal{P}$ and $\mathcal{Q}$,
we denote $\mathcal{P}\subseteq \mathcal{Q}$ if $\mathcal{P}^{X}\subseteq \mathcal{Q}^{X}$ for $X\in \{\textup{yes},\textup{no}\}$.
It is easy to see that if $\mathcal{P}\subseteq \mathcal{Q}$, then any tester for $\mathcal{Q}$ is also a tester for $\mathcal{P}$, which implies the following fact.
\begin{fact} \label{fact:SPSQ}
    For properties $\mathcal{P}$ and $\mathcal{Q}$, if $\mathcal{P}\subseteq \mathcal{Q}$, then $\mathsf{S}_{c,s}\rbra{\mathcal{P}} \leq \mathsf{S}_{c,s}\rbra{\mathcal{Q}}$.
\end{fact}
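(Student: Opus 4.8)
The plan is to show that the very operator witnessing the sample complexity of $\mathcal{Q}$ already serves as a tester for $\mathcal{P}$, so that no larger sample budget is ever needed for $\mathcal{P}$. Concretely, I would fix an arbitrary $(c,s)$-tester $\mathcal{T}$ for $\mathcal{Q}$ of sample complexity $N = \mathsf{S}_{c,s}\rbra{\mathcal{Q}}$, meaning that $0 \sqsubseteq \mathcal{T} \sqsubseteq I$ acts on $\mathcal{H}^{\otimes N}$ and satisfies the completeness and soundness conditions of \cref{def:tester} with respect to $\mathcal{Q}^{\textup{yes}}$ and $\mathcal{Q}^{\textup{no}}$. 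The goal is then to verify that this same $\mathcal{T}$ is automatically a $(c,s)$-tester for $\mathcal{P}$.

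First I would check completeness for $\mathcal{P}$. Since $\mathcal{P} \subseteq \mathcal{Q}$ by hypothesis, we have $\mathcal{P}^{\textup{yes}} \subseteq \mathcal{Q}^{\textup{yes}}$, so every $\rho \in \mathcal{P}^{\textup{yes}}$ already lies in $\mathcal{Q}^{\textup{yes}}$; the completeness guarantee of $\mathcal{T}$ for $\mathcal{Q}$ then yields $\tr\sbra{\mathcal{T}\rho^{\otimes N}} \geq c$. Next I would check soundness in the same way: from $\mathcal{P}^{\textup{no}} \subseteq \mathcal{Q}^{\textup{no}}$, every $\rho \in \mathcal{P}^{\textup{no}}$ lies in $\mathcal{Q}^{\textup{no}}$, hence $\tr\sbra{\mathcal{T}\rho^{\otimes N}} \leq s$. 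Thus $\mathcal{T}$ satisfies both conditions of \cref{def:tester} for $\mathcal{P}$ with the same sample complexity $N$.

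Finally, I would minimize over testers. The two steps above convert the chosen optimal $(c,s)$-tester for $\mathcal{Q}$ into a $(c,s)$-tester for $\mathcal{P}$ of sample complexity $N = \mathsf{S}_{c,s}\rbra{\mathcal{Q}}$; by the definition of sample complexity (\cref{def:sample-complexity}) as the minimum over all such testers, this forces $\mathsf{S}_{c,s}\rbra{\mathcal{P}} \leq N = \mathsf{S}_{c,s}\rbra{\mathcal{Q}}$. There is essentially no obstacle here: the inclusion $\mathcal{P} \subseteq \mathcal{Q}$ only \emph{shrinks} the families of states over which the completeness and soundness inequalities must hold, so every constraint imposed on a tester for $\mathcal{Q}$ is at least as stringent as the corresponding constraint for $\mathcal{P}$. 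The single point worth stating explicitly is that the operator $\mathcal{T}$ and the integer $N$ are reused verbatim, which is precisely what makes the inequality point in the claimed direction rather than the reverse.
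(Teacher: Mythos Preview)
Your proposal is correct and mirrors exactly the paper's own reasoning: the paper simply notes that if $\mathcal{P}\subseteq\mathcal{Q}$ then any tester for $\mathcal{Q}$ is also a tester for $\mathcal{P}$, which immediately yields $\mathsf{S}_{c,s}(\mathcal{P}) \leq \mathsf{S}_{c,s}(\mathcal{Q})$. You have spelled out the completeness and soundness checks in slightly more detail, but the argument is identical.
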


The property induced by a set $\mathcal{R}$ of mixed states in $\mathcal{D}\rbra{\mathcal{H}}$ with sensitivity parameter $\varepsilon$ is denoted by $\mathcal{R}_\varepsilon = \rbra{\mathcal{R}^{\textup{yes}}, \mathcal{R}^{\textup{no}}}$, where $\mathcal{R}^{\textup{yes}} = \mathcal{R}$ and 
\begin{equation}
\mathcal{R}^{\textup{no}} = \mathcal{R}^{\geq \varepsilon} =\set{\rho \in \mathcal{D}\rbra{\mathcal{H}}}{d_{\tr}\rbra{\rho, \sigma} \geq \varepsilon \textup{ for every } \sigma \in \mathcal{R}}.
\end{equation}
Here, $d_{\tr}\rbra{\cdot, \cdot}$ is the trace distance. Note that $\mathcal{R}^{\textup{yes}} \cup \mathcal{R}^{\textup{no}} \subseteq \mathcal{D}\rbra{\mathcal{H}}$. 

\paragraph{Bipartite state testing.}
Let $\HAB = \HA \otimes \HB$ be the Hilbert space of bipartite pure states with $\HA = \HB = \mathbb{C}^d$,\footnote{Throughout this paper, we assume that $\HA$ and $\HB$ have the same dimension $d$ for simplicity. Actually, all results in this paper can be generalized to the general case when $\HA$ and $\HB$ have different dimensions.}
where the subspaces $\HA$ and $\HB$ are possessed by Alice and Bob, respectively.
For clarity, a pure state in $\HA$ is associated with the subscript $\textup{A}$, e.g., $\ket{\psi}_{\textup{A}}$. Moreover, the conjugate of $\ket{\psi}_{\textup{A}}$ is denoted as $\bra{\psi}_{\textup{A}}$, and the density operator of $\ket{\psi}_{\textup{A}}$ is denoted as $\ketbra{\psi}{\psi}_{\textup{A}}$. 
Let $\mathcal{P} = \rbra{\mathcal{P}^{\textup{yes}}, \mathcal{P}^{\textup{no}}}$ be a property of bipartite mixed states in $\mathcal{D}\rbra{\HAB}$. 
If both $\mathcal{P}^{\textup{yes}}$ and $\mathcal{P}^{\textup{no}}$ only consist of bipartite pure states, then $\mathcal{P}$ is called a property of bipartite pure states; in this case, we may assume that $\mathcal{P}^{\textup{yes}}$ and $\mathcal{P}^{\textup{no}}$ are subsets of $\HAB$. 

We introduce the notion of unitarily invariant properties of bipartite (mixed) states as follows. 

\begin{definition} [Unitarily invariant properties] \label{def:general-unitarily-invariant}
    A property $\mathcal{P} = \rbra{\mathcal{P}^{\textup{yes}}, \mathcal{P}^{\textup{no}}}$ of bipartite mixed states in $\mathcal{D}\rbra{\HAB}$ is said to be unitarily invariant on $\HA$, if $\rbra{U_{\TA} \otimes I_{\TB}} \rho \rbra{U_{\TA}^\dag \otimes I_{\TB}} \in \mathcal{P}^X$ for any $\rho \in \mathcal{P}^X$, $X \in \cbra{\textup{yes}, \textup{no}}$, and any unitary operator $U_{\TA}$ on $\HA$. 
    The unitary invariance on $\HB$ is defined similarly.
\end{definition}

\cref{def:general-unitarily-invariant} extends the definition of unitarily invariant properties (of pure states) in \cref{def:local-unitary-invariance} to mixed states. 

\begin{definition} [Local testers for bipartite states]
    A tester $\mathcal{T}$ acting on $\HAB^{\otimes N}$
    is said to be a \emph{local} tester on $\HA$, if $\mathcal{T} = M_{\textup{A}} \otimes I_{\textup{B}}$, where $M_{\textup{A}}$ is a linear operator on $\HA^{\otimes N}$ with $0_{\textup{A}} \sqsubseteq M_{\textup{A}} \sqsubseteq I_{\textup{A}}$. 
\end{definition}

We also give the definition of one-way LOCC tester as follows. 

\begin{definition}[One-way LOCC testers for bipartite states]\label{def-45215}
A tester \(\mathcal{T}\) acting on $\HAB^{\otimes N}$ 
is said to be a one-way LOCC tester from Bob to Alice, if it can be implemented through the following strategy:
\begin{enumerate}
\item Bob first performs a measurement on \(\HB^{\otimes N}\) and send a classical message to Alice,
\item Alice then performs a measurement on \(\HA^{\otimes N}\) and accepts or rejects.
\end{enumerate}
\end{definition}

The property induced by a set $\mathcal{Q}$ of bipartite pure states in $\HAB$ with sensitivity parameter $\varepsilon$ is denoted by $\mathcal{Q}_\varepsilon = \rbra{\mathcal{Q}^{\textup{yes}}, \mathcal{Q}^{\textup{no}}}$, where $\mathcal{Q}^{\textup{yes}} = \mathcal{Q}$ and 
\begin{equation}
\mathcal{Q}^{\textup{no}} = \mathcal{Q}^{\geq \varepsilon}=\set{\ket{\psi}_{\textup{AB}} \in \HAB}{d_{\tr}\rbra{\ket{\psi}_{\textup{AB}}, \ket{\phi}_{\textup{AB}}} \geq \varepsilon \textup{ for every } \ket{\phi}_{\textup{AB}} \in \mathcal{Q}}.
\end{equation}
Note that $\mathcal{Q}^{\textup{yes}} \cup \mathcal{Q}^{\textup{no}} \subseteq \HAB$.

\subsection{Basic Representation Theory}
Herein, we recall some basics of representation theory.
\begin{definition}
A complex representation of a group \(G\) is a pair \((\mu,\mathcal{H})\), where \(\mathcal{H}\) is a Hilbert space, and \(\mu: G\rightarrow \textup{GL}(\mathcal{H})\) is a group homomorphism.
Here, $\textup{GL}\rbra{\mathcal{H}}$ denotes the general linear group of $\mathcal{H}$, i.e., the group of invertible linear maps $\mathcal{H} \to \mathcal{H}$. 
\end{definition}
\begin{remark}
If the context is clear, we will refer to a representation \((\mu,\mathcal{H})\) of \(G\) by either the group homomorphism \(\mu\), or the space \(\mathcal{H}\). We also call \(\mu(g)\) the \textit{action} of \(g\in G\) on \(\mathcal{H}\). Moreover, if the group action is clear in the context or does not particularly concern us, we will directly write \(g\) for \(\mu(g)\).
\end{remark}
For a group \(G\), a \textit{sub-representation} of \((\mu,\mathcal{H})\) is a representation \((\mu',\mathcal{H}')\), where \(\mathcal{H}'\) is a subspace of \(\mathcal{H}\) and \(\mu'(g)\) is simply the restriction of \(\mu(g)\) to \(\mathcal{H}'\). 
\begin{definition}
A representation \(\mathcal{H}\) of \(G\) is irreducible if the only sub-representations of \(\mathcal{H}\) are \(\{0\}\) and \(\mathcal{H}\) itself.
\end{definition}
A \textit{representation homomorphism} between two representations \((\mu_1,\mathcal{H}_1),(\mu_2,\mathcal{H}_2)\) of group \(G\) is a linear operator \(F:\mathcal{H}_1\rightarrow \mathcal{H}_2\) which commutes with the action of \(G\), i.e., 
\begin{equation}
F\mu_1(g)=\mu_2(g)F.
\end{equation}
A \textit{representation isomorphism} is a representation homomorphism that is also an isomorphism of vector space (full-rank linear map). Two representations \(\mathcal{H}_1\) and \(\mathcal{H}_2\) of a group \(G\) are said to be \textit{isomorphic} if there exists an representation isomorphism between them, and we write \(\mathcal{H}_1\stackrel{G}{\cong}\mathcal{H}_2\).

\begin{proposition}[Schur's lemma, see, e.g.~{\cite[Proposition 2.3.9]{etingof2011introduction}}] \label{prop:202404032229}
Let \(\mathcal{H}_1,\mathcal{H}_2\) be irreducible representations of a group \(G\). If \(F: \mathcal{H}_1\rightarrow \mathcal{H}_2\) is a non-zero homomorphism of representations, then \(F\) is an isomorphism.
\end{proposition}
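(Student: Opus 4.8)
The plan is to run the standard kernel--image argument. First I would note that $\ker F$ is a sub-representation of $\mathcal{H}_1$: if $v \in \ker F$, then for every $g \in G$ we have $F(\mu_1(g) v) = \mu_2(g) F(v) = \mu_2(g) \cdot 0 = 0$, so $\mu_1(g) v \in \ker F$; thus $\ker F$ is a $G$-invariant subspace. Since $\mathcal{H}_1$ is irreducible, either $\ker F = \{0\}$ or $\ker F = \mathcal{H}_1$, and the latter is impossible because $F$ is assumed nonzero. Hence $\ker F = \{0\}$, i.e., $F$ is injective.

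Next I would argue symmetrically for the image. The subspace $\operatorname{im} F \subseteq \mathcal{H}_2$ is $G$-invariant: every element of $\operatorname{im} F$ is of the form $F(v)$, and $\mu_2(g) F(v) = F(\mu_1(g) v) \in \operatorname{im} F$. By irreducibility of $\mathcal{H}_2$, either $\operatorname{im} F = \{0\}$ or $\operatorname{im} F = \mathcal{H}_2$, and the former is ruled out by $F \neq 0$. Hence $\operatorname{im} F = \mathcal{H}_2$, i.e., $F$ is surjective.

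Combining the two, $F$ is a bijective linear map, hence a vector space isomorphism; since $F$ is already a representation homomorphism by hypothesis, it is a representation isomorphism by definition (one may also observe directly that $F^{-1}$ intertwines the two actions). This is exactly the assertion. The only point needing attention is to invoke the hypothesis $F \neq 0$ in both halves of the argument --- once to exclude $\ker F = \mathcal{H}_1$ and once to exclude $\operatorname{im} F = \{0\}$; beyond that there is no real obstacle, as everything follows immediately from the definitions of sub-representation and irreducibility recalled above.
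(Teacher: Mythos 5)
Your argument is correct and is exactly the standard kernel--image proof of Schur's lemma: the paper itself states this proposition as a cited fact (Etingof et al., Proposition 2.3.9) without reproving it, and the cited proof is the same one you give. Note also that by the paper's own definition a representation isomorphism is just a representation homomorphism that is bijective as a linear map, so your final step needs nothing beyond bijectivity, as you observe.
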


In this paper, we mainly focus on representations of finite groups and compact Lie groups. Any finite-dimensional representation of a finite group or a compact group is isomorphic to a \textit{unitary representation}. 
For example, if \(G\) is finite and \(\mu\) is a representation of \(G\), then we can define a new representation \(\mu'\) by
\begin{equation}
\mu'(g)=M^{1/2} \mu(g) M^{-1/2}, \quad\quad\textup{where }\, M=\frac{1}{|G|}\sum_{h\in G} \mu(h)^\dag\mu(h).
\end{equation}
Simple calculations show that \(\mu'\) is unitary, and \(M^{1/2}\) is an isomorphism. Similarly, for compact Lie groups, the summation is replaced by integral over the Haar measure~\cite{kirillov2008introduction}. 

\begin{remark}
Throughout this paper, we assume that all representations are unitary. 
\end{remark}

\begin{corollary}\label{lemma-41116}
Suppose \((\mu_1,\mathcal{H}_1)\) and \((\mu_2,\mathcal{H}_2)\) are two non-isomorphic irreducible representations of a finite group \(G\) and \(\mathcal{V}_1,\mathcal{V}_2\) are vector spaces. Suppose \(F: \mathcal{V}_1\otimes \mathcal{H}_1\rightarrow \mathcal{V}_2\otimes \mathcal{H}_2\) is a linear map. Then,
\begin{equation}\label{eq-41117}
\frac{1}{|G|}\sum_{g\in G} (I_{\mathcal{V}_2}\otimes \mu_2(g)) F (I_{\mathcal{V}_1}\otimes \mu_1(g))^\dag=0.
\end{equation}
If \(G\) is a compact Lie group, the above property still holds except that the average sum is replaced by the Haar integral over \(G\).
\end{corollary}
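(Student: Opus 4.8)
The plan is to reduce the statement to the scalar case $\mathcal{V}_1 = \mathcal{V}_2 = \mathbb{C}$ and then invoke Schur's lemma (\cref{prop:202404032229}). First I would write $F$ as a finite sum of elementary tensors $F = \sum_{i} A_i \otimes B_i$ with $A_i \colon \mathcal{V}_1 \to \mathcal{V}_2$ and $B_i \colon \mathcal{H}_1 \to \mathcal{H}_2$ linear; this is always possible by expanding $F$ in fixed bases of $\mathcal{V}_1$ and $\mathcal{V}_2$. Since $I_{\mathcal{V}_2} \otimes \mu_2(g)$ and $I_{\mathcal{V}_1} \otimes \mu_1(g)$ act trivially on the $\mathcal{V}$-factors, the left-hand side of \cref{eq-41117} equals $\sum_i A_i \otimes \tilde B_i$, where $\tilde B_i \coloneqq \frac{1}{|G|}\sum_{g \in G} \mu_2(g) B_i \mu_1(g)^\dag$. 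Hence it suffices to show $\tilde B = 0$ for every linear $B \colon \mathcal{H}_1 \to \mathcal{H}_2$, where $\tilde B \coloneqq \frac{1}{|G|}\sum_{g \in G} \mu_2(g) B \mu_1(g)^\dag$.

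Next I would verify that $\tilde B$ is a homomorphism of representations, i.e.\ $\mu_2(h)\tilde B = \tilde B \mu_1(h)$ for all $h \in G$. This follows from the re-indexing $g \mapsto hg$ in the averaging sum together with the unitarity of $\mu_1$ (which gives $\mu_1(h^{-1})^\dag = \mu_1(h)$): concretely,
\begin{equation}
\mu_2(h)\tilde B = \frac{1}{|G|}\sum_{g \in G} \mu_2(hg) B \mu_1(g)^\dag = \frac{1}{|G|}\sum_{g' \in G} \mu_2(g') B \mu_1(h^{-1}g')^\dag = \frac{1}{|G|}\sum_{g' \in G} \mu_2(g') B \mu_1(g')^\dag \mu_1(h) = \tilde B \mu_1(h).
\end{equation}

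Finally, since $\mathcal{H}_1$ and $\mathcal{H}_2$ are irreducible and non-isomorphic, \cref{prop:202404032229} forces every homomorphism of representations $\mathcal{H}_1 \to \mathcal{H}_2$ to vanish (a nonzero one would be an isomorphism, contradicting non-isomorphism). Therefore $\tilde B = 0$, which by the reduction above proves \cref{eq-41117}. For a compact Lie group $G$, the identical argument applies verbatim with the normalized sum replaced by the Haar integral, using left-invariance of the Haar measure in place of the substitution $g \mapsto hg$; all representations are unitary by assumption, so the identity $\mu_1(h^{-1})^\dag = \mu_1(h)$ still holds.

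I do not expect a genuine obstacle here: the argument is a routine packaging of Schur's lemma. The only points demanding a little care are the bookkeeping with the daggers (ensuring the intertwining relation comes out on the correct side) and the legitimacy of the decomposition $F = \sum_i A_i \otimes B_i$ — both of which are straightforward.
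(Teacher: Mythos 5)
Your proposal is correct and follows essentially the same route as the paper's proof: decompose $F$ into elementary tensors, use unitarity to handle the dagger, and apply Schur's lemma (\cref{prop:202404032229}) to conclude the averaged operator between non-isomorphic irreducibles vanishes. Your explicit verification that $\tilde B$ intertwines the representations is merely a spelled-out version of the step the paper leaves implicit.
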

\begin{proof}
Suppose \(F=\sum_{i}X_i\otimes Y_i\) where \(X_i: \mathcal{V}_1\rightarrow \mathcal{V}_2\) and \(Y_i: \mathcal{H}_1\rightarrow \mathcal{H}_2\) are linear maps. We have
\begin{align}
\frac{1}{|G|}\sum_{g\in G} (I_{\mathcal{V}_2}\otimes \mu_2(g)) X_i\otimes Y_i (I_{\mathcal{V}_1}\otimes \mu_1(g))^\dag
&=\frac{1}{|G|}\sum_{g\in G} (I_{\mathcal{V}_2}\otimes \mu_2(g)) X_i\otimes Y_i (I_{\mathcal{V}_1}\otimes \mu_1(g^{-1}))\\
&= 0,
\end{align}
where the first equality is by the unitary representation, and the second equality is by Schur's lemma (\cref{prop:202404032229}) and the fact that \(\mathcal{H}_1\ncong\mathcal{H}_2\) are both irreducible. Then, \cref{eq-41117} follows immediately.

If \(G\) is a compact Lie group, the proof is similar, except that the average summation becomes Haar integral.
\end{proof}

\begin{corollary}\label{lemma-3262332}
Suppose \((\mu,\mathcal{H})\) is an irreducible representation of a finite group \(G\) and \(\mathcal{V}_1,\mathcal{V}_2\) are vector spaces. Suppose \(F: \mathcal{V}_1\otimes \mathcal{H}\rightarrow \mathcal{V}_2\otimes \mathcal{H}\) is a linear map. Then,
\begin{equation}\label{eq-327110}
\frac{1}{|G|}\sum_{g\in G} (I_{\mathcal{V}_2}\otimes \mu(g)) F (I_{\mathcal{V}_1}\otimes \mu(g))^\dag=\frac{1}{\dim(\mathcal{H})} \tr_{\mathcal{H}}(F)\otimes I_{\mathcal{H}}.
\end{equation}
If \(G\) is a compact Lie group, the above property still holds except that the average sum is replaced by the Haar integral over \(G\).
\end{corollary}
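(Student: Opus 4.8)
The plan is to deduce this from the non-isomorphic case (Corollary~\ref{lemma-41116}) combined with the classical averaging identity that the projector onto the $G$-invariant subspace of $\mathcal{H}\otimes\mathcal{H}^*$ (or equivalently onto the space of intertwiners $\mathcal{H}\to\mathcal{H}$) is one-dimensional, spanned by $I_{\mathcal{H}}$ when $\mathcal{H}$ is irreducible. Concretely, write $F=\sum_i X_i\otimes Y_i$ with $X_i:\mathcal{V}_1\to\mathcal{V}_2$ and $Y_i:\mathcal{H}\to\mathcal{H}$, so that by linearity it suffices to evaluate, for each fixed linear map $Y:\mathcal{H}\to\mathcal{H}$,
\begin{equation}
A(Y)\coloneqq\frac{1}{|G|}\sum_{g\in G}\mu(g)\,Y\,\mu(g)^\dag.
\end{equation}
First I would observe that $A(Y)$ commutes with every $\mu(h)$ (reindex the sum by $g\mapsto h^{-1}g$ and use that $\mu$ is a unitary representation), hence by Schur's lemma (Proposition~\ref{prop:202404032229}) applied to the irreducible $\mathcal{H}$, $A(Y)=c(Y)\cdot I_{\mathcal{H}}$ for some scalar $c(Y)$. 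Taking the trace of both sides and using cyclicity together with $\mu(g)^\dag\mu(g)=I$ gives $\tr(A(Y))=\tr(Y)$, so $c(Y)\dim(\mathcal{H})=\tr(Y)$, i.e. $c(Y)=\tr(Y)/\dim(\mathcal{H})$.

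Plugging back in, $\frac{1}{|G|}\sum_{g}(I_{\mathcal{V}_2}\otimes\mu(g))(X_i\otimes Y_i)(I_{\mathcal{V}_1}\otimes\mu(g))^\dag = X_i\otimes A(Y_i) = \frac{\tr(Y_i)}{\dim(\mathcal{H})}\,X_i\otimes I_{\mathcal{H}}$. Summing over $i$ and noting that $\sum_i \tr(Y_i)\,X_i = \tr_{\mathcal{H}}\big(\sum_i X_i\otimes Y_i\big) = \tr_{\mathcal{H}}(F)$ (the partial trace over the $\mathcal{H}$-factor acts as $X\otimes Y\mapsto \tr(Y)X$), we obtain exactly the right-hand side of \eqref{eq-327110}. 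For the compact Lie group case, I would simply replace $\frac{1}{|G|}\sum_{g\in G}$ by $\int_G \cdot\, dg$ over the normalized Haar measure throughout; invariance of the Haar measure under left translation is what replaces the reindexing step, and Schur's lemma applies verbatim since $\mathcal{H}$ is still a finite-dimensional irreducible unitary representation.

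I do not anticipate a genuine obstacle here — the argument is the standard "twirl onto the commutant" computation. The only point requiring a little care is the bookkeeping of the partial trace convention: one must check that with the paper's conventions $\tr_{\mathcal{H}}(X\otimes Y)=\tr(Y)\,X$, so that the scalar produced by Schur's lemma assembles correctly into $\tr_{\mathcal{H}}(F)$; this is immediate from linearity once verified on simple tensors. An alternative, slightly slicker route would be to apply Corollary~\ref{lemma-41116} with $\mathcal{H}_1=\mathcal{H}_2=\mathcal{H}$ being excluded, and instead decompose $\mathcal{H}\otimes\mathcal{H}^*$ into isotypic components, but the direct Schur-twirl computation above is cleaner and self-contained, so that is the one I would write up.
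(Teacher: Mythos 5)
Your proposal is correct and follows essentially the same route as the paper: decompose $F$ into simple tensors $X_i\otimes Y_i$, twirl each $Y_i$ and apply Schur's lemma (the paper's \cref{prop:202404032229}) to conclude the averaged $Y_i$ is $\tr(Y_i)/\dim(\mathcal{H})\cdot I_{\mathcal{H}}$, then reassemble into $\tr_{\mathcal{H}}(F)$. You merely spell out the intermediate steps (commutation with every $\mu(h)$ and the trace normalization) that the paper compresses into a single citation of Schur's lemma.
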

\begin{proof}
Suppose \(F=\sum_{i}X_i\otimes Y_i\) where \(X_i: \mathcal{V}_1\rightarrow \mathcal{V}_2\) and \(Y_i: \mathcal{H}\rightarrow \mathcal{H}\) are linear maps. We have
\begin{align}
\frac{1}{|G|}\sum_{g\in G} (I_{\mathcal{V}_2}\otimes \mu(g)) X_i\otimes Y_i (I_{\mathcal{V}_1}\otimes \mu(g))^\dag
&=\frac{1}{|G|}\sum_{g\in G} (I_{\mathcal{V}_2}\otimes \mu(g)) X_i\otimes Y_i (I_{\mathcal{V}_1}\otimes \mu(g^{-1}))\\
&= \frac{\tr(Y_i)}{\dim(\mathcal{H})} X_i\otimes I_{\mathcal{H}}.
\end{align}
where the first equality is by the unitary representation, and the second equality is by the Schur's lemma (\cref{prop:202404032229}) and the fact that \(\mathcal{H}\) is irreducible.
Then, \cref{eq-327110} follows immediately. 

If \(G\) is a compact Lie group, the proof is similar, except that the average summation becomes Haar integral.
\end{proof}

\begin{proposition}\label{prop-432248}
If \(\mathcal{H}_1\), \(\mathcal{H}_2\) are irreducible representations of groups \(G_1\) and \(G_2\), respectively, then
\(\mathcal{H}_1\otimes \mathcal{H}_2\)
is also an irreducible representation of the group \(G_1\times G_2\), and the group action is simply:
\begin{equation}
(g_1, g_2) \ket{v_1}\otimes \ket{v_2}=g_1\ket{v_1} \otimes g_2\ket{v_2},
\end{equation}
where \(g_1\in G_1,g_2\in G_2,\ket{v_1}\in\mathcal{H}_1,\ket{v_2}\in\mathcal{H}_2\).
\end{proposition}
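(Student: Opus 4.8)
The plan is to verify two things: that $(g_1,g_2)\mapsto \mu_1(g_1)\otimes\mu_2(g_2)$ really is a (unitary) representation of $G_1\times G_2$ on $\mathcal{H}_1\otimes\mathcal{H}_2$, and that this representation is irreducible. The first part is routine: $\mu_1(g_1)\otimes\mu_2(g_2)$ is invertible, it sends $(e_{G_1},e_{G_2})$ to $I_{\mathcal{H}_1\otimes\mathcal{H}_2}$, and the mixed-product law $(A\otimes B)(C\otimes D)=(AC)\otimes(BD)$ turns the assignment into a group homomorphism $G_1\times G_2\to\textup{GL}(\mathcal{H}_1\otimes\mathcal{H}_2)$, which is unitary since $\mu_1,\mu_2$ are (by our standing convention). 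The real content is irreducibility, and I would establish it through the \emph{commutant}: it suffices to show that every subspace $W\subseteq\mathcal{H}_1\otimes\mathcal{H}_2$ invariant under all $\mu_1(g_1)\otimes\mu_2(g_2)$ equals $\{0\}$ or $\mathcal{H}_1\otimes\mathcal{H}_2$.

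Concretely, I would take $P$ to be the orthogonal projector onto such a $W$. Because the representation is unitary, $W^\perp$ is invariant as well, so $P$ commutes with $\mu_1(g_1)\otimes\mu_2(g_2)$ for all $(g_1,g_2)$; specializing $g_1=e_{G_1}$ resp.\ $g_2=e_{G_2}$, the operator $P$ commutes with $I_{\mathcal{H}_1}\otimes\mu_2(g_2)$ for every $g_2\in G_2$ and with $\mu_1(g_1)\otimes I_{\mathcal{H}_2}$ for every $g_1\in G_1$. I would then apply \cref{lemma-3262332} to the irreducible representation $(\mu_2,\mathcal{H}_2)$ of $G_2$, with $\mathcal{V}_1=\mathcal{V}_2=\mathcal{H}_1$ and $F=P$: since $P$ commutes with each $I_{\mathcal{H}_1}\otimes\mu_2(g_2)$, averaging $(I_{\mathcal{H}_1}\otimes\mu_2(g_2))\,P\,(I_{\mathcal{H}_1}\otimes\mu_2(g_2))^\dagger$ over $G_2$ returns $P$, and the corollary therefore forces $P=B\otimes I_{\mathcal{H}_2}$ with $B:=\tr_{\mathcal{H}_2}(P)/\dim(\mathcal{H}_2)$ an operator on $\mathcal{H}_1$. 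Substituting this into $[\mu_1(g_1)\otimes I_{\mathcal{H}_2},\,P]=0$ gives $\mu_1(g_1)B=B\mu_1(g_1)$ for all $g_1\in G_1$, and a second application of \cref{lemma-3262332} --- now with $(\mu_1,\mathcal{H}_1)$ and trivial multiplicity spaces $\mathcal{V}_1=\mathcal{V}_2=\mathbb{C}$, i.e.\ the operator form of Schur's lemma --- yields $B=b\,I_{\mathcal{H}_1}$ for some scalar $b$. Hence $P=b\,I_{\mathcal{H}_1\otimes\mathcal{H}_2}$; since $P$ is a projector, $b\in\{0,1\}$, so $W=\{0\}$ or $W=\mathcal{H}_1\otimes\mathcal{H}_2$, which is exactly irreducibility.

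I do not expect a genuine obstacle: once \cref{lemma-3262332} is available the proof is bookkeeping. The one point I would be careful about is the legitimacy of restricting to the families $\{I_{\mathcal{H}_1}\otimes\mu_2(g_2)\}_{g_2\in G_2}$ and $\{\mu_1(g_1)\otimes I_{\mathcal{H}_2}\}_{g_1\in G_1}$, which is fine simply because $(e_{G_1},g_2)$ and $(g_1,e_{G_2})$ are honest elements of $G_1\times G_2$, so $W$ is genuinely invariant under both families. When $G_1$ or $G_2$ is a compact Lie group the argument is unchanged, since \cref{lemma-3262332} already records the Haar-integral version and all the Hilbert spaces here are finite-dimensional. (For finite groups one could instead give the one-line character computation $\langle\chi_{\mathcal{H}_1\otimes\mathcal{H}_2},\chi_{\mathcal{H}_1\otimes\mathcal{H}_2}\rangle=\langle\chi_{\mathcal{H}_1},\chi_{\mathcal{H}_1}\rangle\cdot\langle\chi_{\mathcal{H}_2},\chi_{\mathcal{H}_2}\rangle=1$, but that relies on character orthogonality, which the excerpt has not set up, so I would keep the commutant argument.)
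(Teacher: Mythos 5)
Your argument is correct, and it is worth noting that the paper itself states \cref{prop-432248} without proof, treating it as a standard fact from representation theory; so there is no ``paper proof'' to match against, and your write-up supplies a self-contained derivation that fits naturally with the tools the paper does develop. The commutant route is sound: unitarity (the paper's standing convention) makes the projector $P$ onto an invariant subspace commute with every $\mu_1(g_1)\otimes\mu_2(g_2)$; restricting to $(e_{G_1},g_2)$ and $(g_1,e_{G_2})$ is legitimate; the first application of \cref{lemma-3262332} (with multiplicity space $\mathcal{H}_1$ and the irreducible $\mathcal{H}_2$ in the second tensor slot, exactly the ordering of that corollary) forces $P=B\otimes I_{\mathcal{H}_2}$; and the second application with trivial multiplicity spaces is precisely the averaging form of Schur's lemma (\cref{prop:202404032229}) giving $B=b\,I_{\mathcal{H}_1}$, hence $b\in\{0,1\}$. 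The only caveat is scope: your proof relies on averaging (finite sum or Haar integral), so it covers finite groups and compact Lie groups with finite-dimensional unitary representations --- which is exactly the regime the paper declares it works in ($\mathbb{S}_N$ and $\mathbb{U}_d$), so nothing is lost here; for completely general groups the finite-dimensional complex statement would instead need a Burnside/density-type argument, which you correctly avoid claiming. Your remark about not using character orthogonality is also apt, since the paper never sets it up.
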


\subsection{Schur-Weyl Duality on Bipartite Systems}
\paragraph{Young diagram.}
A Young diagram \(\lambda\) with \(N\) boxes and \(r\) rows can be identified by a partition \((\lambda_1,\ldots,\lambda_r)\) of \(N\) such that \(\sum_{i} \lambda_i=N\) and \(\lambda_1\geq\cdots \geq \lambda_r>0\). For example, the Young diagram with \(8\) boxes and \(3\) rows, identified by the partition \((4,3,1)\) is:
\begin{equation}
\vcenter{\hbox{\scalebox{0.9}{\begin{ytableau}~&~&~&~\\~&~&~\\~\end{ytableau}}}}.
\end{equation}
We will write write \(\lambda\vdash N\) to mean that \(\lambda\) is a Young diagram with \(N\) boxes, and write \(\lambda\vdash (N,d)\) to mean that \(\lambda\) is with \(N\) boxes and with no more than \(d\) rows.

Given a Young diagram \(\lambda\vdash N\), a standard Young tableau \(T\) of shape \(\lambda\) can be identified by a filling of the \(N\) boxes of \(\lambda\) by the integers \(1,\ldots,N\), using each number once and the integers increase from left to right and from top to bottom. For example, one valid standard Young tableau with shape \((4,3,1)\) is
\begin{equation}
\vcenter{\hbox{\scalebox{0.9}{\begin{ytableau}1&3&4&6\\2&7&8\\5\end{ytableau}}}}.
\end{equation}

\paragraph{Schur-Weyl duality.}
Now, consider the Hilbert space \((\mathbb{C}^d)^{\otimes N}\). This space admits representations of the symmetric group \(\mathbb{S}_N\) and unitary group \(\mathbb{U}_d\). The unitary group acts by simultaneous ``rotation'' as \(U^{\otimes N}\) for any \(U\in \mathbb{U}_d\) and the symmetric group acts by permuting tensor factors:
\begin{equation}
P(\pi)\ket{\psi_1}\cdots\ket{\psi_N}=\ket{\psi_{\pi^{-1}(1)}}\cdots\ket{\psi_{\pi^{-1}(N)}},
\end{equation}
where \(\pi\in\mathbb{S}_N\). Two actions \(U^{\otimes N}\) and \(P(\pi)\) commute with each other, and hence \((\mathbb{C}^d)^{\otimes N}\) admits a representation of group \(\mathbb{S}_N\times \mathbb{U}_d\). More specifically, we have:
\begin{proposition}[Schur-Weyl duality~\cite{fulton2013representation}]\label{prop-432239}
\begin{equation}
(\mathbb{C}^{d})^{\otimes N}\stackrel{\mathbb{S}_N\times \mathbb{U}_d}{\cong}\bigoplus_{\lambda \vdash (N,d)}\mathcal{V}_\lambda\otimes \mathcal{W}_\lambda,
\end{equation}
where \(\mathcal{V}_\lambda\) and \(\mathcal{W}_\lambda\) are irreducible representations of \(\mathbb{S}_N\) and \(\mathbb{U}_d\), respectively, and \(\lambda\vdash (N,d)\) refers to the Young diagram with \(N\) boxes and with no more than \(d\) rows.
\end{proposition}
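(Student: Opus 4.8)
Since this is the classical Schur--Weyl duality, I would recount the standard proof via the \emph{double commutant theorem} (see \cite{fulton2013representation}). Inside $\operatorname{End}\rbra{\rbra{\mathbb{C}^d}^{\otimes N}}$ set $\mathcal{A} \coloneqq \spanspace\set{U^{\otimes N}}{U \in \mathbb{U}_d}$ and let $\mathcal{B} \coloneqq \spanspace\set{P\rbra{\pi}}{\pi \in \mathbb{S}_N}$ be the image of the group algebra $\mathbb{C}\sbra{\mathbb{S}_N}$. Both are unital subalgebras closed under Hermitian conjugation, hence semisimple ($\mathcal{B}$ by Maschke's theorem, $\mathcal{A}$ because it is spanned by unitaries). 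Since $U^{\otimes N}$ and $P\rbra{\pi}$ commute, $\mathcal{A} \subseteq \mathcal{B}'$, where $\mathcal{B}'$ denotes the commutant; the heart of the argument is the reverse inclusion $\mathcal{B}' \subseteq \mathcal{A}$.

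For that inclusion I would identify $\operatorname{End}\rbra{\rbra{\mathbb{C}^d}^{\otimes N}}$ with $\operatorname{End}\rbra{\mathbb{C}^d}^{\otimes N}$, on which $\mathbb{S}_N$ acts by permuting the $N$ tensor factors, this action being conjugation by $P\rbra{\pi}$. Thus an operator lies in $\mathcal{B}'$ exactly when it is fixed by this $\mathbb{S}_N$-action, i.e.\ when it lies in $\operatorname{Sym}^N\rbra{\operatorname{End}\rbra{\mathbb{C}^d}}$. The polarization identity (valid in characteristic zero) gives $\operatorname{Sym}^N\rbra{V} = \spanspace\set{v^{\otimes N}}{v \in V}$ for any vector space $V$, so taking $V = \operatorname{End}\rbra{\mathbb{C}^d}$ yields $\mathcal{B}' = \spanspace\set{X^{\otimes N}}{X \in \operatorname{End}\rbra{\mathbb{C}^d}}$. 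Finally, $X \mapsto X^{\otimes N}$ is a homogeneous degree-$N$ polynomial map, so any linear functional annihilating $\set{U^{\otimes N}}{U \in \mathbb{U}_d}$ induces a polynomial on $\operatorname{End}\rbra{\mathbb{C}^d}$ vanishing on $\mathbb{U}_d$; since $\mathbb{U}_d$ is Zariski-dense in $\operatorname{GL}_d\rbra{\mathbb{C}}$ and the latter is Zariski-dense in $\operatorname{End}\rbra{\mathbb{C}^d}$, this polynomial vanishes identically. Hence $\set{U^{\otimes N}}{U \in \mathbb{U}_d}$ already spans $\operatorname{Sym}^N\rbra{\operatorname{End}\rbra{\mathbb{C}^d}}$, giving $\mathcal{B}' = \mathcal{A}$; and as $\mathcal{B}$ is semisimple and unital, the double commutant theorem also gives $\mathcal{A}' = \mathcal{B}'' = \mathcal{B}$.

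Now $\mathcal{A}$ and $\mathcal{B}$ are mutual commutants of semisimple algebras, so the structure theory of semisimple algebras gives a canonical decomposition $\rbra{\mathbb{C}^d}^{\otimes N} \cong \bigoplus_{\lambda} \mathcal{V}_\lambda \otimes \mathcal{W}_\lambda$, where $\lambda$ runs over the irreducible $\mathbb{S}_N$-representations that occur, $\mathcal{V}_\lambda$ is the corresponding Specht module, and $\mathcal{W}_\lambda \coloneqq \Hom_{\mathbb{S}_N}\rbra{\mathcal{V}_\lambda, \rbra{\mathbb{C}^d}^{\otimes N}}$ is an irreducible $\mathbb{U}_d$-representation; the joint $\mathbb{S}_N \times \mathbb{U}_d$ action on each summand is the external tensor product by \cref{prop-432248}. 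To pin down the index set, I would use that the irreducibles of $\mathbb{S}_N$ are indexed by partitions $\lambda \vdash N$ and that, up to isomorphism, $\mathcal{W}_\lambda \cong c_\lambda \rbra{\mathbb{C}^d}^{\otimes N}$ for a Young symmetrizer $c_\lambda \in \mathbb{C}\sbra{\mathbb{S}_N}$. Since $c_\lambda$ antisymmetrizes the tensor factors within each column of $\lambda$, and antisymmetrizing more than $d$ factors of $\mathbb{C}^d$ annihilates everything, $\mathcal{W}_\lambda = 0$ whenever $\lambda$ has more than $d$ rows; conversely, when $\lambda$ has at most $d$ rows one writes down a nonzero element of $c_\lambda \rbra{\mathbb{C}^d}^{\otimes N}$ built from distinct basis vectors running down each column. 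Hence the sum is over exactly $\lambda \vdash (N,d)$.

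The hard part is the identity $\mathcal{B}' = \mathcal{A}$ — that every permutation-invariant operator on $\rbra{\mathbb{C}^d}^{\otimes N}$ is a linear combination of the $U^{\otimes N}$ — which rests on polarization together with the Zariski-density of $\mathbb{U}_d$ in $\operatorname{GL}_d\rbra{\mathbb{C}}$ (equivalently, that a holomorphic polynomial vanishing on all unitaries vanishes identically). A secondary subtlety is the row-count criterion for $\mathcal{W}_\lambda \neq 0$, which is what restricts the index set to $\lambda \vdash (N,d)$. As a self-contained alternative one can skip the double commutant machinery and argue by characters: the joint character of $\rbra{\mathbb{C}^d}^{\otimes N}$, evaluated on a permutation of cycle type $\mu$ and a unitary with eigenvalues $x_1,\dots,x_d$, equals the power-sum symmetric polynomial $p_\mu\rbra{x_1,\dots,x_d} = \sum_{\lambda \vdash N} \chi^\lambda_\mu\, s_\lambda\rbra{x_1,\dots,x_d}$ by the Frobenius character formula, and $s_\lambda\rbra{x_1,\dots,x_d} = 0$ exactly when $\lambda$ has more than $d$ parts; matching this against $\bigoplus_\lambda \mathcal{V}_\lambda \otimes \mathcal{W}_\lambda$ then identifies the summands and the index set.
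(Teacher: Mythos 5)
Your proof is correct: the paper states this proposition as a known fact with a citation to Fulton--Harris and gives no proof of its own, and your double-commutant argument (commutant of the permutation action equals the span of $U^{\otimes N}$ via polarization and Zariski density of $\mathbb{U}_d$, then the semisimple decomposition with the Young-symmetrizer row-count criterion restricting to $\lambda \vdash (N,d)$) is exactly the standard treatment in that reference. No gaps worth flagging; the character-theoretic alternative you sketch is also sound.
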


\begin{remark}
For clarity, we will use \(P_\lambda(\pi)\) to denote the action of \(\pi\in\mathbb{S}_N\) on the irreducible representation \(\mathcal{V}_\lambda\), and use \(Q_\lambda(U)\) to denote the action of \(U\in\mathbb{U}_d\) on the irreducible representation \(\mathcal{W}_\lambda\).
\end{remark}

\begin{remark}\label{remark-3252205}
We assume that all representations of the symmetric group are real representations, since they are realizable over the field of real numbers~\cite{serre1977linear}.
\end{remark}

\paragraph{Schur-Weyl duality on bipartite systems.}
Now, we consider the Hilbert space \(\mathcal{H}_{\textup{AB}}^{\otimes N}=(\HA\otimes \HB)^{\otimes N}\cong (\mathbb{C}^{d}\otimes \mathbb{C}^d)^{\otimes N}\).

Denote by \(\mathbb{S}_N\) the symmetric group on \(N\) letters. We define the action of \(\mathbb{S}_N\times \mathbb{S}_N\) on the space \(\HAB^{\otimes N}=(\HA\otimes \HB)^{\otimes N}\) as
\begin{equation}
(\pi,\sigma) \cdot\ket{\psi_{A,1}}\ket{\psi_{B,1}}\cdots \ket{\psi_{A,N}}\ket{\psi_{B,N}}
=\ket{\psi_{A,\pi^{-1}(1)}}\ket{\psi_{B,\sigma^{-1}(1)}}\cdots \ket{\psi_{A,\pi^{-1}(N)}}\ket{\psi_{B,\sigma^{-1}(N)}},
\end{equation}
where \((\pi,\sigma)\in \mathbb{S}_N\times \mathbb{S}_N\). Therefore, the action of \((\pi,\sigma)\) is equivalent to \(P_{\TA}(\pi)\otimes P_{\TB}(\sigma)\) where \(P_{\TA}(\pi)\) and \(P_{\TB}(\sigma)\) acts on \(\HA^{\otimes N}\) and \(\HB^{\otimes N}\), respectively.
Moreover, if we identify the group \(\mathbb{S}_N\) as a subgroup of \(\mathbb{S}_N\times \mathbb{S}_N\) by:
\begin{equation}
\mathbb{S}_N\ni \pi \mapsto (\pi,\pi)\in \mathbb{S}_N\times\mathbb{S}_N,
\end{equation}
then we obtain an action of \(\mathbb{S}_N\) on \(\HA^{\otimes N}\otimes \HB^{\otimes N}\) called the \textit{simultaneous-permutation}:
\begin{equation}
(\pi,\pi)\cdot \ket{\psi_{A,1}}\ket{\psi_{B,1}}\cdots \ket{\psi_{A,N}}\ket{\psi_{B,N}}
=\ket{\psi_{A,\pi^{-1}(1)}}\ket{\psi_{B,\pi^{-1}(1)}}\cdots \ket{\psi_{A,\pi^{-1}(N)}}\ket{\psi_{B,\pi^{-1}(N)}},
\end{equation}
where \(\pi\in\mathbb{S}_N\). In other words, the simultaneous-permutation of \(\pi\in\mathbb{S}_N\) on \(\HAB\) is thus \(P_\TA(\pi)\otimes P_{\TB}(\pi)\).

Denote by \(\mathbb{U}_d\) the unitary group of dimension \(d\). We define the action of \(\mathbb{U}_d\times \mathbb{U}_d\) on the space \(\HAB^{\otimes N}=(\HA\otimes \HB)^{\otimes N}\) as
\begin{equation}
(U,V)\cdot\ket{\psi_{A,1}}\ket{\psi_{B,1}}\cdots \ket{\psi_{A,N}}\ket{\psi_{B,N}}=U\ket{\psi_{A,1}}V\ket{\psi_{B,1}}\cdots U\ket{\psi_{A,N}}V\ket{\psi_{B,N}},
\end{equation}
where \((U,V)\in \mathbb{U}_d\times \mathbb{U}_d\). Therefore, the action of \((U, V)\) is equivalent to \(U^{\otimes N}\otimes V^{\otimes N}\) where \(U^{\otimes N}\) and \(V^{\otimes N}\) act on \(\HA^{\otimes N}\) and \(\HB^{\otimes N}\), respectively.

\begin{lemma}[Schur-Weyl duality on bipartite system]\label{lemma-3241548}
The bipartite system can be decomposed into a direct sum of irreducible representations as follows:
\begin{equation}\label{eq-432252}
\HAB^{\otimes N}\stackrel{(\mathbb{S}_N\times \mathbb{S}_N)\times (\mathbb{U}_d\times \mathbb{U}_d)}{\cong}\bigoplus_{\lambda_1,\lambda_2\vdash (N,d)} \mathcal{V}_{\lambda_1,\textup{A}}\otimes\mathcal{V}_{\lambda_2,\textup{B}} \otimes \mathcal{W}_{\lambda_1,\textup{A}} \otimes \mathcal{W}_{\lambda_2,\textup{B}},
\end{equation}
where \(\mathcal{V}_{\lambda,\textup{A}}=\mathcal{V}_{\lambda,\textup{B}}\coloneqq\mathcal{V}_\lambda\) and \(\mathcal{W}_{\lambda,\textup{A}}=\mathcal{W}_{\lambda,\textup{B}}\coloneqq\mathcal{W}_{\lambda}\) are irreducible representations of \(\mathbb{S}_N\) and \(\mathbb{U}_d\), respectively, and the spaces notated with \(\textup{A}\) and \(\textup{B}\) are possessed by Alice and Bob, respectively.
\end{lemma}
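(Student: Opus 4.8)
The plan is to reduce this bipartite Schur--Weyl duality to two independent applications of the ordinary Schur--Weyl duality (\cref{prop-432239})---one for Alice's system and one for Bob's---glued together by the irreducibility of outer tensor products over a product group (\cref{prop-432248}).

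First I would pass through the canonical ``regrouping'' isomorphism
\[
\Phi\colon (\HA\otimes\HB)^{\otimes N}\;\cong\;\HA^{\otimes N}\otimes\HB^{\otimes N}
\]
that collects all of Alice's tensor slots into the first factor and all of Bob's into the second; this is just a permutation of tensor legs, hence unitary. As recorded in the discussion preceding the lemma, $\Phi$ intertwines the action of $(\pi,\sigma)\in\mathbb{S}_N\times\mathbb{S}_N$ with $P_\TA(\pi)\otimes P_\TB(\sigma)$ and the action of $(U,V)\in\mathbb{U}_d\times\mathbb{U}_d$ with $U^{\otimes N}\otimes V^{\otimes N}$; consequently $\Phi$ is an isomorphism of $(\mathbb{S}_N\times\mathbb{S}_N)\times(\mathbb{U}_d\times\mathbb{U}_d)$-representations once the right-hand side is equipped with these factor-wise actions. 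Note that these four groups act pairwise-commutingly on the right-hand side---permutations commute with simultaneous rotations on each of Alice's and Bob's sides, and any operator supported on $\HA^{\otimes N}$ commutes with any operator supported on $\HB^{\otimes N}$---so the right-hand side is genuinely a representation of the product group.

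Next I would apply \cref{prop-432239} separately to $\HA^{\otimes N}\cong\bigoplus_{\lambda_1\vdash(N,d)}\mathcal{V}_{\lambda_1}\otimes\mathcal{W}_{\lambda_1}$ as an $\mathbb{S}_N\times\mathbb{U}_d$-representation (the copy of $\mathbb{S}_N\times\mathbb{U}_d$ acting on Alice) and likewise to $\HB^{\otimes N}\cong\bigoplus_{\lambda_2\vdash(N,d)}\mathcal{V}_{\lambda_2}\otimes\mathcal{W}_{\lambda_2}$ (the copy acting on Bob). Tensoring the two decompositions, distributing $\otimes$ over $\bigoplus$, and permuting tensor factors gives
\[
\HA^{\otimes N}\otimes\HB^{\otimes N}\;\cong\;\bigoplus_{\lambda_1,\lambda_2\vdash(N,d)}\mathcal{V}_{\lambda_1,\TA}\otimes\mathcal{V}_{\lambda_2,\TB}\otimes\mathcal{W}_{\lambda_1,\TA}\otimes\mathcal{W}_{\lambda_2,\TB},
\]
with the product group acting factor-wise; composing with $\Phi^{-1}$ yields \cref{eq-432252}. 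To see that each summand is irreducible, I would invoke \cref{prop-432248} iteratively: $\mathcal{V}_{\lambda_1}$ is irreducible for Alice's $\mathbb{S}_N$, $\mathcal{V}_{\lambda_2}$ for Bob's $\mathbb{S}_N$, $\mathcal{W}_{\lambda_1}$ for Alice's $\mathbb{U}_d$, and $\mathcal{W}_{\lambda_2}$ for Bob's $\mathbb{U}_d$, so their outer tensor product is an irreducible representation of $(\mathbb{S}_N\times\mathbb{S}_N)\times(\mathbb{U}_d\times\mathbb{U}_d)$; moreover distinct pairs $(\lambda_1,\lambda_2)$ yield non-isomorphic summands since the $\mathcal{V}_\lambda$ (equivalently the $\mathcal{W}_\lambda$) are pairwise non-isomorphic.

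The argument is essentially bookkeeping, so I do not expect a genuine obstacle; the one place requiring care is the equivariance of the regrouping map $\Phi$---verifying that permuting the $N$ bipartite slots decouples into an independent permutation of Alice's $N$ slots and of Bob's $N$ slots, and that a simultaneous rotation on $\HAB$ becomes $U^{\otimes N}\otimes V^{\otimes N}$ after regrouping---but this is precisely what the paragraphs defining the $\mathbb{S}_N\times\mathbb{S}_N$ and $\mathbb{U}_d\times\mathbb{U}_d$ actions already establish, so it can simply be cited.
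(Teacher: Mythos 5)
Your proposal is correct and follows essentially the same route as the paper: the paper's proof is just the one-line observation that \cref{eq-432252} follows by combining the ordinary Schur--Weyl duality (\cref{prop-432239}), applied to Alice's and Bob's factors after regrouping $\HAB^{\otimes N}\cong\HA^{\otimes N}\otimes\HB^{\otimes N}$, with the irreducibility of outer tensor products (\cref{prop-432248}). You have simply spelled out the bookkeeping (the regrouping isomorphism, its equivariance, and the pairwise non-isomorphism of the summands) that the paper leaves implicit.
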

\begin{proof}
Note that \(\mathcal{H}_{\textup{AB}}=\mathcal{H}_\TA\otimes \mathcal{H}_\TB\) and \(\mathcal{H}_{\TA}=\mathcal{H}_{\TB}=\mathbb{C}^{d}\). Then, the decomposition into irreducible representations shown in \cref{eq-432252} can be seen by combining \cref{prop-432239} with \cref{prop-432248}.
\end{proof}

\begin{remark}
In \cref{lemma-3241548}, we write \(\mathcal{V}_{\lambda,\TA}=\mathcal{V}_{\lambda,\TB}\) (or \(\mathcal{W}_{\lambda,\TA}=\mathcal{W}_{\lambda,\TB}\)) to mean that 
\(\mathcal{V}_{\lambda,\TA}\), \(\mathcal{V}_{\lambda,\TB}\) (or \(\mathcal{W}_{\lambda,\TA}\), \(\mathcal{W}_{\lambda,\TB}\)) are mathematically the same thing. However, they physically belong to Alice and Bob, respectively. If we are only interested in their intrinsic mathematical structures, we may omit the notation \(\TA\) or \(\TB\) in the subscriptions.
\end{remark}

\begin{remark}
The similar notation style will be used in other situations. For example, suppose \(U\in\mathbb{U}_d\) is a unitary matrix. Then, we will use \(U_\TB\) to mean that this unitary is acting on Bob's system, and use \(Q_{\lambda,\TA}(U)\) to denote an irreducible action of \(U\) corresponding to \(\lambda\) on Alice's system. Therefore, for a mathematical object \(O\), adding a subscript \(\TA\) or \(\TB\) is not to change its intrinsic meaning but only to emphasize which system it is related to.
\end{remark}

\section{Optimal Local Tester for Bipartite Pure States}
\label{sec:opt-local-tester}

In this section, we will show how to construct an optimal local tester for unitarily invariant properties of bipartite pure states, and thus prove \cref{thm:key}. 
We formally state it as follows. 

\begin{theorem} [Optimal local tester] \label{lemma-3191738}
Let \(\mathcal{P}\) be a property of bipartite pure states in \(\HAB\). If \(\mathcal{P}\) is unitarily invariant on \(\HB\), then, for any parameters \(0\leq s<c\leq 1\), there exists a local \((c,s)\)-tester on \(\HA\) for \(\mathcal{P}\) that achieves the sample complexity \(\mathsf{S}_{c,s}\rbra{\mathcal{P}}\).
\end{theorem}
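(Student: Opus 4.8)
The plan is to take an arbitrary $(c,s)$-tester $\mathcal{T}$ of sample complexity $N=\mathsf{S}_{c,s}(\mathcal{P})$ and transform it into the local tester $\hat{\mathcal{T}}$ of \cref{eq-450110}. The first move is a twirl on Bob's side: set $\bar{\mathcal{T}}=\int_{\mathbb{U}_d} U_{\textup{B}}^{\otimes N}\,\mathcal{T}\,U_{\textup{B}}^{\dagger\otimes N}\,dU$. Since $\mathcal{P}$ is unitarily invariant on $\HB$, for every $\ket{\psi}_{\textup{AB}}\in\mathcal{P}^X$ and every $U$ the state $(I_{\textup{A}}\otimes U_{\textup{B}}^{\dagger})\ket{\psi}_{\textup{AB}}$ again lies in $\mathcal{P}^X$, so by cyclicity of the trace the acceptance probability of $\bar{\mathcal{T}}$ on $\ket{\psi}_{\textup{AB}}$ is an average of acceptance probabilities of $\mathcal{T}$ over that orbit; hence $\bar{\mathcal{T}}$ is still a $(c,s)$-tester for $\mathcal{P}$ of sample complexity $N$. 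By invariance of the Haar measure, $\bar{\mathcal{T}}$ commutes with $U_{\textup{B}}^{\otimes N}$ for all $U$. Decomposing Bob's space by Schur--Weyl, $\HAB^{\otimes N}\cong\bigoplus_{\lambda}\bigl(\HA^{\otimes N}\otimes\mathcal{V}_{\lambda,\textup{B}}\bigr)\otimes\mathcal{W}_{\lambda,\textup{B}}$ with $U_{\textup{B}}^{\otimes N}$ acting as $\bigoplus_\lambda I\otimes Q_\lambda(U)$, so \cref{lemma-41116,lemma-3262332} force $\bar{\mathcal{T}}=\bigoplus_\lambda R_\lambda\otimes I_{\mathcal{W}_{\lambda,\textup{B}}}$ for operators $R_\lambda$ on $\HA^{\otimes N}\otimes\mathcal{V}_{\lambda,\textup{B}}$, and $0\sqsubseteq R_\lambda\sqsubseteq I$ follows from $0\sqsubseteq\mathcal{T}\sqsubseteq I$ together with $\tr_{\mathcal{W}_{\lambda,\textup{B}}}(I)=\dim(\mathcal{W}_\lambda)I$.

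Next I would record the normal form of $\ket{\psi}_{\textup{AB}}^{\otimes N}$. It is invariant under every simultaneous permutation $(\pi,\pi)$, which on the $\mathcal{V}_{\lambda_1,\textup{A}}\otimes\mathcal{V}_{\lambda_2,\textup{B}}$ leg of the decomposition in \cref{lemma-3241548} acts as $P_{\lambda_1}(\pi)\otimes P_{\lambda_2}(\pi)$; since the $P_\lambda$ are real irreducible representations of $\mathbb{S}_N$ (\cref{remark-3252205}), Schur's lemma kills the $(\lambda_1,\lambda_2)$-component unless $\lambda_1=\lambda_2=\lambda$ and otherwise pins it to $\spanspace\{\kett{I_{\mathcal{V}_\lambda}}\}$. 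Thus $\ket{\psi}_{\textup{AB}}^{\otimes N}=\sum_\lambda\kett{I_{\mathcal{V}_\lambda}}_{\mathcal{V}_{\lambda,\textup{A}}\mathcal{V}_{\lambda,\textup{B}}}\otimes\ket{\omega_\lambda}_{\mathcal{W}_{\lambda,\textup{A}}\mathcal{W}_{\lambda,\textup{B}}}$ for subnormalized vectors $\ket{\omega_\lambda}$. From this I extract two facts. First, tracing out Bob and using $\tr_{\mathcal{V}_{\lambda,\textup{B}}}\kettbbra{I_{\mathcal{V}_\lambda}}{I_{\mathcal{V}_\lambda}}=I_{\mathcal{V}_{\lambda,\textup{A}}}$ gives $\rho^{\otimes N}=\bigoplus_\lambda I_{\mathcal{V}_{\lambda,\textup{A}}}\otimes\rho_{\lambda,\textup{A}}$, where $\rho=\tr_{\textup{B}}\ketbra{\psi}{\psi}_{\textup{AB}}$ and $\rho_{\lambda,\textup{A}}=\tr_{\mathcal{W}_{\lambda,\textup{B}}}\ketbra{\omega_\lambda}{\omega_\lambda}$. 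Second, substituting the normal form into $\bra{\psi}_{\textup{AB}}^{\otimes N}\bar{\mathcal{T}}\ket{\psi}_{\textup{AB}}^{\otimes N}$ and using $\bar{\mathcal{T}}=\bigoplus_\lambda R_\lambda\otimes I_{\mathcal{W}_{\lambda,\textup{B}}}$: the $I_{\mathcal{W}_{\lambda,\textup{B}}}$ leg contracts $\ket{\omega_\lambda}$ into $\rho_{\lambda,\textup{A}}$ and the $\mathcal{V}$-legs contract $R_\lambda$ against $\kett{I_{\mathcal{V}_\lambda}}$, yielding $\bra{\psi}_{\textup{AB}}^{\otimes N}\bar{\mathcal{T}}\ket{\psi}_{\textup{AB}}^{\otimes N}=\sum_\lambda\tr(S_\lambda\rho_{\lambda,\textup{A}})$, where $S_\lambda:=\bbra{I_{\mathcal{V}_\lambda}}R_\lambda\kett{I_{\mathcal{V}_\lambda}}$ acts on $\mathcal{W}_{\lambda,\textup{A}}$ and obeys $0\sqsubseteq S_\lambda\sqsubseteq\dim(\mathcal{V}_\lambda)I$ because $\|\kett{I_{\mathcal{V}_\lambda}}\|^2=\dim\mathcal{V}_\lambda$ and $0\sqsubseteq R_\lambda\sqsubseteq I$.

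It then remains to identify $\hat{\mathcal{T}}$ and match acceptance probabilities. The operator in \cref{eq-450110} built from $\bar{\mathcal{T}}$ unwinds to $\hat{\mathcal{T}}=\bigl(\bigoplus_\lambda\frac{1}{\dim\mathcal{V}_\lambda}I_{\mathcal{V}_{\lambda,\textup{A}}}\otimes S_\lambda\bigr)_{\HA^{\otimes N}}\otimes I_{\textup{B}}$: indeed $\bbra{I_{\mathcal{V}_\lambda}}\bar{\mathcal{T}}\kett{I_{\mathcal{V}_\lambda}}=S_\lambda\otimes I_{\mathcal{W}_{\lambda,\textup{B}}}$, so the partial trace $\tr_{\mathcal{W}_{\lambda,\textup{B}}}$ in \cref{eq-450110} contributes exactly the factor $\dim(\mathcal{W}_\lambda)$ that cancels the $1/\dim(\mathcal{W}_\lambda)$. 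The bound $0\sqsubseteq S_\lambda\sqsubseteq\dim(\mathcal{V}_\lambda)I$ yields $0\sqsubseteq\hat{\mathcal{T}}\sqsubseteq I$, so $\hat{\mathcal{T}}$ is a legitimate tester operator, and it is manifestly local on $\HA$. Finally, writing $\hat{\mathcal{T}}=M_{\textup{A}}\otimes I_{\textup{B}}$, its acceptance probability on $\ket{\psi}_{\textup{AB}}$ is $\tr(M_{\textup{A}}\rho^{\otimes N})=\sum_\lambda\frac{1}{\dim\mathcal{V}_\lambda}\tr\bigl[(I_{\mathcal{V}_{\lambda,\textup{A}}}\otimes S_\lambda)(I_{\mathcal{V}_{\lambda,\textup{A}}}\otimes\rho_{\lambda,\textup{A}})\bigr]=\sum_\lambda\tr(S_\lambda\rho_{\lambda,\textup{A}})$, which is exactly $\bra{\psi}_{\textup{AB}}^{\otimes N}\bar{\mathcal{T}}\ket{\psi}_{\textup{AB}}^{\otimes N}$. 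Hence $\hat{\mathcal{T}}$ and $\bar{\mathcal{T}}$ accept every bipartite pure state with the same probability, in particular on $\mathcal{P}^{\textup{yes}}\cup\mathcal{P}^{\textup{no}}$, so $\hat{\mathcal{T}}$ is a local $(c,s)$-tester on $\HA$ for $\mathcal{P}$ with sample complexity $N=\mathsf{S}_{c,s}(\mathcal{P})$.

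The main obstacle I anticipate is the representation-theoretic bookkeeping: keeping straight, across the several regroupings of the decomposition in \cref{lemma-3241548}, which $\mathcal{V}$- and $\mathcal{W}$-legs belong to Alice and which to Bob; verifying that $\bbra{I_{\mathcal{V}_\lambda}}\bar{\mathcal{T}}\kett{I_{\mathcal{V}_\lambda}}$ really factors as $S_\lambda\otimes I_{\mathcal{W}_{\lambda,\textup{B}}}$ (which is precisely where the unitary invariance on $\HB$ enters, through the structure of $\bar{\mathcal{T}}$ obtained by twirling); and tracking the powers of $\dim\mathcal{V}_\lambda$ and $\dim\mathcal{W}_\lambda$ so that simultaneously $\hat{\mathcal{T}}\in[0,I]$ and the two acceptance probabilities agree on the nose. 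Establishing the normal form of $\ket{\psi}_{\textup{AB}}^{\otimes N}$ rigorously, including the real-representation subtlety that makes $\kett{I_{\mathcal{V}_\lambda}}$ the canonical invariant, is the other point needing care, though it is a clean application of Schur's lemma.
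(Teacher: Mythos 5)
Your proposal is correct and takes essentially the same route as the paper's proof of \cref{lemma-3191738}: twirl the optimal global tester over $U_{\textup{B}}^{\otimes N}$, invoke the simultaneous-permutation normal form $\ket{\psi}_{\textup{AB}}^{\otimes N}=\sum_{\lambda}\kett{I_{\mathcal{V}_\lambda}}\otimes\ket{w_\lambda}$ (\cref{lemma-3262205}), and verify that the localized operator of \cref{eq:def-hatT} is a valid measurement reproducing the twirled tester's acceptance probability exactly. The only difference is bookkeeping: you first extract the commutant block form $\widetilde{\mathcal{T}}=\bigoplus_{\lambda}R_\lambda\otimes I_{\mathcal{W}_{\lambda,\textup{B}}}$ on the tester side and work with $S_\lambda=\bbra{I_{\mathcal{V}_\lambda}}R_\lambda\kett{I_{\mathcal{V}_\lambda}}$, whereas the paper twirls the state inside the trace and carries the $1/\dim(\mathcal{W}_\lambda)$ factors explicitly via $\tr_{\mathcal{W}_{\lambda,\textup{B}}}\big[\bbra{I_{\mathcal{V}_\lambda}}\widetilde{\mathcal{T}}\kett{I_{\mathcal{V}_\lambda}}\big]$ — the two are equivalent and rest on the same lemmas (\cref{lemma-41116,lemma-3262332}).
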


\subsection{Construction (Proof of Theorem~\ref{lemma-3191738})}\label{sec-422223}

To prove \cref{lemma-3191738}, let $\mathcal{T}$ be a (possibly global) \((c,s)\)-tester for $\mathcal{P}$ with sample complexity $\mathsf{S}_{c,s}\rbra{\mathcal{P}}$, then we use the following theorem to show the existence of an optimal local tester that also achieves sample complexity $\mathsf{S}_{c,s}\rbra{\mathcal{P}}$.
\begin{theorem}\label{theorem-4222259}
Suppose $\mathcal{P}$ is a property of bipartite pure states in $\HAB$ that is unitarily invariant on $\HB$. For any \(0\leq s<c\leq 1\), if \(\mathcal{T}\) is a \((c,s)\)-tester for \(\mathcal{P}\) with sample complexity \(N\), then there is a local \((c,s)\)-tester \(\hat{\mathcal{T}}\) for \(\mathcal{P}\) on \(\HA\) that also has sample complexity \(N\).
\end{theorem}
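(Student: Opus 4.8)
The plan is to prove \cref{theorem-4222259} by exhibiting exactly the local tester of \cref{eq-450110}, proceeding in four steps. Fix a $(c,s)$-tester $\mathcal{T}$ for $\mathcal{P}$ with sample complexity $N$, so $0\sqsubseteq\mathcal{T}\sqsubseteq I$ on $\HAB^{\otimes N}$. \emph{Step 1 (symmetrize over Bob's unitaries).} Put $\mathcal{T}'\coloneqq\int_{U\in\mathbb{U}_d}U_{\TB}^{\otimes N}\,\mathcal{T}\,U_{\TB}^{\dag\otimes N}\,\mathrm{d}U$ over the Haar measure. Conjugation by a unitary preserves the L\"owner order, so $0\sqsubseteq\mathcal{T}'\sqsubseteq I$; and for every $\ket{\psi}_{\TAB}$ the value $\tr\sbra{\mathcal{T}'\ketbra{\psi}{\psi}_{\TAB}^{\otimes N}}$ is the average over $U$ of the acceptance probability of $\mathcal{T}$ on $\rbra{I_{\TA}\otimes U_{\TB}^\dag}\ket{\psi}_{\TAB}$. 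Since $\mathcal{P}$ is unitarily invariant on $\HB$, each such state lies on the same side of $\mathcal{P}$ as $\ket{\psi}_{\TAB}$, so $\mathcal{T}'$ is again a $(c,s)$-tester for $\mathcal{P}$ with sample complexity $N$, and moreover $\mathcal{T}'$ now commutes with $U_{\TB}^{\otimes N}$ for all $U\in\mathbb{U}_d$.

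\emph{Step 2 (block structure of $\mathcal{T}'$).} In the bipartite Schur--Weyl decomposition of \cref{lemma-3241548}, $U_{\TB}^{\otimes N}$ acts irreducibly as $Q_{\lambda_2}(U)$ on $\mathcal{W}_{\lambda_2,\TB}$ and trivially on the other three factors of each summand. Grouping those into multiplicity spaces $\mathcal{X}_{\lambda_2}\coloneqq\bigoplus_{\lambda_1}\mathcal{V}_{\lambda_1,\TA}\otimes\mathcal{V}_{\lambda_2,\TB}\otimes\mathcal{W}_{\lambda_1,\TA}$, the compact-group forms of \cref{lemma-41116,lemma-3262332} force $\mathcal{T}'$ to vanish between distinct $\mathbb{U}_d$-types on Bob's side and to act trivially on $\mathcal{W}_{\lambda,\TB}$ inside each surviving sector; in particular, restricted to the ``diagonal'' subspace $\bigoplus_\lambda\mathcal{V}_{\lambda,\TA}\otimes\mathcal{V}_{\lambda,\TB}\otimes\mathcal{W}_{\lambda,\TA}\otimes\mathcal{W}_{\lambda,\TB}$ it is block-diagonal over $\lambda$ with $\lambda$-block $\mathcal{T}'_\lambda\otimes I_{\mathcal{W}_{\lambda,\TB}}$, where $\mathcal{T}'_\lambda$ acts on $\mathcal{V}_{\lambda,\TA}\otimes\mathcal{V}_{\lambda,\TB}\otimes\mathcal{W}_{\lambda,\TA}$ and, being a compression of $\mathcal{T}'$, satisfies $0\sqsubseteq\mathcal{T}'_\lambda\sqsubseteq I$. \emph{Step 3 (normal form of the tested state).} For $\ket{\psi}_{\TAB}\in\HAB$ the state $\ket{\psi}_{\TAB}^{\otimes N}$ is fixed by every simultaneous-permutation $(\pi,\pi)$. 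Since $\mathcal{V}_{\lambda_1},\mathcal{V}_{\lambda_2}$ are irreducible \emph{real} representations of $\mathbb{S}_N$, Schur's lemma (\cref{prop:202404032229}) shows the subspace of $\mathcal{V}_{\lambda_1,\TA}\otimes\mathcal{V}_{\lambda_2,\TB}$ fixed by all $(\pi,\pi)$ is $\{0\}$ unless $\lambda_1=\lambda_2$, and is one-dimensional and spanned by $\kett{I_{\mathcal{V}_\lambda}}_{\TAB}$ when $\lambda_1=\lambda_2=\lambda$; this is the normal form of \cref{lemma-3262205}, which gives
\begin{equation}
    \ket{\psi}_{\TAB}^{\otimes N}=\sum_{\lambda\vdash(N,d)}\kett{I_{\mathcal{V}_\lambda}}_{\TAB}\otimes\ket{\phi_\lambda},\qquad \ket{\phi_\lambda}\in\mathcal{W}_{\lambda,\TA}\otimes\mathcal{W}_{\lambda,\TB},
\end{equation}
and hence, tracing out $\HB^{\otimes N}$, $\rho_{\TA}^{\otimes N}=\bigoplus_\lambda I_{\mathcal{V}_{\lambda,\TA}}\otimes\tr_{\mathcal{W}_{\lambda,\TB}}\sbra{\ketbra{\phi_\lambda}{\phi_\lambda}}$ with $\rho_{\TA}=\tr_{\TB}\ketbra{\psi}{\psi}_{\TAB}$.

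\emph{Step 4 (the local tester and the matching identity).} Let $\hat{\mathcal{T}}=\hat M_{\TA}\otimes I_{\TB}$ with $\hat M_{\TA}$ the operator on $\HA^{\otimes N}$ from \cref{eq-450110}. By Step 2, $\bbra{I_{\mathcal{V}_\lambda}}_{\TAB}\,\mathcal{T}'\,\kett{I_{\mathcal{V}_\lambda}}_{\TAB}=N_{\lambda,\TA}\otimes I_{\mathcal{W}_{\lambda,\TB}}$ with $N_{\lambda,\TA}\coloneqq\bbra{I_{\mathcal{V}_\lambda}}_{\TAB}\,\mathcal{T}'_\lambda\,\kett{I_{\mathcal{V}_\lambda}}_{\TAB}$ an operator on $\mathcal{W}_{\lambda,\TA}$, so (the $\dim\mathcal{W}_\lambda$ in \cref{eq-450110} cancelling the partial trace) $\hat M_{\TA}=\bigoplus_\lambda\frac{1}{\dim\mathcal{V}_\lambda}I_{\mathcal{V}_{\lambda,\TA}}\otimes N_{\lambda,\TA}$. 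Because $0\sqsubseteq\mathcal{T}'_\lambda\sqsubseteq I$ and $\bbrakett{I_{\mathcal{V}_\lambda}}{I_{\mathcal{V}_\lambda}}=\dim\mathcal{V}_\lambda$, we get $0\sqsubseteq N_{\lambda,\TA}\sqsubseteq\dim\mathcal{V}_\lambda\cdot I_{\mathcal{W}_{\lambda,\TA}}$, hence $0_{\TA}\sqsubseteq\hat M_{\TA}\sqsubseteq I_{\TA}$, so $\hat{\mathcal{T}}$ is a valid local tester on $\HA$ with sample complexity $N$. Finally, combining Steps 2 and 3 and carrying out the (routine) partial traces, both $\tr\sbra{\hat{\mathcal{T}}\,\ketbra{\psi}{\psi}_{\TAB}^{\otimes N}}=\tr\sbra{\hat M_{\TA}\,\rho_{\TA}^{\otimes N}}$ and $\tr\sbra{\mathcal{T}'\,\ketbra{\psi}{\psi}_{\TAB}^{\otimes N}}$ evaluate to $\sum_\lambda\tr\sbra{N_{\lambda,\TA}\,\tr_{\mathcal{W}_{\lambda,\TB}}\sbra{\ketbra{\phi_\lambda}{\phi_\lambda}}}$, so they coincide; since $\mathcal{T}'$ is a $(c,s)$-tester for $\mathcal{P}$, so is $\hat{\mathcal{T}}$, which proves \cref{theorem-4222259} (and therefore \cref{lemma-3191738}).

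I expect the main obstacle to be Steps 2--3: one must pin down the precise block structure of $\mathcal{T}'$ coming from the $\mathbb{U}_d$-twirl on Bob's side and the normal form of $\ket{\psi}_{\TAB}^{\otimes N}$ coming from $\mathbb{S}_N$-invariance, and then check that the two interlock so that the \emph{local} acceptance probability on $\rho_{\TA}^{\otimes N}$ reproduces the global one exactly. This is where purity of the tested state is essential---it yields genuine invariance under simultaneous-permutations, hence a maximally entangled state $\kett{I_{\mathcal{V}_\lambda}}$ on each pair of isomorphic $\mathbb{S}_N$-isotypic components, rather than mere covariance---and where the dimension bookkeeping with the factors $\dim\mathcal{V}_\lambda$ and $\dim\mathcal{W}_\lambda$ has to be handled with care.
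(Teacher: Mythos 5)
Your proposal is correct and follows essentially the same route as the paper's proof: Haar-twirling $\mathcal{T}$ over Bob's unitaries, invoking the bipartite Schur--Weyl decomposition together with the simultaneous-permutation normal form $\ket{\psi}_{\TAB}^{\otimes N}=\sum_{\lambda}\kett{I_{\mathcal{V}_\lambda}}\otimes\ket{w_\lambda}$, and verifying that the local tester of \cref{eq:def-hatT} reproduces the twirled tester's acceptance probability exactly. The only (harmless) presentational difference is that you extract the block structure $\sum_\lambda \mathcal{T}'_\lambda\otimes I_{\mathcal{W}_{\lambda,\TB}}$ of the symmetrized tester directly from its commutation with $U_\TB^{\otimes N}$, whereas the paper moves the twirl onto the state and applies \cref{lemma-41116,lemma-3262332} there---the same Schur-lemma content.
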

\begin{proof}
Our construction of the local \((c,s)\)-tester $\hat{\mathcal{T}}$ on $\HA$ is given as follows. 

\begin{enumerate}
    \item We first construct a new tester
    \begin{equation} \label{eq:def-tildeT}
    \widetilde{\mathcal{T}}\coloneqq\int_{U\in\mathbb{U}_d} U_{\textup{B}}^{\otimes N} \mathcal{T} U_{\textup{B}}^{\dag \otimes N},
    \end{equation}
    where \(U\) is a Haar random unitary matrix, and \(U_{\textup{B}}\) means \(U\) is acting on the system \(\HB\).
    \item Then, we construct a local tester \(\hat{\mathcal{T}}\) on $\HA$ by localizing the tester $\widetilde{\mathcal{T}}$:
    \begin{equation} \label{eq:def-hatT}
    \hat{\mathcal{T}} \coloneqq \underbrace{\left( \sum_{\lambda\vdash (N,d)}\frac{1}{\dim(\mathcal{W}_\lambda)\dim(\mathcal{V}_\lambda)}\,\cdot\,\begin{matrix}
        I_{\mathcal{V}_{\lambda,\TA}} \\ \tr_{\mathcal{W}_{\lambda,\TB}}\big[\bbra{I_{\mathcal{V}_\lambda}}\widetilde{\mathcal{T}}\kett{I_{\mathcal{V}_\lambda}}\big]
    \end{matrix} \right)}_{\HA^{\otimes N}} \quad \otimes \quad \underbrace{I_{\textup{B}}}_{\HB^{\otimes N}},
    \end{equation}
    where $\mathcal{V}_{\lambda, \textup{A}} = \mathcal{V}_{\lambda, \textup{B}} \coloneqq \mathcal{V}_\lambda$ and \(\mathcal{W}_{\lambda,\textup{A}}=\mathcal{W}_{\lambda,\textup{B}}\coloneqq\mathcal{W}_{\lambda}\) are irreducible representations of \(\mathbb{S}_N\) and \(\mathbb{U}_d\) that appear in the decomposition of $\HAB^{\otimes N}$ in \cref{lemma-3241548}. 
    Here, $I_{\mathcal{V}_\lambda}$ is the identity operator on $\mathcal{V}_{\lambda}$.
\end{enumerate}

To verify our construction, we first show that $\widetilde{\mathcal{T}}$ is a \((c,s)\)-tester for $\mathcal{P}$ with sample complexity $N$ and derive a formula for the acceptance probability of $\widetilde{\mathcal{T}}$ in \cref{sec-417206}. 
Since $\hat{\mathcal{T}}$ is local on $\HA$ (by its definition), the proof is completed by showing that $\hat{\mathcal{T}}$ is a \((c,s)\)-tester for $\mathcal{P}$ with sample complexity $N$. 
This is done by showing that $\hat{\mathcal{T}}$ behaves identically to the tester $\widetilde{\mathcal{T}}$ in \cref{sec-417207}.
\end{proof}

\subsection{\texorpdfstring{Acceptance Probability of $\widetilde{\mathcal{T}}$}{Acceptance Probability of tilde-T}}\label{sec-417206}

In this subsection, we show that $\widetilde{\mathcal{T}}$ is a \((c,s)\)-tester for $\mathcal{P}$ with sample complexity \(N\), and derive an expression for its acceptance probability. 
Technical lemmas are postponed in \cref{sec:tech-lemmas}.

\begin{lemma} \label{lemma:202404041522}
    Let $\widetilde{\mathcal{T}}$ be the tester defined by \cref{eq:def-tildeT}.
    Then,
    \begin{enumerate}
        \item $\widetilde{\mathcal{T}}$ is a \((c,s)\)-tester for $\mathcal{P}$ with sample complexity $N$. 
        \item $\widetilde{\mathcal{T}}$ accepts $\ket{\psi}_{\textup{AB}}$ with probability 
    \begin{equation}\label{eq-414052}
        \tr\left[\widetilde{\mathcal{T}}\ketbra{\psi}{\psi}_{\textup{AB}}^{\otimes N}\right] = \sum_{\lambda\vdash (N,d)} \frac{1}{\dim(\mathcal{W}_\lambda)} \tr\left[\tr_{\mathcal{W}_{\lambda,\TB}}\big[\bbra{I_{\mathcal{V}_\lambda}}\widetilde{\mathcal{T}}\kett{I_{\mathcal{V}_\lambda}}\big]\cdot\,\,\tr_{\mathcal{W}_{\lambda,\TB}}(\ketbra{w_{\lambda}}{w_{\lambda}}) \right],
    \end{equation}
    where \(\ket{w_{\lambda}}\in\mathcal{W}_{\lambda,\textup{A}}\otimes \mathcal{W}_{\lambda,\textup{B}}\) is defined in the decomposition of $\ket{\psi}^{\otimes N}_{\textup{AB}}$ in \cref{lemma-3262205}.
    \end{enumerate}
\end{lemma}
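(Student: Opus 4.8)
The plan is to establish the two claims in sequence, relying on the bipartite Schur--Weyl decomposition of \cref{lemma-3241548} and the normal form for tensor powers of pure states referenced as \cref{lemma-3262205}. For the first claim, I would argue that $\widetilde{\mathcal{T}}$ inherits both the completeness and soundness of $\mathcal{T}$. Since $\mathcal{P}$ is unitarily invariant on $\HB$, for every $\ket{\psi}_{\textup{AB}} \in \mathcal{P}^X$ and every unitary $U$ on $\HB$, the state $(I_{\textup{A}} \otimes U_{\textup{B}})\ket{\psi}_{\textup{AB}}$ is again in $\mathcal{P}^X$. Hence $\tr[U_{\textup{B}}^{\dag\otimes N}\mathcal{T} U_{\textup{B}}^{\otimes N} \ketbra{\psi}{\psi}_{\textup{AB}}^{\otimes N}] = \tr[\mathcal{T}\, ((I_{\textup{A}}\otimes U_{\textup{B}})\ketbra{\psi}{\psi}_{\textup{AB}}(I_{\textup{A}}\otimes U_{\textup{B}}^\dag))^{\otimes N}]$ lies in $[c,1]$ for $X=\textup{yes}$ and in $[0,s]$ for $X=\textup{no}$; averaging over Haar-random $U$ preserves these bounds, and $0 \sqsubseteq \widetilde{\mathcal{T}} \sqsubseteq I$ is immediate from convexity. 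So $\widetilde{\mathcal{T}}$ is a $(c,s)$-tester with the same sample complexity $N$.

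For the second claim, I would compute $\tr[\widetilde{\mathcal{T}}\ketbra{\psi}{\psi}_{\textup{AB}}^{\otimes N}]$ directly. By definition this equals $\int_{U} \tr[\mathcal{T}\, (U_{\textup{B}}^{\dag\otimes N}\ketbra{\psi}{\psi}_{\textup{AB}}^{\otimes N} U_{\textup{B}}^{\otimes N})]$, i.e. $\tr[\mathcal{T}\, \overline{\rho}]$ where $\overline{\rho} = \int_U U_{\textup{B}}^{\dag\otimes N} \ketbra{\psi}{\psi}_{\textup{AB}}^{\otimes N} U_{\textup{B}}^{\otimes N}$. Now I invoke \cref{lemma-3262205}: in the bipartite Schur--Weyl basis, $\ket{\psi}_{\textup{AB}}^{\otimes N}$ has a normal form supported on the ``diagonal'' blocks $\lambda_1 = \lambda_2 = \lambda$, where on the $\mathbb{S}_N$-multiplicity factors $\mathcal{V}_{\lambda,\textup{A}}\otimes \mathcal{V}_{\lambda,\textup{B}}$ it is the (normalized) maximally entangled vector $\kett{I_{\mathcal{V}_\lambda}}/\sqrt{\dim\mathcal{V}_\lambda}$, tensored with some vector $\ket{w_\lambda} \in \mathcal{W}_{\lambda,\textup{A}}\otimes\mathcal{W}_{\lambda,\textup{B}}$. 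Applying \cref{lemma-3262332} with the group $\mathbb{U}_d$ acting on $\HB$ (whose Schur--Weyl image decomposes the $\mathcal{W}_{\lambda,\textup{B}}$ factors into irreducibles, while acting trivially on the $\mathcal{V}$ factors and on Alice's side) twirls the $\mathcal{W}_{\lambda,\textup{B}}$-block of $\ketbra{\psi}{\psi}_{\textup{AB}}^{\otimes N}$ into $\tfrac{1}{\dim\mathcal{W}_\lambda}\tr_{\mathcal{W}_{\lambda,\textup{B}}}(\cdot)\otimes I_{\mathcal{W}_{\lambda,\textup{B}}}$; the cross terms between distinct $\lambda_1,\lambda_2$ blocks vanish by \cref{lemma-41116} since distinct $\mathcal{W}$'s are non-isomorphic. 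Collecting the surviving blocks and using $\bbrakett{I_{\mathcal{V}_\lambda}}{I_{\mathcal{V}_\lambda}} = \dim\mathcal{V}_\lambda$ to account for the normalization of the maximally entangled $\mathcal{V}$-part, one reads off that $\tr[\mathcal{T}\,\overline{\rho}]$ equals the stated sum, where the factor $\bbra{I_{\mathcal{V}_\lambda}}\widetilde{\mathcal{T}}\kett{I_{\mathcal{V}_\lambda}}$ extracts the relevant block of $\widetilde{\mathcal{T}}$ on the $\mathcal{V}$-factors and the remaining trace pairs the $\mathcal{W}_{\lambda,\textup{B}}$-reduced parts of $\widetilde{\mathcal{T}}$ and of $\ketbra{w_\lambda}{w_\lambda}$. (Here it is convenient to have already replaced $\mathcal{T}$ by $\widetilde{\mathcal{T}}$, so that $\widetilde{\mathcal{T}}$ is itself invariant under $\mathbb{U}_d$-twirling on $\HB$ and the twirl can be moved onto either operand.)

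The main obstacle I anticipate is bookkeeping the four-fold tensor structure $\mathcal{V}_{\lambda_1,\textup{A}}\otimes\mathcal{V}_{\lambda_2,\textup{B}}\otimes\mathcal{W}_{\lambda_1,\textup{A}}\otimes\mathcal{W}_{\lambda_2,\textup{B}}$ correctly while applying the various Schur-type averaging lemmas: one must be careful that twirling by $\mathbb{U}_d$ on $\HB$ acts as $\bigoplus_\lambda I_{\mathcal{V}_\lambda}\otimes Q_\lambda(U)$ on Bob's Schur decomposition, hence couples only the $\mathcal{W}_{\lambda,\textup{B}}$ factors, and that the normal form of $\ket{\psi}^{\otimes N}_{\textup{AB}}$ from \cref{lemma-3262205} really does place a maximally entangled state on $\mathcal{V}_{\lambda,\textup{A}}\otimes\mathcal{V}_{\lambda,\textup{B}}$ (this is the simultaneous-permutation invariance point, via the $(\textup{GL},\textup{GL})$-duality), so that $\tr[\widetilde{\mathcal{T}}\cdot(\text{block})]$ collapses to pairing against $\bbra{I_{\mathcal{V}_\lambda}}\widetilde{\mathcal{T}}\kett{I_{\mathcal{V}_\lambda}}$. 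Once the indices are pinned down, the computation is a routine application of \cref{lemma-41116,lemma-3262332} block by block, and \cref{eq-414052} follows.
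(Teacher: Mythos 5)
Your proposal is correct and follows essentially the same route as the paper: Item 1 by combining the unitary invariance of $\mathcal{P}$ on $\HB$ with Haar averaging, and Item 2 by moving the twirl onto the state, invoking the normal form of \cref{lemma-3262205}, killing the $\lambda_1\neq\lambda_2$ cross terms via \cref{lemma-41116}, and twirling the diagonal $\mathcal{W}_{\lambda,\TB}$ blocks via \cref{lemma-3262332} before collecting terms into \cref{eq-414052}. The only caveat is a normalization convention (you place the normalized maximally entangled vector on $\mathcal{V}_{\lambda,\TA}\otimes\mathcal{V}_{\lambda,\TB}$, whereas \cref{lemma-3262205} uses the unnormalized $\kett{I_{\mathcal{V}_\lambda}}$ with $\ket{w_\lambda}$ absorbing the weight), which you already flag and which cancels out, so the final formula matches the paper's.
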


\begin{proof}
\textbf{Item 1}. 
We first show that $\widetilde{\mathcal{T}}$ is a \((c,s)\)-tester for $\mathcal{P}$ with sample complexity $N$.
If \(\ket{\psi}_{\textup{AB}}\in\mathcal{P}^{\textup{yes}}\), then by the unitary invariance on \(\HB\),
\begin{align}
c\leq& \int_{U\in\mathbb{U}_d}\tr\left[{\mathcal{T}}U_\textup{B}^{\otimes N}\ketbra{\psi}{\psi}^{\otimes N}_{\textup{AB}}U_\textup{B}^{\dag\otimes N} \right]=\int_{U\in\mathbb{U}_d}\tr\left[U_\textup{B}^{\dag\otimes N} {\mathcal{T}}U_\textup{B}^{\otimes N}\ketbra{\psi}{\psi}^{\otimes N}_{\textup{AB}}\right]\\
=&\tr\left[\widetilde{\mathcal{T}} \ketbra{\psi}{\psi}^{\otimes N}_{\textup{AB}}\right],
\end{align}
Similarly, if \(\ket{\psi}_{\textup{AB}}\in\mathcal{P}^{\textup{no}}\), we have 
\begin{align}
s\geq& \int_{U\in\mathbb{U}_d}\tr\left[{\mathcal{T}}U_\textup{B}^{\otimes N}\ketbra{\psi}{\psi}^{\otimes N}_{\textup{AB}}U_\textup{B}^{\dag\otimes N} \right]=\int_{U\in\mathbb{U}_d}\tr\left[U_\textup{B}^{\dag\otimes N} {\mathcal{T}}U_\textup{B}^{\otimes N}\ketbra{\psi}{\psi}^{\otimes N}_{\textup{AB}}\right]\\
=&\tr\left[\widetilde{\mathcal{T}} \ketbra{\psi}{\psi}^{\otimes N}_{\textup{AB}}\right],
\end{align}
Therefore, \(\widetilde{\mathcal{T}}\) is also a tester for \(\mathcal{P}\).

\textbf{Item 2}. 
Now, we will derive a formula for the acceptance probability of $\widetilde{\mathcal{T}}$. 
Note that \(\widetilde{\mathcal{T}}\) is invariant under the conjugation of \(U_{\textup{B}}^{\otimes N}\) for \(U\in\mathbb{U}_d\), i.e.,
\begin{equation}
U_{\textup{B}}^{\otimes N}\widetilde{\mathcal{T}}U_{\textup{B}}^{\dag \otimes N} =\widetilde{\mathcal{T}},
\end{equation}
and thus
\begin{equation}
\int_{U\in\mathbb{U}_d}U_{\textup{B}}^{\otimes N}\widetilde{\mathcal{T}}U_{\textup{B}}^{\dag \otimes N} =\widetilde{\mathcal{T}}.
\end{equation}
Now, suppose \(\widetilde{\mathcal{T}}\) acts on \(N\) copies of a bipartite state \(\ket{\psi}_{\textup{AB}}\). We can see that
\begin{equation}\label{eq-3272137}
\tr\left[\widetilde{\mathcal{T}}\ketbra{\psi}{\psi}_{\textup{AB}}^{\otimes N}\right]=\tr\left[\widetilde{\mathcal{T}}\int_{U\in\mathbb{U}_d}U_{\textup{B}}^{\otimes N}\ketbra{\psi}{\psi}^{\otimes N}_{\textup{AB}}U_{\textup{B}}^{\dag \otimes N}\right].
\end{equation}
By \cref{lemma-3262205}, we have
\begin{equation}\label{eq-329053}
\ketbra{\psi}{\psi}^{\otimes N}_{\textup{AB}}=\sum_{\lambda_1,\lambda_2\vdash (N,d)}\begin{matrix}\kettbbra{I_{\mathcal{V}_{\lambda_1}}}{I_{\mathcal{V}_{\lambda_2}}}\\ \ketbra{w_{\lambda_1}}{w_{\lambda_2}}\end{matrix},
\end{equation}
where \(\ket{w_{\lambda}}\in\mathcal{W}_{\lambda,\textup{A}}\otimes \mathcal{W}_{\lambda,\textup{B}}\) and \(\kett{I_{\mathcal{V}_{\lambda}}}\in\mathcal{V}_{\lambda,\textup{A}}\otimes \mathcal{V}_{\lambda,\textup{B}}\) is a (non-normalized) maximally entangled state. Then,
\begin{equation}\label{eq-327126}
\int_{U\in\mathbb{U}_d}U_{\textup{B}}^{\otimes N}\ketbra{\psi}{\psi}^{\otimes N}_{\textup{AB}}U_{\textup{B}}^{\dag \otimes N}=\sum_{\lambda_1,\lambda_2\vdash (N,d)}\int_{U\in\mathbb{U}_d}\begin{matrix}\kettbbra{I_{\mathcal{V}_{\lambda_1}}}{I_{\mathcal{V}_{\lambda_2}}}\\ Q_{\lambda_1,\TB}(U)\ketbra{w_{\lambda_1}}{w_{\lambda_2}}Q_{\lambda_2,\TB}(U)^\dag\end{matrix},
\end{equation}
where \(Q_{\lambda,\TB}(U)\) acts on \(\mathcal{W}_{\lambda,B}\) (i.e., \(Q_{\lambda,\TB}(U)\ket{w_{\lambda}}=Q_{\lambda,\TA}(I)\otimes Q_{\lambda,\TB}(U)\ket{w_{\lambda}}\)). If \(\lambda_1\neq \lambda_2\), then we can easily see that
\begin{equation}\label{eq-327219}
\int_{U\in\mathbb{U}_d} Q_{\lambda_1,\TB}(U)\ketbra{w_{\lambda_1}}{w_{\lambda_2}}Q_{\lambda_2,\TB}(U)^\dag=0,
\end{equation}
by \cref{lemma-41116}, where \(\ketbra{w_{\lambda_1}}{w_{\lambda_2}}\) is treated as a linear map: \(\mathbb{C}^{\dim(\mathcal{W}_{\lambda_2})}\otimes \mathcal{W}_{\lambda_2}\rightarrow\mathbb{C}^{\dim(\mathcal{W}_{\lambda_1})}\otimes \mathcal{W}_{\lambda_1}\) between representation spaces.
Next, if \(\lambda_1=\lambda_2=\lambda\), by \cref{lemma-3262332},
\begin{equation}
\int_{U\in\mathbb{U}_d} Q_{\lambda,\TB}(U)\ketbra{w_{\lambda}}{w_{\lambda}}Q_{\lambda,\TB}(U)^\dag=\tr_{\mathcal{W}_{\lambda,\textup{B}}}(\ketbra{w_{\lambda}}{w_{\lambda}})\otimes \frac{I_{\mathcal{W}_{\lambda,\TB}}}{\dim(\mathcal{W}_\lambda)},
\end{equation}
where \(\tr_{\mathcal{W}_{\lambda,\textup{B}}}\) denotes the partial trace on \(\mathcal{W}_{\lambda,\textup{B}}\).
Therefore, from \cref{eq-327126}, we have
\begin{equation}\label{eq-45031}
\int_{U\in\mathbb{U}_d}U_{\textup{B}}^{\otimes N}\ketbra{\psi}{\psi}^{\otimes N}_{\textup{AB}}U_{\textup{B}}^{\dag \otimes N}=\sum_{\lambda\vdash (N,d)}\begin{matrix}\kettbbra{I_{\mathcal{V}_{\lambda}}}{I_{\mathcal{V}_{\lambda}}}\\ \tr_{\mathcal{W}_{\lambda,\TB}}(\ketbra{w_{\lambda}}{w_{\lambda}})\otimes \frac{I_{\mathcal{W}_{\lambda,\TB}}}{\dim(\mathcal{W}_\lambda)}\end{matrix},
\end{equation}
Therefore, combining \cref{eq-3272137} with \cref{eq-45031}, we can express \(\tr\left[\widetilde{\mathcal{T}}\ketbra{\psi}{\psi}_{\textup{AB}}^{\otimes N}\right]\) (i.e., the probability of tester \(\widetilde{\mathcal{T}}\) accepting state \(\ket{\psi}_{\textup{AB}}^{\otimes N}\)) as
\begin{align}
\tr\left[\widetilde{\mathcal{T}}\ketbra{\psi}{\psi}_{\textup{AB}}^{\otimes N}\right]&=\sum_{\lambda\vdash (N,d)}\tr\left[\widetilde{\mathcal{T}}\cdot\,\,\begin{matrix}\kettbbra{I_{\mathcal{V}_{\lambda}}}{I_{\mathcal{V}_{\lambda}}}\\ \tr_{\mathcal{W}_{\lambda,\TB}}(\ketbra{w_{\lambda}}{w_{\lambda}})\otimes \frac{I_{\mathcal{W}_{\lambda,\TB}}}{\dim(\mathcal{W}_\lambda)}\end{matrix}\right],\\
&=\sum_{\lambda\vdash (N,d)}\tr\left[\widetilde{\mathcal{T}}_\lambda\cdot\,\,\begin{matrix}\tr_{\mathcal{W}_{\lambda,\TB}}(\ketbra{w_{\lambda}}{w_{\lambda}}) \\ \frac{1}{\dim(\mathcal{W}_\lambda)}I_{\mathcal{W}_{\lambda,\TB}}\end{matrix}\right], \label{eq-3272202}
\end{align}
where \(\widetilde{\mathcal{T}}_\lambda\coloneqq\bbra{I_{\mathcal{V}_\lambda}}\widetilde{\mathcal{T}}\kett{I_{\mathcal{V}_\lambda}} 
\colon \mathcal{W}_{\lambda,\TA}\otimes\mathcal{W}_{\lambda,\TB}\rightarrow\mathcal{W}_{\lambda,\TA}\otimes\mathcal{W}_{\lambda,\TB}\) is a linear map. Then, this probability can be further simplified to
\begin{equation}\label{eq-329117}
\eqref{eq-3272202}=\sum_{\lambda\vdash (N,d)} \frac{1}{\dim(\mathcal{W}_\lambda)} \tr\left[\tr_{\mathcal{W}_{\lambda,\TB}}\big[\widetilde{\mathcal{T}}_\lambda\big]\cdot\,\,\tr_{\mathcal{W}_{\lambda,\TB}}(\ketbra{w_{\lambda}}{w_{\lambda}}) \right],
\end{equation}
where \(\tr_{\mathcal{W}_{\lambda,\TB}}\big[\widetilde{\mathcal{T}}_\lambda\big] \colon \mathcal{W}_{\lambda,\TA}\rightarrow\mathcal{W}_{\lambda,\TA}\) is a linear map acting on Alice's \(\mathcal{W}_\lambda\).
\end{proof}

\subsection{\texorpdfstring{Validity of \(\hat{\mathcal{T}}\)}{Validity of hat-T}}\label{sec-417207}

In this subsection, we show that $\hat{\mathcal{T}}$ is a tester for $\mathcal{P}$.

\begin{lemma} \label{lemma:202404041526}
    Let $\hat{\mathcal{T}}$ be the tester defined by \cref{eq:def-hatT}.
    Then, $\hat{\mathcal{T}}$ is a \((c,s)\)-tester for $\mathcal{P}$ with sample complexity $N$.
\end{lemma}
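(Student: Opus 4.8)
The plan is to reduce the lemma to two checks. Write $\hat{\mathcal{T}} = \hat{M}_{\textup{A}} \otimes I_{\textup{B}}$, where $\hat{M}_{\textup{A}}$ is the operator on $\HA^{\otimes N}$ appearing inside the parentheses of \cref{eq:def-hatT}. Then $\hat{\mathcal{T}}$ is \emph{local} on $\HA$ by construction, so it remains to show (a) that $0 \sqsubseteq \hat{\mathcal{T}} \sqsubseteq I$, so $\hat{\mathcal{T}}$ is a legitimate tester in the sense of \cref{def:tester}, and (b) that $\tr\sbra{\hat{\mathcal{T}} \ketbra{\psi}{\psi}_{\textup{AB}}^{\otimes N}} = \tr\sbra{\widetilde{\mathcal{T}} \ketbra{\psi}{\psi}_{\textup{AB}}^{\otimes N}}$ for every bipartite pure state $\ket{\psi}_{\textup{AB}}$. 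Since $\mathcal{P}$ is a property of bipartite pure states, (b) shows that $\hat{\mathcal{T}}$ accepts every state in $\mathcal{P}^{\textup{yes}} \cup \mathcal{P}^{\textup{no}}$ with exactly the same probability as $\widetilde{\mathcal{T}}$, so the completeness and soundness bounds of $\widetilde{\mathcal{T}}$ established in \cref{lemma:202404041522} pass verbatim to $\hat{\mathcal{T}}$; together with (a), this makes $\hat{\mathcal{T}}$ a $(c,s)$-tester for $\mathcal{P}$ with sample complexity $N$.

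For (a), positivity is immediate: $0 \sqsubseteq \mathcal{T} \sqsubseteq I$ forces $0 \sqsubseteq \widetilde{\mathcal{T}} \sqsubseteq I$ (a Haar average of unitary conjugations of $\mathcal{T}$), hence each compression $\widetilde{\mathcal{T}}_\lambda = \bbra{I_{\mathcal{V}_\lambda}} \widetilde{\mathcal{T}} \kett{I_{\mathcal{V}_\lambda}}$ is positive, each $\tr_{\mathcal{W}_{\lambda,\TB}}\sbra{\widetilde{\mathcal{T}}_\lambda}$ is positive, and so $\hat{M}_{\textup{A}} \sqsupseteq 0$. For the upper bound the quantitative input is the norm identity $\bbrakett{I_{\mathcal{V}_\lambda}}{I_{\mathcal{V}_\lambda}} = \dim(\mathcal{V}_\lambda)$: for every $\ket{\chi} \in \mathcal{W}_{\lambda,\TA} \otimes \mathcal{W}_{\lambda,\TB}$, $\widetilde{\mathcal{T}} \sqsubseteq I$ gives $\bra{\chi} \widetilde{\mathcal{T}}_\lambda \ket{\chi} \leq \Vert \kett{I_{\mathcal{V}_\lambda}} \otimes \ket{\chi} \Vert^2 = \dim(\mathcal{V}_\lambda) \Vert \ket{\chi} \Vert^2$, i.e.\ $\widetilde{\mathcal{T}}_\lambda \sqsubseteq \dim(\mathcal{V}_\lambda) \cdot I_{\mathcal{W}_{\lambda,\TA} \otimes \mathcal{W}_{\lambda,\TB}}$; taking $\tr_{\mathcal{W}_{\lambda,\TB}}$ yields $\tr_{\mathcal{W}_{\lambda,\TB}}\sbra{\widetilde{\mathcal{T}}_\lambda} \sqsubseteq \dim(\mathcal{V}_\lambda)\dim(\mathcal{W}_\lambda) \cdot I_{\mathcal{W}_{\lambda,\TA}}$. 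Dividing by $\dim(\mathcal{W}_\lambda)\dim(\mathcal{V}_\lambda)$ and tensoring with $I_{\mathcal{V}_{\lambda,\TA}}$, the $\lambda$-th summand of $\hat{M}_{\textup{A}}$ is therefore bounded above by the identity on the $\lambda$-isotypic block $\mathcal{V}_{\lambda,\TA} \otimes \mathcal{W}_{\lambda,\TA}$ of $\HA^{\otimes N}$; since distinct isotypic blocks are mutually orthogonal, $\hat{M}_{\textup{A}} \sqsubseteq I_{\HA^{\otimes N}}$, and hence $\hat{\mathcal{T}} \sqsubseteq I$.

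For (b), since $\hat{\mathcal{T}} = \hat{M}_{\textup{A}} \otimes I_{\textup{B}}$ we have $\tr\sbra{\hat{\mathcal{T}} \ketbra{\psi}{\psi}_{\textup{AB}}^{\otimes N}} = \tr\sbra{\hat{M}_{\textup{A}} \cdot \tr_{\textup{B}}\sbra{\ketbra{\psi}{\psi}_{\textup{AB}}^{\otimes N}}}$, so I would first compute the reduced state on $\HA^{\otimes N}$. Expanding $\ketbra{\psi}{\psi}_{\textup{AB}}^{\otimes N}$ via the normal form of \cref{lemma-3262205} (i.e.\ \cref{eq-329053}), the cross terms with $\lambda_1 \neq \lambda_2$ vanish under $\tr_{\textup{B}}$ because $\kettbbra{I_{\mathcal{V}_{\lambda_1}}}{I_{\mathcal{V}_{\lambda_2}}}$ and $\ketbra{w_{\lambda_1}}{w_{\lambda_2}}$ map between inequivalent isotypic blocks of $\HB^{\otimes N}$ and so are traceless on the Bob factor, while for $\lambda_1 = \lambda_2 = \lambda$ the Bob factor factorizes with $\tr_{\mathcal{V}_{\lambda,\TB}}\sbra{\kettbbra{I_{\mathcal{V}_\lambda}}{I_{\mathcal{V}_\lambda}}} = I_{\mathcal{V}_{\lambda,\TA}}$; this gives $\tr_{\textup{B}}\sbra{\ketbra{\psi}{\psi}_{\textup{AB}}^{\otimes N}} = \sum_{\lambda \vdash (N,d)} I_{\mathcal{V}_{\lambda,\TA}} \otimes \tr_{\mathcal{W}_{\lambda,\TB}}\rbra{\ketbra{w_\lambda}{w_\lambda}}$. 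Contracting against $\hat{M}_{\textup{A}}$ block by block --- the $\mu \neq \lambda$ cross terms again drop since they act on orthogonal isotypic blocks of $\HA^{\otimes N}$ --- the factor $\tr\sbra{I_{\mathcal{V}_{\lambda,\TA}}} = \dim(\mathcal{V}_\lambda)$ cancels the $\dim(\mathcal{V}_\lambda)$ in the denominator of \cref{eq:def-hatT}, and what is left is exactly $\sum_{\lambda} \frac{1}{\dim(\mathcal{W}_\lambda)} \tr\sbra{\tr_{\mathcal{W}_{\lambda,\TB}}\sbra{\widetilde{\mathcal{T}}_\lambda} \cdot \tr_{\mathcal{W}_{\lambda,\TB}}\rbra{\ketbra{w_\lambda}{w_\lambda}}}$, which is precisely formula \cref{eq-414052} for $\tr\sbra{\widetilde{\mathcal{T}} \ketbra{\psi}{\psi}_{\textup{AB}}^{\otimes N}}$ from \cref{lemma:202404041522}. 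This establishes (b).

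I expect the main obstacle to be the bookkeeping in step (b): keeping straight which tensor factors --- Alice's versus Bob's copies of $\mathcal{V}_\lambda$ and $\mathcal{W}_\lambda$ --- each of the several partial traces acts on, and in particular carefully justifying that the $\lambda_1 \neq \lambda_2$ cross terms of the normal form die under $\tr_{\textup{B}}$. That vanishing relies only on the orthogonality of inequivalent isotypic components (the mechanism behind \cref{lemma-41116}) already used in the proof of \cref{lemma:202404041522}, so no representation-theoretic input beyond \cref{lemma-3241548,lemma-3262205} is needed; with the norm identity $\bbrakett{I_{\mathcal{V}_\lambda}}{I_{\mathcal{V}_\lambda}} = \dim(\mathcal{V}_\lambda)$ isolated, the operator-inequality part of (a) is then routine.
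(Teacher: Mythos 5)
Your proposal is correct and follows essentially the same route as the paper: validity of $\hat{\mathcal{T}}$ via the bound $\Abs{\widetilde{\mathcal{T}}_\lambda} \leq \bbrakett{I_{\mathcal{V}_\lambda}}{I_{\mathcal{V}_\lambda}} = \dim(\mathcal{V}_\lambda)$ followed by the partial trace over $\mathcal{W}_{\lambda,\TB}$ (you phrase it as an operator inequality, the paper as a norm estimate over a basis — same content), and then matching $\tr\sbra{\hat{\mathcal{T}}\,\ketbra{\psi}{\psi}_{\textup{AB}}^{\otimes N}}$ with the formula \cref{eq-414052} for $\widetilde{\mathcal{T}}$ by computing $\tr_{\textup{B}}\sbra{\ketbra{\psi}{\psi}_{\textup{AB}}^{\otimes N}}$ from the normal form of \cref{lemma-3262205}, with the $\lambda_1 \neq \lambda_2$ cross terms killed by the partial trace exactly as in the paper's \cref{eq-329103,eq-329109}. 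No gaps; the explicit positivity check is a harmless addition the paper leaves implicit.
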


\begin{proof}
As the tester $\hat{\mathcal{T}}$ acts trivially on Bob's part, we just consider the behavior of \(\hat{\mathcal{T}}\) on Alice's part:
\begin{equation}\label{eq-329110}
\hat{\mathcal{T}}=\sum_{\lambda\vdash (N,d)}\frac{1}{\dim(\mathcal{W}_\lambda)\dim(\mathcal{V}_\lambda)}\,\cdot\,I_{\mathcal{V}_{\lambda,\TA}}\otimes \tr_{\mathcal{W}_{\lambda,\TB}}\big[\widetilde{\mathcal{T}}_\lambda\big].
\end{equation}
To show that \(\hat{\mathcal{T}}\) is a valid measurement, i.e., \(\hat{\mathcal{T}}\sqsubseteq I\), it suffices to show that \(\tr_{\mathcal{W}_{\lambda,\TB}}\big[\widetilde{\mathcal{T}}_\lambda\big]\leq \dim(\mathcal{W}_\lambda)\dim(\mathcal{V}_\lambda)\). 
Since \(\widetilde{\mathcal{T}}\sqsubseteq I\), we have
\begin{equation}
\left\|\widetilde{\mathcal{T}}_\lambda\right\|=\left\|\bbra{I_{\mathcal{V}_\lambda}}\widetilde{\mathcal{T}}\kett{I_{\mathcal{V}_\lambda}} \right\|\leq \bbrakett{I_{\mathcal{V}_\lambda}}{I_{\mathcal{V}_\lambda}}=\dim(\mathcal{V}_\lambda),
\end{equation}
where \(\|\cdot\|\) is the operator norm. Next, we have
\begin{align}
\left\|\tr_{\mathcal{W}_{\lambda,\TB}}\big[\widetilde{\mathcal{T}}_\lambda\big]\right\|&=\left\|\sum_{\ket{x}\in\mathcal{W}_{\lambda,\TB}} \bra{x}\widetilde{\mathcal{T}}_{\lambda}\ket{x}\right\|\leq \sum_{\ket{x}\in\mathcal{W}_{\lambda,\TB}}\left\| \bra{x}\widetilde{\mathcal{T}}_{\lambda}\ket{x}\right\|\\
&\leq\dim(\mathcal{V}_\lambda)\cdot\dim(\mathcal{W}_\lambda),
\end{align}
where the summation is over an orthonormal basis \(\{\ket{x}\}\) of \(\mathcal{W}_{\lambda,\TB}\) and therefore \(\ket{x}\) is a vector on Bob's part.

Next, we show that \(\hat{\mathcal{T}}\) is a local tester for the property \(\mathcal{P}\) on Alice's part (in fact, the local tester \(\hat{\mathcal{T}}\) behaves exactly the same as the global tester \(\widetilde{\mathcal{T}}\)). First, we can see from \cref{eq-329053} that
\begin{align}
\tr_{\textup{B}}\left[\ketbra{\psi}{\psi}_{\textup{AB}}^{\otimes N}\right]&=\tr_{\textup{B}}\left[\sum_{\lambda_1,\lambda_2\vdash (N,d)}\begin{matrix}\kettbbra{I_{\mathcal{V}_{\lambda_1}}}{I_{\mathcal{V}_{\lambda_2}}}\\ \ketbra{w_{\lambda_1}}{w_{\lambda_2}}\end{matrix}\right]\\
&=\sum_{\lambda\vdash (N,d)}\sum_{\ket{v_\lambda}\in\mathcal{V}_{\lambda,\TB},\ket{w_\lambda}\in\mathcal{W}_{\lambda,\TB}}\begin{matrix}\bra{v_\lambda}\\\bra{w_\lambda}\end{matrix}\left[\sum_{\lambda_1,\lambda_2\vdash (N,d)}\begin{matrix}\kettbbra{I_{\mathcal{V}_{\lambda_1}}}{I_{\mathcal{V}_{\lambda_2}}}\\ \ketbra{w_{\lambda_1}}{w_{\lambda_2}}\end{matrix}\right]\begin{matrix}\ket{v_\lambda}\\\ket{w_\lambda}\end{matrix}, \label{eq-329103}
\end{align}
where \(\{\ket{v_\lambda}\}\) and \(\{\ket{w_\lambda}\}\) are orthonormal bases of \(\mathcal{V}_{\lambda,\TB}\) and \(\mathcal{W}_{\lambda,\TB}\), respectively. Then,
\begin{equation}\label{eq-329109}
\eqref{eq-329103}=\sum_{\lambda\vdash (N,d)} \begin{matrix} I_{\mathcal{V}_{\lambda,\TA}}\\ \tr_{\mathcal{W}_{\lambda,\TB}}\left(\ketbra{w_\lambda}{w_\lambda}\right) \end{matrix}.
\end{equation}
Therefore, combining \cref{eq-329109} with \cref{eq-329110}, we have
\begin{align}
\tr\left(\hat{\mathcal{T}}\tr_{\textup{B}}\left[\ketbra{\psi}{\psi}_{\textup{AB}}^{\otimes N}\right]\right)&=\sum_{\lambda\vdash (N,d)}\frac{1}{\dim(\mathcal{W}_\lambda)\dim(\mathcal{V}_\lambda)}\tr\left[\begin{matrix}I_{\mathcal{V}_{\lambda,\TA}}\\ \tr_{\mathcal{W}_{\lambda,\TB}}\big[\widetilde{\mathcal{T}}_\lambda\big]\,\cdot\,\tr_{\mathcal{W}_{\lambda,\TB}}\left(\ketbra{w_\lambda}{w_\lambda}\right) \end{matrix}\right]\\
&=\sum_{\lambda\vdash (N,d)} \frac{1}{\dim(\mathcal{W}_\lambda)} \tr\left[\tr_{\mathcal{W}_{\lambda,\TB}}\big[\widetilde{\mathcal{T}}_\lambda\big]\cdot\,\,\tr_{\mathcal{W}_{\lambda,\TB}}(\ketbra{w_{\lambda}}{w_{\lambda}}) \right]
\end{align}
which is exactly the same as the RHS of \cref{eq-329117} (and also \cref{eq-414052}). This means the local tester \(\hat{\mathcal{T}}\) behaves identically to the tester \(\widetilde{\mathcal{T}}\). Thus \(\hat{\mathcal{T}}\) is a local \((c,s)\)-tester for \(\mathcal{P}\) with sample complexity \(N\).
\end{proof}

\subsection{Technical Lemmas} \label{sec:tech-lemmas}

\begin{lemma}\label{lemma-3252341}
Suppose \(\lambda\vdash N\) is a Young diagram with \(N\) boxes, and \((P_\lambda,\mathcal{V}_\lambda)\) is the irreducible representation of \(\mathbb{S}_N\) corresponding to \(\lambda\). Then, we have
\begin{equation}\label{eq-3252235}
\frac{1}{N!}\sum_{\pi\in\mathbb{S}_N} P_{\lambda}(\pi)\otimes P_{\lambda}(\pi)=\frac{1}{\dim(\mathcal{V}_\lambda)}\kettbbra{I_{\mathcal{V}_\lambda}}{I_{\mathcal{V}_\lambda}},
\end{equation}
where \(\kett{I_{\mathcal{V}_\lambda}}\) is a (non-normalized) maximally entangled state on \(\mathcal{V}_\lambda\otimes\mathcal{V}_\lambda\).
Moreover, for \(\lambda_1,\lambda_2\vdash N,\lambda_1\neq \lambda_2\), we have
\begin{equation}\label{eq-3252254}
\frac{1}{N!}\sum_{\pi\in\mathbb{S}_N} P_{\lambda_1}(\pi)\otimes P_{\lambda_2}(\pi)=0.
\end{equation}
\end{lemma}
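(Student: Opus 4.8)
The plan is to derive both identities as immediate consequences of the two Schur-orthogonality corollaries \cref{lemma-41116,lemma-3262332}, by passing through the vectorization (flattening) of operators on $\mathcal{V}_\lambda$. The key observation is that the diagonal $\mathbb{S}_N$-action $P_{\lambda_1}(\pi)\otimes P_{\lambda_2}(\pi)$ on $\mathcal{V}_{\lambda_1}\otimes\mathcal{V}_{\lambda_2}$, once this space is identified with the space of linear maps $\mathcal{V}_{\lambda_2}\to\mathcal{V}_{\lambda_1}$, becomes precisely the conjugation action $Y\mapsto P_{\lambda_1}(\pi)\,Y\,P_{\lambda_2}(\pi)^{\dagger}$ that the corollaries average. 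So first I would record this translation: since by \cref{remark-3252205} the irreducible representations of $\mathbb{S}_N$ may be taken real and unitary (hence orthogonal), the flattening rule \cref{eq-327123} gives
\[
\bigl(P_{\lambda_1}(\pi)\otimes P_{\lambda_2}(\pi)\bigr)\kett{Y}=\kett{P_{\lambda_1}(\pi)\,Y\,P_{\lambda_2}(\pi)^{\dagger}}
\]
for every linear map $Y\colon\mathcal{V}_{\lambda_2}\to\mathcal{V}_{\lambda_1}$, where I use $P_{\lambda_2}(\pi)^{T}=P_{\lambda_2}(\pi)^{-1}=P_{\lambda_2}(\pi)^{\dagger}$. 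Averaging over $\pi\in\mathbb{S}_N$ then reduces the left-hand sides of \cref{eq-3252235,eq-3252254} to the flattened averaged conjugations handled by the corollaries, and since the flattening map $Y\mapsto\kett{Y}$ is a bijection onto $\mathcal{V}_{\lambda_1}\otimes\mathcal{V}_{\lambda_2}$, it suffices to check each identity after applying it to an arbitrary $\kett{Y}$.

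For \cref{eq-3252254}, distinct Young diagrams index non-isomorphic irreducible $\mathbb{S}_N$-modules, so applying \cref{lemma-41116} with trivial multiplicity spaces ($\mathcal{V}_1=\mathcal{V}_2=\mathbb{C}$, $F=Y$, $G=\mathbb{S}_N$) gives $\frac{1}{N!}\sum_\pi P_{\lambda_1}(\pi)\,Y\,P_{\lambda_2}(\pi)^{\dagger}=0$ for all $Y$; unflattening, $\frac{1}{N!}\sum_\pi P_{\lambda_1}(\pi)\otimes P_{\lambda_2}(\pi)$ annihilates every $\kett{Y}$ and is therefore the zero operator. For \cref{eq-3252235}, I set $\lambda_1=\lambda_2=\lambda$ and apply \cref{lemma-3262332} (again with trivial multiplicity spaces), which yields $\frac{1}{N!}\sum_\pi P_\lambda(\pi)\,Y\,P_\lambda(\pi)^{\dagger}=\frac{\tr(Y)}{\dim(\mathcal{V}_\lambda)}\,I_{\mathcal{V}_\lambda}$. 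Since $\tr(Y)=\bbrakett{I_{\mathcal{V}_\lambda}}{Y}$, unflattening gives $\frac{1}{N!}\sum_\pi\bigl(P_\lambda(\pi)\otimes P_\lambda(\pi)\bigr)\kett{Y}=\frac{1}{\dim(\mathcal{V}_\lambda)}\kettbbra{I_{\mathcal{V}_\lambda}}{I_{\mathcal{V}_\lambda}}\kett{Y}$ for every $Y$, which is the claimed operator identity. As a consistency check of the normalization, the left-hand side is the orthogonal projector onto the one-dimensional space of $\mathbb{S}_N$-intertwiners $\mathcal{V}_\lambda\to\mathcal{V}_\lambda$ (by Schur's lemma, \cref{prop:202404032229}), which is spanned by $I_{\mathcal{V}_\lambda}$, and $\bbrakett{I_{\mathcal{V}_\lambda}}{I_{\mathcal{V}_\lambda}}=\tr(I_{\mathcal{V}_\lambda})=\dim(\mathcal{V}_\lambda)$.

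I expect the only delicate point to be the conjugation bookkeeping in the flattening convention — specifically, invoking \cref{remark-3252205} to replace $P_{\lambda_2}(\pi)^{*}$ (equivalently $P_{\lambda_2}(\pi)^{T}$) by $P_{\lambda_2}(\pi)^{\dagger}$, so that the averaged expression matches the hypotheses of \cref{lemma-41116,lemma-3262332} verbatim. Together with the standard input that non-equal partitions of $N$ give non-isomorphic irreducibles of $\mathbb{S}_N$, the remainder is a direct substitution with no real computation.
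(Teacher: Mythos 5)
Your proof is correct. It rests on the same two pillars as the paper's argument: Schur's lemma and the identification of $\mathcal{V}_{\lambda_1}\otimes\mathcal{V}_{\lambda_2}$ with linear maps $\mathcal{V}_{\lambda_2}\to\mathcal{V}_{\lambda_1}$ via vectorization, with \cref{remark-3252205} converting $P_{\lambda_2}(\pi)^{*}$ (or $P_{\lambda_2}(\pi)^{T}$) into $P_{\lambda_2}(\pi)^{\dagger}$ --- the paper's own \cref{eq-3252233} is exactly your translation step. Where you diverge is in how the conclusion is extracted: the paper first argues that the group average is the orthogonal projector onto the simultaneously invariant subspace (citing an external fact for this), then characterizes that subspace by Schur's lemma and fixes the normalization $1/\dim(\mathcal{V}_\lambda)$ separately; you instead apply the paper's twirling corollaries \cref{lemma-41116,lemma-3262332} (with trivial multiplicity spaces) to compute the action of the average on every vectorized operator $\kett{Y}$ directly, obtaining $\frac{\tr(Y)}{\dim(\mathcal{V}_\lambda)}\kett{I_{\mathcal{V}_\lambda}}=\frac{1}{\dim(\mathcal{V}_\lambda)}\kettbbra{I_{\mathcal{V}_\lambda}}{I_{\mathcal{V}_\lambda}}\kett{Y}$, so the normalization comes out automatically and no projector argument is needed. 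Your route is self-contained within the paper's stated corollaries and slightly cleaner on that point; the paper's route makes the ``projector onto the invariant subspace'' picture explicit, which it reuses conceptually elsewhere. The only convention detail worth making explicit in a write-up is that for $\lambda_1\neq\lambda_2$ the flattening is applied to rectangular maps $Y\colon\mathcal{V}_{\lambda_2}\to\mathcal{V}_{\lambda_1}$, a harmless extension of the paper's square-matrix convention, together with the standard fact (which you correctly flag) that distinct partitions of $N$ label non-isomorphic irreducibles of $\mathbb{S}_N$.
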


\begin{proof}
Since \(P_\lambda(\pi)\) is a real-valued matrix (see \cref{remark-3252205}), we have
\begin{equation}
\frac{1}{N!}\sum_{\pi\in\mathbb{S}_N}P_{\lambda}(\pi)\otimes P_{\lambda}(\pi)=\frac{1}{N!}\sum_{\pi\in\mathbb{S}_N}P_{\lambda}(\pi)\otimes P_{\lambda}(\pi)^*.
\end{equation}
It is easy to see that this sum results in an orthogonal projector onto the subspace that is invariant under the action \(P_{\lambda}(\pi)\otimes P_{\lambda}(\pi)^*\) for all \(\pi\in\mathbb{S}_N\) (see, e.g., the proof of Proposition 1 in \cite{harrow2013church}). Due to the following equivalence,
\begin{equation}\label{eq-3252233}
\forall \pi, \,P_{\lambda}(\pi)\otimes P_{\lambda}(\pi)^* \kett{K}=\kett{K}\quad\Longleftrightarrow\quad \forall \pi,\, P_{\lambda}(\pi) K P_{\lambda}(\pi)^{\dag}=K,
\end{equation}
we can thus consider the fixed point problem on the RHS of \cref{eq-3252233}. Then, by Schur's lemma (\cref{prop:202404032229}), the only possible solution is \(K=cI_\lambda\). Therefore, \cref{eq-3252235} follows immediately.

Similarly, \cref{eq-3252254} results in an orthogonal projector onto the subspace that is invariant under the action \(P_{\lambda_1}(\pi)\otimes P_{\lambda_2}(\pi)^*\). Thus we consider the fixed point problem \(\forall \pi, P_{\lambda_1}(\pi)K\mu_{\lambda_2}(\pi)^\dag=K\), which, again by the Schur's lemma (\cref{prop:202404032229}), only has the trivial solution \(K=0\) since \(\lambda_1,\lambda_2\) are different irreducible representations. Then, \cref{eq-3252254} follows immediately.
\end{proof}

\begin{lemma}[see, e.g., \cite{matsumoto2007universal}]\label{lemma-3262205}
Let \(\ket{\psi}_{\textup{AB}}\in \HAB\) be a bipartite state, then $\ket{\psi}_{\textup{AB}}^{\otimes N}$ can be written as
\begin{equation}\label{eq-326050}
\ket{\psi}_{\textup{AB}}^{\otimes N}= \sum_{\lambda\vdash (N,d)}\kett{I_{\mathcal{V}_\lambda}}\otimes\ket{w_\lambda},
\end{equation}
where \(\kett{I_{\mathcal{V}_\lambda}}\) is a (non-normalized) maximally entangled state on \(\mathcal{V}_{\lambda,\TA}\otimes \mathcal{V}_{\lambda,\TB}\) and \(\ket{w_\lambda}\in\mathcal{W}_{\lambda,\TA}\otimes \mathcal{W}_{\lambda,\TB}\).
\end{lemma}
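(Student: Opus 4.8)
The plan is to exploit the invariance of $\ket{\psi}_{\textup{AB}}^{\otimes N}$ under simultaneous permutations and twirl it against the bipartite Schur--Weyl decomposition of \cref{lemma-3241548}. First I would note that, regarding $\HAB^{\otimes N}=(\HA\otimes\HB)^{\otimes N}$ as $N$ tensor copies of the single block $\HA\otimes\HB$, the simultaneous-permutation action of $\pi\in\mathbb{S}_N$ is exactly the block-wise permutation of these $N$ copies; since $\ket{\psi}_{\textup{AB}}^{\otimes N}$ is the $N$-th tensor power of a single vector, it is fixed by every such $\pi$, i.e.\ $\bigl(P_{\TA}(\pi)\otimes P_{\TB}(\pi)\bigr)\ket{\psi}_{\textup{AB}}^{\otimes N}=\ket{\psi}_{\textup{AB}}^{\otimes N}$. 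Consequently $\ket{\psi}_{\textup{AB}}^{\otimes N}=\tfrac{1}{N!}\sum_{\pi\in\mathbb{S}_N}\bigl(P_{\TA}(\pi)\otimes P_{\TB}(\pi)\bigr)\ket{\psi}_{\textup{AB}}^{\otimes N}$.

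Next I would decompose $\ket{\psi}_{\textup{AB}}^{\otimes N}=\sum_{\lambda_1,\lambda_2\vdash(N,d)}\ket{v_{\lambda_1,\lambda_2}}$ according to \cref{lemma-3241548}, with $\ket{v_{\lambda_1,\lambda_2}}\in\mathcal{V}_{\lambda_1,\TA}\otimes\mathcal{V}_{\lambda_2,\TB}\otimes\mathcal{W}_{\lambda_1,\TA}\otimes\mathcal{W}_{\lambda_2,\TB}$. On this summand the simultaneous permutation $\pi$ acts as $P_{\lambda_1}(\pi)\otimes P_{\lambda_2}(\pi)$ on the $\mathbb{S}_N$-legs $\mathcal{V}_{\lambda_1,\TA}\otimes\mathcal{V}_{\lambda_2,\TB}$ and trivially on $\mathcal{W}_{\lambda_1,\TA}\otimes\mathcal{W}_{\lambda_2,\TB}$ (the symmetric group touches only the $\mathcal{V}$-legs, by the structure of \cref{lemma-3241548} and \cref{prop-432248}). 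Twirling and invoking \cref{lemma-3252341}: by \cref{eq-3252254} every off-diagonal block $\lambda_1\neq\lambda_2$ is annihilated, while by \cref{eq-3252235} on each diagonal block $\lambda_1=\lambda_2=:\lambda$ the twirl acts on the $\mathcal{V}_{\lambda,\TA}\otimes\mathcal{V}_{\lambda,\TB}$ factor as $\tfrac{1}{\dim(\mathcal{V}_\lambda)}\kettbbra{I_{\mathcal{V}_\lambda}}{I_{\mathcal{V}_\lambda}}$, the rescaled rank-one projector onto the maximally entangled vector $\kett{I_{\mathcal{V}_\lambda}}$.

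Since a rank-one projector applied to one tensor leg collapses that leg to the projected vector, twirling $\ket{v_{\lambda,\lambda}}$ produces $\kett{I_{\mathcal{V}_\lambda}}\otimes\ket{w_\lambda}$ for a suitable residual $\ket{w_\lambda}\in\mathcal{W}_{\lambda,\TA}\otimes\mathcal{W}_{\lambda,\TB}$: explicitly, writing $\ket{v_{\lambda,\lambda}}=\sum_k\ket{a_k}\otimes\ket{b_k}$ with $\ket{a_k}\in\mathcal{V}_{\lambda,\TA}\otimes\mathcal{V}_{\lambda,\TB}$ and $\ket{b_k}\in\mathcal{W}_{\lambda,\TA}\otimes\mathcal{W}_{\lambda,\TB}$, one gets $\ket{w_\lambda}=\sum_k\tfrac{\bbrakett{I_{\mathcal{V}_\lambda}}{a_k}}{\dim(\mathcal{V}_\lambda)}\ket{b_k}$. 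Putting the three steps together, $\ket{\psi}_{\textup{AB}}^{\otimes N}$ (being equal to its own twirl, which is the sum of the twirls of its blocks) equals $\sum_{\lambda\vdash(N,d)}\kett{I_{\mathcal{V}_\lambda}}\otimes\ket{w_\lambda}$ after the harmless reordering of tensor factors, which is \cref{eq-326050}.

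The only real subtlety, and the step I would be most careful with, is verifying that the simultaneous-$\mathbb{S}_N$ action restricts block-diagonally to each isotypic component with precisely the form $P_{\lambda_1}(\pi)\otimes P_{\lambda_2}(\pi)$ on the $\mathcal{V}$-legs and identity on the $\mathcal{W}$-legs, and that the group average commutes with the decomposition; this is exactly where \cref{lemma-3241548} and \cref{prop-432248} do the work (the $\mathbb{S}_N$- and $\mathbb{U}_d$-actions live on disjoint tensor legs). Once that bookkeeping is in place, \cref{lemma-3252341} and elementary linear algebra finish the argument with no estimates or inequalities involved; this is essentially the content of \cite{matsumoto2007universal} cited in the statement.
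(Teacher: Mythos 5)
Your proposal is correct and follows essentially the same route as the paper's proof: decompose $\ket{\psi}_{\textup{AB}}^{\otimes N}$ via the bipartite Schur--Weyl duality of \cref{lemma-3241548}, use invariance under simultaneous permutations, and apply \cref{lemma-3252341} to kill the off-diagonal blocks and collapse each diagonal block onto $\kett{I_{\mathcal{V}_\lambda}}\otimes\ket{w_\lambda}$. The bookkeeping point you flag (that the simultaneous-$\mathbb{S}_N$ action is block-diagonal and acts only on the $\mathcal{V}$-legs) is exactly how the paper argues the component-wise invariance, so no gap remains.
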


\begin{proof}
We first decompose \(\ket{\psi}^{\otimes N}\) along those irreducible representations in \cref{lemma-3241548}
\begin{equation}
\ket{\psi}^{\otimes N}=\sum_{\lambda_1,\lambda_2\vdash (N,d)} \ket{\psi_{\lambda_1,\lambda_2}},\quad\quad \textup{ where }\,\ket{\psi_{\lambda_1,\lambda_2}}\in\mathcal{V}_{\lambda_1,\TA}\otimes\mathcal{V}_{\lambda_2,\TB} \otimes \mathcal{W}_{\lambda_1,\TA} \otimes \mathcal{W}_{\lambda_2,\TB}.
\end{equation}
Note that \(\ket{\psi}^{\otimes N}\) is invariant under the action \(P_\TA(\pi)\otimes P_\TB(\pi)\) for \(\pi \in \mathbb{S}_N\) (as a subgroup of \(\mathbb{S}_N\times \mathbb{S}_N\)), where \(P_\TA(\pi)\) and \(P_\TB(\pi)\) act on \(\HA^{\otimes N}\) and \(\HB^{\otimes N}\), respectively. Therefore, for every Young diagram \(\lambda_1,\lambda_2\vdash (N,d)\), we have
\begin{equation}
P_{\lambda_1,\TA}(\pi)\otimes P_{\lambda_2,\TB}(\pi)\ket{\psi_{\lambda_1,\lambda_2}}=\ket{\psi_{\lambda_1,\lambda_2}},
\end{equation}
where \(P_{\lambda_1,\TA},P_{\lambda_2,\TB}\) are irreducible representations of \(\mathbb{S}_N\) corresponding to \(\lambda_1,\lambda_2\), respectively, and acting on Alice's system and Bob's system, respectively. This means
\begin{equation}
\frac{1}{N!}\sum_{\pi\in\mathbb{S}_N} P_{\lambda_1,\TA}(\pi)\otimes P_{\lambda_2,\TB}(\pi)\ket{\psi_{\lambda_1,\lambda_2}}=\ket{\psi_{\lambda_1,\lambda_2}}.
\end{equation}
By \cref{lemma-3252341}, we see that \(\ket{\psi_{\lambda_1,\lambda_2}}=0\) for \(\lambda_1\neq\lambda_2\). Moreover, for \(\lambda_1=\lambda_2=\lambda\), we have
\begin{equation}
\frac{1}{\dim(\mathcal{V}_\lambda)}\kettbbra{I_{\mathcal{V}_\lambda}}{I_{\mathcal{V}_\lambda}} \otimes I_{\mathcal{W}_{\lambda,\TA}}\otimes I_{\mathcal{W}_{\lambda,\TB}}\,\,\ket{\psi_{\lambda,\lambda}}=\ket{\psi_{\lambda,\lambda}}.
\end{equation}
which means \(\ket{\psi_{\lambda,\lambda}}=\kett{I_{\mathcal{V}_\lambda}}\ket{w_\lambda}\) for some \(\ket{w_\lambda}\in\mathcal{W}_{\lambda,\TA}\otimes\mathcal{W}_{\lambda,\TB}\). Therefore, \cref{eq-326050} follows.
\end{proof}

\subsection{Optimal Local Tester in Average Case}\label{sec-722030}
In this section, we use the same techniques in the proof of \cref{theorem-4222259} to give another result with respect to average-case performance.
To this end, we define the notion of unitarily invariant probability distributions on properties of quantum states, and extend the quantum testers to the average case.
\begin{definition} [Unitarily invariant distributions]
    Let $D$ be a probability distribution on bipartite pure states in \(\HAB\). 
    Then, $D$ is said to be unitarily invariant on $\HB$, if for any measurable subset \(S\) of bipartite pure states in \(\HAB\) 
    and any \(U\in\mathbb{U}_d\),
\begin{equation}
\Pr_{\ket{\psi}_{\textup{AB}} \sim D} \sbra[\Big]{\ket{\psi}_{\textup{AB}} \in S} = \Pr_{\ket{\psi}_{\textup{AB}} \sim D} \sbra[\Big]{\ket{\psi}_{\textup{AB}} \in \rbra*{I_\TA \otimes U_\TB} S},
\end{equation}
where \(\rbra{I_\TA \otimes U_\TB} S = \set{\rbra{I_\TA \otimes U_\TB}\ket{\psi}_{\textup{AB}}}{\ket{\psi}_{\textup{AB}}\in S}\).
\end{definition}
\begin{definition} [Average-case testers]
    Let $\mathcal{P} = \rbra{\mathcal{P}^{\textup{yes}}, \mathcal{P}^{\textup{no}}}$ be a property of bipartite pure states, and $D^{\textup{yes}}$ and $D^{\textup{no}}$ be probability distributions on $\mathcal{P}^{\textup{yes}}$ and $\mathcal{P}^{\textup{no}}$, respectively. 
    For $0 \leq s < c \leq 1$, a tester $\mathcal{T}$ is called an average-case $\rbra{c, s}$-tester for $\mathcal{P}$ with respect to  $\rbra{D^{\textup{yes}},D^{\textup{no}}}$ with sample complexity $N$, if 
    \begin{equation} \label{eq:04242356}
    \mathop{\mathbb{E}}_{\ket{\psi}_\TAB\sim D^{\textup{yes}}}\sbra*{\tr\rbra*{\mathcal{T}\ketbra{\psi}{\psi}_{\TAB}^{\otimes N}}}=c,
    \quad\quad\quad
    \mathop{\mathbb{E}}_{\ket{\phi}_\TAB\sim D^{\textup{no}}}\sbra*{\tr\rbra*{\mathcal{T}\ketbra{\phi}{\phi}_{\TAB}^{\otimes N}}}=s.
    \end{equation}
    Moreover, if $c = 2/3$ and $s = 1/3$, simply call $\mathcal{T}$ an average-case tester. 
\end{definition}
Then, we have the following result which states that the construction in \cref{eq:def-hatT} also leads to the optimality of local tester in an average-case sense.
\begin{theorem} [Optimal average-case local tester] \label{theorem-424241}
Suppose \(\mathcal{P}=(\mathcal{P}^{\textup{yes}},\mathcal{P}^{\textup{no}})\) is a property of bipartite pure states in \(\HAB\) that is unitarily invariant on \(\HB\), and suppose \(D^{\textup{yes}},D^{\textup{no}}\) are probability distributions on \(\mathcal{P}^{\textup{yes}},\mathcal{P}^{\textup{no}}\) that are unitarily invariant on \(\HB\). 
Let $\mathcal{T}$ be an average-case $\rbra{c, s}$-tester for $\mathcal{P}$ with respect to $\rbra{D^{\textup{yes}},D^{\textup{no}}}$ with sample complexity $N$ for some parameters \(0\leq s<c\leq 1\).
Then, there is a local tester \(\hat{\mathcal{T}}\) on \(\HA\) with sample complexity \(N\) that is also an average-case $\rbra{c, s}$-tester for $\mathcal{P}$ with respect to $\rbra{D^{\textup{yes}},D^{\textup{no}}}$.
\end{theorem}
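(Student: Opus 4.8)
The plan is to reuse verbatim the two-step construction from the proof of \cref{theorem-4222259}: first form the Bob-twirled tester $\widetilde{\mathcal{T}} \coloneqq \int_{U\in\mathbb{U}_d} U_{\textup{B}}^{\otimes N}\mathcal{T}\, U_{\textup{B}}^{\dag\otimes N}$ as in \cref{eq:def-tildeT}, and then localize it to the tester $\hat{\mathcal{T}}$ on $\HA$ exactly as in \cref{eq:def-hatT}. Everything that is purely operator-theoretic or representation-theoretic — that $\hat{\mathcal{T}}$ is a valid measurement on $\HA^{\otimes N}$, and that the pointwise identity $\tr\rbra*{\hat{\mathcal{T}}\,\tr_{\TB}\sbra*{\ketbra{\psi}{\psi}_{\TAB}^{\otimes N}}} = \tr\rbra*{\widetilde{\mathcal{T}}\ketbra{\psi}{\psi}_{\TAB}^{\otimes N}}$ holds for \emph{every} bipartite pure state $\ket{\psi}_{\TAB}$ — is already established in \cref{lemma:202404041522,lemma:202404041526} and can be invoked unchanged. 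So the only genuinely new step is to check that the two defining equalities in \cref{eq:04242356} survive the twirl.

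For that, I would compute, for $X \in \cbra{\textup{yes},\textup{no}}$ with target value $v_X \in \cbra{c,s}$, using cyclicity of the trace and Fubini,
\[
\mathop{\mathbb{E}}_{\ket{\psi}_{\TAB}\sim D^X}\sbra*{\tr\rbra*{\widetilde{\mathcal{T}}\ketbra{\psi}{\psi}_{\TAB}^{\otimes N}}}
= \int_{U\in\mathbb{U}_d} \mathop{\mathbb{E}}_{\ket{\psi}_{\TAB}\sim D^X}\sbra*{\tr\rbra*{\mathcal{T}\, \rbra*{\rbra*{I_{\TA}\otimes U_{\TB}^\dag}\ketbra{\psi}{\psi}_{\TAB}\rbra*{I_{\TA}\otimes U_{\TB}}}^{\otimes N}}}.
\]
For each fixed $U$, the unitary invariance of $D^X$ on $\HB$ implies that $\rbra*{I_{\TA}\otimes U_{\TB}^\dag}\ket{\psi}_{\TAB}$ has the same distribution as $\ket{\psi}_{\TAB}$ when $\ket{\psi}_{\TAB}\sim D^X$, so the inner expectation equals $\mathop{\mathbb{E}}_{\ket{\psi}_{\TAB}\sim D^X}\sbra*{\tr\rbra*{\mathcal{T}\ketbra{\psi}{\psi}_{\TAB}^{\otimes N}}} = v_X$; integrating the constant $v_X$ over the Haar measure returns $v_X$. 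Hence $\widetilde{\mathcal{T}}$ is an average-case $\rbra{c,s}$-tester for $\mathcal{P}$ with respect to $\rbra{D^{\textup{yes}},D^{\textup{no}}}$ with sample complexity $N$.

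Finally, I would apply the pointwise identity of \cref{lemma:202404041526} and take $\mathop{\mathbb{E}}_{\ket{\psi}_{\TAB}\sim D^X}$ of both sides, which transfers the two equalities from $\widetilde{\mathcal{T}}$ to $\hat{\mathcal{T}}$; since $\hat{\mathcal{T}}$ is local on $\HA$, is a valid measurement, and uses $N$ samples, this produces the desired local average-case tester. The one place where any real thought is required — and hence the main (mild) obstacle — is getting the direction of the distributional unitary-invariance hypothesis right, i.e.\ confirming that the stated condition is precisely what lets the $\HB$-twirl be absorbed into a redefinition of the random state; measurability of the images $\rbra*{I_{\TA}\otimes U_{\TB}}S$ is automatic since $I_{\TA}\otimes U_{\TB}$ is a homeomorphism of the pure-state space. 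No new norm estimates or Schur-Weyl computations will be needed beyond those already used for \cref{theorem-4222259}.
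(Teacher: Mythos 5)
Your proposal is correct and follows essentially the same route as the paper: the same construction $\widetilde{\mathcal{T}} \mapsto \hat{\mathcal{T}}$ from \cref{eq:def-tildeT,eq:def-hatT}, the same use of the unitary invariance of $D^{\textup{yes}},D^{\textup{no}}$ to absorb the Bob-twirl, and the same Schur--Weyl pointwise identity from the proofs of \cref{lemma:202404041522,lemma:202404041526} (which indeed holds for every bipartite pure state, so invoking it and taking expectations is legitimate). The only difference is organizational: you transfer the twirl onto the state distribution and use linearity of expectation, whereas the paper computes the explicit form of the averaged states first; this is a cosmetic reshuffling of the same argument.
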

\begin{proof}
Since \(D^{\textup{yes}}\) is unitarily invariant on \(\HB\), then it is easy to see that
\begin{equation}
\mathop{\mathbb{E}}_{\ket{\psi}_\TAB\sim D^{\textup{yes}}}\sbra*{\ketbra{\psi}{\psi}_{\TAB}^{\otimes N}}=\mathop{\mathbb{E}}_{\ket{\psi}_\TAB\sim D^{\textup{yes}}}\sbra*{\int_{U\in\mathbb{U}_d} U_\TB^{\otimes N}\ketbra{\psi}{\psi}_{\TAB}^{\otimes N}U_\TB^{\dag\otimes N}}.
\end{equation}
For any \(\ket{\psi}_{\textup{AB}}\in \mathcal{P}^{\textup{yes}}\), by \cref{lemma-3262205}, we can write \(\ket{\psi}_{\textup{AB}}^{\otimes N}=\sum_{\lambda\vdash (N,d)}\kett{I_{\mathcal{V}_\lambda}}\ket{w_{\lambda,\psi}}\).
Through a similar argument in the proof of \cref{lemma:202404041522} (from \cref{eq-327126} to \cref{eq-45031}), 
we have
\begin{equation}
\begin{split}
\int_{U\in\mathbb{U}_d}U_{\TB}^{\otimes N} \ketbra{\psi}{\psi}^{\otimes N}_\TAB U_\TB^{\dag\otimes N}
&=\sum_{\lambda\vdash (N,d)} \begin{matrix}\kettbbra{I_{\mathcal{V}_{\lambda}}}{I_{\mathcal{V}_{\lambda}}}\\ \tr_{\mathcal{W}_{\lambda,\TB}}(\ketbra{w_{\lambda,\psi}}{w_{\lambda,\psi}})\otimes \frac{I_{\mathcal{W}_{\lambda,\TB}}}{\dim(\mathcal{W}_\lambda)}\end{matrix},
\end{split}
\end{equation}
Then,
\begin{equation}
\mathop{\mathbb{E}}_{\ket{\psi}_\TAB\sim D^{\textup{yes}}}\sbra*{\ketbra{\psi}{\psi}_{\TAB}^{\otimes N}}=\sum_{\lambda\vdash (N,d)} \begin{matrix}\kettbbra{I_{\mathcal{V}_{\lambda}}}{I_{\mathcal{V}_{\lambda}}}\\ \mathop{\mathbb{E}}\limits_{\ket{\psi}_{\TAB}\sim D^{\textup{yes}}}\sbra*{\tr_{\mathcal{W}_{\lambda,\TB}}(\ketbra{w_{\lambda,\psi}}{w_{\lambda,\psi}})}\otimes \frac{I_{\mathcal{W}_{\lambda,\TB}}}{\dim(\mathcal{W}_\lambda)}\end{matrix}.
\end{equation}
Similarly, we have
\begin{equation}
\mathop{\mathbb{E}}_{\ket{\phi}_\TAB\sim D^{\textup{no}}}\sbra*{\ketbra{\phi}{\phi}_{\TAB}^{\otimes N}}=\sum_{\lambda\vdash (N,d)} \begin{matrix}\kettbbra{I_{\mathcal{V}_{\lambda}}}{I_{\mathcal{V}_{\lambda}}}\\ \mathop{\mathbb{E}}\limits_{\ket{\phi}_{\TAB}\sim D^{\textup{no}}}\sbra*{\tr_{\mathcal{W}_{\lambda,\TB}}(\ketbra{w_{\lambda,\phi}}{w_{\lambda,\phi}})}\otimes \frac{I_{\mathcal{W}_{\lambda,\TB}}}{\dim(\mathcal{W}_\lambda)}\end{matrix}.
\end{equation}
Therefore, one can easily verify that the construction of \(\hat{\mathcal{T}}\) in \cref{eq:def-hatT} is a local tester on \(\HA\) that achieves the same average-case performance as \(\mathcal{T}\).
\end{proof}

\cref{theorem-424241} directly leads to the following result regarding the trace distance between average states.
\begin{corollary}\label{corollary-4241439}
Suppose \(\mathcal{P}=(\mathcal{P}^{\textup{yes}},\mathcal{P}^{\textup{no}})\), \(D^{\textup{yes}}\), \(D^{\textup{no}}\) are as defined in \cref{theorem-424241}. Let
\begin{equation}\label{eq-4242359}
\rho\coloneqq \mathop{\mathbb{E}}_{\ket{\psi}_\TAB\sim D^{\textup{yes}}}\sbra*{\ketbra{\psi}{\psi}_{\TAB}^{\otimes N}},
\quad\quad\quad
\sigma\coloneqq \mathop{\mathbb{E}}_{\ket{\phi}_\TAB\sim D^{\textup{no}}}\sbra*{\ketbra{\phi}{\phi}_{\TAB}^{\otimes N}}.
\end{equation}
Then, \(d_{\tr}(\rho,\sigma)=d_{\tr}(\tr_\TB[\rho],\tr_{\TB}[\sigma])\).
\end{corollary}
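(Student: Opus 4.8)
The plan is to establish the two inequalities $d_{\tr}(\tr_\TB[\rho],\tr_\TB[\sigma]) \le d_{\tr}(\rho,\sigma)$ and $d_{\tr}(\rho,\sigma) \le d_{\tr}(\tr_\TB[\rho],\tr_\TB[\sigma])$ separately. The first is immediate, since the partial trace $\tr_\TB$ is a quantum channel and the trace distance is contractive under quantum channels. All the content therefore lies in the reverse inequality, which I would deduce from \cref{theorem-424241}.

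For the reverse inequality, first dispose of the degenerate case: if $d_{\tr}(\rho,\sigma) = 0$ then $\rho = \sigma$, hence $\tr_\TB[\rho] = \tr_\TB[\sigma]$ and both sides vanish; so assume $d_{\tr}(\rho,\sigma) > 0$. By the Helstrom characterization, $d_{\tr}(\rho,\sigma) = \max_{0 \sqsubseteq M \sqsubseteq I} \tr[M(\rho - \sigma)]$, and this maximum is attained by the projector $M$ onto the positive part of $\rho - \sigma$. Setting $c \coloneqq \tr[M\rho]$ and $s \coloneqq \tr[M\sigma]$ we get $0 \le s < c \le 1$ and $c - s = d_{\tr}(\rho,\sigma)$. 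Because $\rho$ and $\sigma$ are exactly the averaged states in \cref{eq-4242359}, linearity of the trace gives
\begin{equation}
\mathop{\mathbb{E}}_{\ket{\psi}_\TAB \sim D^{\textup{yes}}}\sbra*{\tr\rbra*{M \ketbra{\psi}{\psi}_\TAB^{\otimes N}}} = c,
\qquad
\mathop{\mathbb{E}}_{\ket{\phi}_\TAB \sim D^{\textup{no}}}\sbra*{\tr\rbra*{M \ketbra{\phi}{\phi}_\TAB^{\otimes N}}} = s,
\end{equation}
so $M$ is an average-case $(c,s)$-tester for $\mathcal{P}$ with respect to $(D^{\textup{yes}},D^{\textup{no}})$ with sample complexity $N$.

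Then I would apply \cref{theorem-424241}: since $\mathcal{P}$, $D^{\textup{yes}}$, $D^{\textup{no}}$ are unitarily invariant on $\HB$, there is a local tester $\hat{M} = \hat{M}_\TA \otimes I_\TB$ on $\HA$ with sample complexity $N$ achieving the same average-case values, i.e.\ $\tr[\hat{M}\rho] = c$ and $\tr[\hat{M}\sigma] = s$. As $\hat{M}$ acts trivially on Bob's registers, $\tr[\hat{M}\rho] = \tr[\hat{M}_\TA \tr_\TB[\rho]]$ and $\tr[\hat{M}\sigma] = \tr[\hat{M}_\TA \tr_\TB[\sigma]]$, with $0 \sqsubseteq \hat{M}_\TA \sqsubseteq I$. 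Hence, by the Helstrom characterization applied on $\HA^{\otimes N}$,
\begin{equation}
d_{\tr}(\tr_\TB[\rho],\tr_\TB[\sigma]) \ge \tr\big[\hat{M}_\TA(\tr_\TB[\rho] - \tr_\TB[\sigma])\big] = c - s = d_{\tr}(\rho,\sigma),
\end{equation}
which is the reverse inequality. Combining the two gives the claimed equality. I do not expect a serious obstacle here, since the heavy lifting is done by \cref{theorem-424241}; the only point to verify carefully is that the optimal discriminator $M$ genuinely qualifies as an average-case tester in the precise sense of the definition, which holds because it is a legitimate measurement operator ($0 \sqsubseteq M \sqsubseteq I$) realizing exactly the acceptance probabilities $c$ and $s$.
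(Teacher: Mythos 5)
Your proposal is correct and takes essentially the same approach as the paper: both prove the nontrivial inequality by viewing the Helstrom-optimal discriminator between $\rho$ and $\sigma$ as an average-case $(c,s)$-tester, invoking \cref{theorem-424241} to replace it by a local tester $\hat{M}_\TA \otimes I_\TB$ with the same acceptance statistics, and then bounding the local tester's bias by $d_{\tr}(\tr_\TB[\rho],\tr_\TB[\sigma])$, with contractivity of the partial trace supplying the other direction. Your explicit treatment of the degenerate case $\rho=\sigma$ (where the hypothesis $s<c$ of \cref{theorem-424241} would fail) is a small point of extra care that the paper's proof leaves implicit.
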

\begin{proof}
    By the Helstrom-Holevo bound (\cref{thm:HH-bound}), there is a tester $\mathcal{T}$ (note that $\cbra{\mathcal{T}, I-\mathcal{T}}$ is a POVM) such that 
    \begin{equation}
        \tr\sbra*{\mathcal{T} \rho} - \tr\sbra*{\mathcal{T} \sigma} =  d_{\tr}\rbra{\rho, \sigma}.
    \end{equation}
    By \cref{theorem-424241}, there is a tester $\hat{\mathcal{T}} = \hat{\mathcal{T}}_\TA \otimes I_\TB$ such that $\tr\sbra{\hat{\mathcal{T}}_\TA \tr_\TB\sbra{\rho}} = \tr\sbra{\hat{\mathcal{T}} \rho} = \tr\sbra{\mathcal{T} \rho}$ and $\tr\sbra{\hat{\mathcal{T}}_\TA \tr_\TB\sbra{\sigma}} = \tr\sbra{\hat{\mathcal{T}} \sigma} = \tr\sbra{\mathcal{T} \sigma}$.
    Then,  
    \begin{equation} \label{eq:04250008}
        \tr\sbra*{\hat{\mathcal{T}}_\TA \tr_\TB\sbra{\rho}} - \tr\sbra*{\hat{\mathcal{T}}_\TA \tr_\TB\sbra{\sigma}} = d_{\tr}\rbra{\rho, \sigma}.
    \end{equation}
    Again by the Helstrom-Holevo bound (\cref{thm:HH-bound}), we have
    \begin{equation} \label{eq:04250009}
        \tr\sbra*{\hat{\mathcal{T}}_\TA \tr_\TB\sbra{\rho}} - \tr\sbra*{\hat{\mathcal{T}}_\TA \tr_\TB\sbra{\sigma}} \leq  d_{\tr}(\tr_\TB[\rho],\tr_{\TB}[\sigma]).
    \end{equation}
    By \cref{eq:04250008,eq:04250009}, we have $d_{\tr}\rbra{\tr_\TB[\rho},\tr_{\TB}[\sigma]) \geq d_{\tr}\rbra{\rho, \sigma}$.
    
    On the other hand, $d_{\tr}\rbra{\tr_\TB[\rho},\tr_{\TB}[\sigma]) \leq d_{\tr}\rbra{\rho, \sigma}$ due to the contractivity of partial trace.
    Therefore, we conclude that \(d_{\tr}(\rho,\sigma)=d_{\tr}(\tr_\TB[\rho],\tr_{\TB}[\sigma])\).
\end{proof}

In addition, the prior result in \cite[Theorem 5.2]{soleimanifar2022testing} (which was used to prove their lower bound for testing MPS) can be viewed as a direct application of \cref{theorem-424241}.
\begin{corollary} [{\cite[Theorem 5.2]{soleimanifar2022testing}}]\label{corollary-4262150}
    Suppose $\ket{\psi}_{\textup{AB}}, \ket{\phi}_{\textup{AB}} \in \HAB$ and let
    \begin{equation}
    \begin{split}
        \rho =  \int_{U \in \mathbb{U}_d} \int_{V \in \mathbb{U}_d} \left(\rbra{U_{\TA} \otimes V_{\TB}} \cdot \ketbra{\psi}{\psi}_{\textup{AB}} \cdot \rbra{U_{\TA}^\dag \otimes V_{\TB}^\dag}\right)^{\otimes N},\\
        \sigma =  \int_{U \in \mathbb{U}_d} \int_{V \in \mathbb{U}_d} \left( \rbra{U_{\TA} \otimes V_{\TB}} \cdot \ketbra{\phi}{\phi}_{\textup{AB}} \cdot \rbra{U_{\TA}^\dag \otimes V_{\TB}^\dag}\right)^{\otimes N}.
    \end{split}
    \end{equation}
    If $\rho \neq \sigma$, then any measurement for distinguishing between \(\rho\) and \(\sigma\) can, without loss of generality, act only on either \(\HA\) or \(\HB\).
\end{corollary}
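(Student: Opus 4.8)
The plan is to exhibit the two twirled ensembles as the average states over $N$ copies of a unitarily invariant property, so that \cref{theorem-424241} (equivalently \cref{corollary-4241439}) applies directly. Define the local-unitary orbits $\mathcal{P}^{\textup{yes}} = \set{\rbra{U_\TA \otimes V_\TB}\ket{\psi}_{\textup{AB}}}{U,V \in \mathbb{U}_d}$ and $\mathcal{P}^{\textup{no}} = \set{\rbra{U_\TA \otimes V_\TB}\ket{\phi}_{\textup{AB}}}{U,V \in \mathbb{U}_d}$, and let $D^{\textup{yes}}$, $D^{\textup{no}}$ be the pushforwards of the Haar measure on $\mathbb{U}_d \times \mathbb{U}_d$ along $\rbra{U,V}\mapsto \rbra{U_\TA\otimes V_\TB}\ket{\psi}_{\textup{AB}}$ and $\rbra{U,V}\mapsto \rbra{U_\TA\otimes V_\TB}\ket{\phi}_{\textup{AB}}$, respectively. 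These are Borel probability measures supported on $\mathcal{P}^{\textup{yes}}$ and $\mathcal{P}^{\textup{no}}$ (which are compact, being continuous images of $\mathbb{U}_d\times\mathbb{U}_d$), and by invariance of the Haar measure both $\mathcal{P}$ and the two distributions are unitarily invariant on $\HA$ and on $\HB$. With these choices, $\rho = \E_{\ket{\psi'}_\TAB \sim D^{\textup{yes}}}\sbra*{\ketbra{\psi'}{\psi'}_\TAB^{\otimes N}}$ and $\sigma = \E_{\ket{\phi'}_\TAB \sim D^{\textup{no}}}\sbra*{\ketbra{\phi'}{\phi'}_\TAB^{\otimes N}}$ coincide with the operators in the statement.

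The only step requiring care is checking that $\mathcal{P} = \rbra{\mathcal{P}^{\textup{yes}},\mathcal{P}^{\textup{no}}}$ is a genuine property, i.e.\ that the two orbits are disjoint; this is exactly where the hypothesis $\rho\neq\sigma$ enters. If the orbits met, then $\ket{\phi}_{\textup{AB}} = \rbra{W_\TA \otimes X_\TB}\ket{\psi}_{\textup{AB}}$ for some unitaries $W,X$, and substituting $\ketbra{\phi}{\phi}_{\textup{AB}} = \rbra{W_\TA\otimes X_\TB}\ketbra{\psi}{\psi}_{\textup{AB}}\rbra{W_\TA^\dag\otimes X_\TB^\dag}$ into the definition of $\sigma$ and changing variables $U\mapsto UW$, $V\mapsto VX$ (using invariance of the Haar measure) would give $\sigma = \rho$, a contradiction. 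Hence $\mathcal{P}^{\textup{yes}}\cap\mathcal{P}^{\textup{no}}=\emptyset$, and $\mathcal{P}$ together with $D^{\textup{yes}},D^{\textup{no}}$ meets all hypotheses of \cref{theorem-424241} (for both $\HA$ and $\HB$).

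It remains to read off the conclusion. Let $\cbra{\mathcal{T},I-\mathcal{T}}$ be any two-outcome measurement on $\HAB^{\otimes N}$ used to distinguish $\rho$ from $\sigma$, and set $c\coloneqq\tr\sbra{\mathcal{T}\rho}$, $s\coloneqq\tr\sbra{\mathcal{T}\sigma}$; after possibly swapping the two outcomes we may assume $c\geq s$, and if $c=s$ the measurement conveys no information and there is nothing to prove, so assume $c>s$. Then $\mathcal{T}$ is an average-case $\rbra{c,s}$-tester for $\mathcal{P}$ with respect to $\rbra{D^{\textup{yes}},D^{\textup{no}}}$ with sample complexity $N$, and \cref{theorem-424241} yields a local tester $\hat{\mathcal{T}} = \hat{\mathcal{T}}_\TA\otimes I_\TB$ on $\HA$ with the same pair $\rbra{c,s}$, hence the same bias $c-s$ and the same success probability; this measurement acts only on $\HA$. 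Running the same argument with the roles of $\HA$ and $\HB$ exchanged --- legitimate because $\mathcal{P}$, $D^{\textup{yes}}$, $D^{\textup{no}}$ are also unitarily invariant on $\HA$ --- produces instead a measurement acting only on $\HB$. Alternatively, \cref{corollary-4241439} gives $d_{\tr}(\rho,\sigma)=d_{\tr}(\tr_\TB[\rho],\tr_\TB[\sigma])=d_{\tr}(\tr_\TA[\rho],\tr_\TA[\sigma])$, so the Helstrom--Holevo optimal distinguishing probability $\tfrac12\rbra{1+d_{\tr}(\rho,\sigma)}$ is already attained by a measurement on $\HA^{\otimes N}$ alone, and likewise on $\HB^{\otimes N}$ alone. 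Thus the main obstacle is really just the reduction together with the disjointness check above; the remainder is a direct invocation of \cref{theorem-424241}.
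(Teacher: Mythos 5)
Your proposal is correct and takes essentially the same route as the paper's own proof: define the two local-unitary orbits as $\mathcal{P}^{\textup{yes}},\mathcal{P}^{\textup{no}}$ with the Haar-induced distributions $D^{\textup{yes}},D^{\textup{no}}$, observe unitary invariance on both $\HA$ and $\HB$, and invoke \cref{theorem-424241} (equivalently \cref{corollary-4241439} plus Helstrom--Holevo) to replace any global measurement by one acting only on $\HA$ or only on $\HB$. Your only additions — the explicit check that $\rho\neq\sigma$ forces the two orbits to be disjoint, and the careful normalization of $(c,s)$ including the trivial $c=s$ case — are details the paper leaves implicit, and they are handled correctly.
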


\begin{proof}
    Let \(\mathcal{P}=(\mathcal{P}^{\textup{yes}},\mathcal{P}^{\textup{no}})\) where 
    \begin{equation}
        \mathcal{P}^{\textup{yes}} = \set{\rbra{U_\TA \otimes V_\TB}\ket{\psi}_{\TAB}}{U, V \in \mathbb{U}_d}, \qquad \mathcal{P}^{\textup{no}} = \set{\rbra{U_\TA \otimes V_\TB}\ket{\phi}_{\TAB}}{U, V \in \mathbb{U}_d}.
    \end{equation}
    Let $D^{\textup{yes}}$ and $D^{\textup{no}}$ be the probability distributions on $\mathcal{P}^{\textup{yes}}$ and $\mathcal{P}^{\textup{no}}$, respectively, induced by the Haar measure over $\mathbb{U}_d \times \mathbb{U}_d$. 
    Note that \(D^{\textup{yes}},D^{\textup{no}}\) are unitarily invariant both on \(\HA\) and on \(\HB\).
    Then by \cref{theorem-424241}, for any average-case tester $\mathcal{T}$, there is a local tester $\hat{\mathcal{T}}_\TA$ on $\HA$ and a local tester $\hat{\mathcal{T}}_\TB$ on $\HB$ with the same sample complexity and the same average-case performance.
\end{proof}

Note that \cref{theorem-424241} applies to any distribution that is unitarily invariant only on one part, which can be further used to give a stronger lower bound for testing MPS (see \cref{lemma-4242204}).


\section{Quantum Sample Lower Bounds}
\label{sec:sample-bounds}

In this section, we derive a series of sample lower bounds for the property testing of bipartite quantum states through the tools proposed in \cref{sec:opt-local-tester}.

\subsection{Schmidt Rank}

We consider the problem of testing whether the Schmidt rank of a bipartite pure state $\ket{\psi}_{\textup{AB}}$ is at most $r$. Formally, we define
\begin{equation}
\textsc{SchmidtRank}^r = \set{\ket{\psi}_{\textup{AB}} \in \HAB}{\textup{the Schmidt rank of } \ket{\psi}_{\textup{AB}} \textup{ is at most } r}.
\end{equation}
We have the following lower bounds for testing the Schmidt rank. 
\begin{theorem} [Schmidt rank] \label{thm:schmidt-rank}
    Any tester for determining whether the Schmidt rank of a bipartite pure state is at most $r$ or $\varepsilon$-far (in trace distance) requires sample complexity $\mathsf{S}\rbra{\textsc{SchmidtRank}^r_\varepsilon} = \Omega\rbra{r/\varepsilon^2}$. If perfect completeness is required, 
    then we have the lower bound \(\mathsf{S}_1\rbra{\textsc{SchmidtRank}^r_\varepsilon} = \Omega(r^2/\varepsilon^2)\). 
\end{theorem}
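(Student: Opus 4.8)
The plan is to derive both bounds by reducing from rank testing of mixed quantum states --- for which sample lower bounds are already available in \cite{OW15} --- using the limitation of purifications established in \cref{corollary:purification-useless} (itself a consequence of \cref{thm:key}). For a positive integer $r$ and a dimension $d$ taken large enough that the hard instances of \cite{OW15} fit, let $\textsc{Rank}^r = \set{\rho \in \mathcal{D}(\mathbb{C}^d)}{\rank(\rho) \le r}$ be the property of $d$-dimensional mixed states of rank at most $r$. The starting observation is elementary: the Schmidt rank of a bipartite pure state $\ket{\psi}_{\textup{AB}}$ equals $\rank\rbra{\tr_{\textup{B}} \ketbra{\psi}{\psi}_{\textup{AB}}}$, so that $\textsc{SchmidtRank}^r = \mathsf{Purify}\rbra{\textsc{Rank}^r}$ as sets of bipartite pure states.

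The key step is to promote this to an inclusion of testing problems $\mathsf{Purify}\rbra{\textsc{Rank}^r_\varepsilon} \subseteq \textsc{SchmidtRank}^r_\varepsilon$ with the \emph{same} $\varepsilon$, so that no factor is lost. The ``yes'' parts coincide by the previous observation. For the ``no'' parts, suppose $\ket{\psi}_{\textup{AB}} \in \mathsf{Purify}\rbra{\textsc{Rank}^r_\varepsilon}^{\textup{no}}$, i.e.\ its reduced state $\rho_{\textup{A}} = \tr_{\textup{B}} \ketbra{\psi}{\psi}_{\textup{AB}}$ is $\varepsilon$-far in trace distance from every density operator of rank at most $r$. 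For any bipartite pure state $\ket{\phi}_{\textup{AB}}$ of Schmidt rank at most $r$, its reduced state $\tr_{\textup{B}} \ketbra{\phi}{\phi}_{\textup{AB}}$ has rank at most $r$, so $d_{\tr}\rbra{\rho_{\textup{A}}, \tr_{\textup{B}} \ketbra{\phi}{\phi}_{\textup{AB}}} \ge \varepsilon$; since the partial trace is a channel, contractivity of trace distance upgrades this to $d_{\tr}\rbra{\ket{\psi}_{\textup{AB}}, \ket{\phi}_{\textup{AB}}} \ge \varepsilon$. As this holds for every such $\ket{\phi}_{\textup{AB}}$, we conclude $\ket{\psi}_{\textup{AB}} \in \textsc{SchmidtRank}^{r,\textup{no}}_\varepsilon$, which proves the inclusion.

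It then remains only to chain the pieces. By \cref{fact:SPSQ}, $\mathsf{S}_{c,s}\rbra{\mathsf{Purify}\rbra{\textsc{Rank}^r_\varepsilon}} \le \mathsf{S}_{c,s}\rbra{\textsc{SchmidtRank}^r_\varepsilon}$ for all $0 \le s < c \le 1$, and by \cref{corollary:purification-useless} the left-hand side equals $\mathsf{S}_{c,s}\rbra{\textsc{Rank}^r_\varepsilon}$. Specializing to $(c,s) = (2/3,1/3)$ and $(c,s) = (1,1/3)$ and inserting the rank-testing lower bounds $\mathsf{S}\rbra{\textsc{Rank}^r_\varepsilon} = \Omega\rbra{r/\varepsilon^2}$ and $\mathsf{S}_1\rbra{\textsc{Rank}^r_\varepsilon} = \Omega\rbra{r^2/\varepsilon^2}$ from \cite{OW15} yields the two claimed bounds.

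I expect essentially all the substance to sit in \cref{corollary:purification-useless} (hence \cref{thm:key}), which is already proved; the reduction above is short. The one step requiring genuine care is the ``no''-set inclusion, where one must pass from a statement about the reduced state ($\rho_{\textup{A}}$ is $\varepsilon$-far from rank $\le r$) to a statement about the bipartite state ($\ket{\psi}_{\textup{AB}}$ is $\varepsilon$-far from Schmidt rank $\le r$) without any loss --- and contractivity of the partial trace does exactly that, tightly. A minor bookkeeping point is to ensure the ambient dimension $d$ of $\HAB$ is at least that of the hard instances of rank testing in \cite{OW15} (polynomial in $r$ and $1/\varepsilon$), which is harmless.
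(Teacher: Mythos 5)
There is a genuine gap, and it sits exactly where you place the "tight" step. Your reduction keeps the distance parameter fixed, $\mathsf{Purify}\rbra{\textsc{Rank}^r_\varepsilon} \subseteq \textsc{SchmidtRank}^r_\varepsilon$, via contractivity of the partial trace — that inclusion is correct, but it is lossy in a sense you have hidden by misquoting the rank-testing lower bounds. The bounds from \cite{CHW07,OW15} that the paper invokes are $\mathsf{S}\rbra{\textsc{Rank}^r_\varepsilon} = \Omega\rbra{r/\varepsilon}$ and $\mathsf{S}_1\rbra{\textsc{Rank}^r_\varepsilon} = \Omega\rbra{r^2/\varepsilon}$, i.e.\ linear (not quadratic) in $1/\varepsilon$. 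Chaining your same-$\varepsilon$ inclusion with these gives only $\Omega\rbra{r/\varepsilon}$ and $\Omega\rbra{r^2/\varepsilon}$ for Schmidt-rank testing, which falls short of the claimed $\Omega\rbra{r/\varepsilon^2}$ and $\Omega\rbra{r^2/\varepsilon^2}$.

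The paper closes this gap by proving the stronger inclusion $\mathsf{Purify}\rbra{\textsc{Rank}^r_{\varepsilon^2}} \subseteq \textsc{SchmidtRank}^r_\varepsilon$: if $\rho_\TA$ is only $\varepsilon^2$-far from every rank-$r$ state, then comparing $\rho_\TA$ with its rank-$r$ truncation gives $1 - \sum_{i=1}^r \lambda_i \geq \varepsilon^2$ for the ordered eigenvalues, and the Eckart--Young theorem (\cref{lemma:Eckart-Young}) says the best squared overlap of a purification $\ket{\phi}_{\TAB}$ with any Schmidt-rank-$r$ state is exactly $\sum_{i=1}^r \lambda_i$; since trace distance between pure states is $\sqrt{1-\abs{\braket{\cdot}{\cdot}}^2}$, the purification is $\varepsilon$-far from $\textsc{SchmidtRank}^r$. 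This square-root boost in farness, absent from the contractivity argument, is precisely what upgrades $\Omega\rbra{r/\varepsilon}$ and $\Omega\rbra{r^2/\varepsilon}$ to $\Omega\rbra{r/\varepsilon^2}$ and $\Omega\rbra{r^2/\varepsilon^2}$ after substituting $\varepsilon \mapsto \varepsilon^2$. (A smaller omission: the $\Omega\rbra{r^2/\varepsilon}$ one-sided bound of \cite{OW15} requires $r \geq 2$, so the case $r=1$ has to be handled separately, which the paper does via the productness/purity reduction in \cref{thm:product}.) The rest of your chain — \cref{fact:SPSQ} plus \cref{corollary:purification-useless} — matches the paper and is fine.
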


To prove \cref{thm:schmidt-rank}, we need the Eckart-Young theorem. 
\begin{lemma} [Eckart-Young  \cite{eckart1936approximation}, see also {\cite[Lemma 3.1]{soleimanifar2022testing}}] \label{lemma:Eckart-Young}
    Let $\ket{\psi}_{\textup{AB}} \in \HAB \coloneqq \HA \otimes \HB$ with $\Abs{\ket{\psi}_{\textup{AB}}} = 1$ where $\HA = \HB = \mathbb{C}^d$.
    Suppose that the Schmidt decomposition of $\ket{\psi}_{\textup{AB}}$ is
    \begin{equation}
        \ket{\psi}_{\textup{AB}} = \sum_{j=1}^d \sqrt{\lambda_j} \ket{\phi_j}_{\textup{A}} \ket{\gamma_j}_{\textup{B}},
    \end{equation}
    where $\lambda_1 \geq \lambda_2 \geq \dots \geq \lambda_d$.
    Then, 
    \begin{equation}
        \max_{\ket{\eta}_{\textup{AB}} \in \textsc{SchmidtRank}^r} \abs*{\braket{\eta}{\psi}_{\textup{AB}}}^2 = \sum_{j=1}^r \lambda_j.
    \end{equation}
\end{lemma}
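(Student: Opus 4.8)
The plan is to recast the optimization as a best rank-$r$ Frobenius-norm approximation, i.e.\ the operator form of the Eckart--Young--Mirsky theorem, and then to prove the two inequalities separately. Fix the standard bases of $\HA$ and $\HB$ and identify each bipartite vector $\ket{\chi}_{\textup{AB}} = \sum_{i,j} c_{ij}\ket{i}_{\textup{A}}\ket{j}_{\textup{B}}$ with the $d\times d$ matrix $C = (c_{ij})$. Under this identification I would first record three facts: (i) $\Abs{\ket{\chi}_{\textup{AB}}} = \Abs{C}_F$, so normalization becomes $\Abs{C}_F = 1$; (ii) the Schmidt decomposition of $\ket{\chi}_{\textup{AB}}$ is (up to conjugating the $\HB$-basis, which preserves orthonormality and the coefficients) the singular value decomposition of $C$, so the Schmidt coefficients are the singular values of $C$ and the Schmidt rank equals $\rank(C)$; and (iii) for two such vectors, $\braket{\eta}{\psi}_{\textup{AB}} = \tr(B^\dagger A)$ is the Hilbert--Schmidt inner product, where $A,B$ are the matrices of $\ket{\psi}_{\textup{AB}},\ket{\eta}_{\textup{AB}}$. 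Writing $\sigma_j = \sqrt{\lambda_j}$ for the singular values of $A$, the claim becomes
\[
\max\set{\abs{\tr(B^\dagger A)}^2}{\Abs{B}_F = 1,\ \rank(B)\le r} = \sum_{j=1}^r \sigma_j^2 = \sum_{j=1}^r \lambda_j .
\]

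For the lower bound ($\ge$) I would exhibit an explicit maximizer: the truncated, renormalized state
\[
\ket{\eta}_{\textup{AB}} = \frac{1}{\sqrt{\sum_{j=1}^r \lambda_j}}\sum_{j=1}^r \sqrt{\lambda_j}\,\ket{\phi_j}_{\textup{A}}\ket{\gamma_j}_{\textup{B}},
\]
which is normalized, has Schmidt rank at most $r$, and satisfies $\braket{\eta}{\psi}_{\textup{AB}} = \sqrt{\sum_{j=1}^r \lambda_j}$, hence $\abs{\braket{\eta}{\psi}_{\textup{AB}}}^2 = \sum_{j=1}^r \lambda_j$ (the same vector works if fewer than $r$ of the $\lambda_j$ are nonzero).

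The upper bound ($\le$) carries the real content. Given any $\ket{\eta}_{\textup{AB}}$ of Schmidt rank at most $r$, let $P$ be the orthogonal projection onto the range of its matrix $B$, so $\rank(P)\le r$ and $B = PB$, whence $B^\dagger = B^\dagger P$. Then $\tr(B^\dagger A) = \tr(B^\dagger PA)$, and the Cauchy--Schwarz inequality for the Hilbert--Schmidt inner product together with $\Abs{B}_F = 1$ gives
\[
\abs{\tr(B^\dagger A)}^2 = \abs{\tr(B^\dagger PA)}^2 \le \Abs{B}_F^2\,\Abs{PA}_F^2 = \tr\rbra*{P A A^\dagger}.
\]
It remains to bound $\tr(PAA^\dagger)$ over rank-$\le r$ projections. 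Diagonalizing $AA^\dagger = \sum_i \sigma_i^2 \ketbra{e_i}{e_i}$ and setting $p_i = \bra{e_i}P\ket{e_i}\in[0,1]$ with $\sum_i p_i = \tr(P)\le r$, one has $\tr(PAA^\dagger) = \sum_i \sigma_i^2 p_i$; since the $\sigma_i^2$ are nonincreasing, this weighted sum is maximized by putting unit weight on the $r$ largest, yielding $\sum_{i=1}^r \sigma_i^2$. This is precisely Ky Fan's maximum principle, and combining it with the display above gives $\abs{\braket{\eta}{\psi}_{\textup{AB}}}^2 \le \sum_{j=1}^r \lambda_j$, completing the proof.

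The main obstacle is this upper bound, which is the substantive direction of Eckart--Young--Mirsky. The one delicate point to get right is the reduction from $B$ to the compression $PA$: using the \emph{range} projection of $B$ (rather than a two-sided row/column compression) makes $B^\dagger = B^\dagger P$ hold exactly, so the bound follows from a single Cauchy--Schwarz step plus the rearrangement/Ky~Fan estimate, with no leftover cross terms to control.
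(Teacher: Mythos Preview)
Your proof is correct. The paper does not actually prove this lemma; it states it with citations to the original Eckart--Young paper and to \cite{soleimanifar2022testing}, and then uses it as a black box in the proofs of \cref{thm:schmidt-rank} and \cref{thm:mps-lb}. So there is no ``paper's own proof'' to compare against---you have supplied a complete, standard argument (matrix identification, explicit truncated maximizer for the lower bound, Cauchy--Schwarz plus Ky~Fan for the upper bound) where the paper simply invokes the classical result.
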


Then, we can prove \cref{thm:schmidt-rank} as follows.

\begin{proof} [Proof of \cref{thm:schmidt-rank}]
    For the special case that $r = 1$, see the proof of \cref{thm:product}.
    In the following, we only have to consider the case when $r \geq 2$. 
    We relate the bipartite state testing of $\textsc{SchmidtRank}^r$ to the corresponding mixed state testing of $\textsc{Rank}^r$.
    Specifically, let 
    \begin{equation}
        \textsc{Rank}^r = \set{\rho_{\textup{A}} \in \mathcal{D}\rbra{\HA}}{\rank\rbra{\rho_{\textup{A}}} \leq r}.
    \end{equation}
    In the following, we first show that $\mathsf{Purify}\rbra{\textsc{Rank}^r_{\varepsilon^2}} \subseteq \textsc{SchmidtRank}^r_\varepsilon$.
    
    \textbf{Reduction}.
    Let $\mathcal{P}=\textsc{Rank}^r$ and
    $\mathcal{Q}=\textsc{SchmidtRank}^r$.
    It suffices to prove $\mathsf{Purify}\rbra{\mathcal{P}}\subseteq\mathcal{Q}$,
    and $\mathsf{Purify}\rbra{\mathcal{P}^{\geq \varepsilon^2}}\subseteq \mathcal{Q}^{\geq \varepsilon}$.
    \begin{enumerate}
        \item 
            $\mathsf{Purify}\rbra*{\mathcal{P}}\subseteq \mathcal{Q}$
            can be easily seen from the definition of Schmidt rank.
            That is, for any mixed state $\rho_{\textup{A}}\in \mathcal{D}(\HA)$
            and any of its purification $\ket{\psi}_{\textup{AB}}\in \HAB$ with $\tr_{\textup{B}}\rbra{\ketbra{\psi}{\psi}_{\textup{AB}}}=\rho_{\textup{A}}$,
            we have $\rank\rbra{\rho_{\textup{A}}}\leq r$ if and only if the Schmidt rank of $\ket{\psi}_{\textup{AB}}$ is $\leq r$.
        \item
            To prove $\mathsf{Purify}\rbra{\mathcal{P}^{\geq \varepsilon^2}}\subseteq \mathcal{Q}^{\geq \varepsilon}$, suppose 
            \begin{equation}
                \rho_\TA=\sum_{i=1}^d\lambda_i \ketbra{\psi_i}{\psi_i}_\TA\in \mathcal{P}^{\geq \varepsilon^2}.
            \end{equation}
            We define a state 
            \begin{equation}
                \rho_{\TA,r} = \frac{1}{\sum_{i=1}^r \lambda_i}\sum_{i=1}^r \lambda_i\ketbra{\psi_i}{\psi_i}_\TA \in \mathcal{P}.
            \end{equation}
            Then, since \(\rho_{\TA,r}\in\mathcal{P}\) and \(\rho_\TA\in\mathcal{P}^{\geq \varepsilon^2}\), we have
            \begin{equation}\label{eq-414011}
            \varepsilon^2\leq d_{\tr}(\rho_{\TA,r},\rho_\TA)=1-\sum_{i=1}^r\lambda_i.
            \end{equation}
            Suppose \(\ket{\phi}_{\textup{AB}}\in \HAB\) is a purification of \(\rho_\TA\). Then 
            \begin{equation}
                \ket{\phi}_{\textup{AB}}=\sum_{i=1}^d\sqrt{\lambda_i}\ket{\psi_i}_\TA\ket{\gamma_i}_\TB
            \end{equation}
            for some orthonormal basis \(\{\ket{\gamma_i}_\TB\}\) of \(\HB\). 
            By the Eckart-Young theorem (\cref{lemma:Eckart-Young}), we have \(
                \max_{\ket{\eta}_{\textup{AB}}\in\mathcal{Q}} \abs*{ \braket{\eta}{\phi}_{\textup{AB}}}^2=\sum_{i=1}^r\lambda_i
            \),
            which, together with \cref{eq-414011}, implies
            \begin{equation}
            \min_{\ket{\eta}_{\textup{AB}}\in\mathcal{Q}} d_{\tr}(\ket{\eta}_{\textup{AB}},\ket{\phi}_{\textup{AB}})=\sqrt{1-\sum_{i=1}^r\lambda_i}\,\geq\, \varepsilon.
            \end{equation}
            This proves that \(\mathsf{Purify}(\mathcal{P}^{\geq \varepsilon^2})\subseteq \mathcal{Q}^{\geq \varepsilon}\).
    \end{enumerate}
    From the above, we conclude that $\mathsf{Purify}\rbra{\textsc{Rank}^r_{\varepsilon^2}} \subseteq \textsc{SchmidtRank}^r_\varepsilon$.
    
    \textbf{Lower bound}. 
    Because $\mathsf{Purify}\rbra{\textsc{Rank}^r_{\varepsilon^2}} \subseteq \textsc{SchmidtRank}^r_\varepsilon$, and by \cref{fact:SPSQ},
    we immediately have $\mathsf{S}(\textsc{SchmidtRank}^r_{\varepsilon}) \geq \mathsf{S}\rbra{\mathsf{Purify}\rbra{\textsc{Rank}^r_{\varepsilon^2}}}$ and $\mathsf{S}_1(\textsc{SchmidtRank}^r_{\varepsilon}) \geq \mathsf{S}_1\rbra{\mathsf{Purify}\rbra{\textsc{Rank}^r_{\varepsilon^2}}}$.
    We also note that $\mathsf{S}\rbra{\mathsf{Purify}\rbra{\textsc{Rank}^r_{\varepsilon^2}}}=\mathsf{S}\rbra{\textsc{Rank}^r_{\varepsilon^2}}$ and $\mathsf{S}_1\rbra{\mathsf{Purify}\rbra{\textsc{Rank}^r_{\varepsilon^2}}}=\mathsf{S}_1\rbra{\textsc{Rank}^r_{\varepsilon^2}}$ by \cref{corollary:purification-useless}.
    On the other hand, it was shown in \cite[Theorem 3]{CHW07} and \cite[Theorem 1.11]{OW15} that $\mathsf{S}\rbra{\textsc{Rank}^r_{\varepsilon}} = \Omega\rbra{r/\varepsilon}$, and it was shown in \cite[Theorem 1.11]{OW15} that \(\mathsf{S}_1\rbra{\textsc{Rank}^r_{\varepsilon}} = \Omega\rbra{r^2/\varepsilon}\). 
    Therefore, we have $\mathsf{S}\rbra{\textsc{SchmidtRank}^r_\varepsilon} = \Omega\rbra{r/\varepsilon^2}$ and $\mathsf{S}_1\rbra{\textsc{SchmidtRank}^r_\varepsilon} = \Omega\rbra{r^2/\varepsilon^2}$.
\end{proof}

The proof of \cref{thm:schmidt-rank} is based on \cite[Theorem 1.11]{OW15}, which requires the assumption that $r \geq 2$. 
For $r = 1$, the task becomes productness testing, which will be discussed separately in \cref{sec:productness}. 
We also note that the lower bound $\mathsf{S}\rbra{\textsc{SchmidtRank}^r_\varepsilon} = \Omega\rbra{r/\varepsilon^2}$ for general testers is implied by the lower bound for testing MPS (see \cref{thm:mps-lb}), while using different techniques. 

\subsection{Bond Dimension of Matrix Product States}\label{sec-722031}
We consider the problem of testing whether an \(n\)-partite state \(\ket{\psi}\) is a matrix product state (MPS) of bond dimension \(r\) (with open boundary condition~\cite{PGVWC07}). Formally, we define
\begin{equation}
\label{eq:mat-prd-state}
\begin{aligned}
\textsc{MPS}^{r,n} = \Bigg\{ \ket{\psi}_{1,\ldots,n}= \sum_{i_1,\ldots,i_n} A^{(1)}_{i_1}\cdots A^{(n)}_{i_n} & \ket{i_1\cdots i_n} : 
\textup{\(A_i^{(j)}\) is an \(r\times r\) matrix for \(2\leq j\leq n-1\),} \\
& \textup{\(A_i^{(1)}\) is a \(1\times r\) matrix and \(A_i^{(n)}\) is an \(r\times 1\) matrix} \Bigg\}.
\end{aligned}
\end{equation}
We have the following lower bound for this problem.
\begin{theorem} [Bond dimension of MPS] \label{thm-414017}
    For any $n \geq 2$ and $r\geq 2$, any tester for determining whether an $n$-partite pure state is an MPS of bond dimension $r$ or $\varepsilon$-far (in trace distance) requires sample complexity $\mathsf{S}\rbra{\textsc{MPS}^{r,n}_\varepsilon} = \Omega\rbra{(\sqrt{nr}+r)/\varepsilon^2}$. If perfect completeness is required, then we have the lower bound \(\mathsf{S}_1\rbra{\textsc{MPS}^{r,n}_\varepsilon} = \Omega\rbra{(\sqrt{nr}+r^2)/\varepsilon^2}\).
\end{theorem}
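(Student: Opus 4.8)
The lower bound is the larger of two contributions — a term growing with $r$ (namely $r/\varepsilon^2$, or $r^2/\varepsilon^2$ under perfect completeness) and a term $\sqrt{nr}/\varepsilon^2$ growing with $n$ — so the plan is to prove each separately and take the maximum, using that a lower bound against two-sided testers holds a fortiori against perfect-completeness testers and that the $\sqrt{nr}/\varepsilon^2$ bound will be obtained in a way valid in both settings.

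\textbf{The $r$-dependent term.} I would obtain this by a black-box reduction from Schmidt-rank testing (\cref{thm:schmidt-rank}). Map a bipartite state $\ket{\psi}_{\textup{AB}}$ on $\mathbb{C}^d\otimes\mathbb{C}^d$ to the $n$-partite state $\ket{\chi}\coloneqq\ket{\psi}\otimes\ket{0}^{\otimes(n-2)}$ with $\ket{\psi}$ on parties $1,2$; since one copy of $\ket{\chi}$ is prepared from one copy of $\ket{\psi}$ together with free fixed ancillas, any tester for $\textsc{MPS}^{r,n}$ yields a tester for $\textsc{SchmidtRank}^r$ of the same sample complexity. Completeness: recalling that a pure state lies in $\textsc{MPS}^{r,n}$ iff it has Schmidt rank $\le r$ across every consecutive cut, if $\ket{\psi}$ has Schmidt rank $\le r$ then $\ket{\chi}$ has rank $\le r$ across the cut $\{1\}\mid\{2,\dots,n\}$ and rank $1$ across every later cut, so $\ket{\chi}\in\textsc{MPS}^{r,n}$. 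Soundness: every $\ket{\phi}\in\textsc{MPS}^{r,n}$ has Schmidt rank $\le r$ across $\{1\}\mid\{2,\dots,n\}$, and tensoring with $\ket{0}^{\otimes(n-2)}$ does not change the Schmidt spectrum of $\ket{\psi}$ across $\textup{A}\mid\textup{B}$, so \cref{lemma:Eckart-Young} applied to that cut gives $\max_{\ket{\phi}\in\textsc{MPS}^{r,n}}|\braket{\phi}{\chi}|^2\le\max_{\ket{\eta}\in\textsc{SchmidtRank}^r}|\braket{\eta}{\psi}|^2$, hence $d_{\tr}(\ket{\chi},\textsc{MPS}^{r,n})\ge d_{\tr}(\ket{\psi},\textsc{SchmidtRank}^r)$. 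With \cref{thm:schmidt-rank} this gives $\mathsf{S}(\textsc{MPS}^{r,n}_\varepsilon)\ge\mathsf{S}(\textsc{SchmidtRank}^r_\varepsilon)=\Omega(r/\varepsilon^2)$ and $\mathsf{S}_1(\textsc{MPS}^{r,n}_\varepsilon)\ge\mathsf{S}_1(\textsc{SchmidtRank}^r_\varepsilon)=\Omega(r^2/\varepsilon^2)$.

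\textbf{The $\sqrt{nr}$ term.} This is the genuinely new part, and I would prove it by an average-case argument strengthening the $\Omega(\sqrt{n}/\varepsilon^2)$ bound of \cite{soleimanifar2022testing} with the single-part localization tool \cref{theorem-424241}/\cref{corollary-4241439}. The plan is to build distributions $D^{\textup{yes}}$ supported on $\textsc{MPS}^{r,n}$ and $D^{\textup{no}}$ supported on states $\varepsilon$-far from $\textsc{MPS}^{r,n}$, each obtained from a fixed bond-dimension-$r$ MPS by applying $\Theta(n)$ independent random local perturbations of size $\Theta(\varepsilon/\sqrt{n})$ that stay within (resp.\ mildly break) the bond-dimension-$r$ constraint. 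The key design choice is to make both distributions unitarily invariant on one part of a chosen bipartition, so that \cref{corollary-4241439} identifies $d_{\tr}\big(\E_{D^{\textup{yes}}}[\ketbra{\psi}{\psi}^{\otimes N}],\,\E_{D^{\textup{no}}}[\ketbra{\psi}{\psi}^{\otimes N}]\big)$ with the trace distance of the corresponding \emph{reduced} $N$-copy mixtures on the $\Theta(r)$-dimensional bond degrees of freedom. A second-moment ($\chi^2$-type) estimate over the $\Theta(n)$ near-independent bonds then shows this distance is $o(1)$ unless $N=\Omega(\sqrt{nr}/\varepsilon^2)$, where the $\sqrt{n}$ is the usual gain from independence and the extra $\sqrt{r}$ over \cite{soleimanifar2022testing} comes from each bond carrying an $r$-dimensional rather than constant-dimensional perturbation. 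This argument needs no perfect-completeness assumption, so it contributes $\sqrt{nr}/\varepsilon^2$ to both stated bounds, and the theorem follows by taking the larger of the two contributions.

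\textbf{Main obstacle.} The hard part is the second step, and within it two points in particular. First, one must verify that $D^{\textup{no}}$ is $\varepsilon$-far from the \emph{entire} class $\textsc{MPS}^{r,n}$, not merely from $D^{\textup{yes}}$; this requires simultaneously controlling the Schmidt ranks across all $n-1$ consecutive cuts under the random perturbation, since an MPS may trade bond dimension between cuts and so a one-cut-at-a-time application of \cref{lemma:Eckart-Young} (which sufficed in the $r$-term above) is not enough. Second, the localized moment computation must be carried out so that the bound scales as $\sqrt{nr}/\varepsilon^2$ rather than $nr/\varepsilon^2$, which is where the near-independence of the $\Theta(n)$ bonds must be used quantitatively. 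It is precisely the single-part (as opposed to two-part) unitary invariance supplied by \cref{theorem-424241} that lets the construction keep enough asymmetric structure to be $\varepsilon$-far while still being localizable; the two-part version of \cite[Theorem 5.2]{soleimanifar2022testing} is too symmetric to yield the $\sqrt{r}$ improvement.
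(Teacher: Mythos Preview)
Your proposal is essentially correct and follows the paper's approach closely. The $r$-term reduction via padding with $\ket{0}^{\otimes(n-2)}$ is exactly the paper's \cref{lemma-416002}, and your $\sqrt{nr}$ argument --- tensor-product hard instances with $\Theta(\varepsilon/\sqrt{n})$-sized perturbations, single-part unitary invariance, localization via \cref{corollary-4241439}, then a distributional distance bound on the reduced states --- is the outline of the paper's \cref{thm:mps-lb} and \cref{lemma-4242204}.

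Two points worth sharpening. First, your ``main obstacle'' (showing $D^{\textup{no}}$ is $\varepsilon$-far from \emph{all} of $\textsc{MPS}^{r,n}$, not just from $D^{\textup{yes}}$) dissolves once you choose the hard instance to be a genuine tensor power $\ket{\phi}_{\textup{AB}}^{\otimes n/2}$: \cref{lemma:gEYthm} (from \cite{soleimanifar2022testing}) says the maximal overlap of a tensor power with $\textsc{MPS}^{r,n}$ equals the $(n/2)$-th power of the maximal overlap of the single bipartite piece with $\textsc{SchmidtRank}^r$, which is then a one-cut Eckart--Young computation. So the worry about ``trading bond dimension between cuts'' does not arise for this construction. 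Second, the paper's specific instantiation is: $\ket{\psi}_{\textup{AB}}$ and $\ket{\phi}_{\textup{AB}}$ both have a dominant $\sqrt{1-\theta}\ket{0}\ket{0}$ component with $\theta=\Theta(\varepsilon^2/n)$, and the remaining $\theta$ mass is uniform on $r-1$ versus $d-1$ Schmidt modes; the single-part invariance is supplied by integrating the $\textup{B}$ side over all of $\mathbb{U}_d$ while the $\textup{A}$ side is only averaged over the stabilizer $G$ of $\ket{0}$. After localizing, the reduced problem is essentially the low-rank versus high-rank mixed-state distinction from \cite{CHW07} (\cref{lemma-04261111}), which together with a binomial expansion gives $d_{\tr}(\rho_0,\sigma_0)\lesssim (N\theta)\min\{N\theta,1\}/r$ and hence the two regimes $N=\Omega(\sqrt{nr}/\varepsilon^2)$ and $N=\Omega(r/\varepsilon^2)$ simultaneously. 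Your ``$\chi^2$-type estimate'' is a reasonable description of that computation, but the paper does not phrase it as a second-moment argument per se.
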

\begin{proof}
    We show $\mathsf{S}\rbra{\textsc{MPS}^{r,n}_\varepsilon} = \Omega\rbra{(\sqrt{nr}+r)/\varepsilon^2}$ in \cref{thm:mps-lb}. 
    Then, we show \(\mathsf{S}_1\rbra{\textsc{MPS}^{r,n}_\varepsilon} = \Omega\rbra{r^2/\varepsilon^2}\) in \cref{lemma-427010}. 
    The proof is completed by further noting that $\mathsf{S}_1\rbra{\textsc{MPS}^{r,n}_\varepsilon} \geq \mathsf{S}\rbra{\textsc{MPS}^{r,n}_\varepsilon}$.
\end{proof}

\subsubsection{Lower bounds for general testers}

\begin{lemma} \label{thm:mps-lb}
For any \(n\geq 2\) and \(r\geq 2\), any tester for determining whether an \(n\)-partite pure state is an MPS of bond dimension \(r\) or \(\varepsilon\)-far (in trace distance) requires sample complexity \(\mathsf{S}\rbra{\textsc{MPS}^{r,n}_\varepsilon}=\Omega((\sqrt{nr}+r)/\varepsilon^2)\).
\end{lemma}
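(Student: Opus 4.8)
The plan is to prove the two lower bounds $\mathsf{S}\rbra{\textsc{MPS}^{r,n}_\varepsilon}=\Omega\rbra{r/\varepsilon^2}$ and $\mathsf{S}\rbra{\textsc{MPS}^{r,n}_\varepsilon}=\Omega\rbra{\sqrt{nr}/\varepsilon^2}$ separately; since $\sqrt{nr}+r=\Theta(\max\{\sqrt{nr},r\})$, together these give the claimed $\Omega\rbra{(\sqrt{nr}+r)/\varepsilon^2}$. For the $\Omega\rbra{r/\varepsilon^2}$ term I would reduce from bipartite Schmidt-rank testing by confining the input to the first two sites: send a bipartite $\ket{\psi}_{12}$ on $\mathbb{C}^d\otimes\mathbb{C}^d$ to $\ket{\psi}_{12}\otimes\ket{0}_3\otimes\cdots\otimes\ket{0}_n$. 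Such a state lies in $\textsc{MPS}^{r,n}$ exactly when the Schmidt rank of $\ket{\psi}_{12}$ across $1\mid 2$ is at most $r$ (all other bonds being $1$-dimensional). For soundness one checks that $\varepsilon$-farness is preserved: any $\ket{\phi}\in\textsc{MPS}^{r,n}$ has Schmidt rank at most $r$ across the cut $1\mid 2\cdots n$, so projecting $\ket{\phi}$ onto $\mathbb{C}^d_1\otimes\mathbb{C}^d_2\otimes\ket{0}^{\otimes(n-2)}$ cannot increase the rank across $1\mid 2$, whence $\abs{\braket{\phi}{\psi\otimes 0^{\otimes(n-2)}}}\le\max_{\ket{\eta}\in\textsc{SchmidtRank}^r}\abs{\braket{\eta}{\psi}}\le\sqrt{1-\varepsilon^2}$ by \cref{lemma:Eckart-Young} and the pure-state identity $d_{\tr}(\ket{a},\ket{b})=\sqrt{1-\abs{\braket{a}{b}}^2}$; hence the trace distance is at least $\varepsilon$. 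Thus $\mathsf{S}\rbra{\textsc{MPS}^{r,n}_\varepsilon}\ge\mathsf{S}\rbra{\textsc{SchmidtRank}^r_\varepsilon}=\Omega\rbra{r/\varepsilon^2}$ by \cref{thm:schmidt-rank}.

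The $\Omega\rbra{\sqrt{nr}/\varepsilon^2}$ term is the technical core and is where the single-part-invariance machinery of \cref{sec-722030} is used. The plan is to construct distributions $D^{\mathrm{yes}}$ supported on $\textsc{MPS}^{r,n}$ and $D^{\mathrm{no}}$ supported on states $\varepsilon$-far from $\textsc{MPS}^{r,n}$ that are both unitarily invariant on one distinguished block $\mathsf{B}$ of the sites, but \emph{not} on the complementary block $\mathsf{A}$ — this is the essential relaxation over \cite{soleimanifar2022testing}, whose construction required two-part invariance and hence a Haar-scrambled $\mathsf{A}$ side. Concretely I would arrange the chain as $\Theta(n)$ nearly-decoupled slots, with $D^{\mathrm{yes}}$ placing a bond-dimension-$r$ gadget in every slot and $D^{\mathrm{no}}$ planting, in one uniformly random slot, a gadget whose top-$r$ Schmidt weight across the corresponding cut is at most $1-\varepsilon^2$ (which, by the overlap bound above, already certifies $\varepsilon$-farness from \emph{all} of $\textsc{MPS}^{r,n}$). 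By \cref{theorem-424241} the optimal average-case tester may be taken local on $\mathsf{A}$, so by the reasoning of \cref{corollary-4241439} it suffices to show that the $N$-copy reduced states $\tr_{\mathsf{B}}[\rho]$ and $\tr_{\mathsf{B}}[\sigma]$ — where $\rho,\sigma$ are the tensor averages of $D^{\mathrm{yes}},D^{\mathrm{no}}$ — are $o(1)$-close in trace distance unless $N=\Omega\rbra{\sqrt{nr}/\varepsilon^2}$.

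This last estimate I would obtain by combining (i) a per-slot second-moment (``collision'') computation, via weak Schur sampling on the relevant block and the bipartite Schur--Weyl decomposition of \cref{lemma-3241548}, showing that a single planted slot shifts the distinguishing statistic with signal-to-noise of order $N\varepsilon^2/\sqrt{r}$, i.e.\ contributes $\chi^2$-weight $\Theta\rbra{N^2\varepsilon^4/r}$; with (ii) a hybrid/Cauchy--Schwarz argument over the $\Theta(n)$ possible slot locations, whose near-orthogonal perturbations average down by a further factor $1/n$, yielding total $\chi^2$-distance $\Theta\rbra{N^2\varepsilon^4/(nr)}$, which must be $\Omega(1)$ for a successful tester. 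Keeping only (i) with $n=2$ recovers $\Omega\rbra{\sqrt{r}/\varepsilon^2}$-type behaviour, and discarding the $r$-dependence recovers the $\Omega\rbra{\sqrt{n}/\varepsilon^2}$ bound of \cite[Theorem 1.4]{soleimanifar2022testing}; it is convenient to package this part of the reduction as a separate lemma (\cref{lemma-4242204}).

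I expect the main obstacle to be the second part: designing the slotted ensemble so that all three requirements hold simultaneously — (a) $D^{\mathrm{yes}}$ genuinely supported on bond-dimension-$r$ MPS while $D^{\mathrm{no}}$ is genuinely $\varepsilon$-far; (b) both unitarily invariant on a single block large enough for \cref{theorem-424241} to apply; (c) the $\mathsf{A}$-reduced $N$-copy states indistinguishable below $\sqrt{nr}/\varepsilon^2$ samples — and, within (c), pushing the per-slot analysis down to a $\sqrt r$ dependence rather than the $r$ one would get from a crude rank-testing reduction per slot. This requires that the planted deviation be a \emph{uniform}-type perturbation of the gadget spectrum, so that the governing quantity is a collision rate rather than a support size, and that the $\mathsf{A}$-side correlations across slots be chosen so that tracing out $\mathsf{B}$ does not wash out the signal. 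The representation-theoretic bookkeeping — tracking how the decomposition of \cref{lemma-3241548} interacts with the slot structure after the partial trace — is the part I expect to be most delicate.
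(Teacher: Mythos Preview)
Your $\Omega(r/\varepsilon^2)$ reduction via padding to $\ket{\psi}_{12}\otimes\ket{0}^{\otimes(n-2)}$ is correct and is precisely \cref{lemma-416002}. The paper, however, does not split the argument: both the $\Omega(r/\varepsilon^2)$ and the $\Omega(\sqrt{nr}/\varepsilon^2)$ terms come from a \emph{single} hard instance, via the two regimes of the $\min\{N\theta,1\}$ factor in \cref{lemma-4242204}.

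For the $\Omega(\sqrt{nr}/\varepsilon^2)$ term your planted-anomaly plan is a genuinely different route. The paper uses no planting. It fixes bipartite states $\ket{\psi}_{\TAB}=\sqrt{1-\theta}\,\ket{00}+\sqrt{\theta/(r-1)}\sum_{i=1}^{r-1}\ket{ii}$ and $\ket{\phi}_{\TAB}=\sqrt{1-\theta}\,\ket{00}+\sqrt{\theta/(d-1)}\sum_{i=1}^{d-1}\ket{ii}$ with $\theta=\Theta(\varepsilon^2/n)$, and compares the products $\ket{\psi}^{\otimes n/2}\in\textsc{MPS}^{r,n}$ against $\ket{\phi}^{\otimes n/2}$ (which is $\varepsilon$-far by \cref{lemma:gEYthm}). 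Every slot carries the \emph{same} small deviation $\theta$; there is no random planted slot. The Haar average is over $\mathbb{U}_d$ on each Bob site but only over the stabiliser $G=\{U\in\mathbb{U}_d:U\ket{0}=\ket{0}\}$ on each Alice site --- this partial invariance on Alice is exactly where the single-part result \cref{corollary-4241439} improves on \cite{soleimanifar2022testing}. After reducing to Alice, a binomial expansion of $((1-\theta)\ketbra{0}{0}+\theta\, U\tau U^\dagger)^{\otimes N}$ together with the \cite{CHW07} estimate (\cref{lemma-04261111}) gives per-slot trace distance $O(N\theta\cdot\min\{N\theta,1\}/r)$; this is what \cref{lemma-4242204} actually states, not a planted-slot $\chi^2$ bound. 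The tensor power over $n/2$ slots then yields both terms of the lower bound at once.

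Your planted scheme is plausible in outline and the exponents you predict do match, but there is one constraint you do not address: for the factor-$n$ hybrid gain to survive, the good and bad gadgets must have \emph{identical single-copy Alice marginals} (otherwise the planted slot is detectable with $O(1)$ samples per slot and the $1/n$ averaging collapses). Enforcing this forces essentially the same $\ket{0}$-dominated structure and the same \cite{CHW07}-type input as in the paper; the uniform-$\theta$ construction achieves this alignment automatically and delivers both terms without a separate padding argument or a hybrid over slot locations.
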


We note that when $n = 2$, \cref{thm:mps-lb} implies the lower bound $\mathsf{S}\rbra{\textsc{SchmidtRank}^r_\varepsilon} = \Omega\rbra{r/\varepsilon^2}$ for general testers given in \cref{thm:schmidt-rank}, while using a different proof.
To prove \cref{thm:mps-lb}, we need the following lemma. 

\begin{lemma} [Overlap of tensor products, {\cite[Proposition 5.1]{soleimanifar2022testing}}] \label{lemma:gEYthm}
    Let $\ket{\psi} \in \mathbb{C}^{d_1} \otimes \mathbb{C}^{d_2} \otimes \dots \otimes \mathbb{C}^{d_k}$ be a $k$-partite state.
    Then, for $\ell \geq 1$ and $r \geq 1$, the $k\ell$-partite state $\ket{\psi}^{\otimes \ell}$ satisfies 
    \begin{equation}
        \max_{\ket{\varphi} \in \textsc{MPS}^{r, k\ell}} \abs*{\bra{\varphi} \cdot \ket{\psi}^{\otimes \ell}}^2 = \rbra*{\max_{\ket{\varphi} \in \textsc{MPS}^{r, k}}\abs*{\braket{\varphi}{\psi}}^2}^{\ell}.
    \end{equation}
\end{lemma}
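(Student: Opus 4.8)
The plan is to establish the two inequalities separately, writing $p := \max_{\ket{\varphi}\in\textsc{MPS}^{r,k}}\abs{\braket{\varphi}{\psi}}^2$ and, for $m\geq 1$, $q_m := \max_{\ket{\varphi}\in\textsc{MPS}^{r,km}}\abs{\bra{\varphi}\cdot\ket{\psi}^{\otimes m}}^2$, so that $q_1=p$ and the goal is to show $q_\ell=p^\ell$. For the lower bound $q_\ell\geq p^\ell$, let $\ket{\varphi^*}\in\textsc{MPS}^{r,k}$ attain $p$. One can glue $\ell$ copies of $\ket{\varphi^*}$ into a single open-boundary MPS over $k\ell$ sites of bond dimension $r$: at each of the $\ell-1$ copy boundaries insert a trivial one-dimensional bond, absorbed by padding the $r\times 1$ boundary tensor $A^{(k)}$ and the $1\times r$ boundary tensor $A^{(1)}$ of consecutive copies into $r\times r$ interior tensors. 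This produces $\ket{\varphi^*}^{\otimes\ell}\in\textsc{MPS}^{r,k\ell}$ with $\abs{\bra{\varphi^*}^{\otimes\ell}\cdot\ket{\psi}^{\otimes\ell}}^2=\abs{\braket{\varphi^*}{\psi}}^{2\ell}=p^\ell$, so $q_\ell\geq p^\ell$.

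For the upper bound $q_\ell\leq p^\ell$ I would induct on $\ell$, the base case $\ell=1$ being the definition of $p$. For the inductive step, assume $q_{\ell-1}\leq p^{\ell-1}$ and fix an arbitrary unit vector $\ket{\varphi}\in\textsc{MPS}^{r,k\ell}$. Grouping the $k\ell$ sites as ``copy $1$'' (the first $k$ sites) and ``copies $2,\dots,\ell$'' (the remaining $k(\ell-1)$ sites), I would bring $\ket{\varphi}$ to canonical form about this cut and read off its Schmidt decomposition $\ket{\varphi}=\sum_{\beta=1}^{s}\nu_\beta\,\ket{L_\beta}\otimes\ket{R_\beta}$, with $s\leq r$, $\nu_\beta>0$, $\sum_\beta\nu_\beta^2=1$, and $\cbra{\ket{L_\beta}}$, $\cbra{\ket{R_\beta}}$ orthonormal. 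Here one uses two standard facts about open-boundary MPS: each left Schmidt vector $\ket{L_\beta}$ is the length-$k$ MPS obtained from the first $k$ tensors of $\ket{\varphi}$ by fixing its right bond index to $\beta$, hence $\ket{L_\beta}\in\textsc{MPS}^{r,k}$; and each right Schmidt vector $\ket{R_\beta}$ is the length-$k(\ell-1)$ MPS obtained from the remaining tensors by fixing the left bond index, hence $\ket{R_\beta}\in\textsc{MPS}^{r,k(\ell-1)}$. (The hypothesis $\ell\geq 2$ enters here, ensuring site $k$ is an interior site of the length-$k\ell$ chain so that its tensor is $r\times r$.)

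The crucial move is to contract the ``copies $2,\dots,\ell$'' part against $\ket{\psi}^{\otimes(\ell-1)}$ \emph{before} invoking any norm bound. Setting $w_\beta:=\nu_\beta\,\overline{\bra{R_\beta}\cdot\ket{\psi}^{\otimes(\ell-1)}}$ and $\ket{\chi}:=\sum_{\beta}w_\beta\ket{L_\beta}$, one checks $\bra{\varphi}\cdot\ket{\psi}^{\otimes\ell}=\braket{\chi}{\psi}$. The key point is that $\ket{\chi}$ is still a bond-dimension-$r$ MPS over $k$ sites: since the $\ket{L_\beta}$ share the first $k-1$ tensors of $\ket{\varphi}$ and differ only in the $\beta$-th column of the $k$-th tensor, $\ket{\chi}$ is exactly the length-$k$ MPS obtained by contracting that free bond index with the vector $w=(w_\beta)_\beta$, so $\ket{\chi}\in\textsc{MPS}^{r,k}$ up to normalization. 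Hence $\abs{\braket{\chi}{\psi}}^2\leq p\,\Abs{\ket{\chi}}^2$, while orthonormality of $\cbra{\ket{L_\beta}}$ together with $\ket{R_\beta}\in\textsc{MPS}^{r,k(\ell-1)}$ yields $\Abs{\ket{\chi}}^2=\sum_\beta\nu_\beta^2\,\abs{\bra{R_\beta}\cdot\ket{\psi}^{\otimes(\ell-1)}}^2\leq q_{\ell-1}\sum_\beta\nu_\beta^2=q_{\ell-1}$. Combining, $\abs{\bra{\varphi}\cdot\ket{\psi}^{\otimes\ell}}^2\leq p\,q_{\ell-1}\leq p^\ell$; maximizing over $\ket{\varphi}$ gives $q_\ell\leq p^\ell$, and with the lower bound we conclude $q_\ell=p^\ell$, which is the statement.

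The main obstacle is precisely this upper bound, and the subtlety to flag is why the naive approach fails: applying Cauchy--Schwarz directly across a single copy boundary only gives $q_\ell\leq q_{\ell-1}$, losing the factor of $p$, because the $r$-dimensional span of the left Schmidt vectors of $\ket{\varphi}$ can contain a large component of $\ket{\psi}$ even when no individual MPS state does. Contracting first and bounding second repairs this, since the effective state $\ket{\chi}$ on copy $1$ produced by the contraction is genuinely a bond-dimension-$r$ MPS rather than an arbitrary linear combination of such. The remaining work — verifying the two MPS facts above and the claim $\ket{\chi}\in\textsc{MPS}^{r,k}$ — is routine from the definition of open-boundary MPS and the left/right canonical forms, though it should be written out with care.
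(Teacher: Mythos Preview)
The paper does not prove this lemma; it is quoted as \cite[Proposition 5.1]{soleimanifar2022testing} and invoked as a black box in the proof of \cref{thm:mps-lb}. So there is no in-paper argument to compare against.

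Your proof is correct and is essentially the argument one finds in the cited reference. The lower bound by gluing $\ell$ copies of the optimizer through trivial rank-$1$ bonds is standard. For the upper bound, the key insight you identify --- contracting the last $\ell-1$ copies of $\ket{\psi}$ against the right part of $\ket{\varphi}$ \emph{first}, then observing that the resulting vector $\ket{\chi}$ on the first $k$ sites is still a bond-dimension-$r$ MPS (since it merely replaces the right boundary vector at the cut) --- is exactly the right move, and your diagnosis of why the naive Cauchy--Schwarz bound fails is on point. One small edge case worth handling explicitly in a full write-up is $k=1$, where site $k$ is the leftmost site rather than an interior one; there the claim $\ket{\chi}\in\textsc{MPS}^{r,1}$ is trivial since every $1$-partite state is in $\textsc{MPS}^{r,1}$, so the argument still goes through.
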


Then, we can prove \cref{thm:mps-lb} as follows.

\begin{proof} [Proof of \cref{thm:mps-lb}]
Our proof follows the main idea given in \cite{soleimanifar2022testing}, but using a slightly modified hard example and an improved analysis (see \cref{lemma-4242204}).

Without loss of generality, we assume that $n$ is even and \(d\geq 2r\).
First, we define the bipartite states:
\begin{equation}\label{eq-4242148}
\begin{split}
\ket{\psi}_{\textup{AB}}&\coloneqq\sqrt{1-\theta}\ket{0}_\TA\ket{0}_\TB+\sum_{i=1}^{r-1} \sqrt{\frac{\theta}{r-1}}\ket{i}_\TA\ket{i}_\TB\,\in\,\HAB,\\
\ket{\phi}_{\textup{AB}}&\coloneqq\sqrt{1-\theta}\ket{0}_\TA\ket{0}_\TB+\sum_{i=1}^{d-1}\sqrt{\frac{\theta}{d-1}}\ket{i}_\TA\ket{i}_\TB\,\in\,\HAB,
\end{split}
\end{equation}
where \(\theta\coloneqq 8\varepsilon^2/n\).
Then, we define the \(n\)-partite states:
\begin{align}
\ket{\Psi}&=\ket{\psi}_\TAB^{\otimes n/2}\, \in \,\mathcal{H}_{\textup{A}_1\textup{B}_1\ldots \textup{A}_{n/2}\textup{B}_{n/2}},\\
\ket{\Phi}&=\ket{\phi}_\TAB^{\otimes n/2}\, \in\, \mathcal{H}_{\textup{A}_1\textup{B}_1\ldots \textup{A}_{n/2}\textup{B}_{n/2}}.
\end{align}
By the Eckart-Young theorem (\cref{lemma:Eckart-Young}), we have (note that $\textsc{SchmidtRank}^r = \textsc{MPS}^{2,r}$)
\begin{equation}
\max_{\ket{\gamma}_{\textup{AB}} \in \textsc{MPS}^{r,2}} \abs*{\braket{\gamma}{\phi}_{\textup{AB}}}^2 = 1-\theta+\frac{r-1}{d-1}\theta\leq 1-\frac{\theta}{2},
\end{equation}
where the inequality is because \(d\geq 2r\). 
Then, we have
\begin{equation}
\max_{\ket{\Gamma} \in \textsc{MPS}^{r,n}} \abs*{\braket{\Gamma}{\Phi}}^2 = \rbra*{1-\theta+\frac{r-1}{d-1}\theta}^{n/2} \leq \left(1-\frac{\theta}{2}\right)^{n/2}=\left(1-\frac{4\varepsilon^2}{n}\right)^{n/2}\leq 1-\varepsilon^2,
\end{equation}
where the first equality is by \cref{lemma:gEYthm}, the last inequality is because \((1-x)^n\leq 1-xn/2\) for \(0\leq x\leq 1/n\).
Then, we can conclude that the trace distance between \(\ket{\Phi}\) and any state in \(\textsc{MPS}^{r,n}\) is no less than \(\varepsilon\), which means \(\ket{\Phi}\in\textsc{MPS}^{r,n,\geq \varepsilon}\). 
On the other hand, it is easy to see that \(\ket{\Psi}\in \textsc{MPS}^{r,n}\). 

Note that for any local unitaries \(U_i,V_i\) acting on \(\mathcal{H}_{\textup{A}_i},\mathcal{H}_{\textup{B}_i}\), respectively, and for any \(\ket{\Gamma}\in\textsc{MPS}^{r,n}\), we have \((U_i\otimes V_i)\ket{\Gamma}\in \textsc{MPS}^{r,n}\) (and the same also holds for \(\textsc{MPS}^{r,n,\geq\varepsilon}\)). Therefore, any tester for \(\textsc{MPS}^{r,n}_{\varepsilon}\) with sample complexity \(N\) is able to distinguish between the following states with success probability \(\geq 2/3\):
\begin{align}
\rho\coloneqq\int_{U_i\in G,V_i\in\mathbb{U}_d}\Big((U_1\otimes V_1)\otimes\cdots\otimes (U_{n/2}\otimes V_{n/2} ) \ketbra{\Psi}{\Psi}(U^\dag_1\otimes V^\dag_1)\otimes\cdots\otimes (U^\dag_{n/2}\otimes V^\dag_{n/2}) \Big)^{\otimes N},\\
\sigma\coloneqq\int_{U_i\in G,V_i\in\mathbb{U}_d}\Big((U_1\otimes V_1)\otimes\cdots\otimes (U_{n/2}\otimes V_{n/2} ) \ketbra{\Phi}{\Phi}(U^\dag_1\otimes V^\dag_1)\otimes\cdots\otimes (U^\dag_{n/2}\otimes V^\dag_{n/2}) \Big)^{\otimes N},
\end{align}
where 
\begin{equation}\label{eq-510115}
G\coloneqq \set{U\in\mathbb{U}_d}{U\ket{0}=\ket{0}},
\end{equation}
and \(U_i\) and \(V_i\) act on \(\mathcal{H}_{\textup{A}_i}\) and \(\mathcal{H}_{\textup{B}_i}\), respectively, and are integrated over the Haar measures over \(G\) and over \(\mathbb{U}_d\), respectively. This means \(d_{\tr}(\rho,\sigma)\geq 1/3\). Note that \(\rho,\sigma\) can be written as \(\rho=\rho_0^{\otimes n/2}\), \(\sigma=\sigma_0^{\otimes n/2}\), where
\begin{equation}\label{eq-4242147}
\begin{split}
\rho_0&\coloneqq \int_{U\in G,V\in \mathbb{U}_d} \Big((U_\TA\otimes V_\TB) \ketbra{\psi}{\psi}_\TAB(U_\TA^\dag\otimes V_\TB^\dag)\Big)^{\otimes N},\\
\sigma_0&\coloneqq \int_{U\in G,V\in \mathbb{U}_d} \Big((U_\TA\otimes V_\TB)\ketbra{\phi}{\phi}_\TAB (U^\dag_\TA\otimes V_\TB^\dag)\Big)^{\otimes N}.
\end{split}
\end{equation}
This implies \(d_{\tr}(\rho_0^{\otimes n/2},\sigma_0^{\otimes n/2})\geq 1/3\). 
On the other hand,
\begin{align}
1/3 &\leq d_{\tr}(\rho_0^{\otimes n/2},\sigma_0^{\otimes n/2})\\
&\leq \sqrt{1-\mathrm{F}(\rho_0^{\otimes n/2},\sigma_0^{\otimes n/2})^2} \label{eq-04252359}\\
&=\sqrt{1-\mathrm{F}(\rho_0,\sigma_0)^{n}}\\
&\leq \sqrt{1-(1-d_{\tr}(\rho_0,\sigma_0))^n}\label{eq-04252351},
\end{align}
where \cref{eq-04252359,eq-04252351} is due to \cref{lemma:fi-td}. 
By \cref{lemma-4242204}, we have \(d_{\tr}(\rho_0,\sigma_0)\leq 4\sqrt{2}N\theta/r \cdot \min\{N\theta, 1\}\), which implies
\begin{align}
    \frac{8}{9} 
    & \geq \rbra*{1 - d_{\tr}(\rho_0,\sigma_0)}^n \\
    & \geq 1 - n d_{\tr}(\rho_0,\sigma_0) \\
    & \geq 1 - \min\left\{256\sqrt{2} \cdot \frac{N^2\varepsilon^4}{nr},32\sqrt{2}\cdot\frac{N\varepsilon^2}{r}\right\},
\end{align}
where we have used \(\theta= 8\varepsilon^2/n\). Therefore,
\begin{equation}
    N \geq \max\left\{\frac{\sqrt{nr}}{3\cdot2^{17/4}\cdot\varepsilon^{2}},\frac{r}{3^2\cdot 2^{11/2}\cdot \varepsilon^2}\right\}=\Omega\left(\frac{\sqrt{nr}+r}{\varepsilon^2}\right).
\end{equation}
\end{proof}

In the proof of \cref{thm:mps-lb}, we require the following lemma.

\begin{lemma}\label{lemma-4242204}
Let \(\rho_0,\sigma_0\) be defined by \cref{eq-4242147}, in which \(\ket{\psi}_\TAB,\ket{\phi}_\TAB\) are defined by \cref{eq-4242148}. Then, \(d_{\tr}(\rho_0,\sigma_0)\leq 4\sqrt{2}N\theta/r \cdot \min\{N\theta, 1\}\).
\end{lemma}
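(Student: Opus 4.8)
The plan is to strip off the Bob‑side randomization, reducing the bipartite trace distance to one between reduced states on Alice, and then exploit the special form of the hard instances in \cref{eq-4242148}.

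\emph{Step 1: reduce to reduced states.} Since the $V$ in \cref{eq-4242147} is Haar‑distributed over all of $\mathbb{U}_d$, the distributions from which $\rho_0,\sigma_0$ are built are unitarily invariant on $\HB$, and their supports are disjoint (the Schmidt spectra of $\ket{\psi}_{\textup{AB}}$ and $\ket{\phi}_{\textup{AB}}$ differ, as $r<d$). Hence \cref{corollary-4241439} applies and gives $d_{\tr}(\rho_0,\sigma_0)=d_{\tr}(\tr_{\textup{B}}[\rho_0],\tr_{\textup{B}}[\sigma_0])$. Because partial trace is invariant under conjugation by unitaries on the traced‑out factor, the Bob‑side average disappears, leaving $\tr_{\textup{B}}[\rho_0]=\int_{U\in G}U_{\textup{A}}^{\otimes N}\tau_\psi^{\otimes N}U_{\textup{A}}^{\dagger\otimes N}$ and likewise for $\sigma_0$ with $\tau_\phi$, where $\tau_\psi:=\tr_{\textup{B}}\ketbra{\psi}{\psi}_{\textup{AB}}=(1-\theta)\ketbra{0}{0}+\tfrac{\theta}{r-1}\sum_{i=1}^{r-1}\ketbra{i}{i}$ and $\tau_\phi:=\tr_{\textup{B}}\ketbra{\phi}{\phi}_{\textup{AB}}=(1-\theta)\ketbra{0}{0}+\tfrac{\theta}{d-1}\sum_{i=1}^{d-1}\ketbra{i}{i}$. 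The point of using the stabilizer $G$ of $\ket{0}$ is that $\tau_\phi$ is already $G$‑invariant, so $\tr_{\textup{B}}[\sigma_0]=\tau_\phi^{\otimes N}$.

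\emph{Step 2: decompose along $\ket{0}$.} Split $\mathbb{C}^d=\mathbb{C}\ket{0}\oplus\ket{0}^\perp$ in each tensor slot; both $\tau_\psi,\tau_\phi$ are block‑diagonal for this split, and $G$ acts trivially on $\mathbb{C}\ket{0}$ and as the full Haar measure of $\mathbb{U}_{d-1}$ on $\ket{0}^\perp$. Conditioning on the set of the $N$ slots landing in $\ket{0}^\perp$ (of size $k$ with probability $\binom{N}{k}(1-\theta)^{N-k}\theta^k$) and using that distinct such sets yield blocks on mutually orthogonal subspaces, I obtain
\[
d_{\tr}(\tr_{\textup{B}}[\rho_0],\tr_{\textup{B}}[\sigma_0])=\mathbb{E}_{k\sim\mathrm{Bin}(N,\theta)}\bigl[g(k)\bigr],\qquad g(k):=d_{\tr}\Bigl(\textstyle\int_{W\in\mathbb{U}_{d-1}}W^{\otimes k}\omega^{\otimes k}W^{\dagger\otimes k},\ \bigl(\tfrac{I_{d-1}}{d-1}\bigr)^{\otimes k}\Bigr),
\]
with $\omega:=\tfrac{1}{r-1}\sum_{i=1}^{r-1}\ketbra{i}{i}$ the maximally mixed state on an $(r-1)$‑dimensional subspace of $\ket{0}^\perp$. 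Note $g(0)=0$ trivially and $g(1)=0$ since $\int_W W\omega W^\dagger=\tfrac{I_{d-1}}{d-1}$.

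\emph{Step 3: bound $g(k)$.} I claim $g(k)\le 4\sqrt{2}\,k/r$ for every $k\ge 1$. When $k\ge r/(4\sqrt2)$ this is immediate from $g(k)\le 1$. For $1\le k<r/(4\sqrt2)$ (so in particular $k<r-1$), write $\bar\omega^{(k)}:=\int_W W^{\otimes k}\omega^{\otimes k}W^{\dagger\otimes k}$ and use the Hilbert--Schmidt estimate $d_{\tr}(\alpha,I/D)\le\tfrac12\sqrt{D\,\tr[\alpha^2]-1}$ with $D=(d-1)^k$. Here $\tr[(\bar\omega^{(k)})^2]=\mathbb{E}_{S,S'}\bigl[(\tr[\omega_S\omega_{S'}])^k\bigr]$ for independent Haar‑random $(r-1)$‑dimensional subspaces $S,S'$ with $\omega_S=\Pi_S/(r-1)$, and $\tr[\omega_S\omega_{S'}]=T/(r-1)^2$ with $T=\tr[\Pi_S\Pi_{S'}]$ of mean $(r-1)^2/(d-1)$; setting $Z:=(d-1)T/(r-1)^2$ (so $\mathbb{E}[Z]=1$) this is $(d-1)^k\tr[(\bar\omega^{(k)})^2]=\mathbb{E}[Z^k]$, hence $g(k)\le\tfrac12\sqrt{\mathbb{E}[Z^k]-1}$. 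The crux is the moment bound $\mathbb{E}[Z^k]\le 1+k^2/(r-1)^2$, which I would prove by computing the moments of $T$ — the squared Frobenius norm of an $(r-1)\times(r-1)$ corner of a Haar unitary in $\mathbb{U}_{d-1}$, i.e.\ moments of a truncated‑unitary / Jacobi ensemble — or equivalently by a second‑moment estimate of the likelihood ratio $\prod_{(i,j)\in\lambda}\tfrac{1+c_{ij}/(r-1)}{1+c_{ij}/(d-1)}$ between the Schur--Weyl distributions $\mathrm{SW}_{r-1,k}$ and $\mathrm{SW}_{d-1,k}$ via the hook--content formula (using $d\ge 2r$ and $k\lesssim r$ to control the finite‑$d$ corrections); this yields $g(k)\le k/(2(r-1))\le 4\sqrt2\,k/r$. \textbf{This moment bound is the main obstacle:} the naive worst‑case bound on the likelihood ratio is exponentially too weak, so one genuinely has to evaluate a $k$‑th moment and track the content sums $\sum_{(i,j)\in\lambda}c_{ij}$.

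\emph{Step 4: assemble.} Using $g(0)=g(1)=0$ together with $g(k)\le 4\sqrt2\,k/r$,
\[
d_{\tr}(\rho_0,\sigma_0)=\mathbb{E}_{k\sim\mathrm{Bin}(N,\theta)}[g(k)]\le\frac{4\sqrt2}{r}\sum_{k\ge 2}\Pr[\mathrm{Bin}(N,\theta)=k]\cdot k=\frac{4\sqrt2}{r}\bigl(N\theta-N\theta(1-\theta)^{N-1}\bigr),
\]
and since $1-(1-\theta)^{N-1}\le\min\{(N-1)\theta,1\}\le\min\{N\theta,1\}$ by Bernoulli's inequality, the right‑hand side is at most $\tfrac{4\sqrt2}{r}\,N\theta\cdot\min\{N\theta,1\}$, which is exactly the claimed bound.
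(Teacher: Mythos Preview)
Your overall architecture matches the paper's proof exactly: Step~1 is precisely the paper's reduction via \cref{corollary-4241439}, Step~2 is the binomial expansion along $\ket{0}$ versus $\ket{0}^\perp$ (the paper writes it as an inequality in \cref{eq-04260035}, but your equality is correct and slightly cleaner), and Step~4 is the same binomial tail computation as the paper's final chain of inequalities.

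The only substantive difference is Step~3. Your target bound $g(k)\le 4\sqrt{2}\,k/r$ is exactly what the paper obtains, but the paper does not re-derive it: it simply invokes \cref{lemma-04261111}, which is quoted from \cite[Equation~(38)]{CHW07} and gives
\[
g(k)=d_{\tr}\!\left(\int_{U\in G}(U\tau_rU^\dag)^{\otimes k},\int_{U\in G}(U\tau_dU^\dag)^{\otimes k}\right)\le \sqrt{2}\left(\frac{k}{r-1}+\frac{k}{d-1}\right)\le \frac{4\sqrt{2}\,k}{r},
\]
using $d\ge r\ge 2$. So the step you flag as ``the main obstacle'' is, in the paper, a one-line citation. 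Your Hilbert--Schmidt/moment approach is in fact essentially how \cite{CHW07} proves this bound (via the $\chi^2$ distance between Schur--Weyl distributions), so your sketch is on the right track, but there is no need to redo that calculation here: just cite \cref{lemma-04261111}. As written, your Step~3 is incomplete because the moment inequality $\mathbb{E}[Z^k]\le 1+k^2/(r-1)^2$ is asserted rather than proved, and neither of the two routes you propose (Jacobi-ensemble moments or hook--content likelihood ratios) is carried out.
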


To show \cref{lemma-4242204}, we need the following lemma.

\begin{lemma} [Adapted from {\cite[Equation (38)]{CHW07}}] \label{lemma-04261111}
    Let $\tau_r$ and $\tau_d$ defined by 
\begin{equation} \label{eq:def-tau-r-d}
    \tau_r\coloneqq\sum_{i=1}^{r-1}\frac{1}{r-1}\ketbra{i}{i}, \qquad \qquad \tau_d\coloneqq\sum_{i=1}^{d-1}\frac{1}{d-1}\ketbra{i}{i}.
\end{equation}
    Then, for $N \geq 1$, we have
    \begin{equation}
        d_{\tr}\rbra*{\int_{U\in G}\rbra*{U\tau_r U^\dag}^{\otimes N},\int_{U\in G}\rbra*{U\tau_d U^\dag}^{\otimes N}} \leq \sqrt{2} \rbra*{\frac{N}{r-1} + \frac{N}{d-1}},
    \end{equation}
    where \(G\coloneqq \set{U\in \mathbb{U}_d}{U\ket{0}=\ket{0}}\).
\end{lemma}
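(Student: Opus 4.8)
The plan is to twirl away as much symmetry as possible and then reduce to a classical $\chi^2$-divergence estimate for Schur-sampling distributions. Write $\bar\rho \coloneqq \int_{U\in G}\rbra{U\tau_r U^\dagger}^{\otimes N}$ and $\bar\sigma \coloneqq \int_{U\in G}\rbra{U\tau_d U^\dagger}^{\otimes N}$. Since $G = \set{U\in\mathbb{U}_d}{U\ket{0} = \ket{0}}$ acts as the full unitary group on $\mathcal{H}'\coloneqq\spanspace\cbra{\ket{1},\dots,\ket{d-1}}$ and $\tau_r,\tau_d$ are supported on $\mathcal{H}'$, both $\bar\rho$ and $\bar\sigma$ live on $\mathcal{H}'^{\otimes N}$, commute with $\mathbb{S}_N\times\mathbb{U}_{d-1}$ there, and $\bar\sigma = \tau_d^{\otimes N} = \rbra{I_{\mathcal{H}'}/(d-1)}^{\otimes N}$ is the maximally mixed state of rank $(d-1)^N$. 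By Schur--Weyl duality (\cref{prop-432239} applied to $\mathbb{C}^{d-1}$), each of $\bar\rho,\bar\sigma$ acts as a scalar on each isotypic component, and tracing against the isotypic projectors $\Pi_\lambda$ gives $\bar\rho - \bar\sigma = \sum_{\lambda\vdash N}\frac{P_{r-1}(\lambda) - P_{d-1}(\lambda)}{\dim(\mathcal{V}_\lambda)\dim(\mathcal{W}^{(d-1)}_\lambda)}\Pi_\lambda$, where $\mathcal{W}^{(k)}_\lambda$ denotes the irreducible $\mathbb{U}_k$-representation labelled by $\lambda$ and $P_k(\lambda)$ the probability that weak Schur sampling on $N$ copies of the maximally mixed state on a $k$-dimensional space returns $\lambda$. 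Using $\Abs{X}_1 \le \sqrt{\rank(X)}\,\Abs{X}_2$ with $\rank(\bar\rho-\bar\sigma)\le(d-1)^N$, the orthogonality of the $\Pi_\lambda$, and $\dim(\mathcal{V}_\lambda)\dim(\mathcal{W}^{(d-1)}_\lambda) = (d-1)^N P_{d-1}(\lambda)$, one gets
\begin{equation}
    d_{\tr}\rbra{\bar\rho,\bar\sigma} \;\le\; \tfrac12\sqrt{(d-1)^N}\cdot\Abs{\bar\rho - \bar\sigma}_2 \;=\; \tfrac12\sqrt{\,\sum_{\lambda\vdash N}\frac{\rbra{P_{r-1}(\lambda) - P_{d-1}(\lambda)}^2}{P_{d-1}(\lambda)}\,} \;=\; \tfrac12\sqrt{\chi^2\rbra{P_{r-1}\,\|\,P_{d-1}}},
\end{equation}
which is just the textbook inequality $d_{\mathrm{TV}}\le\tfrac12\sqrt{\chi^2}$ for these two distributions; so it suffices to prove $\chi^2(P_{r-1}\|P_{d-1}) \le 8N^2\rbra{\tfrac1{r-1}+\tfrac1{d-1}}^2$.

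For this I would pass to the combinatorial side. By the hook--content formula, $\dim(\mathcal{W}^{(k)}_\lambda) = \tfrac{\dim(\mathcal{V}_\lambda)}{N!}\prod_{(i,j)\in\lambda}(k + c_{ij})$ with $c_{ij}\coloneqq j-i$ the content of box $(i,j)$, hence $P_k(\lambda) = \dim(\mathcal{V}_\lambda)\,s_\lambda(\tfrac1k,\dots,\tfrac1k) = \tfrac{\dim(\mathcal{V}_\lambda)^2}{N!}\prod_{(i,j)\in\lambda}\rbra{1 + \tfrac{c_{ij}}{k}}$. Substituting and cancelling, the $\chi^2$-divergence becomes a pure Plancherel expectation:
\begin{equation}
    \chi^2\rbra{P_{r-1}\,\|\,P_{d-1}} + 1 \;=\; \sum_{\lambda\vdash N}\frac{\dim(\mathcal{V}_\lambda)^2}{N!}\prod_{(i,j)\in\lambda}\frac{\rbra{1 + c_{ij}/(r-1)}^2}{1 + c_{ij}/(d-1)} \;=\; \mathop{\mathbb{E}}_{\lambda}\sbra*{\,\prod_{(i,j)\in\lambda} g\rbra{c_{ij}}\,},
\end{equation}
where the expectation is over the Plancherel measure $\Pr[\lambda] = \dim(\mathcal{V}_\lambda)^2/N!$ on $\lambda\vdash N$ and $g(c) \coloneqq \rbra{1 + c/(r-1)}^2/\rbra{1 + c/(d-1)}$. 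Writing $\prod g(c_{ij}) = \exp\rbra{\sum_{(i,j)\in\lambda}\ln g(c_{ij})}$ and $\ln g(c) = (2\alpha - \beta)c + \eta(c)$ with $\alpha\coloneqq\tfrac1{r-1}$, $\beta\coloneqq\tfrac1{d-1}$ and $\abs{\eta(c)}\le C\alpha^2 c^2$ on $\abs{\alpha c}<1$, the expectation is governed by the Plancherel moments of the content sum $S(\lambda)\coloneqq\sum_{(i,j)\in\lambda} c_{ij}$. The key inputs are the \emph{exact} identities $\mathop{\mathbb{E}}[S(\lambda)] = 0$ and $\mathop{\mathbb{E}}[S(\lambda)^2] = \binom N2 = \mathop{\mathbb{E}}\big[\sum_{(i,j)\in\lambda} c_{ij}^2\big]$, which hold because $\sum_{(i,j)\in\lambda} c_{ij}$ and $\sum_{(i,j)\in\lambda} c_{ij}^2$ are the scalars by which the central elements $\sum_{a<b}(a\,b)$ and $\binom N2 e + \sum_{\gamma\text{ a }3\text{-cycle}}\gamma$ of $\mathbb{C}[\mathbb{S}_N]$ act on $\mathcal{V}_\lambda$, and the Plancherel average of a central element's eigenvalue is the coefficient of the identity in that element. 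Together with the standard fact that $\lambda_1$ and $\lambda_1'$ are $O(\sqrt N)$ with overwhelming Plancherel probability — so that the linear term $(2\alpha-\beta)S(\lambda)$ is mean-zero and sub-Gaussian with variance proxy $\Theta(\alpha^2N^2)$ and $\sum_{(i,j)\in\lambda}\eta(c_{ij})$ is $O(\alpha^2N^2)$ with light tails — a Cauchy--Schwarz bound on the product of the two exponential factors gives $\mathop{\mathbb{E}}_{\lambda}\big[\prod g(c_{ij})\big] = 1 + O\rbra{(\alpha+\beta)^2N^2}$ whenever $(\alpha+\beta)N$ is bounded, which is the target $\chi^2$-bound; after the square root this yields the stated constant $\sqrt2$ with room to spare, and when $(\alpha+\beta)N$ is large the claimed bound already exceeds $1$ and is vacuous.

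The main obstacle is exactly this last point — bounding $\mathop{\mathbb{E}}_{\lambda}\big[\prod_{(i,j)\in\lambda} g(c_{ij})\big]$ — because a naive pointwise estimate is hopeless: since $\abs{c_{ij}}$ can be as large as $N$, the factor-by-factor bound only yields $\prod g(c_{ij}) \le e^{O(\alpha N\cdot N)}$, far too large. The linear-in-$N$ bound can come only from cancellation, i.e. from the fact that the content sum $S(\lambda)$ has mean $0$ and variance only $\binom N2 = \Theta(N^2)$ rather than $\Theta(N^3)$, reinforced by the concentration of the Plancherel shape; making this quantitative is precisely the content of \cite[Equation~(38)]{CHW07}, which I would invoke. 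Modulo it, the lemma follows by combining the two displayed identities with the trivial inequality $\alpha - \beta \le \alpha + \beta$. (Equivalently, one may compute $\tr(\bar\rho^2) = (r-1)^{-2N}\mathop{\mathbb{E}}_{W\in\mathbb{U}_{d-1}}\big[(\tr(\Pi W\Pi W^\dagger))^N\big]$, with $\Pi$ a rank-$(r-1)$ projector, by Weingarten calculus and reach the same estimate.)
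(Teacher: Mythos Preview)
The paper does not give its own proof of this lemma; it is stated as ``adapted from \cite[Equation (38)]{CHW07}'' and used as a black box in the proof of \cref{lemma-4242204}. Your reconstruction is therefore not competing with any argument in the paper, and what you wrote is correct as far as it goes: the observation that both averaged states live in $(\mathcal{H}')^{\otimes N}$ and are scalar on each Schur--Weyl block, the exact identity $d_{\tr}(\bar\rho,\bar\sigma)=d_{\mathrm{TV}}(P_{r-1},P_{d-1})$ (your rank/Frobenius route gives the same $\tfrac12\sqrt{\chi^2}$ bound as the standard Cauchy--Schwarz inequality for total variation), the hook--content rewriting of $\chi^2+1$ as a Plancherel expectation of $\prod_{(i,j)\in\lambda}g(c_{ij})$, and the moment identities $\mathbb{E}[S(\lambda)]=0$, $\mathbb{E}[S(\lambda)^2]=\binom{N}{2}$ are all right.

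The only substantive step you leave open is exactly the one you flag: turning the heuristic ``mean-zero linear part plus $O(\alpha^2 N^2)$ quadratic correction, with Plancherel concentration on $\lambda_1,\lambda_1'$'' into the quantitative bound $\chi^2\le 8N^2(\alpha+\beta)^2$. You correctly identify this as the content of \cite[Equation (38)]{CHW07} and invoke it, which is precisely what the paper does. So your proposal is aligned with the paper's approach (both defer the hard estimate to \cite{CHW07}), with the added value that you spell out the reduction to the Plancherel $\chi^2$ computation.
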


Then, we can prove \cref{lemma-4242204} as follows.

\begin{proof} [Proof of \cref{lemma-4242204}]
We define a property \(\mathcal{P}=(\mathcal{P}^{\textup{yes}},\mathcal{P}^{\textup{no}})\), where
\begin{equation}
\begin{split}
\mathcal{P}^{\textup{yes}}\coloneqq \set{(U_\TA\otimes V_\TB)\ket{\psi}_\TAB}{U\in G,V\in \mathbb{U}_d},\\
\mathcal{P}^{\textup{no}}\coloneqq \set{(U_\TA\otimes V_\TB)\ket{\phi}_\TAB}{U\in G,V\in\mathbb{U}_d},
\end{split}
\end{equation}
in which \(G=\set{U\in\mathbb{U}_d}{U\ket{0}=\ket{0}}\).
Let \(D^{\textup{yes}}\) be a probability distribution on \(\mathcal{P}^{\textup{yes}}\) that is induced by the joint Haar measure over \(G\) and \(\mathbb{U}_d\), i.e., for any \(S\subseteq \mathcal{P}^{\textup{yes}}\),
\begin{equation}
\Pr_{\ket{\eta}_\TAB\sim D^{\textup{yes}}}\bigl[\ket{\eta}_\TAB\in S\bigr]=\Pr_{(U,V)\sim \textup{Haar}(G\times \mathbb{U}_d)}\bigl[(U_\TA\otimes V_\TB)\ket{\psi}_\TAB\in S\bigr].
\end{equation}
\(D^{\textup{no}}\) is defined similarly. It can be seen that both \(D^{\textup{yes}}\) and \(D^{\textup{no}}\) are unitarily invariant on \(\HB\). Therefore, by \cref{corollary-4241439} (note that \(\rho_0,\sigma_0\) can be written in the form of \cref{eq-4242359}), we can conclude that \(d_{\tr}(\rho_0,\sigma_0)=d_{\tr}(\tr_\TB[\rho_0],\tr_\TB[\sigma_0])\). 

On the other hand, we have
\begin{equation}
    \tr_\TB[\rho_0]=\int_{U\in G} \left(U\tr_\TB\bigl[\ketbra{\psi}{\psi}_\TAB\bigr] U^\dag\right)^{\otimes N},\quad\quad\quad   \tr_\TB[\sigma_0]=\int_{U\in G} \left(U\tr_\TB\bigl[\ketbra{\phi}{\phi}_\TAB\bigr] U^\dag\right)^{\otimes N}.
\end{equation}
Note that for \(U\in G\), we have
\begin{equation}
\begin{split}
U\tr_\TB\bigl[\ketbra{\psi}{\psi}_\TAB\bigr]U^\dag&= (1-\theta) \ketbra{0}{0}+\sum_{i=1}^{r-1}\frac{\theta}{r-1}U\ketbra{i}{i}U^\dag=(1-\theta)\ketbra{0}{0}+\theta U\tau_r U^\dag,\\
U\tr_\TB\bigl[\ketbra{\phi}{\phi}_\TAB\bigr]U^\dag&= (1-\theta) \ketbra{0}{0}+\sum_{i=1}^{d-1}\frac{\theta}{d-1}U\ketbra{i}{i}U^\dag=(1-\theta)\ketbra{0}{0}+\theta U\tau_d U^\dag,
\end{split}
\end{equation}
where $\tau_r$ and $\tau_d$ are defined by \cref{eq:def-tau-r-d}. 
Therefore, 
\begin{align}
d_{\tr}(\tr_\TB[\rho_0], \tr_\TB[\sigma_0])&=d_{\tr}\sbra*{\int_{U\in G} \rbra*{(1-\theta)\ketbra{0}{0}+\theta  U\tau_r U^\dag }^{\otimes N}, \int_{U\in G}\rbra*{(1-\theta)\ketbra{0}{0}+\theta  U\tau_d U^\dag }^{\otimes N}}\\
&\leq \sum_{i=0}^{N} \binom{N}{i} (1-\theta)^i \theta^{N-i}\cdot d_{\tr}\!\left[\begin{matrix}\ketbra{0}{0}^{\otimes i}\\\displaystyle\int_{U\in G}\left(U\tau_r U^\dag\right)^{\otimes N-i}\end{matrix},\begin{matrix}\ketbra{0}{0}^{\otimes i}\\\displaystyle\int_{U\in G}\left(U\tau_d U^\dag\right)^{\otimes N-i}\end{matrix}\right] \label{eq-04260035}\\
&=\sum_{i=0}^{N} \binom{N}{i} (1-\theta)^i \theta^{N-i}\cdot d_{\tr}\!\left[\int_{U\in G}\left(U\tau_r U^\dag\right)^{\otimes N-i},\int_{U\in G}\left(U\tau_d U^\dag\right)^{\otimes N-i}\right], \label{eq-04260037}
\end{align}
where \cref{eq-04260035} is by the binomial expansion of tensor product and the triangle inequality of trace distance.
In the summation of \cref{eq-04260037}, the addends for $i = N$ and $i = N-1$ equal $0$ because 
\begin{equation}
    \int_{U\in G} U \tau_r U^\dag = \tau_d=\int_{U\in G}U\tau_d U^\dag.
\end{equation}
Therefore, 
\begin{equation} 
    \eqref{eq-04260037} = \sum_{i=0}^{N-2} \binom{N}{i} (1-\theta)^i \theta^{N-i}\cdot d_{\tr}\!\left[\int_{U\in G}\left(U\tau_r U^\dag\right)^{\otimes N-i},\int_{U\in G}\left(U\tau_d U^\dag\right)^{\otimes N-i}\right]\label{eq-425057}.
\end{equation}
By \cref{lemma-04261111} and noting that $d \geq r\geq 2$, we have
\begin{align}
\eqref{eq-425057}
&\leq \sum_{i=0}^{N-2} \binom{N}{i} (1-\theta)^i \theta^{N-i} \sqrt{2}\rbra*{\frac{N-i}{d-1}+\frac{N-i}{r-1}}\label{eq-425058}\\
& \leq \frac{4\sqrt{2}}{r} \sum_{i=0}^{N-2} \binom{N}{i} (1-\theta)^i \theta^{N-i} \rbra{N-i} \\
& = \frac{4\sqrt{2}N\theta}{r} \sum_{i=0}^{N-2} \binom{N-1}{i}(1-\theta)^i \theta^{N-1-i} \\
& = \frac{4\sqrt{2}N\theta}{r} \sbra*{\sum_{i=0}^{N-1} \binom{N-1}{i} (1-\theta)^i \theta^{N-1-i}- (1-\theta)^{N-1} } \\
& = \frac{4\sqrt{2}N\theta}{r} \sbra*{1 - (1-\theta)^{N-1} } \\
& \leq \frac{4\sqrt{2}N\theta}{r}\min\{N\theta,1\}.
\end{align}
\end{proof}

\subsubsection{Lower bounds for testers with perfect completeness}

\begin{lemma} \label{lemma-427010}
    For $n \geq 2$ and \(r\geq 2\), any tester with perfect completeness for determining whether an $n$-partite pure state is an MPS of bond dimension $r$ or $\varepsilon$-far (in trace distance) requires sample complexity \(\mathsf{S}_1\rbra{\textsc{MPS}^{r,n}_\varepsilon} = \Omega\rbra{r^2/\varepsilon^2}\).
\end{lemma}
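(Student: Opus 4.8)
The plan is to obtain this bound by a reduction from the $n=2$ case, i.e.\ from testing the Schmidt rank with perfect completeness, for which \cref{thm:schmidt-rank} already gives $\mathsf{S}_1\rbra{\textsc{SchmidtRank}^r_\varepsilon}=\Omega\rbra{r^2/\varepsilon^2}$. Work in the Hilbert space $\mathcal{H}_{A_1}\otimes\cdots\otimes\mathcal{H}_{A_n}$ with $\mathcal{H}_{A_1}=\mathcal{H}_{A_2}=\mathbb{C}^d$ (with $d$ taken as large as \cref{thm:schmidt-rank} requires) and $\mathcal{H}_{A_3},\dots,\mathcal{H}_{A_n}$ each of dimension at least $1$ with a fixed unit vector $\ket{0}$, and define the padding map $f\rbra{\ket{\psi}_{A_1A_2}}\coloneqq\ket{\psi}_{A_1A_2}\otimes\ket{0}_{A_3}\otimes\cdots\otimes\ket{0}_{A_n}$. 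I will show that (i) $f$ sends every state of Schmidt rank $\leq r$ to a state in $\textsc{MPS}^{r,n}$, and (ii) $f$ sends every state that is $\varepsilon$-far (in trace distance) from all Schmidt-rank-$\leq r$ states to a state in $\textsc{MPS}^{r,n,\geq\varepsilon}$. Granting (i) and (ii), any $(1,1/3)$-tester for $\textsc{MPS}^{r,n}_\varepsilon$ with sample complexity $N$ yields a $(1,1/3)$-tester for $\textsc{SchmidtRank}^r_\varepsilon$ with the same sample complexity — append $\ket{0}_{A_3\cdots A_n}$ to each of the $N$ copies before running it, which is an isometry and hence preserves both the Löwner bounds on the tester operator and perfect completeness — so $N\geq\mathsf{S}_1\rbra{\textsc{SchmidtRank}^r_\varepsilon}=\Omega\rbra{r^2/\varepsilon^2}$, as claimed.

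For (i): by definition an MPS of bond dimension $r$ with open boundary has its leftmost tensor $A^{(1)}_{i_1}$ of size $1\times r$, and more generally an $n$-partite state lies in $\textsc{MPS}^{r,n}$ iff its Schmidt rank across each consecutive cut $A_1\cdots A_j\,|\,A_{j+1}\cdots A_n$ is at most $r$ (cf.\ \cite{PGVWC07}); for $\ket{\Psi}=f\rbra{\ket{\psi}}$ the cut $A_1\,|\,A_2\cdots A_n$ has Schmidt rank equal to that of $\ket{\psi}_{A_1A_2}$, since $\ket{0}_{A_3\cdots A_n}$ is a product vector, which is $\leq r$ by hypothesis, while every other cut has Schmidt rank $1$. (Equivalently, one can write the MPS tensors down explicitly from a Schmidt decomposition of $\ket{\psi}$.)

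The substance is (ii). Since $d_{\tr}\rbra{\ket{a},\ket{b}}=\sqrt{1-\abs{\braket{a}{b}}^2}$ for pure states, (ii) is equivalent to the identity $\max_{\ket{\Gamma}\in\textsc{MPS}^{r,n}}\abs{\braket{\Gamma}{\Psi}}^2=\max_{\ket{\eta}\in\textsc{SchmidtRank}^r}\abs{\braket{\eta}{\psi}}^2$, where $\ket{\Psi}=f\rbra{\ket{\psi}}$. The inequality $\geq$ is immediate via $\ket{\eta}\mapsto f\rbra{\ket{\eta}}$. For $\leq$, fix a normalized $\ket{\Gamma}\in\textsc{MPS}^{r,n}$; its Schmidt rank across $A_1\,|\,A_2\cdots A_n$ is $\leq r$, so $\ket{\Gamma}=\sum_{k=1}^r\mu_k\ket{a_k}_{A_1}\ket{b_k}_{A_2\cdots A_n}$ with $\cbra{\ket{a_k}}$ orthonormal and $\sum_k\abs{\mu_k}^2=1$. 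Put $\bra{d_k}_{A_2}\coloneqq\bra{b_k}_{A_2\cdots A_n}\rbra{I_{A_2}\otimes\ket{0}_{A_3\cdots A_n}}$, so that $\Abs{\ket{d_k}}\leq 1$, and set $\ket{\xi}\coloneqq\sum_{k=1}^r\mu_k\ket{a_k}_{A_1}\ket{d_k}_{A_2}$. Then $\braket{\Gamma}{\Psi}=\braket{\xi}{\psi}$, the state $\ket{\xi}$ has Schmidt rank $\leq r$ across $A_1\,|\,A_2$, and $\Abs{\ket{\xi}}^2=\sum_k\abs{\mu_k}^2\Abs{\ket{d_k}}^2\leq 1$ by orthonormality of $\cbra{\ket{a_k}}$. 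If $\ket{\xi}=0$ the overlap is $0$; otherwise normalizing to $\ket{\hat\xi}=\ket{\xi}/\Abs{\ket{\xi}}\in\textsc{SchmidtRank}^r$ gives $\abs{\braket{\Gamma}{\Psi}}^2=\Abs{\ket{\xi}}^2\abs{\braket{\hat\xi}{\psi}}^2\leq\max_{\ket{\eta}\in\textsc{SchmidtRank}^r}\abs{\braket{\eta}{\psi}}^2$.

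\textbf{Main obstacle.} The only nonroutine point is (ii): showing that tensoring $\ket{\psi}$ with a fixed product state cannot move it closer, in trace distance, to the set of bond-dimension-$r$ MPS. This is exactly what the contract-with-$\ket{0}$ computation above establishes, and it uses nothing about MPS beyond the fact that an MPS of bond dimension $r$ has Schmidt rank $\leq r$ across its first cut. The remaining ingredients — containment (i), transfer of perfect completeness through the reduction, and invoking \cref{thm:schmidt-rank} — are routine.
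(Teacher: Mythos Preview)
Your proposal is correct and follows essentially the same approach as the paper: reduce to the $n=2$ Schmidt-rank lower bound by padding with $\ket{0}_{A_3\cdots A_n}$, and handle the $\varepsilon$-far case by contracting an arbitrary $\ket{\Gamma}\in\textsc{MPS}^{r,n}$ against $\ket{0}_{A_3\cdots A_n}$ to obtain a subnormalized Schmidt-rank-$\leq r$ state on $A_1A_2$. The only cosmetic difference is that you bound the Schmidt rank of the contracted state via an explicit Schmidt decomposition across the first cut, whereas the paper bounds it via the operator inequality $\tr_2[\ketbra{\phi'}{\phi'}]\sqsubseteq\tr_{2,\ldots,n}[\ketbra{\phi}{\phi}]$; both are equivalent.
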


\begin{proof}
MPS has an alternative characterization: a state \(\ket{\psi}_{1,\ldots, n}\) is an MPS of bond dimension \(r\) if and only if \(\tr_{k+1,\ldots,n}(\ketbra{\psi}{\psi}_{1,\ldots,n})\) has rank at most \(r\) for each \(1\leq k\leq n\) (see, e.g., \cite{PGVWC07,soleimanifar2022testing}). 
Therefore, it is easy to see that \(\textsc{MPS}^{r,2}=\textsc{SchmidtRank}^r\). 
Then, \cref{lemma-427010} is a direct implication of \cref{thm:schmidt-rank} for \(n=2\).
For \(n>2\), by \cref{lemma-416002}, a tester with perfect completeness for \(\textsc{MPS}_{\varepsilon}^{r,n}\) can be used as a tester with perfect completeness for \(\textsc{MPS}_{\varepsilon}^{r,2}\).
Therefore, \cref{lemma-427010} holds for all \(n\).
\end{proof}

\begin{lemma}\label{lemma-416002}
A tester (with perfect completeness) for \(\textsc{MPS}_{\varepsilon}^{r,n}\) can be used as a tester (with perfect completeness) for \(\textsc{MPS}_{\varepsilon}^{r,2}\).
That is, $\mathsf{S}\rbra{\textsc{MPS}_{\varepsilon}^{r,n}} \geq \mathsf{S}\rbra{\textsc{MPS}_{\varepsilon}^{r,2}}$ and $\mathsf{S}_1\rbra{\textsc{MPS}_{\varepsilon}^{r,n}} \geq \mathsf{S}_1\rbra{\textsc{MPS}_{\varepsilon}^{r,2}}$.
\end{lemma}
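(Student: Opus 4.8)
The plan is to reduce bipartite MPS testing to $n$-partite MPS testing by appending a fixed product ancilla. To each bipartite pure state $\ket{\psi}_{\textup{AB}} \in \HAB$ we associate the $n$-partite pure state $\ket{\Psi} \coloneqq \ket{\psi}_{\textup{AB}} \otimes \ket{0}^{\otimes(n-2)}$, where the last $n-2$ systems are fresh, each carrying the fixed basis vector $\ket{0}$. Given $N$ copies of $\ket{\psi}_{\textup{AB}}$, one locally prepares $N$ copies of $\ket{\Psi}$ and runs any given tester for $\textsc{MPS}^{r,n}_\varepsilon$ on them. Since appending ancillas is a local (indeed classical) operation that does not change the number of samples, this reduction preserves the sample complexity together with the completeness and soundness parameters, and in particular it preserves perfect completeness. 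Hence it suffices to check that the map $\ket{\psi}_{\textup{AB}} \mapsto \ket{\Psi}$ sends $\textsc{MPS}^{r,2}$ into $\textsc{MPS}^{r,n}$ and $\textsc{MPS}^{r,2,\geq\varepsilon}$ into $\textsc{MPS}^{r,n,\geq\varepsilon}$; this yields $\mathsf{S}\rbra{\textsc{MPS}^{r,2}_\varepsilon} \leq \mathsf{S}\rbra{\textsc{MPS}^{r,n}_\varepsilon}$ and $\mathsf{S}_1\rbra{\textsc{MPS}^{r,2}_\varepsilon} \leq \mathsf{S}_1\rbra{\textsc{MPS}^{r,n}_\varepsilon}$.

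For the ``yes'' direction I would invoke the rank characterization of MPS already recalled in the proof of \cref{lemma-427010}: $\ket{\Psi} \in \textsc{MPS}^{r,n}$ iff $\tr_{k+1,\ldots,n}\rbra{\ketbra{\Psi}{\Psi}}$ has rank at most $r$ for every $1 \leq k \leq n-1$. For $k = 1$ this rank equals the Schmidt rank of $\ket{\psi}_{\textup{AB}}$, since the appended systems are in a pure product state; for $k \geq 2$ the discarded tail is a product state, so the rank equals $1 \leq r$. Hence $\ket{\Psi} \in \textsc{MPS}^{r,n}$ iff $\ket{\psi}_{\textup{AB}} \in \textsc{SchmidtRank}^r = \textsc{MPS}^{r,2}$.

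For the ``$\varepsilon$-far'' direction the heart of the matter is the overlap identity
\[
\max_{\ket{\Gamma} \in \textsc{MPS}^{r,n}} \abs*{\braket{\Gamma}{\Psi}}^2 \;=\; \max_{\ket{\gamma} \in \textsc{MPS}^{r,2}} \abs*{\braket{\gamma}{\psi}_{\textup{AB}}}^2,
\]
which, together with $d_{\tr}\rbra{\ket{u},\ket{v}} = \sqrt{1-\abs*{\braket{u}{v}}^2}$ for pure states, shows that the trace distance from $\ket{\Psi}$ to $\textsc{MPS}^{r,n}$ equals the trace distance from $\ket{\psi}_{\textup{AB}}$ to $\textsc{MPS}^{r,2}$, so $\varepsilon$-farness is preserved. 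The inequality ``$\geq$'' is immediate, since $\ket{\gamma} \otimes \ket{0}^{\otimes(n-2)} \in \textsc{MPS}^{r,n}$ whenever $\ket{\gamma} \in \textsc{MPS}^{r,2}$. For ``$\leq$'', given $\ket{\Gamma} \in \textsc{MPS}^{r,n}$, let $\ket{\Gamma'}$ be the sub-normalized vector on the first two systems obtained by contracting systems $3,\ldots,n$ of $\ket{\Gamma}$ against $\bra{0}^{\otimes(n-2)}$; then $\Abs*{\ket{\Gamma'}} \leq 1$ and $\braket{\Gamma}{\Psi} = \braket{\Gamma'}{\psi}_{\textup{AB}}$. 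Expanding $\ket{\Gamma}$ in the Schmidt basis $\cbra{\ket{e_j}}_{j=1}^{k}$ of its first-system marginal, which has rank $k \leq r$, shows $\ket{\Gamma'} \in \spanspace\cbra{\ket{e_j}}_j \otimes \HB$, so the Schmidt rank of $\ket{\Gamma'}$ across the two systems is at most $k \leq r$; after normalization $\ket{\Gamma'}$ lies in $\textsc{SchmidtRank}^r = \textsc{MPS}^{r,2}$, and therefore $\abs*{\braket{\Gamma}{\Psi}} = \Abs*{\ket{\Gamma'}} \cdot \abs*{\braket{\widehat{\Gamma'}}{\psi}_{\textup{AB}}} \leq \max_{\ket{\gamma} \in \textsc{MPS}^{r,2}} \abs*{\braket{\gamma}{\psi}_{\textup{AB}}}$ (and when $\ket{\Gamma'} = 0$ the bound is trivial).

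I do not expect a genuine obstacle; the only step needing care is the ``$\leq$'' half of the overlap identity --- ruling out that a truly $n$-partite MPS overlaps $\ket{\Psi}$ better than any bipartite MPS of the same bond dimension --- which the contraction-against-$\bra{0}$ argument resolves using that partial projection cannot increase Schmidt rank.
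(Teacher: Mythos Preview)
Your proposal is correct and follows essentially the same approach as the paper: pad $\ket{\psi}_{1,2}$ to $\ket{\psi}_{1,2}\ket{0}_{3,\ldots,n}$, then for the ``no'' direction contract an arbitrary $\ket{\Gamma}\in\textsc{MPS}^{r,n}$ against $\bra{0}_{3,\ldots,n}$ and argue the resulting bipartite vector has Schmidt rank at most $r$. The only cosmetic difference is that the paper bounds this rank via the operator inequality $\tr_2\bigl[\ketbra{\Gamma'}{\Gamma'}\bigr]\sqsubseteq\tr_{2,\ldots,n}\bigl[\ketbra{\Gamma}{\Gamma}\bigr]$, whereas you expand in the Schmidt basis of the first-site marginal; both are one-line arguments to the same effect.
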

\begin{proof}
Suppose \(\ket{\psi}_{1,2}\) is a state on a bipartite system (labelled by \(1\) and \(2\)). We simply pad \(\ket{\psi}_{1,2}\) with \(\ket{0}_{3,\ldots,n}\) and then use the tester \(\mathcal{T}\) for \(\textsc{MPS}_{\varepsilon}^{r,n}\).
\begin{enumerate}
    \item If \(\ket{\psi}_{1,2}\in \textsc{MPS}^{r,2}\), it is obvious that \(\ket{\psi}_{1,2} \ket{0}_{3,\ldots,n}\in \textsc{MPS}^{r,n}\). Then, \(\mathcal{T}\) accepts \(\ket{\psi}_{1,2}\ket{0}_{3,\ldots,n}\) with probability \(\geq 2/3\) (or \(1\) if \(\mathcal{T}\) has perfect completeness).
    \item If \(\ket{\psi}_{1,2}\in \textsc{MPS}^{r,2,\geq\varepsilon}\), then \(\ket{\psi}_{1,2} \ket{0}_{3,\ldots,n}\in \textsc{MPS}^{r,n,\geq\varepsilon}\). To show this, for any \(\ket{\phi}_{1,\dots,n}\in\textsc{MPS}^{r,n}\), we define \(\ket{\phi'}_{1,2}= \bra{0}_{3,\dots,n} \cdot \ket{\phi}_{1,\dots,n}\), which is a (subnormalized) pure state on the first two subsystems. Note that
    \begin{align}
        \tr_{2}\left[\ketbra{\phi'}{\phi'}_{1,2}\right]
        &=\bra{0}_{3,\ldots,n} \cdot \tr_2\left[\ketbra{\phi}{\phi}_{1,\dots,n}\right] \cdot \ket{0}_{3,\ldots,n} \\
        & \sqsubseteq \sum_{\ket{i}_{3,\ldots,n}}\bra{i}_{3,\ldots,n} \cdot \tr_{2}\left[\ketbra{\phi}{\phi}_{1,\dots,n} \right] \cdot \ket{i}_{3,\ldots,n}\\
        &=\tr_{2,\ldots,n}\left[\ketbra{\phi}{\phi}_{1,\dots,n}\right],
    \end{align}
    which implies that \(\rank(\tr_2[\ketbra{\phi'}{\phi'}_{1,2}])\leq \rank(\tr_{2,\ldots,n}[\ketbra{\phi}{\phi}_{1,\dots,n}])\leq r\). Therefore, \(\frac{\ket{\phi'}_{1,2}}{\Abs{\ket{\phi'}_{1,2}}}\in\textsc{MPS}^{r,2}\), and 
    \begin{equation}
        \abs*{\bra{\phi}_{1,\dots,n}\cdot \ket{\psi}_{1,2}\ket{0}_{3,\ldots,n}}=\abs*{\braket{\phi'}{\psi}_{1,2}}\leq \Abs*{\ket{\phi'}_{1,2}}\sqrt{1-\varepsilon^2}\leq \sqrt{1-\varepsilon^2},
    \end{equation}
    where the first inequality is because \(\frac{\ket{\phi'}_{1,2}}{\Abs{\ket{\phi'}_{1,2}}}\in \textsc{MPS}^{r,2}\) and \(\ket{\psi}\in\textsc{MPS}^{r,2,\geq\varepsilon}\).
    This means that \(d_{\tr}(\ket{\phi}_{1,\dots,n},\ket{\psi}_{1,2}\ket{0}_{3,\ldots,n})\geq \varepsilon\), and thus \(\ket{\psi}_{1,2}\ket{0}_{3,\ldots,n}\in\textsc{MPS}^{r,n,\geq\varepsilon}\). Then, \(\mathcal{T}\) accepts \(\ket{\psi}_{1,2}\ket{0}_{3,\ldots,n}\) with probability \(\leq 1/3\).
\end{enumerate}
\end{proof}

\subsection{Maximal Entanglement}

We consider the problem of testing whether a bipartite pure state $\ket{\psi}_{\textup{AB}}$ is maximally entangled. Formally, we define
\begin{equation}
\textsc{MaximalEntanglement} = \set{\ket{\psi}_{\textup{AB}} \in \HAB}{\ket{\psi}_{\textup{AB}} \textup{ is maximally entangled}}.
\end{equation}
We have the following lower bound for this problem. 
\begin{theorem} \label{thm:maximal-entangle}
    Any tester for determining whether a bipartite pure state is maximally entangled requires sample complexity $\mathsf{S}\rbra{\textsc{MaximalEntanglement}_\varepsilon} = \Omega\rbra{d/\varepsilon^2}$.
\end{theorem}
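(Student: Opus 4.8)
The plan is to mirror the reduction behind \cref{thm:schmidt-rank}: relate testing $\textsc{MaximalEntanglement}$ on bipartite pure states to mixedness testing for $d$-dimensional mixed states, and then combine the matching $\Omega(d/\varepsilon^2)$ lower bound of \cite{OW15} for mixedness with the ``purifications are useless'' statement \cref{corollary:purification-useless}. Writing $\textsc{Mixedness} = \cbra{I_\TA/d} \subseteq \mathcal{D}\rbra{\HA}$ for the singleton property consisting of the maximally mixed state (so that $\textsc{Mixedness}^{\geq\varepsilon} = \set{\rho_\TA \in \mathcal{D}\rbra{\HA}}{d_{\tr}\rbra{\rho_\TA, I_\TA/d} \geq \varepsilon}$), the core step I would establish is the inclusion $\mathsf{Purify}\rbra{\textsc{Mixedness}_\varepsilon} \subseteq \textsc{MaximalEntanglement}_\varepsilon$.

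For the ``yes'' part this is immediate: a bipartite pure state $\ket{\psi}_\TAB$ satisfies $\tr_\TB\rbra{\ketbra{\psi}{\psi}_\TAB} = I_\TA/d$ precisely when $\ket{\psi}_\TAB$ is maximally entangled, so $\mathsf{Purify}\rbra{\textsc{Mixedness}} = \textsc{MaximalEntanglement}$. For the ``no'' part I would invoke contractivity of the partial trace under trace distance: for any $\ket{\psi}_\TAB$ with reduced state $\rho_\TA = \tr_\TB\rbra{\ketbra{\psi}{\psi}_\TAB}$ and any maximally entangled $\ket{\eta}_\TAB$ we have $\tr_\TB\rbra{\ketbra{\eta}{\eta}_\TAB} = I_\TA/d$, hence $d_{\tr}\rbra{\rho_\TA, I_\TA/d} \leq d_{\tr}\rbra{\ketbra{\psi}{\psi}_\TAB, \ketbra{\eta}{\eta}_\TAB}$; taking the infimum over $\ket{\eta}_\TAB \in \textsc{MaximalEntanglement}$ gives $d_{\tr}\rbra{\rho_\TA, I_\TA/d} \leq d_{\tr}\rbra{\ket{\psi}_\TAB, \textsc{MaximalEntanglement}}$. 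Hence $\rho_\TA \in \textsc{Mixedness}^{\geq\varepsilon}$ forces $\ket{\psi}_\TAB \in \textsc{MaximalEntanglement}^{\geq\varepsilon}$, which is exactly $\mathsf{Purify}\rbra{\textsc{Mixedness}^{\geq\varepsilon}} \subseteq \textsc{MaximalEntanglement}^{\geq\varepsilon}$.

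Once the inclusion is in place the conclusion follows by chaining three facts: \cref{fact:SPSQ} gives $\mathsf{S}\rbra{\textsc{MaximalEntanglement}_\varepsilon} \geq \mathsf{S}\rbra{\mathsf{Purify}\rbra{\textsc{Mixedness}_\varepsilon}}$; \cref{corollary:purification-useless} gives $\mathsf{S}\rbra{\mathsf{Purify}\rbra{\textsc{Mixedness}_\varepsilon}} = \mathsf{S}\rbra{\textsc{Mixedness}_\varepsilon}$; and \cite[Theorem 1.10]{OW15} gives $\mathsf{S}\rbra{\textsc{Mixedness}_\varepsilon} = \Omega\rbra{d/\varepsilon^2}$. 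I do not expect a genuine obstacle: the only point needing care is that the reduction preserves the distance parameter $\varepsilon$ (the ``no''-inclusion), which the one-line contractivity estimate settles, and one should check \cref{corollary:purification-useless} is invoked in the default $\rbra{2/3,1/3}$ regime, which is fine since it holds with any soundness and completeness. If a self-contained distance computation were preferred over contractivity, one could instead note that $\max_{\ket{\eta} \in \textsc{MaximalEntanglement}} \abs*{\braket{\eta}{\psi}_\TAB}^2 = \rbra*{\tr\sqrt{\rho_\TA}}^2/d = \mathrm{F}\rbra{\rho_\TA, I_\TA/d}^2$ by von Neumann's trace inequality, so that $d_{\tr}\rbra{\ket{\psi}_\TAB, \textsc{MaximalEntanglement}} = \sqrt{1 - \mathrm{F}\rbra{\rho_\TA, I_\TA/d}^2}$, but this stronger identity is not needed.
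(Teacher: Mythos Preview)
Your proposal is correct and follows essentially the same approach as the paper's proof: both reduce to mixedness testing via the inclusion $\mathsf{Purify}\rbra{\textsc{Mixedness}_\varepsilon} \subseteq \textsc{MaximalEntanglement}_\varepsilon$, establish the ``no'' part through contractivity of trace distance under partial trace, and then chain \cref{fact:SPSQ}, \cref{corollary:purification-useless}, and \cite[Theorem 1.10]{OW15}. The alternative fidelity identity you mention at the end is extra and, as you note, not needed.
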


\begin{proof}
    We relate the bipartite state testing of $\textsc{MaximalEntanglement}$ to the corresponding mixed state testing of $\textsc{Mixedness}$.
    Specifically, let 
    \begin{equation}
        \textsc{Mixedness} = \cbra{I_{\textup{A}}/d} \subseteq \mathcal{D}\rbra{\HA}.
    \end{equation}
    In the following we first show that $\mathsf{Purify}\rbra{\textsc{Mixedness}_{\varepsilon}}\subseteq \textsc{MaximalEntanglement}_\varepsilon$.
    
    \textbf{Reduction}.
    Let $\mathcal{P}=\textsc{Mixedness}$ and
    $\mathcal{Q}=\textsc{MaximalEntanglement}$.
    It suffices to prove $\mathsf{Purify}\rbra{\mathcal{P}}\subseteq\mathcal{Q}$,
    and $\mathsf{Purify}\rbra{\mathcal{P}^{\geq \varepsilon}}\subseteq \mathcal{Q}^{\geq \varepsilon}$.
    \begin{enumerate}
        \item 
            $\mathsf{Purify}\rbra*{\mathcal{P}}\subseteq \mathcal{Q}$
            can be easily seen from the definition of maximally entanglement.
            That is, for any mixed state $\rho_{\textup{A}}\in \mathcal{D}(\HA)$
            and any of its purification $\ket{\psi}_{\textup{AB}}\in \HAB$ with $\tr_{\textup{B}}\rbra*{\ketbra{\psi}{\psi}}=\rho_{\textup{A}}$,
            we have $\rho_{\textup{A}}=I_{\textup{A}}/d$ if and only if $\ket{\psi}_{\textup{AB}} =\sum_{j=0}^{d-1} \frac{1}{\sqrt{d}}\ket{\phi_j}_{\textup{A}}\ket{\gamma_j}_{\textup{B}}$
            for some orthonormal bases $\{\ket{\phi_j}_{\textup{A}}\}$ and $\{\ket{\gamma_j}_{\textup{B}}\}$.
        \item
            To prove $\mathsf{Purify}\rbra{\mathcal{P}^{\geq \varepsilon}}\subseteq \mathcal{Q}^{\geq \varepsilon}$,
            suppose mixed state $\rho_{\textup{A}}\in \mathcal{P}^{\geq \varepsilon}$.
            Then we have $d_{\tr}\rbra{\rho_{\textup{A}},I_{\textup{A}}/d}\geq \varepsilon$.
            Let $\ket{\psi}_{\textup{AB}}\in \HAB$ be a purification of $\rho_{\textup{A}}$ with $\tr_{\textup{B}}\rbra{\ketbra{\psi}{\psi}_{\textup{AB}}}=\rho_{\textup{A}}$.
            For any $\ket{\gamma}_{\textup{AB}}\in \mathcal{Q}$, its reduced density matrix is $\tr_{\textup{B}}\rbra{\ketbra{\gamma}{\gamma}_{\textup{AB}}}=I_{\textup{A}}/d\in \mathcal{P}$.
            By the contractivity of trace distance under trace-preserving quantum operations,
            we have
            \begin{equation}
                d_{\tr}\rbra*{\ket{\psi}_{\textup{AB}},\ket{\gamma}_{\textup{AB}}}\geq d_{\tr}\rbra*{\rho_{\textup{A}},I_{\textup{A}}/d}\geq \varepsilon.
            \end{equation}
            Hence, $\ket{\psi}_{\textup{AB}}\in \mathcal{Q}^{\geq \varepsilon}$, which implies $\mathsf{Purify}\rbra{\mathcal{P}^{\geq \varepsilon}}\subseteq \mathcal{Q}^{\geq \varepsilon}$.
    \end{enumerate}
    From the above, we conclude that $\mathsf{Purify}\rbra{\textsc{Mixedness}_{\varepsilon}} \subseteq  \textsc{MaximalEntanglement}_\varepsilon$.

    \textbf{Lower bound}.
    Because $\mathsf{Purify}\rbra{\textsc{Mixedness}_{\varepsilon}} \subseteq  \textsc{MaximalEntanglement}_\varepsilon$, and by \cref{fact:SPSQ},
    we immediately have $\mathsf{S}( \textsc{MaximalEntanglement}_\varepsilon)\geq\mathsf{S}\rbra{\mathsf{Purify}\rbra{\textsc{Mixedness}_{\varepsilon}}}$.
    We also note that $\mathsf{S}\rbra{\mathsf{Purify}\rbra{\textsc{Mixedness}_{\varepsilon}}}=\mathsf{S}\rbra{\textsc{Mixedness}_{\varepsilon}}$ by \cref{corollary:purification-useless}.
    On the other hand, it was shown in \cite[Theorem 1.10]{OW15} that $\mathsf{S}\rbra{\textsc{Mixedness}_\varepsilon} = \Omega\rbra{d/\varepsilon^2}$. 
    Therefore, we obtain the lower bound $\mathsf{S}\rbra{\textsc{MaximalEntanglement}_\varepsilon} = \Omega\rbra{d/\varepsilon^2}$.
\end{proof}

We also consider the problem of testing whether the $\alpha$-R\'enyi entanglement entropy $\mathrm{E}_\alpha\rbra{\ket{\psi}_{\textup{AB}}}$ of $\ket{\psi}_{\textup{AB}}$ is low ($\leq a$) or high ($\geq b$), which can be understood as a quantitative version of testing maximal 
entanglement. 
Testing entanglement entropy helps verify if a quantum system exhibits entanglement properties like area laws, an important topic in the study of quantum many-body systems. Some systems can be analytically proven to follow an area law (e.g.,~\cite{hastings2007area,ECP10,ALV12,anshu2022area}), while for some others entanglement entropy testers can be used.
In the following, we study the limit of entanglement entropy testers.

Formally, we define $\textsc{EntanglementEntropy}^{\alpha,a, b} = \rbra{\mathcal{Q}^{\leq a}, \mathcal{Q}^{\geq b}}$,
where
\begin{equation}
    \mathcal{Q}^{\leq a} = \set{\ket{\psi}_{\textup{AB}} \in \HAB}{\mathrm{E}_\alpha\rbra{\ket{\psi}_{\textup{AB}}} \leq a}, \quad 
    \mathcal{Q}^{\geq b} = \set{\ket{\psi}_{\textup{AB}} \in \HAB}{\mathrm{E}_\alpha\rbra{\ket{\psi}_{\textup{AB}}} \geq b}.
\end{equation}
Here, $\mathrm{E}_\alpha\rbra{\cdot}$ is the $\alpha$-R\'enyi entanglement entropy induced by the $\alpha$-R\'enyi entropy of mixed states 
\begin{equation}
    \label{eq:alpha-renyi}
    \mathrm{S}_\alpha\rbra{\rho} = \frac{1}{1-\alpha} \ln \rbra{\tr\sbra{\rho^{\alpha}}}.
\end{equation}
We have the following lower bound for this problem. 
\begin{theorem} [Entanglement entropy] \label{thm:entanglement-entropy}
    For any constant $\alpha > 0$, any tester for determining whether the $\alpha$-R\'enyi entanglement entropy of a bipartite pure state is low ($\leq a$) or high ($\geq b$) requires sample complexity
    $\mathsf{S}\rbra{\textsc{EntanglementEntropy}^{\alpha,a, b}} = \Omega\rbra{d/\Delta+d^{1/\alpha-1}/\Delta^{1/\alpha}}$, where $\Delta = b - a$. 
\end{theorem}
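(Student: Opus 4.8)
The plan is to follow the same template as the proofs of \cref{thm:schmidt-rank} and \cref{thm:maximal-entangle}: reduce the bipartite pure-state problem to an entropy-testing problem for \emph{mixed} states, and then transfer a known mixed-state lower bound through \cref{corollary:purification-useless}. Here the reduction is in fact cleaner than in those cases, because $\textsc{EntanglementEntropy}^{\alpha,a,b}$ is already a promise problem (low entropy versus high entropy) rather than a ``property versus $\varepsilon$-far'' problem, so no Eckart--Young-type distance estimate is needed and one gets an exact identity of properties.

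First I would introduce the mixed-state property $\textsc{Entropy}^{\alpha,a,b} = \rbra{\mathcal{R}^{\leq a}, \mathcal{R}^{\geq b}}$ on $\mathcal{D}\rbra{\HA}$, with $\mathcal{R}^{\leq a} = \set{\rho_\TA \in \mathcal{D}\rbra{\HA}}{\mathrm{S}_\alpha\rbra{\rho_\TA} \leq a}$ and $\mathcal{R}^{\geq b} = \set{\rho_\TA \in \mathcal{D}\rbra{\HA}}{\mathrm{S}_\alpha\rbra{\rho_\TA} \geq b}$, where $\mathrm{S}_\alpha$ is the $\alpha$-R\'enyi entropy of \cref{eq:alpha-renyi}. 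Since the $\alpha$-R\'enyi entanglement entropy is defined by $\mathrm{E}_\alpha\rbra{\ket{\psi}_\TAB} = \mathrm{S}_\alpha\rbra{\tr_\TB\rbra{\ketbra{\psi}{\psi}_\TAB}}$ (and the choice between $\tr_\TA$ and $\tr_\TB$ is immaterial, as the two reduced states share the same spectrum), a bipartite pure state $\ket{\psi}_\TAB$ satisfies $\mathrm{E}_\alpha\rbra{\ket{\psi}_\TAB} \leq a$ (resp.\ $\geq b$) exactly when $\tr_\TB\rbra{\ketbra{\psi}{\psi}_\TAB} \in \mathcal{R}^{\leq a}$ (resp.\ $\mathcal{R}^{\geq b}$). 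Because $\HA = \HB = \mathbb{C}^d$, every state in $\mathcal{D}\rbra{\HA}$ admits a purification in $\HAB$, so this correspondence gives the \emph{exact} identity of properties $\mathsf{Purify}\rbra{\textsc{Entropy}^{\alpha,a,b}} = \textsc{EntanglementEntropy}^{\alpha,a,b}$ (both the yes-sets and the no-sets coincide, not merely up to inclusion).

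Combining this identity with \cref{corollary:purification-useless} yields $\mathsf{S}\rbra{\textsc{EntanglementEntropy}^{\alpha,a,b}} = \mathsf{S}\rbra{\mathsf{Purify}\rbra{\textsc{Entropy}^{\alpha,a,b}}} = \mathsf{S}\rbra{\textsc{Entropy}^{\alpha,a,b}}$, so the theorem reduces to the mixed-state lower bound $\mathsf{S}\rbra{\textsc{Entropy}^{\alpha,a,b}} = \Omega\rbra{d/\Delta + d^{1/\alpha-1}/\Delta^{1/\alpha}}$, which is known (\cite[Theorem 6.1]{WZ24}, with hard instances of the near-maximally-mixed versus low-rank type as in \cite{OW15}). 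The step requiring the most care --- rather than a genuine obstacle --- is invoking the cited mixed-state bound in precisely the regime and normalization needed: it must hold for every constant $\alpha>0$ (covering both $\alpha<1$, where the term $d^{1/\alpha-1}/\Delta^{1/\alpha}$ dominates, and $\alpha \geq 1$, where $d/\Delta$ dominates), be phrased in terms of the R\'enyi-entropy gap $\Delta = b-a$, and use the same ambient dimension $d$; if the cited statement is given in a slightly different form I would re-derive the bound directly from the standard hard instances before applying the reduction above.
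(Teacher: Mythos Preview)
Your proposal is correct and matches the paper's proof essentially line for line: define the mixed-state property $\textsc{Entropy}^{\alpha,a,b}$, observe the exact identity $\mathsf{Purify}\rbra{\textsc{Entropy}^{\alpha,a,b}} = \textsc{EntanglementEntropy}^{\alpha,a,b}$ via $\mathrm{E}_\alpha\rbra{\ket{\psi}_\TAB} = \mathrm{S}_\alpha\rbra{\tr_\TB\rbra{\ketbra{\psi}{\psi}_\TAB}}$, apply \cref{corollary:purification-useless}, and invoke \cite[Theorem 6.1]{WZ24}. Your remark that this reduction is cleaner than the Schmidt-rank and maximal-entanglement cases (exact equality rather than a containment requiring distance estimates) is exactly right.
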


\begin{proof}
    We relate the bipartite state testing of $\textsc{EntanglementEntropy}^{\alpha,a, b}$
    to the corresponding mixed state testing of $\textsc{Entropy}^{\alpha,a, b}$.
    Specifically, let 
    \begin{equation}
        \textsc{Entropy}^{\alpha, a,b} = \rbra{\mathcal{P}^{\leq a}, \mathcal{P}^{\geq b}},
    \end{equation}
    where 
\begin{equation}
    \mathcal{P}^{\leq a} = \{ \rho_{\textup{A}} \in \mathcal{D}\rbra{\HA} : \mathrm{S}_\alpha\rbra{\rho_{\textup{A}}} \leq a \}, \quad 
    \mathcal{P}^{\geq b} = \{ \rho_{\textup{A}} \in \mathcal{D}\rbra{\HA} : \mathrm{S}_\alpha\rbra{\rho_{\textup{A}}} \geq b \}.
\end{equation}
    In the following we first show that $\mathsf{Purify}\rbra{\textsc{Entropy}^{\alpha, a, b}}= \textsc{EntanglementEntropy}^{\alpha,a, b}$.

    \textbf{Reduction}.
    It suffices to prove $\mathsf{Purify}\rbra{\mathcal{P}^{\leq a}}=\mathcal{Q}^{\leq a}$,
    and $\mathsf{Purify}\rbra{\mathcal{P}^{\geq b}}=\mathcal{Q}^{\geq b}$.
    These can be easily seen from the connection between the $\alpha$-R\'enyi entanglement entropy of bipartite pure states
    and the $\alpha$-R\'enyi entropy of the corresponding reduced mixed states.
    Take the low entropy case for example, 
    for any mixed state $\rho_{\textup{A}}\in \mathcal{D}(\HA)$
    and any of its purification $\ket{\psi}_{\textup{AB}}\in \HAB$ with $\tr_{\textup{B}}\rbra{\ketbra{\psi}{\psi}_{\textup{AB}}}=\rho_{\textup{A}}$,
    we have $\mathrm{S}_\alpha\rbra{\rho}\leq a$ if and only if $\mathrm{E}_\alpha\rbra{\ket{\psi}} \leq a$.
    The same statement holds for the high entropy case.
    
    \textbf{Lower bound}
    Because $\mathsf{Purify}\rbra{\textsc{Entropy}^{\alpha, a, b}}= \textsc{EntanglementEntropy}^{\alpha,a, b}$,
    we immediately have $\mathsf{S}(  \textsc{EntanglementEntropy}^{\alpha,a, b}) = \mathsf{S}\rbra*{\mathsf{Purify}\rbra{\textsc{Entropy}^{\alpha, a, b}}}$.
    We also note that $\mathsf{S}\rbra{\mathsf{Purify}\rbra{\textsc{Entropy}^{\alpha, a, b}}}=\mathsf{S}\rbra{\textsc{Entropy}^{\alpha, a, b}}$ by \cref{corollary:purification-useless}.
    On the other hand, it was shown in \cite[Theorem 6.1]{WZ24} that $\mathsf{S}\rbra{\textsc{Entropy}^{\alpha, a, b}} = \Omega\rbra{d/\Delta+d^{1/\alpha-1}/\Delta^{1/\alpha}}$, where $\Delta = b - a$.
    Therefore, we have $\mathsf{S}\rbra{\textsc{EntanglementEntropy}^{\alpha,a, b}} = \Omega\rbra{d/\Delta+d^{1/\alpha-1}/\Delta^{1/\alpha}}$.
\end{proof}

\subsection{Uniform Schmidt Coefficients}

We consider the problem of testing whether the entanglement spectrum of $\ket{\psi}_{\textup{AB}}$ is uniform on $r$ or $r+\Delta$ non-zero Schmidt coefficients. Formally, we define
$\textsc{UniformSchmidtCoefficients}^{r, \Delta} = \rbra{\mathcal{Q}^{r}, \mathcal{Q}^{r+\Delta}}$, where 
\begin{equation}
    \mathcal{Q}^{s} = \set{\ket{\psi}_{\textup{AB}} \in \HAB}{\ket{\psi}_{\textup{AB}} \textup{ has exactly } s \textup{ non-zero Schmidt coefficients } 1/\sqrt{s}}.
\end{equation}
We have the following lower bound for this problem. 

\begin{theorem} [Uniform Schmidt coefficients] \label{thm:uniform-schmidt}
    Any tester for determining whether the entanglement spectrum of a bipartite pure state is uniform on $r$ or $r + \Delta$ Schmidt coefficients requires sample complexity $\mathsf{S}\rbra{\textsc{UniformSchmidtCoefficients}^{r, \Delta}} = \Omega^*\rbra{r^2/\Delta}$.
\end{theorem}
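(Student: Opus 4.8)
The plan is to mirror the reduction strategy used in the proofs of \cref{thm:schmidt-rank,thm:maximal-entangle,thm:entanglement-entropy}: relate $\textsc{UniformSchmidtCoefficients}^{r,\Delta}$ to the corresponding property of mixed states, and then invoke \cref{corollary:purification-useless} together with the known mixed-state lower bound of \cite{OW15}. Concretely, I would introduce the mixed-state property $\textsc{UniformEigenvalues}^{r,\Delta} = \rbra{\mathcal{P}^r, \mathcal{P}^{r+\Delta}}$, where for $s \leq d$
\begin{equation}
\mathcal{P}^s = \set{\rho_\TA \in \mathcal{D}\rbra{\HA}}{\rho_\TA \textup{ has exactly } s \textup{ non-zero eigenvalues, all equal to } 1/s}.
\end{equation}

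First I would establish the exact identity $\mathsf{Purify}\rbra{\textsc{UniformEigenvalues}^{r,\Delta}} = \textsc{UniformSchmidtCoefficients}^{r,\Delta}$, i.e.\ $\mathsf{Purify}\rbra{\mathcal{P}^s} = \mathcal{Q}^s$ for $s \in \cbra{r, r+\Delta}$. This is immediate from the Schmidt decomposition: writing $\ket{\psi}_{\textup{AB}} = \sum_j \lambda_j \ket{\phi_j}_\TA \ket{\gamma_j}_\TB$, the state $\ket{\psi}_{\textup{AB}}$ has exactly $s$ non-zero Schmidt coefficients all equal to $1/\sqrt{s}$ if and only if $\tr_\TB\rbra{\ketbra{\psi}{\psi}_{\textup{AB}}} = \sum_j \lambda_j^2 \ketbra{\phi_j}{\phi_j}_\TA$ has exactly $s$ non-zero eigenvalues all equal to $1/s$; and since $\HB = \mathbb{C}^d$ with $s \leq d$, every $\rho_\TA \in \mathcal{P}^s$ admits a purification in $\HAB$, so the two sets coincide.

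Combining this identity with \cref{corollary:purification-useless} then gives
\begin{equation}
\mathsf{S}\rbra{\textsc{UniformSchmidtCoefficients}^{r,\Delta}} = \mathsf{S}\rbra{\mathsf{Purify}\rbra{\textsc{UniformEigenvalues}^{r,\Delta}}} = \mathsf{S}\rbra{\textsc{UniformEigenvalues}^{r,\Delta}}.
\end{equation}
Finally, \cite[Theorem 1.12]{OW15} shows $\mathsf{S}\rbra{\textsc{UniformEigenvalues}^{r,\Delta}} = \Omega^*\rbra{r^2/\Delta}$, which yields the claimed bound $\mathsf{S}\rbra{\textsc{UniformSchmidtCoefficients}^{r,\Delta}} = \Omega^*\rbra{r^2/\Delta}$.

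The bulk of the argument is thus a routine transport of a mixed-state lower bound through \cref{corollary:purification-useless}, and no step is genuinely hard; the only point that requires care is the bookkeeping around \cite[Theorem 1.12]{OW15} — verifying that its hard instances (rank-$r$ versus rank-$(r+\Delta)$ uniform spectra) live in $\mathcal{D}\rbra{\mathbb{C}^d}$ for the $d$ under consideration, so that the purification into $\HAB = \mathbb{C}^d \otimes \mathbb{C}^d$ is well-defined, and that the quasi-polylogarithmic factors suppressed by the $\Omega^*\rbra{\cdot}$ in that statement are exactly the ones we inherit here.
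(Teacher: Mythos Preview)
Your proposal is correct and follows essentially the same approach as the paper: define the mixed-state property $\textsc{UniformEigenvalues}^{r,\Delta}$, establish the exact identity $\mathsf{Purify}\rbra{\textsc{UniformEigenvalues}^{r,\Delta}} = \textsc{UniformSchmidtCoefficients}^{r,\Delta}$ via the Schmidt decomposition, invoke \cref{corollary:purification-useless}, and then cite \cite[Theorem 1.12]{OW15}. The only cosmetic difference is that the paper writes $\mathcal{P}^s = \set{\rho_\TA}{\rho_\TA^2 = \rho_\TA/s}$, which is equivalent to your description of having exactly $s$ non-zero eigenvalues all equal to $1/s$.
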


\begin{proof}
    We reduce the bipartite state testing of $\textsc{UniformSchmidtCoefficients}^{r, \Delta}$
    to the corresponding mixed state testing of $\textsc{UniformEigenvalues}^{r, \Delta}$,
    i.e., testing whether the spectrum of a mixed state is uniform on $r$ or $r+\Delta$ eigenvalues.
    Specifically, let 
    \begin{equation}
        \textsc{UniformEigenvalues}^{r, \Delta} = \rbra{\mathcal{P}^r, \mathcal{P}^{r+\Delta}},
    \end{equation}
    where 
    \begin{equation}
        \mathcal{P}^s = \set{\rho_{\textup{A}} \in \mathcal{D}\rbra{\HA}}{\rho_{\textup{A}}^2 = \rho_{\textup{A}}/s}.
    \end{equation}
    We first show that $\mathsf{Purify}\rbra{\textsc{UniformEigenvalues}^{r, \Delta}} = \textsc{UniformSchmidtCoefficients}^{r, \Delta}$.

    \textbf{Reduction}.
    We note that
    \begin{equation}
        \mathcal{Q}^s = \set{\ket{\psi}_{\textup{AB}} \in \HAB}{\ket{\psi}_{\textup{AB}}=\sum_{j=0}^{s-1} \frac{1}{\sqrt{s}}\ket{\phi_j}_{\textup{A}}\ket{\gamma_j}_{\textup{B}}
    \text{ for orthonormal bases $\{\ket{\phi_j}_{\textup{A}}\}$ and $\{\ket{\gamma_j}_{\textup{B}}\}$}}.
    \end{equation}
    It suffices to prove $\mathsf{Purify}\rbra*{\mathcal{P}^{s}}=\mathcal{Q}^{s}$.
    This can be easily seen from the connection between the Schmidt coefficients of bipartite pure states
    and the eigenvalues of the corresponding reduced mixed states.
    Specifically,
    for any mixed state $\rho_{\textup{A}}\in \mathcal{D}(\HA)$
    and any of its purification $\ket{\psi}_{\textup{AB}}\in \HAB$ with $\tr_{\textup{B}}\rbra{\ketbra{\psi}{\psi}_{\textup{AB}}}=\rho_{\textup{A}}$,
    we have $\rho^2_{\textup{A}}=\rho_{\textup{A}}/s$ if and only if $\ket{\psi}_{\textup{AB}}=\sum_{j=0}^{s-1} \frac{1}{\sqrt{s}}\ket{\phi_j}_{\textup{A}}\ket{\gamma_j}_{\textup{B}}$
    for some orthonormal bases $\{\ket{\phi_j}_{\textup{A}}\}$ and $\{\ket{\gamma_j}_{\textup{B}}\}$.
    
    \textbf{Lower bound}. $\mathsf{Purify}\rbra{\textsc{UniformEigenvalues}^{r, \Delta}} = \textsc{UniformSchmidtCoefficients}^{r, \Delta}$ implies $\mathsf{S}(\textsc{UniformSchmidtCoefficients}^{r, \Delta}) = \mathsf{S}\rbra{\mathsf{Purify}\rbra{\textsc{UniformEigenvalues}^{r, \Delta}}}$.
    We also note that $\mathsf{S}\rbra{\mathsf{Purify}\rbra{\textsc{UniformEigenvalues}^{r, \Delta}}}=\mathsf{S}\rbra{\textsc{UniformEigenvalues}^{r, \Delta}}$ by \cref{corollary:purification-useless}.
    On the other hand, it was shown in \cite[Theorem 1.12]{OW15} that $\mathsf{S}\rbra{\textsc{UniformEigenvalues}^{r, \Delta}} = \Omega^*\rbra{r^2/\Delta}$. 
    This yields $\mathsf{S}\rbra{\textsc{UniformSchmidtCoefficients}^{r, \Delta}} = \Omega^*\rbra{r^2/\Delta}$.
\end{proof}

\subsection{Productness} \label{sec:productness}

We consider the problem of testing whether a bipartite pure state $\ket{\psi}_{\textup{AB}}$ is a product state. Formally, we define
\begin{equation}
\textsc{Productness} = \set{\ket{\psi}_{\textup{AB}} \in \HAB}{\ket{\psi}_{\textup{AB}} = \ket{\phi}_\textup{A} \otimes \ket{\gamma}_\textup{B} \textup{ is a product state}}.
\end{equation}
Note that a product state is actually an MPS of bond dimension $r = 1$. 
We have the following lower bound for this problem. 
\begin{theorem} [Productness] \label{thm:product}
    Any tester for determining whether a bipartite pure state is a product state or $\varepsilon$-far (in trace distance) requires sample complexity
    $\mathsf{S}\rbra{\textsc{Productness}_\varepsilon} = \Omega\rbra{1/\varepsilon^2}$.
\end{theorem}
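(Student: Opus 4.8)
The plan is to reuse the reduction template behind \cref{thm:schmidt-rank} and \cref{thm:maximal-entangle}, now reducing from \emph{purity testing} of mixed states. Set
\[
    \textsc{Purity} = \set{\rho_\TA \in \mathcal{D}\rbra{\HA}}{\rho_\TA^2 = \rho_\TA},
\]
the set of pure states on $\HA$. I will show $\mathsf{Purify}\rbra{\textsc{Purity}_{\varepsilon^2}} \subseteq \textsc{Productness}_\varepsilon$ and then conclude via \cref{fact:SPSQ}, \cref{corollary:purification-useless}, and the purity lower bound $\mathsf{S}\rbra{\textsc{Purity}_\delta} = \Omega\rbra{1/\delta}$.

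For the reduction, the ``yes'' inclusion $\mathsf{Purify}\rbra{\textsc{Purity}} \subseteq \textsc{Productness}$ is immediate, since a bipartite pure state $\ket{\psi}_\TAB$ is a product state exactly when its reduced state $\tr_\TB\rbra{\ketbra{\psi}{\psi}_\TAB}$ is pure. For the ``no'' inclusion, take $\rho_\TA \in \textsc{Purity}^{\geq \varepsilon^2}$ with largest eigenvalue $\lambda_1$ and corresponding eigenvector $\ket{v_1}_\TA$; since $\rho_\TA$ is $\varepsilon^2$-far in trace distance from every pure state, in particular $\varepsilon^2 \leq d_{\tr}\rbra{\rho_\TA, \ketbra{v_1}{v_1}_\TA} = 1 - \lambda_1$, so $\lambda_1 \leq 1 - \varepsilon^2$. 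Now for any purification $\ket{\psi}_\TAB$ of $\rho_\TA$, using $\textsc{Productness} = \textsc{SchmidtRank}^1$ and the Eckart--Young theorem (\cref{lemma:Eckart-Young}) with $r = 1$, we get $\max_{\ket{\eta}_\TAB \in \textsc{Productness}} \abs*{\braket{\eta}{\psi}_\TAB}^2 = \lambda_1 \leq 1 - \varepsilon^2$, whence $d_{\tr}\rbra{\ket{\psi}_\TAB, \ket{\eta}_\TAB} = \sqrt{1 - \abs*{\braket{\eta}{\psi}_\TAB}^2} \geq \varepsilon$ for every product state $\ket{\eta}_\TAB$, i.e.\ $\ket{\psi}_\TAB \in \textsc{Productness}^{\geq \varepsilon}$. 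Thus $\mathsf{Purify}\rbra{\textsc{Purity}_{\varepsilon^2}} \subseteq \textsc{Productness}_\varepsilon$, and by \cref{fact:SPSQ} and \cref{corollary:purification-useless}, $\mathsf{S}\rbra{\textsc{Productness}_\varepsilon} \geq \mathsf{S}\rbra{\mathsf{Purify}\rbra{\textsc{Purity}_{\varepsilon^2}}} = \mathsf{S}\rbra{\textsc{Purity}_{\varepsilon^2}}$.

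It then remains to show $\mathsf{S}\rbra{\textsc{Purity}_\delta} = \Omega\rbra{1/\delta}$ (this is \cref{thm:purity}), for which a single qubit instance inside $\HA$ is enough: put $\rho_0 = \ketbra{0}{0}$ and $\rho_1 = (1-\delta)\ketbra{0}{0} + \delta\ketbra{1}{1}$. For any pure $\ket{\sigma}$ one has $d_{\tr}\rbra{\rho_1, \ketbra{\sigma}{\sigma}} \geq 1 - \bra{\sigma}\rho_1\ket{\sigma} \geq \delta$, so $\rho_1 \in \textsc{Purity}^{\geq \delta}$; and $\mathrm{F}\rbra{\rho_0, \rho_1} = \sqrt{1-\delta}$, so $\mathrm{F}\rbra{\rho_0^{\otimes N}, \rho_1^{\otimes N}} = (1-\delta)^{N/2}$. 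Any $\rbra{2/3,1/3}$-tester must distinguish $\rho_0^{\otimes N}$ from $\rho_1^{\otimes N}$, forcing $d_{\tr}\rbra{\rho_0^{\otimes N}, \rho_1^{\otimes N}} \geq 1/3$, so \cref{lemma:fi-td} gives $(1-\delta)^N \leq 8/9$ and hence $N = \Omega\rbra{1/\delta}$. Setting $\delta = \varepsilon^2$ yields $\mathsf{S}\rbra{\textsc{Productness}_\varepsilon} = \Omega\rbra{1/\varepsilon^2}$.

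The only point requiring care is the ``no'' inclusion, namely the identity $d_{\tr}\rbra{\rho_\TA, \ketbra{v_1}{v_1}_\TA} = 1 - \lambda_1$ and applying the trace-distance-to-fidelity conversion $\varepsilon \mapsto \varepsilon^2$ consistently under purification; both are routine and are just the $r = 1$ case of the computation already done for \cref{thm:schmidt-rank}. Productness genuinely needs a separate treatment only because the cited rank lower bound \cite[Theorem 1.11]{OW15}, hence \cref{corollary:mps-intro}, requires $r \geq 2$.
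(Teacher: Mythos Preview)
Your proposal is correct and follows essentially the same approach as the paper: both reduce from purity testing via $\mathsf{Purify}\rbra{\textsc{Purity}_{\varepsilon^2}} \subseteq \textsc{Productness}_\varepsilon$ (the paper observes this is the $r=1$ case of the \cref{thm:schmidt-rank} reduction, while you spell out the Eckart--Young argument directly), then invoke \cref{corollary:purification-useless} and the $\Omega(1/\delta)$ purity lower bound of \cref{thm:purity}, which you reprove with the same qubit instance $\rho_0,\rho_1$ and fidelity calculation. If anything, your verification that $\rho_1 \in \textsc{Purity}^{\geq \delta}$ via $d_{\tr}(\rho_1,\ketbra{\sigma}{\sigma}) \geq 1 - \bra{\sigma}\rho_1\ket{\sigma} \geq \delta$ is slightly more explicit than the paper's appendix.
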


The lower bound for testing productness was mentioned in \cite{soleimanifar2022testing}. Here, we give a proof different from theirs. 

\begin{proof}
    We relate the bipartite state testing of $\textsc{Productness}$ to the corresponding mixed state testing of $\textsc{Purity}$.
    Specifically, let
    \begin{equation}
        \textsc{Purity} = \set{\rho_{\textup{A}} \in \mathcal{D}\rbra{\HA}}{\tr\sbra{\rho_{\textup{A}}^2} = 1}.
    \end{equation}
    In the following we first show that $\mathsf{Purify}\rbra{\textsc{Purity}_{\varepsilon^2}}\subseteq \textsc{Productness}_\varepsilon$.
    
    \textbf{Reduction}.
    It is easy to see that \(\textsc{Purity}=\textsc{Rank}^1\) and \(\textsc{Productness}=\textsc{SchmidtRank}^1\). Then, since $\mathsf{Purify}\rbra{\textsc{Rank}^r_{\varepsilon^2}} \subseteq \textsc{SchmidtRank}^r_\varepsilon$ (as shown in \cref{thm:schmidt-rank}), we can conclude that $\mathsf{Purify}\rbra{\textsc{Purity}_{\varepsilon^2}}\subseteq \textsc{Productness}_\varepsilon$.

    \textbf{Lower bound}.
    Because $\mathsf{Purify}\rbra{\textsc{Purity}_{\varepsilon^2}}\subseteq \textsc{Productness}_\varepsilon$, and by \cref{fact:SPSQ},
    we immediately have $\mathsf{S}(\textsc{Productness}_\varepsilon)\geq\mathsf{S}\rbra{\mathsf{Purify}\rbra{\textsc{Purity}_{\varepsilon^2}}}$.
    We also note that $\mathsf{S}\rbra{\mathsf{Purify}\rbra{\textsc{Purity}_{\varepsilon^2}}}=\mathsf{S}\rbra{\textsc{Purity}_{\varepsilon^2}}$ by \cref{corollary:purification-useless}.
    On the other hand, it is shown in \cref{thm:purity} that $\mathsf{S}\rbra{\textsc{Purity}_{\varepsilon}} = \Omega\rbra{1/\varepsilon}$.
    Therefore, we have $\mathsf{S}\rbra{\textsc{Productness}_\varepsilon} = \Omega\rbra{1/\varepsilon^2}$.
\end{proof}

\section{Quantum Query Lower Bounds}
\label{sec:query-bounds}

Our results for entanglement spectrum testing of bipartite pure states contribute to not only quantum sample complexity but also quantum query complexity. 
In particular, we are able to prove a new lower bound for the quantum query complexity of the entanglement entropy problem studied in \cite{SY23}. 
The problem is defined as follows.

\begin{problem} [Entanglement entropy problem, generalizing {\cite[Definition 1.13]{SY23}}] \label{prob:entanglement-entropy}
    Let $0 < a < b \leq \ln\rbra{d}$ and $\alpha > 0$. 
    Then entanglement entropy problem is to determine whether the $\alpha$-R\'{e}nyi entanglement entropy of a bipartite state $\ket{\psi}_{\textup{AB}} \in \HAB = \HA \otimes \HB = \mathbb{C}^d \otimes \mathbb{C}^d$ is low ($\leq a$) or high ($\geq b$), given query access to the unitary reflection operator $I_{\textup{AB}} - 2 \ketbra{\psi}{\psi}_{\textup{AB}}$. 
    Formally, the promise problem is defined by
    $\textsc{QEntanglementEntropy}^{\alpha,a,b} = \rbra{\mathcal{P}^{\textup{low}}, \mathcal{P}^{\textup{high}}}$, where 
    \begin{itemize}
        \item $\mathcal{P}^{\textup{low}} = \set{I_{\textup{AB}} - 2 \ketbra{\psi}{\psi}_{\textup{AB}}}{\ket{\psi}_{\textup{AB}} \in \HAB \textup{ and } \mathrm{E}_\alpha\rbra{\ket{\psi}_{\textup{AB}}} \leq a }$. 
        \item $\mathcal{P}^{\textup{high}} = \set{I_{\textup{AB}} - 2 \ketbra{\psi}{\psi}_{\textup{AB}}}{\ket{\psi}_{\textup{AB}} \in \HAB \textup{ and } \mathrm{E}_\alpha\rbra{\ket{\psi}_{\textup{AB}}} \geq b }$. 
    \end{itemize}
    Here, $\mathrm{E}_\alpha\rbra{\cdot}$ is the $\alpha$-R\'enyi entanglement entropy induced by the $\alpha$-R\'enyi entropy of mixed states.
\end{problem}

We give a quantum query lower bound for the entanglement entropy problem as follows.

\begin{theorem} [Quantum query lower bounds for the entanglement entropy problem] \label{thm:q-entropy}
    For any constant $\alpha > 0$, any quantum query algorithm for the entanglement entropy problem requires quantum query complexity $\mathsf{Q}\rbra{\textsc{QEntanglementEntropy}^{\alpha,a,b}} = \widetilde \Omega\rbra{\sqrt{d/\Delta}+\sqrt{d^{1/\alpha-1}/\Delta^{1/\alpha}}}$, where $\Delta = b - a$.
\end{theorem}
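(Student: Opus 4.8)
The plan is to derive this query lower bound from the sample lower bound of \cref{thm:entanglement-entropy} (equivalently \cref{corollary:s-maximum-entropy}) by invoking the quantum sample-to-query lifting of \cite{WZ23}. The first step is to observe that \cref{prob:entanglement-entropy} is exactly the reflection-oracle analogue of the sample-based property $\textsc{EntanglementEntropy}^{\alpha,a,b}$ studied in \cref{thm:entanglement-entropy}: a query algorithm for $\textsc{QEntanglementEntropy}^{\alpha,a,b}$ receives oracle access to $I_{\textup{AB}} - 2\ketbra{\psi}{\psi}_{\textup{AB}}$, whereas a tester for $\textsc{EntanglementEntropy}^{\alpha,a,b}$ receives copies of $\ket{\psi}_{\textup{AB}}$, and in both formulations the ``low'' and ``high'' cases are delimited by the same thresholds $\mathrm{E}_\alpha \leq a$ and $\mathrm{E}_\alpha \geq b$. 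Since the reflection operator depends only on the rank-one projector $\ketbra{\psi}{\psi}_{\textup{AB}}$ and is therefore invariant under a global phase of $\ket{\psi}_{\textup{AB}}$, this falls into the class of pure-state (phaseless) property-testing problems to which the lifting theorem applies.

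The second step is to apply the lifting itself: \cite{WZ23} shows that any $T$-query algorithm using such a reflection oracle can be simulated, up to a quasi-polylogarithmic overhead, by a tester using $\widetilde{O}(T^2)$ copies of the state; contrapositively, $\mathsf{Q}\rbra{\textsc{QEntanglementEntropy}^{\alpha,a,b}} = \widetilde\Omega\rbra*{\sqrt{\mathsf{S}\rbra{\textsc{EntanglementEntropy}^{\alpha,a,b}}}}$. Substituting the bound $\mathsf{S}\rbra{\textsc{EntanglementEntropy}^{\alpha,a,b}} = \Omega\rbra{d/\Delta + d^{1/\alpha - 1}/\Delta^{1/\alpha}}$ from \cref{thm:entanglement-entropy}, and using $\sqrt{x+y} = \Theta\rbra{\sqrt{x}+\sqrt{y}}$ for $x,y \geq 0$, then gives
\[
\mathsf{Q}\rbra{\textsc{QEntanglementEntropy}^{\alpha,a,b}} = \widetilde\Omega\rbra*{\sqrt{d/\Delta} + \sqrt{d^{1/\alpha-1}/\Delta^{1/\alpha}}},
\]
which is the claimed bound.

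The main point requiring care — rather than a genuine obstacle — is to verify that the hypotheses of the lifting theorem are met. Concretely, one must confirm that the sample lower bound being lifted holds against arbitrary coherent (multi-copy) testers and for the standard $(2/3,1/3)$ completeness--soundness gap, which it does since \cref{thm:entanglement-entropy} is obtained from \cref{corollary:purification-useless} together with the unconditional mixed-state entropy lower bound of \cite{WZ24}; and that the global-phase invariance noted above is exactly what makes the reflection oracle the appropriate oracle model for the lifting. The quasi-polylogarithmic loss incurred by the query-to-sample simulation is precisely what the $\widetilde\Omega$ in the statement absorbs, so no further bookkeeping on the gap parameters is needed beyond tracking that $\Delta$ appears only polynomially.
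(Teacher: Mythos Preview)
Your proposal is correct and follows essentially the same approach as the paper: invoke the sample lower bound of \cref{thm:entanglement-entropy} and lift it via \cite{WZ23}. The one point the paper makes explicit that you only assert is the oracle compatibility: the lifting theorem (\cref{thm:qlifting}) is stated for the diamond query complexity $\mathsf{Q}_\diamond$ with a block-encoding oracle of $\ketbra{\psi}{\psi}_{\textup{AB}}$, so the paper separately proves (\cref{lemma:equiv-oracles}) that the reflection oracle $I_{\textup{AB}}-2\ketbra{\psi}{\psi}_{\textup{AB}}$ and a $(2,a,0)$-block-encoding of $\ketbra{\psi}{\psi}_{\textup{AB}}$ simulate each other with $O(1)$ queries, giving $\mathsf{Q}\rbra{\textsc{QEntanglementEntropy}^{\alpha,a,b}} = \Theta\rbra{\mathsf{Q}_\diamond\rbra{\textsc{EntanglementEntropy}^{\alpha,a,b}}}$.
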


The proof of \cref{thm:q-entropy} is based on the sample lower bound for $\textsc{EntanglementEntropy}^{\alpha,a,b}$ in \cref{corollary:s-maximum-entropy} via the quantum sample-to-query lifting theorem \cite{WZ23}. 

\subsection{Preparation}

\paragraph{Block-encoding.}

To give the proof, we need the notion of block-encoding~\cite{CGJ19,GSLW19}. 

\begin{definition} [Block-encoding]
    Suppose that $n \geq 1$ and $a \geq 0$ are integers, and $\alpha > 0$ and $\varepsilon \geq 0$ are real numbers. 
    Let $A$ be an $n$-qubit operator. 
    A unitary operator $U$ on Hilbert space $\mathbb{C}^{n} \otimes \mathbb{C}^{a}$ is said to be an $\rbra{\alpha, a, \varepsilon}$-block-encoding of $A$, if 
    \begin{equation}
        \Abs*{\alpha \rbra*{I_n \otimes \bra{0}^{\otimes a}} U \rbra*{I_n \otimes \ket{0}^{\otimes a}} - A} \leq \varepsilon,
    \end{equation}
    where $\Abs{\cdot}$ is the operator norm. 
\end{definition}

We need to scale down the factor in the block-encoded operators, and we use the version from \cite{WZ23}.

\begin{lemma} [Down-scaling of block-encoded operators, {\cite[Corollary 2.6]{WZ23}}] \label{lemma:down-scaling}
    Let $n \geq 1$ and $a \geq 0$. 
    Suppose that $U$ is an $\rbra{n+a}$-qubit unitary operator that is a $\rbra{1, a, \varepsilon}$-block-encoding of $A$.
    Then, for $\alpha > 1$, there is an $\rbra{n+a+1}$-qubit unitary operator $U_{\alpha}$ that is a $\rbra{1, a+1, \varepsilon/\alpha}$-block-encoding of $A/\alpha$, using $1$ query to $U$. 
\end{lemma}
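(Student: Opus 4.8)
The plan is to adjoin one fresh qubit to the $(n+a)$-qubit register on which $U$ acts and to interleave $U$ with a single one-qubit rotation on that new qubit that attenuates the ``success'' branch by a factor $\alpha$. Since $\alpha > 1$, I take the real one-qubit unitary $R$ determined by
\begin{equation}
    R\ket{0} = \tfrac{1}{\alpha}\ket{0} + \tfrac{\sqrt{\alpha^2-1}}{\alpha}\ket{1}, \qquad R\ket{1} = -\tfrac{\sqrt{\alpha^2-1}}{\alpha}\ket{0} + \tfrac{1}{\alpha}\ket{1},
\end{equation}
and I let $\textup{c}U \coloneqq \ketbra{0}{0}\otimes U + \ketbra{1}{1}\otimes I_{n+a}$ be $U$ controlled on the new qubit being $\ket{0}$, which costs a single (controlled) query to $U$. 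The construction is then the $(n+a+1)$-qubit unitary
\begin{equation}
    U_\alpha \coloneqq \textup{c}U \cdot \rbra*{R \otimes I_{n+a}},
\end{equation}
whose $a+1$ flag qubits are declared to be the new qubit together with the $a$ flag qubits of $U$.

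Next I would verify the block-encoding property by a one-line computation. For any $n$-qubit state $\ket{\varphi}$,
\begin{equation}
    \rbra*{R\otimes I_{n+a}}\rbra*{\ket{0}\ket{0}^{\otimes a}\ket{\varphi}} = \tfrac{1}{\alpha}\ket{0}\ket{0}^{\otimes a}\ket{\varphi} + \tfrac{\sqrt{\alpha^2-1}}{\alpha}\ket{1}\ket{0}^{\otimes a}\ket{\varphi},
\end{equation}
and applying $\textup{c}U$ and then projecting with $\bra{0}\bra{0}^{\otimes a}\otimes I_n$ annihilates the $\ket{1}$-branch, leaving $\tfrac{1}{\alpha}\rbra*{\bra{0}^{\otimes a}\otimes I_n} U \rbra*{\ket{0}^{\otimes a}\otimes I_n}\ket{\varphi}$. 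Writing $\widetilde{A} \coloneqq \rbra*{\bra{0}^{\otimes a}\otimes I_n} U \rbra*{\ket{0}^{\otimes a}\otimes I_n}$, the hypothesis that $U$ is a $(1,a,\varepsilon)$-block-encoding of $A$ is exactly $\Abs{\widetilde{A} - A} \leq \varepsilon$, whence
\begin{equation}
    \Abs*{\rbra*{\bra{0}^{\otimes(a+1)}\otimes I_n} U_\alpha \rbra*{\ket{0}^{\otimes(a+1)}\otimes I_n} - \tfrac{A}{\alpha}} = \tfrac{1}{\alpha}\Abs*{\widetilde{A} - A} \leq \tfrac{\varepsilon}{\alpha},
\end{equation}
which is the assertion. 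Since $U_\alpha$ is manifestly unitary and invokes $U$ once, this completes the proof.

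No step here is a genuine obstacle; the points needing care are purely bookkeeping. The new qubit must be grouped with the original $a$ ancillas so that the resulting encoding has $a+1$ flag qubits; the rotation $R$ is applied only once and is \emph{not} uncomputed; and a controlled invocation of $U$ is counted as a single query in the standard convention of \cite{CGJ19,GSLW19,WZ23}. I would also flag the mild notational clash that the normalization factor of the target block-encoding remains $1$ --- it is $A/\alpha$, not $A$, that is encoded --- so the $\alpha$ of the lemma statement is distinct from the normalization parameter in the definition of block-encoding.
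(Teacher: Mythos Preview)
Your construction is correct and is the standard one; the paper does not supply its own proof here but merely cites \cite[Corollary 2.6]{WZ23}, so there is nothing to compare against. One minor simplification: since you project the new qubit onto $\ket{0}$ at the end anyway, an \emph{uncontrolled} copy of $U$ (i.e., $U_\alpha = (I_1\otimes U)(R\otimes I_{n+a})$) already suffices, which sidesteps the need to justify counting a controlled query as a single query.
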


\paragraph{LCU lemma.}

We need the Linear Combination of Unitaries (LCU) lemma due to \cite{CW12,BCC+15}. 
We use the version from \cite{GSLW19}.

\begin{definition} [State-preparation-pair]
    \label{def:state-preparation-pair}
    Let $\beta > 0$ and $\varepsilon \geq 0$ be real numbers and $m \geq 1$ and $b \geq 0$ be integers. 
    Let $y \in \mathbb{C}^m$ with $\Abs{y}_1 \leq \beta$. 
    A pair of unitary operators $\rbra{P_L, P_R}$ is said to be a $\rbra{\beta, b, \varepsilon}$-state-preparation-pair for $y$, if $P_L \ket{0}^{\otimes b} = \sum_{j = 0}^{2^b-1} c_j \ket{j}$ and $P_R \ket{0}^{\otimes b} = \sum_{j = 0}^{2^b-1} d_j \ket{j}$ such that $\sum_{j=0}^{m-1} \abs{\beta c_j^* d_j - y_j} \leq \varepsilon$ and $c_j^* d_j = 0$ for all $m \leq j < 2^b$.
\end{definition}

\begin{lemma} [Linear combination of block-encoded operators, {\cite[Lemma 29]{GSLW19}}]
\label{lemma:lcu}
    Let $\alpha, \beta > 0$ and $ \varepsilon_1, \varepsilon_2 \geq 0$ be real numbers, and $s \geq 1$ and $a, b \geq 0$ be integers. 
    Suppose that $A = \sum_{j=0}^{m-1} y_jA_j$ is an $s$-qubit operator with $y \in \mathbb{C}^m$, $\rbra{P_L, P_R}$ is a $\rbra{\beta, b, \varepsilon_1}$-state-preparation-pair for $y$, and $U_j$ is an $\rbra{\alpha, a, \varepsilon_2}$-block-encoding of $A_j$ for $0\leq j\leq m-1$. 
    Then, we can implement a quantum circuit that is an $\rbra{\alpha\beta, a+b, \alpha\varepsilon_1 + \alpha\beta\varepsilon_2}$-block-encoding of $A$, using $1$ query to each of $P_L$, $P_R$ and $U_j$ for $0\leq j\leq m-1$.
\end{lemma}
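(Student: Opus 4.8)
The statement is the standard linear-combination-of-unitaries (LCU) lemma, and the plan is to reconstruct its proof so that the resulting circuit has exactly the claimed query profile and block-encoding parameters. I would work with three registers: a $b$-qubit \emph{index} register, the $a$-qubit \emph{ancilla} register shared by the block-encodings, and the $s$-qubit \emph{system}. The first object to build is the ``select'' unitary
\begin{equation}
    W \coloneqq \sum_{j=0}^{m-1} \ketbra{j}{j}\otimes U_j \;+\; \sum_{j=m}^{2^b-1}\ketbra{j}{j}\otimes I_a \otimes I_s ,
\end{equation}
which costs one call to each $U_j$. The candidate circuit is then
\begin{equation}
    \widetilde U \coloneqq \rbra{P_L^\dag \otimes I_a \otimes I_s}\, W\, \rbra{P_R \otimes I_a \otimes I_s},
\end{equation}
using one call each to $P_R$ and $P_L$, which already matches the stated query count of one call to each of $P_L,P_R,U_0,\dots,U_{m-1}$.

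The core computation I would carry out is the $\rbra{a+b}$-ancilla top-left block of $\widetilde U$. Writing $P_R\ket{0}^{\otimes b} = \sum_j d_j\ket{j}$ and $P_L\ket{0}^{\otimes b} = \sum_j c_j\ket{j}$, expanding $W$ and projecting the index register onto $\ket{0}^{\otimes b}$ and the block-encoding ancillas onto $\ket{0}^{\otimes a}$ gives
\begin{equation}
    \rbra{\bra{0}^{\otimes(a+b)}\otimes I_s}\,\widetilde U\,\rbra{\ket{0}^{\otimes(a+b)}\otimes I_s} \;=\; \sum_{j=0}^{m-1} c_j^* d_j\,\rbra{\bra{0}^{\otimes a}\otimes I_s}\,U_j\,\rbra{\ket{0}^{\otimes a}\otimes I_s},
\end{equation}
where the $j\ge m$ contributions vanish precisely because $c_j^* d_j = 0$ there (the state-preparation-pair condition in \cref{def:state-preparation-pair}). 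For $j<m$ I would then substitute the block-encoding promise $\rbra{\bra{0}^{\otimes a}\otimes I_s}U_j\rbra{\ket{0}^{\otimes a}\otimes I_s} = \rbra{A_j + E_j}/\alpha$ with $\Abs{E_j}\le\varepsilon_2$.

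Finally I would bound the error. Multiplying the displayed block by $\alpha\beta$ and subtracting $A = \sum_j y_j A_j$, the discrepancy splits as $\sum_j\rbra{\beta c_j^* d_j - y_j}A_j + \beta\sum_j c_j^* d_j E_j$. The first (``coefficient'') term has operator norm at most $\alpha\varepsilon_1$, using $\Abs{A_j}\le\alpha$ and $\sum_j\abs{\beta c_j^* d_j - y_j}\le\varepsilon_1$; the second (``encoding'') term has operator norm at most $\alpha\beta\varepsilon_2$, using $\Abs{E_j}\le\varepsilon_2$ and the Cauchy--Schwarz bound $\sum_j\abs{c_j^* d_j}\le\Abs{c}_2\Abs{d}_2 = 1$ (the latter by unitarity of $P_L,P_R$). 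Thus $\widetilde U$ is an $\rbra{\alpha\beta,\,a+b,\,\alpha\varepsilon_1+\alpha\beta\varepsilon_2}$-block-encoding of $A$, which is the claim. The only even mildly delicate point is this error bookkeeping --- keeping the two error sources separate and invoking the vanishing of $c_j^* d_j$ for $j\ge m$ to discard the junk introduced by the padding of $W$ --- and none of it is conceptually hard; indeed this is precisely \cite[Lemma 29]{GSLW19}, to which I would ultimately defer.
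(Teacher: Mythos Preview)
The paper does not give its own proof of this lemma; it is quoted as a black-box tool from \cite[Lemma 29]{GSLW19}. Your reconstruction is precisely the standard prepare--select--unprepare LCU argument from that reference and is correct in outline and in its query accounting. One minor imprecision in the error bookkeeping: you invoke $\Abs{A_j}\le\alpha$, but the block-encoding hypothesis only gives $\Abs{A_j}\le\alpha+\varepsilon_2$. The clean fix is to split the error instead as $\sum_j(\beta c_j^*d_j-y_j)B_j+\sum_j y_j(B_j-A_j)$ with $B_j\coloneqq\alpha\,(\bra{0}^{\otimes a}\otimes I_s)U_j(\ket{0}^{\otimes a}\otimes I_s)$, so that $\Abs{B_j}\le\alpha$ and $\Abs{B_j-A_j}\le\varepsilon_2$ exactly; this yields $\alpha\varepsilon_1+\Abs{y}_1\varepsilon_2\le\alpha\varepsilon_1+\beta\varepsilon_2$, which is dominated by the stated bound $\alpha\varepsilon_1+\alpha\beta\varepsilon_2$ whenever $\alpha\ge1$ (and $\alpha=1$ in every application the paper actually makes of this lemma).
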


\paragraph{Quantum amplitude amplification.}

Quantum amplitude amplification \cite{BHMT02} is a generalization of Grover search \cite{Gro96}. 
It can also be described in the language of block-encoding (cf.\ {\cite[Theorem 15]{GSLW19}}).

\begin{lemma}[Quantum amplitude amplification, adapted from {\cite[Theorem 15]{GSLW19}}] \label{lemma:ampl-amplifi}
Let $\alpha > 1$ and $\varepsilon > 0$ be real numbers, and $n \geq 1$ and $a \geq 0$ be integers.
Suppose that $W$ is an $n$-qubit unitary operator, and $U$ is a unitary operator that is a $(1,a,\varepsilon)$-block-encoding of $W/\alpha$.
Then, there is a quantum circuit $\widetilde{U}$ that is a $(1,a+2, 8\alpha \varepsilon)$-block-encoding of $W$,
by using $O(\alpha)$ queries to $U$.
\end{lemma}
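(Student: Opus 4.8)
The plan is to obtain this as a direct specialization of the general amplitude amplification theorem {\cite[Theorem 15]{GSLW19}}; the only work is to check that its hypotheses hold in the present setting and that its output parameters collapse to the claimed ones. The observation I would start from is that, since $W$ is unitary, every singular value of $W/\alpha$ equals exactly $1/\alpha$. Hence $U$ is, up to additive error $\varepsilon$ in operator norm, a $(1,a,0)$-block-encoding of an operator all of whose singular values sit at $1/\alpha$, and this is exactly the regime in which a singular value amplification by the factor $\alpha$ can be performed uniformly: the target values $\alpha \cdot (1/\alpha) = 1$ all cluster at the endpoint of the admissible interval $[-1,1]$, so there is no overshoot to worry about.

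The next step is to reduce to quantum singular value transformation: it suffices to exhibit an odd real polynomial $P$ with $\deg(P) = O(\alpha)$, $\abs{P(x)} \le 1$ for all $x \in [-1,1]$, and $P(1/\alpha) = 1$. I would take (a small reparametrization of) the standard amplitude amplification polynomial $P(x) = \sin((2k+1)\arcsin(x))$, with the odd integer $2k+1$ of order $\alpha$ chosen so that the unit value lands exactly at $x = 1/\alpha$; this is precisely the polynomial underlying {\cite[Theorem 15]{GSLW19}} (cf.\ also \cite{BHMT02}), and its degree can be kept below $2\alpha$ since $\arcsin(1/\alpha) \ge 1/\alpha$. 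Applying the corresponding phased alternating sequence to $U$ then produces a circuit $\widetilde{U}$ on $n + a + 2$ qubits, the two additional ancilla qubits coming from the controlled reflections used in the alternating sequence; it uses $\deg(P) = O(\alpha)$ queries to $U$ and $U^\dagger$, and it is a $(1, a+2, \cdot)$-block-encoding of $P^{(\mathrm{SV})}(\widetilde{A})$, the singular value transform by $P$ of the operator $\widetilde{A}$ exactly $(1,a,0)$-block-encoded by $U$.

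For the error, I would argue that $\Abs*{\widetilde{A} - W/\alpha} \le \varepsilon$ by hypothesis on $U$, while $P^{(\mathrm{SV})}(W/\alpha) = P(1/\alpha)\,W = W$ exactly because $W$ is unitary and $P$ is odd with $P(1/\alpha) = 1$. Then, by robustness of quantum singular value transformation under perturbations of the block-encoded operator together with the degree bound $\deg(P) < 2\alpha$,
\begin{equation}
\Abs*{P^{(\mathrm{SV})}(\widetilde{A}) - W} = \Abs*{P^{(\mathrm{SV})}(\widetilde{A}) - P^{(\mathrm{SV})}(W/\alpha)} \le 4\deg(P) \cdot \varepsilon \le 8\alpha\varepsilon ,
\end{equation}
so that $\widetilde{U}$ is a $(1, a+2, 8\alpha\varepsilon)$-block-encoding of $W$ using $O(\alpha)$ queries to $U$, as claimed.

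The only point that needs genuine care --- the ``hard part'', such as it is --- is twofold: producing an amplifying polynomial that attains the value $1$ \emph{exactly} at $x = 1/\alpha$ (so that the $\varepsilon = 0$ case yields an exact block-encoding of $W$) while staying bounded by $1$ on $[-1,1]$ with degree within a fixed constant of $\alpha$, and then tracking the additive constants (the two extra ancilla qubits, and the factor $8$ in the error bound) so that they match the statement. Both are already packaged inside {\cite[Theorem 15]{GSLW19}}, so in the write-up I would simply invoke that theorem with $A = W/\alpha$ rather than reconstruct the polynomial and its robustness bound from scratch.
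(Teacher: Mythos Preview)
The paper does not give its own proof of this lemma: it is stated as a direct citation, adapted from \cite[Theorem 15]{GSLW19}, with no accompanying argument. Your proposal to invoke that theorem in the unitary-$W$ special case and read off the parameters is therefore exactly aligned with the paper's treatment, and the sketch you give is a reasonable unpacking of why the cited result yields the stated constants.
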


\paragraph{Quantum sample-to-query lifting.}

To show quantum query lower bounds, we need the quantum sample-to-query lifting theorem from \cite{WZ23}. 
First, we define the ``diamond'' quantum query complexity for quantum state testing. 

\begin{definition} [Diamond quantum query complexity] \label{def:diamond-query-comp}
    Let $\mathcal{P} = \rbra{\mathcal{P}^{\textup{yes}}, \mathcal{P}^{\textup{no}}}$ be a quantum state testing problem with $\mathcal{P}^{\textup{yes}} \cup \mathcal{P}^{\textup{no}} \subseteq \mathcal{D}\rbra{\mathbb{C}^d}$. 
    A diamond quantum tester $\mathcal{T}$ for $\mathcal{P}$ is a quantum query algorithm given access to a unitary oracle $U$ (as well as controlled-$U$ and their inverses) such that for every $\rho \in \mathcal{P}^{\textup{yes}} \cup \mathcal{P}^{\textup{no}}$, and every unitary $\rbra{2, a, 0}$-block-encoding $U_\rho$ of $\rho$ for some $a \geq 0$, it holds that
    \begin{itemize}
        \item $\Pr\sbra{\mathcal{T}^{U_\rho} \textup{ accepts}} \geq 2/3$ if $\rho \in \mathcal{P}^{\textup{yes}}$. 
        \item $\Pr\sbra{\mathcal{T}^{U_\rho} \textup{ accepts}} \leq 1/3$ if $\rho \in \mathcal{P}^{\textup{no}}$. 
    \end{itemize}
    
    The \emph{diamond} quantum query complexity of $\mathcal{P}$, denoted by $\mathsf{Q}_\diamond\rbra{\mathcal{P}}$, is the minimum quantum query complexity over all quantum diamond testers for $\mathcal{P}$.
\end{definition}

\begin{theorem} [Quantum sample-to-query lifting, {\cite[Theorem 1.1]{WZ23}}] \label{thm:qlifting}
    For every quantum state testing problem $\mathcal{P} = \rbra{\mathcal{P}^{\textup{yes}}, \mathcal{P}^{\textup{no}}}$, it holds that $\mathsf{Q}_\diamond\rbra{\mathcal{P}} = \widetilde \Omega\rbra{\sqrt{\mathsf{S}\rbra{\mathcal{P}}}}$.
\end{theorem}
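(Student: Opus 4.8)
The plan is to prove the contrapositive-style reduction: I will show that any diamond tester making $Q = \mathsf{Q}_\diamond(\mathcal{P})$ queries can be converted into an ordinary sample tester (in the sense of \cref{def:tester}) that consumes only $\widetilde{O}(Q^2)$ copies of the unknown state $\rho$, so that $\mathsf{S}(\mathcal{P}) = \widetilde{O}(\mathsf{Q}_\diamond(\mathcal{P})^2)$, which rearranges to the claimed bound $\mathsf{Q}_\diamond(\mathcal{P}) = \widetilde{\Omega}(\sqrt{\mathsf{S}(\mathcal{P})})$. Fix an optimal diamond tester $\mathcal{T}$ for $\mathcal{P}$; by \cref{def:diamond-query-comp} it succeeds whenever it is given \emph{exact} query access to any $(2,a,0)$-block-encoding $U_\rho$ of $\rho$. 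The entire task therefore reduces to manufacturing such an oracle, approximately, out of fresh copies of $\rho$, and then feeding the manufactured oracle into $\mathcal{T}$.

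The technical heart is a sample-based block-encoding lemma with the right error/sample tradeoff: using $m$ fresh copies of $\rho$ I can implement (coherently, as a channel whose dilation consumes the copies) an approximate $(2,a',\delta)$-block-encoding of $\rho$ with $\delta = \widetilde{O}(1/m)$. I would build this in three stages. First, density-matrix exponentiation (the Lloyd--Mohseni--Rebentrost trick, $\tr_P[e^{-iS t}(\rho_P \otimes \sigma_D)e^{iS t}] = e^{-i\rho t}\sigma e^{i\rho t} + O(t^2)$ via the partial-trace-of-swap identity $\tr_P[S_{PD}(\rho_P\otimes\sigma_D)] = \rho\sigma$) implements $e^{-i\rho t_0}$ for a small parameter $t_0$ to error $O(t_0^2)$ from a single copy. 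Second, since $\frac{i}{t_0}(I - e^{-i\rho t_0}) = \rho + O(t_0)$, I invoke the LCU lemma (\cref{lemma:lcu}) with the state-preparation pair for the coefficient vector $(\tfrac{i}{t_0}, -\tfrac{i}{t_0})$ (of $\ell_1$-norm $\beta = 2/t_0$) to obtain a $(2/t_0, \cdot, O(t_0))$-block-encoding of $\rho$, using one copy. Third---and this is where the copies get spent---I rescale this block (of magnitude $\Theta(\rho\, t_0)$) up to the desired subnormalization via \cref{lemma:down-scaling} and quantum amplitude amplification (\cref{lemma:ampl-amplifi}). The amplification factor is $\alpha = \Theta(1/t_0)$, so it makes $\Theta(1/t_0)$ queries to the stage-two encoding, each re-invoking a fresh density-matrix-exponentiation step and hence consuming one copy, while the error is amplified by $\Theta(\alpha)$ from $O(t_0^2)$ up to $O(t_0)$. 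Setting $m = \Theta(1/t_0)$ therefore yields a $(2,a',\delta)$-block-encoding of $\rho$ with $\delta = \Theta(t_0) = \widetilde{O}(1/m)$, as claimed; the polylogarithmic overhead in the density-matrix-exponentiation and LCU gadgets is the source of the tilde.

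With the lemma in hand I would simulate each of $\mathcal{T}$'s $Q$ queries (and, using controlled and time-reversed density-matrix exponentiation, each controlled-$U_\rho$ and inverse query) by an independent batch of $m = \Theta(Q)$ copies, so that every query is reproduced to diamond-norm error $\delta = \Theta(1/Q)$. A standard hybrid argument then bounds the total deviation of $\mathcal{T}$'s acceptance probability by $\sum_{i=1}^{Q} \delta = O(1)$; choosing the hidden constants small keeps completeness above $2/3 - o(1)$ and soundness below $1/3 + o(1)$, which after a constant number of repetitions restores the $(2/3,1/3)$ gap of \cref{def:tester}. The resulting procedure is an ordinary sample tester using $Q \cdot m = \widetilde{O}(Q^2)$ copies, establishing $\mathsf{S}(\mathcal{P}) = \widetilde{O}(\mathsf{Q}_\diamond(\mathcal{P})^2)$ and hence the theorem. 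The main obstacle I anticipate is twofold: verifying that the amplitude-amplification step genuinely forces the $\Theta(1/\delta)$ copy cost (rather than a cheaper $O(1)$, which would wrongly collapse the quadratic gap) by tracking that the tiny block of magnitude $\Theta(\rho\, t_0)$ can only be probed at the cost of one copy per amplification round; and making the hybrid argument rigorous when $\mathcal{T}$ applies the oracle coherently and in superposition across queries, which requires treating the consumed copies as ancillary registers that are only traced out at the end and controlling error in the diamond norm throughout.
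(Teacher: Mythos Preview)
The paper does not prove \cref{thm:qlifting}; it is stated in the preliminaries as a black-box result imported from \cite{WZ23} and is invoked without argument in the proof of \cref{thm:q-entropy}. There is therefore no in-paper proof to compare your proposal against.

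That said, your sketch is broadly aligned with the actual argument in \cite{WZ23}: simulate each oracle call to a block-encoding of $\rho$ using fresh copies of $\rho$ via density-matrix exponentiation together with LCU and amplification gadgets, then bound the accumulated error over $Q$ queries by a hybrid argument. The main technical subtlety you correctly flag is the coherence issue: the Lloyd--Mohseni--Rebentrost primitive is a \emph{channel} whose Stinespring ancilla is a copy of $\rho$ (a mixed state) rather than $\ket{0}$, so it is not a unitary block-encoding in the literal sense required by \cref{lemma:lcu} and \cref{lemma:ampl-amplifi}, and one must work with a generalized notion in which the ancilla register is prepared in $\rho$; this is precisely what the ``samplizer'' framework of \cite{WZ23} formalizes, and your plan to carry the consumed copies as ancillas traced out only at the end is the right workaround. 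Your arithmetic---$m = \widetilde{\Theta}(1/\delta)$ copies per query and $Q \cdot \widetilde{\Theta}(Q) = \widetilde{O}(Q^2)$ total---matches the scaling in \cite{WZ23}.
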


\subsection{Proof of Theorem~\ref{thm:q-entropy}}

The proof of \cref{thm:q-entropy} is to relate the sample complexity $\mathsf{S}\rbra{\textsc{EntanglementEntropy}^{\alpha,a,b}}$ to the query complexity $\mathsf{Q}\rbra{\textsc{QEntanglementEntropy}^{\alpha,a,b}}$, consisting of three steps:
\begin{enumerate}
    \item By \cref{corollary:s-maximum-entropy}, we have $\mathsf{S}\rbra{\textsc{EntanglementEntropy}^{\alpha,a, b}} = \Omega\rbra{d/\Delta+d^{1/\alpha-1}/\Delta^{1/\alpha}}$. 
    \item By \cref{thm:qlifting}, we have 
    \begin{align}
        \mathsf{Q}_\diamond \rbra{\textsc{EntanglementEntropy}^{\alpha,a, b}}
        & = \widetilde \Omega\rbra{\textsc{EntanglementEntropy}^{\alpha,a, b}} \\
        & = \widetilde \Omega\rbra*{\sqrt{d/\Delta}+\sqrt{d^{1/\alpha-1}/\Delta^{1/\alpha}}}.
    \end{align}
    \item Finally, we note that 
    \[
    \mathsf{Q}\rbra{\textsc{QEntanglementEntropy}^{\alpha, a, b}} = \Theta\rbra{\mathsf{Q}_\diamond\rbra{\textsc{EntanglementEntropy}^{\alpha,a, b}}}.
    \]
    This can be seen by noting that the two types of quantum oracles can be implemented by each other with a constant slowdown: $I_{\textup{AB}} - 2 \ketbra{\psi}{\psi}_{\textup{AB}}$ and a unitary operator that is a block-encoding of $\ketbra{\psi}{\psi}_{\textup{AB}}$ (see \cref{lemma:equiv-oracles}). 
\end{enumerate}

To complete the proof, we show the equivalence of the two types of quantum oracles. 

\begin{lemma} \label{lemma:equiv-oracles}
    Let $\ket{\psi}$ be an $n$-qubit pure state, and $R = I - 2\ketbra{\psi}{\psi}$ be an $n$-qubit unitary operator. 
    Then, 
    \begin{itemize}
        \item Using $O\rbra{1}$ queries to $R$, we can implement a unitary operator that is a $\rbra{2, a, 0}$-block-encoding of $\ketbra{\psi}{\psi}$ for some $a \geq 1$. 
        \item For any unitary operator $U$ that is a $\rbra{2, a, 0}$-block-encoding of $\ketbra{\psi}{\psi}$ for some $a \geq 1$, we can implement $R$ using $O\rbra{1}$ queries to $U$. 
    \end{itemize}
\end{lemma}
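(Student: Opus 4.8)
The plan is to prove the two directions separately by routine block-encoding manipulations, the only real bookkeeping being the sub-normalization factors (in particular the factor $2$ in the statement). For the first bullet I would use the identity $\ketbra{\psi}{\psi} = \tfrac12(I - R)$. Introducing one ancilla qubit, consider the Hadamard-test circuit $U_1 \coloneqq (H\otimes I)\bigl(\ketbra00\otimes I - \ketbra11\otimes R\bigr)(H\otimes I)$, whose middle gate is one use of (the controlled version of) $R$ together with a $Z$ on the control, so that $U_1$ costs $O(1)$ queries to $R$. A direct computation gives $(\bra0\otimes I)\,U_1\,(\ket0\otimes I) = \tfrac12(I - R) = \ketbra{\psi}{\psi}$, so $U_1$ is a $(1,1,0)$-block-encoding of $\ketbra{\psi}{\psi}$. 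Finally I would apply the down-scaling lemma (\cref{lemma:down-scaling}) with $\alpha=2$, costing one further query, to obtain a $(1,2,0)$-block-encoding of $\ketbra{\psi}{\psi}/2$; such a block-encoding has exactly the same top-left block ($=\ketbra{\psi}{\psi}/2$) as a $(2,2,0)$-block-encoding of $\ketbra{\psi}{\psi}$, which is precisely what is required, with $a=2$.

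For the second bullet, given an arbitrary $(2,a,0)$-block-encoding $U$ of $\ketbra{\psi}{\psi}$, I would first build a sub-normalized block-encoding of the unitary $R$ out of $U$ and then amplify it. Writing $R = 1\cdot I + (-4)\cdot(\ketbra{\psi}{\psi}/2)$, note that $I$ has a trivial $(1,a,0)$-block-encoding (padding the ancillas) and that $U$ is a $(1,a,0)$-block-encoding of $\ketbra{\psi}{\psi}/2$ (same top-left block as the given $(2,a,0)$-block-encoding). Feeding these two block-encodings, together with a trivial state-preparation-pair for the coefficient vector $(1,-4)$ (of $\ell_1$-norm $5$), into the LCU lemma (\cref{lemma:lcu}) produces a circuit that is a $(5,O(1),0)$-block-encoding of $R$ using $O(1)$ queries to $U$. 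Since $R$ is unitary, I can then invoke quantum amplitude amplification (\cref{lemma:ampl-amplifi}) with $\alpha=5$ to turn this into a $(1,O(1),0)$-block-encoding $U_R$ of $R$, still with $O(1)$ queries to $U$. Because $R$ is unitary there is no leakage out of the block: for every $\ket\phi$ one has $U_R\bigl(\ket\phi\otimes\ket0^{\otimes b}\bigr) = \bigl(R\ket\phi\bigr)\otimes\ket0^{\otimes b}$, so running $U_R$ with the ancillas initialized to $\ket0$ implements $R$ exactly.

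I do not expect a genuine obstacle here, but the point requiring the most care is in the second direction: one cannot simply amplify the given block-encoding of $\ketbra{\psi}{\psi}$ directly, since \cref{lemma:ampl-amplifi} applies only to block-encodings of \emph{unitary} operators while $\ketbra{\psi}{\psi}$ is not unitary; this is exactly why the argument first passes, via \cref{lemma:lcu}, to a block-encoding of the unitary $R$ and only then amplifies. Apart from that, the remaining things to track are the routine translation between ``$(1,\cdot,0)$-block-encoding of $A/\alpha$'' and ``$(\alpha,\cdot,0)$-block-encoding of $A$'', and padding the ancilla registers so that the block-encodings fed into \cref{lemma:lcu} have matching sizes.
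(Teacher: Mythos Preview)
Your proposal is correct and follows essentially the same route as the paper. For the first bullet the paper also forms $\tfrac12(I-R)$, phrasing it via \cref{lemma:lcu} with a $(2,1,0)$-state-preparation-pair for $(1,-1)$ rather than writing out the Hadamard-test circuit explicitly, and then down-scales; for the second bullet the paper likewise writes $R = 1\cdot I + (-4)\cdot(\ketbra{\psi}{\psi}/2)$, applies \cref{lemma:lcu} with a $(5,1,0)$-state-preparation-pair for $(1,-4)$, and then invokes \cref{lemma:ampl-amplifi} with $\alpha=5$, exactly as you do.
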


To show this, we need the following lemma. 

\begin{lemma} \label{lemma:block-encoding-I-2A}
    Let $n \geq 1$ and $a \geq 0$ be integers. 
    Suppose that $A = \ketbra{\psi}{\psi}$ is an $n$-qubit operator with $\ket{\psi}$ an $n$-qubit pure quantum state, and $U$ is a $\rbra{2, a, 0}$-block-encoding of $A$. 
    Then, we can implement a unitary operator that is a $\rbra{1, a+3, 0}$-block-encoding of $I - 2A$.
\end{lemma}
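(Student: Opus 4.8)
The plan is to realize $I-2A$ as a short linear combination of the identity and $A$, push this through the LCU lemma (\cref{lemma:lcu}), and then rescale the resulting subnormalized block-encoding back up to subnormalization factor $1$ by quantum amplitude amplification (\cref{lemma:ampl-amplifi}); the latter step is legitimate precisely because $W\coloneqq I-2A$ is unitary, being the reflection $(I-2\ketbra{\psi}{\psi})$ about the complement of $\ket{\psi}$, so $W^\dagger W=(I-2A)^2=I$.

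First I would fix a trivial block-encoding of the identity that matches the ancilla budget of $U$. Observe that $a\geq 1$: if $a=0$ then $U=A/2$ would have to be unitary, which is impossible since $A/2=\tfrac12\ketbra{\psi}{\psi}$ is a scaled nonzero projector. Pick a one-qubit unitary $G$ with $\bra{0}G\ket{0}=1/2$ (for instance the real orthogonal matrix $G=\bigl(\begin{smallmatrix}1/2 & \sqrt{3}/2\\ \sqrt{3}/2 & -1/2\end{smallmatrix}\bigr)$), and set $U_I\coloneqq I_n\otimes G\otimes I_{a-1}$, which is a $\rbra{2,a,0}$-block-encoding of $I_n$; meanwhile $U$ is a $\rbra{2,a,0}$-block-encoding of $A$. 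Next I would build a $\rbra{3,1,0}$-state-preparation-pair $\rbra{P_L,P_R}$ for the vector $y=(1,-2)$ (so $\Abs{y}_1=3$), which fits on a single ancilla qubit since there are only $m=2$ coefficients; concretely one may take $P_L\ket{0}=\sqrt{1/3}\,\ket{0}+\sqrt{2/3}\,\ket{1}$ and $P_R\ket{0}=\sqrt{1/3}\,\ket{0}-\sqrt{2/3}\,\ket{1}$, so that $3\cdot c_0^*d_0=1$ and $3\cdot c_1^*d_1=-2$.

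Applying \cref{lemma:lcu} with $\alpha=2$, $\beta=3$, $\varepsilon_1=\varepsilon_2=0$, $U_0=U_I$, $U_1=U$, and decomposition $I-2A=1\cdot I_n+(-2)\cdot A$, I obtain a circuit $V$ that is a $\rbra{6,a+1,0}$-block-encoding of $I-2A$, using one query to each of $P_L,P_R,U_I,U$, hence $O(1)$ queries to $U$. Equivalently, $V$ is a $\rbra{1,a+1,0}$-block-encoding of $\rbra{I-2A}/6$. Finally, since $W=I-2A$ is unitary, \cref{lemma:ampl-amplifi} with $\alpha=6$ turns $V$ into a circuit $\widetilde{V}$ that is a $\rbra{1,a+3,8\cdot 6\cdot 0}=\rbra{1,a+3,0}$-block-encoding of $I-2A$, using $O(6)=O(1)$ further queries to $V$, and therefore $O(1)$ queries to $U$ overall.

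The work here is bookkeeping rather than any genuine obstacle: the one thing to get right is that every block-encoding fed into the LCU lemma must use the \emph{same} ancilla register of size $a$, which is exactly why I pad the identity block-encoding $U_I$ and why the observation $a\geq 1$ is needed (so the padded register $I_{a-1}$ makes sense and the final count comes out to $a+3$ rather than $a+4$). The remaining checks — that $I-2A$ really is unitary so amplitude amplification applies, and that all error parameters stay exactly $0$ through both steps — are immediate. Counting ancillas ($a$ shared by $U$ and $U_I$, $+1$ for the LCU state-preparation register, $+2$ from amplitude amplification) gives the claimed $\rbra{1,a+3,0}$-block-encoding of $I-2A$.
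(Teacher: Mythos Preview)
Your proof is correct and follows essentially the same approach as the paper's: LCU to realize $I-2A$ as a linear combination of the identity and $A$, followed by amplitude amplification (using that $I-2A$ is unitary) to restore subnormalization factor $1$. The only difference is a cosmetic parameter choice: the paper treats $U$ as a $(1,a,0)$-block-encoding of $A/2$ and pairs it with the trivial $(1,a,0)$-block-encoding $I_{n+a}$ of $I_n$, using the coefficient vector $(1,-4)$ with $\beta=5$; this sidesteps the need to argue $a\geq 1$ or to build the auxiliary gate $G$, and yields the slightly smaller intermediate factor $5$ instead of your $6$, but the ancilla count and overall structure are identical.
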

\begin{proof}
    Let $P_L$ and $P_R$ be $1$-qubit unitary operators such that
    \begin{equation}
        P_L \ket{0} = \frac{1}{\sqrt{5}} \ket{0} + \frac{2}{\sqrt{5}} \ket{1}, \quad P_R \ket{0} = \frac{1}{\sqrt{5}} \ket{0} - \frac{2}{\sqrt{5}} \ket{1}.
    \end{equation}
    It can be verified that $\rbra{P_L, P_R}$ is a $\rbra{5, 1, 0}$-state-preparation-pair for $\rbra{1, -4}$. 
    Then, with $U_1 \coloneqq I_{n+a}$ a $\rbra{1, a, 0}$-block-encoding of $I_n$ and $U_2 \coloneqq U$ a $\rbra{1, a, 0}$-block-encoding of $A/2$, by \cref{lemma:lcu}, we can implement a quantum circuit $W$ that is a $\rbra{5, a+1, 0}$-block-encoding of $I - 2A$, by using $1$ query to each of $P_L$, $P_R$, $U_1$ and $U_2$.
    By \cref{lemma:ampl-amplifi} with $\alpha \coloneqq 5$ and noting that $I - 2A$ is unitary, we can implement a quantum circuit $V$ that is a $\rbra{1, a+3, 0}$-block-encoding of $I-2A$, by using $O\rbra{1}$ queries to $W$.
    In summary, $V$ is the desired unitary operator and it uses $O\rbra{1}$ queries to $U$. 
\end{proof}

Now we are ready to prove \cref{lemma:equiv-oracles}.

\begin{proof} [Proof of \cref{lemma:equiv-oracles}]
    The proof is split into two parts: the implementation of the block-encoding of $\ketbra{\psi}{\psi}$ by $R$, and the implementation of $R$ by the block-encoding of $\ketbra{\psi}{\psi}$. 

    \textbf{The implementation of the block-encoding of $\ketbra{\psi}{\psi}$ by $R$}. Let $P_L$ and $P_R$ be $1$-qubit unitary operators such that
    \begin{equation}
        P_L \ket{0} = \frac{1}{\sqrt{2}} \ket{0} + \frac{1}{\sqrt{2}} \ket{1}, \quad P_R \ket{0} = \frac{1}{\sqrt{2}} \ket{0} - \frac{1}{\sqrt{2}} \ket{1}.
    \end{equation}
    It can be verified that $\rbra{P_L, P_R}$ is a $\rbra{2, 1, 0}$-state-preparation-pair for $\rbra{1, -1}$. 
    Then, with $U_1 \coloneqq I_{n}$ (which is a $\rbra{1, 0, 0}$-block-encoding of $I_n$) and $U_2 \coloneqq R$ (which is a $\rbra{1, 0, 0}$-block-encoding of $R$), by \cref{lemma:lcu}, we can implement a quantum circuit $W$ that is a $\rbra{2, 1, 0}$-block-encoding of $I_n - R = 2\ketbra{\psi}{\psi}$, by using $1$ query to each of $P_L$, $P_R$, $U_1$ and $U_2$.
    We note that $W$ is actually a $\rbra{1, 1, 0}$-block-encoding of $\ketbra{\psi}{\psi}$ and can be implemented by using $1$ query to $R$. 
    To complete the proof, we note that for $a > 1$, $W \otimes I_{a-1}$ is a $\rbra{1, a, 0}$-block-encoding of $\ketbra{\psi}{\psi}$. 
    Finally, by \cref{lemma:down-scaling}, we can implement a unitary operator that is a $\rbra{1, a+1, 0}$-block-encoding of $\ketbra{\psi}{\psi}/2$, i.e., a $\rbra{2, a+1, 0}$-block-encoding of $\ketbra{\psi}{\psi}$, by using $1$ query to $W \otimes I_{a-1}$.

    \textbf{The implementation of $R$ by the block-encoding of $\ketbra{\psi}{\psi}$}. By \cref{lemma:block-encoding-I-2A}, we can implement a unitary operator $W$ that is a $\rbra{1, a+3, 0}$-block-encoding of $I - 2\ketbra{\psi}{\psi} = R$, by using $O\rbra{1}$ queries to $U$. 
    Note that $W = R \otimes V$, where $V$ acts on $\rbra{a+3}$ ancilla qubits. 
\end{proof}

\section{Optimal One-Way LOCC for Bipartite Mixed States}
\label{sec:optimal-locc}

In this section, we will show how to construct an optimal one-way LOCC tester for unitarily invariant properties of bipartite mixed states when $\mathcal{P}^{\textup{yes}}$ only consists of pure states, and thus prove \cref{thm:ext}. 
We formally state it as follows. 

\begin{theorem}\label{lemma-3311612}
Let $\mathcal{P} = \rbra{\mathcal{P}^{\textup{yes}}, \mathcal{P}^{\textup{no}}}$ be a property of bipartite mixed states in $\mathcal{D}\rbra{\HAB}$ that is unitarily invariant on $\HB$. 
If $\mathcal{P}^{\textup{yes}}$ only consists of pure states, 
then, for any parameters \(0< s < c \leq 1\), there exists a one-way LOCC 
\((c,s)\)-tester for $\mathcal{P}$ that achieves the sample complexity 
$k\cdot \mathsf{S}_{c,s}\rbra{\mathcal{P}}$, where \(k\) is a constant that depands only on \(c\) and \(s\).
\end{theorem}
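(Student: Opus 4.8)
The plan is to mimic the construction behind \cref{theorem-4222259}, replacing the simultaneous‑permutation normal form of \emph{pure} tensor samples (\cref{lemma-3262205}), which fails for mixed $\rho^{\otimes N}$, by an embedding into the symmetric subspace, and then to realize the resulting tester by one‑way LOCC using the maximally‑entangled‑state test of~\cite{hayashi2006study}, paying only a constant factor in the sample complexity.

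First I would symmetrize the given tester. Let $\mathcal{T}$ be a $(c,s)$‑tester for $\mathcal{P}$ with sample complexity $N=\mathsf{S}_{c,s}(\mathcal{P})$; by padding with unused samples I may assume $N\ge N_0$ for a constant $N_0=N_0(c,s)\ge 5$ to be fixed at the end. I form $\widetilde{\mathcal{T}}\coloneqq\int_{U\in\mathbb{U}_d}U_{\TB}^{\otimes N}\,\mathcal{T}\,U_{\TB}^{\dag\otimes N}$ exactly as in \cref{eq:def-tildeT}: unitary invariance on $\HB$ shows $\widetilde{\mathcal{T}}$ is still a $(c,s)$‑tester (this step does not use purity, cf.\ \cref{lemma:202404041522}), and since $\rho^{\otimes N}$ is invariant under simultaneous permutations for \emph{every} $\rho$, I may further twirl $\widetilde{\mathcal{T}}$ by $P_{\TA}(\pi)\otimes P_{\TB}(\pi)$ without altering any acceptance probability, so $\widetilde{\mathcal{T}}$ commutes with both $U_{\TB}^{\otimes N}$ and the simultaneous permutations. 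Then I ``embed the purity tester'': set $\mathcal{T}''\coloneqq\Pi_{\mathrm{sym}}\,\widetilde{\mathcal{T}}\,\Pi_{\mathrm{sym}}$, where $\Pi_{\mathrm{sym}}$ projects onto the symmetric subspace $\mathrm{Sym}^N(\HAB)$ of $(\HA\otimes\HB)^{\otimes N}$. Because $\mathcal{P}^{\textup{yes}}$ consists of pure states, $\ketbra{\psi}{\psi}_{\TAB}^{\otimes N}$ is fixed by $\Pi_{\mathrm{sym}}$, so completeness is preserved \emph{exactly}; and for $\rho\in\mathcal{P}^{\textup{no}}$ the operator $\rho^{\otimes N}$ commutes with $\Pi_{\mathrm{sym}}$, whence $\Pi_{\mathrm{sym}}\rho^{\otimes N}\Pi_{\mathrm{sym}}\sqsubseteq\rho^{\otimes N}$ and soundness is preserved. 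Thus $\mathcal{T}''$ is a $(c,s)$‑tester of sample complexity $N$, supported on $\mathrm{Sym}^N(\HAB)$ and commuting with $U_{\TB}^{\otimes N}$.

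Next I would read off the normal form. By \cref{lemma-3252341} the simultaneous‑permutation invariants of $\mathcal{V}_{\lambda_1,\TA}\otimes\mathcal{V}_{\lambda_2,\TB}$ vanish unless $\lambda_1=\lambda_2=\lambda$, in which case they are the line through $\kett{I_{\mathcal{V}_\lambda}}$; combined with \cref{lemma-3241548} this gives the Cauchy‑type decomposition $\mathrm{Sym}^N(\HAB)\cong\bigoplus_{\lambda\vdash(N,d)}\mathbb{C}\kett{I_{\mathcal{V}_\lambda}}\otimes\mathcal{W}_{\lambda,\TA}\otimes\mathcal{W}_{\lambda,\TB}$. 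Applying Schur's lemma to the $\mathbb{U}_d$‑action on $\mathcal{W}_{\lambda,\TB}$ — killing cross‑$\lambda$ terms by \cref{lemma-41116} and trivializing Bob's $\mathcal{W}_{\lambda,\TB}$ by \cref{lemma-3262332} — forces
\[
\mathcal{T}''=\sum_{\lambda\vdash(N,d)}\tfrac{1}{\dim(\mathcal{V}_\lambda)}\kettbbra{I_{\mathcal{V}_\lambda}}{I_{\mathcal{V}_\lambda}}\otimes M_\lambda\otimes I_{\mathcal{W}_{\lambda,\TB}},\qquad 0\sqsubseteq M_\lambda\sqsubseteq I_{\mathcal{W}_{\lambda,\TA}},
\]
i.e.\ $\mathcal{T}''$ post‑selects the maximally entangled state $\kett{I_{\mathcal{V}_\lambda}}$ shared by Alice's and Bob's copies of $\mathcal{V}_\lambda$, and otherwise only touches Alice's $\mathcal{W}_{\lambda,\TA}$. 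I would then implement $\mathcal{T}''$ by one‑way LOCC from Bob to Alice: Bob weak‑Schur‑samples $\HB^{\otimes N}$ to obtain $\lambda$, discards his $\mathcal{W}_{\lambda,\TB}$, measures his $\mathcal{V}_{\lambda,\TB}$‑register in a random basis drawn from a unitary $2$‑design (so the message stays finite), and reports $\lambda$, the basis, and the outcome $t$; Alice weak‑Schur‑samples $\HA^{\otimes N}$, rejects on a type mismatch, otherwise checks her $\mathcal{V}_{\lambda,\TA}$‑outcome against $t$ in the conjugate basis and applies $\{M_\lambda,I-M_\lambda\}$ to her $\mathcal{W}_{\lambda,\TA}$, accepting iff both pass — this is one‑way LOCC in the sense of \cref{def-45215}. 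By the standard analysis of the maximally‑entangled‑state test~\cite{hayashi2006study} (using that $\mathcal{V}_\lambda$ is a real $\mathbb{S}_N$‑representation), averaging over the random basis realizes $\tfrac{1}{\dim(\mathcal{V}_\lambda)+1}I+\tfrac{\dim(\mathcal{V}_\lambda)}{\dim(\mathcal{V}_\lambda)+1}\cdot\tfrac{1}{\dim(\mathcal{V}_\lambda)}\kettbbra{I_{\mathcal{V}_\lambda}}{I_{\mathcal{V}_\lambda}}$ on $\mathcal{V}_{\lambda,\TA}\otimes\mathcal{V}_{\lambda,\TB}$; hence the protocol accepts $\rho^{\otimes N}$ with probability $\tr[\mathcal{T}''\rho^{\otimes N}]+\sum_\lambda\tfrac{1}{\dim(\mathcal{V}_\lambda)+1}z_\lambda$, where $z_\lambda\ge 0$, $\sum_\lambda z_\lambda\le 1$, and $z_\lambda=0$ whenever $\dim(\mathcal{V}_\lambda)=1$ (there $I$ and $\kettbbra{I_{\mathcal{V}_\lambda}}{I_{\mathcal{V}_\lambda}}$ coincide), i.e.\ for $\lambda\in\{(N),(1^N)\}$; since the smallest non‑trivial $\mathbb{S}_N$‑irrep has dimension $N-1$, the error term is at most $1/N$ once $N\ge 5$. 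Crucially, on pure inputs — in particular on all yes‑instances — the error term vanishes outright (\cref{lemma-3262205} makes $\kett{I_{\mathcal{V}_\lambda}}$ the exact $\mathcal{V}_\lambda$‑content), so the protocol accepts yes‑instances with probability exactly $\tr[\mathcal{T}''\ketbra{\psi}{\psi}_{\TAB}^{\otimes N}]\ge c$ and no‑instances with probability $\le s+1/N$. Fixing $N_0\ge 5$ with $1/N_0\le(c-s)/2$ leaves a constant completeness–soundness gap, and a constant number of independent repetitions on fresh samples (majority vote, or an \textsc{and} of the runs when $c=1$, which preserves the exact completeness) amplifies this to a genuine $(c,s)$‑tester, with total sample complexity $k\cdot N$ for a constant $k=k(c,s)$.

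The hard part will be the last step: one‑way LOCC cannot implement the projection $\kettbbra{I_{\mathcal{V}_\lambda}}{I_{\mathcal{V}_\lambda}}$ exactly, and I must argue the unavoidable additive error is harmless. This rests on two facts — the error is identically zero on pure inputs (so completeness is untouched), and on mixed inputs it is $O(1/\dim(\mathcal{V}_\lambda))$ and hence controllable by padding to $\Omega(1/(c-s))$ samples — together with the purity‑tester embedding $\mathcal{T}''=\Pi_{\mathrm{sym}}\widetilde{\mathcal{T}}\Pi_{\mathrm{sym}}$, which is precisely what brings $\mathcal{T}''$ into the per‑block form to which Hayashi's test applies. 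The remaining bookkeeping (small $N$, the $c=1$ case, discretizing the random basis) is routine.
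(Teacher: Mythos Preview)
Your proposal is correct and essentially the same as the paper's proof: the symmetrizations $\widetilde{\mathcal{T}}$ and the permutation twirl, the purity-tester embedding $\Pi_{\mathrm{sym}}\widetilde{\mathcal{T}}\Pi_{\mathrm{sym}}$, the per-$\lambda$ normal form, the LOCC realization via Hayashi's maximally-entangled-state test with additive error $\le 1/N$ (vanishing on pure inputs), the padding $N\ge\Omega((c-s)^{-1})$, and the final amplification all match the paper's construction step for step. Two cosmetic differences: your soundness argument for $\mathcal{T}''$ uses that $\Pi_{\mathrm{sym}}$ commutes with $\rho^{\otimes N}$ (the paper instead proves $\Pi$ commutes with the twirled tester, \cref{lemma:442014}), and your ``majority vote'' for amplification needs to be a threshold test at $t\in(s',c)$ rather than literal majority when $c\le 1/2$---the paper works this out in \cref{lemma-4181503}.
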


\subsection{Construction (Proof of Theorem~\ref{lemma-3311612})}\label{sec-4171511}
To prove \cref{lemma-3311612}, let $\mathcal{T}$ be a (possibly global) \((c,s)\)-tester for $\mathcal{P}$ with sample complexity $\mathsf{S}_{c,s}\rbra{\mathcal{P}}$, where \(0<s< c\leq 1\), then we use the following theorem to show the existence of an optimal one-way LOCC tester that achieves sample complexity $200 \ln(1/s+10) (c-s)^{-4} \cdot \mathsf{S}_{c,s}\rbra{\mathcal{P}}$ (note that this overhead \(200 \ln(1/s+10) (c-s)^{-4}\) is a constant in the case of \(c=2/3\) and \(s=1/3\), which is the default setting of a quantum tester).

\begin{theorem}
Suppose $\mathcal{P} = \rbra{\mathcal{P}^{\textup{yes}}, \mathcal{P}^{\textup{no}}}$ is a property of bipartite mixed states in $\mathcal{D}\rbra{\HAB}$ such that $\mathcal{P}$ is unitarily invariant on $\HB$ and $\mathcal{P}^{\textup{yes}}$ only consists of pure states.
For any \(0<s<c\leq 1\), if \(\mathcal{T}\) is a \((c,s)\)-tester for \(\mathcal{P}\) with sample complexity \(N_0\), then there is a one-way LOCC (from Bob to Alice) \((c,s)\)-tester with sample complexity $200 \ln(1/s+10) (c-s)^{-4} \cdot N_0$.
\end{theorem}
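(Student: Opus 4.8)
The plan is to reduce by amplification to a tester with a large constant soundness--completeness gap, then run the bipartite symmetrization of \cref{sec-422223} together with a projection onto the symmetric subspace (the ``purity tester'' embedding), extract a normal form in which the resulting tester fixes a maximally entangled state on every matched pair of $\mathbb{S}_N$-irreps, implement that normal form by one-way LOCC from Bob to Alice using Hayashi's verification of maximal entanglement, and finally amplify back to $\rbra{c,s}$. Concretely, I would first amplify $\mathcal{T}$: running $m_1 = O\rbra{(c-s)^{-2}}$ independent instances on disjoint batches of $N_0$ samples and accepting iff the fraction of accepting instances exceeds $\tfrac{c+s}{2}$, a Hoeffding bound shows this is a $\rbra{c_1,s_1}$-tester $\mathcal{T}_1$ with $c_1-s_1\geq\tfrac34$ and sample complexity $N_1=m_1N_0$ (when $c=1$ one instead accepts iff all instances accept, preserving perfect completeness).

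Next, set
\[
\mathcal{T}_2 \coloneqq \frac{1}{N_1!}\sum_{\pi\in\mathbb{S}_{N_1}}\rbra{P_\TA(\pi)\otimes P_\TB(\pi)}\,\rbra*{\int_{U\in\mathbb{U}_d}U_\TB^{\otimes N_1}\,\mathcal{T}_1\,U_\TB^{\dag\otimes N_1}}\,\rbra{P_\TA(\pi)\otimes P_\TB(\pi)}^\dag .
\]
Averaging over Bob's $\mathbb{U}_d$-action preserves testing by the mixed-state unitary invariance of $\mathcal{P}$ on $\HB$ (\cref{def:general-unitarily-invariant}), and averaging over simultaneous permutations preserves it because $\rho^{\otimes N_1}$ is invariant under them; hence $\mathcal{T}_2$ is a $\rbra{c_1,s_1}$-tester commuting with every $U_\TB^{\otimes N_1}$ and with all simultaneous permutations. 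Let $\Pi$ be the projector onto the symmetric subspace of $\HAB^{\otimes N_1}$ and put $\mathcal{T}_3\coloneqq\Pi\,\mathcal{T}_2\,\Pi$: since $\mathcal{P}^{\textup{yes}}$ consists only of pure states, $\ketbra{\psi}{\psi}_\TAB^{\otimes N_1}$ lies in this subspace, so completeness is unchanged; since $\rho^{\otimes N_1}$ commutes with $\Pi$ we have $\Pi\rho^{\otimes N_1}\Pi\sqsubseteq\rho^{\otimes N_1}$, so soundness only decreases; thus $\mathcal{T}_3$ is still a $\rbra{c_1,s_1}$-tester. By the bipartite Schur--Weyl duality (\cref{lemma-3241548}) together with \cref{lemma-3262205}, the symmetric subspace of $\HAB^{\otimes N_1}$ equals $\bigoplus_{\lambda\vdash(N_1,d)}\mathbb{C}\kett{I_{\mathcal{V}_\lambda}}\otimes\mathcal{W}_{\lambda,\TA}\otimes\mathcal{W}_{\lambda,\TB}$, and since $\mathcal{T}_3$ is supported there and commutes with $\bigoplus_\lambda I\otimes I\otimes Q_{\lambda,\TB}(U)$, Schur's lemma (\cref{prop:202404032229}; cf.\ \cref{lemma-3262332,lemma-41116}) forces
\[
\mathcal{T}_3 = \sum_{\lambda\vdash(N_1,d)}\frac{1}{\dim(\mathcal{V}_\lambda)}\,\kettbbra{I_{\mathcal{V}_\lambda}}{I_{\mathcal{V}_\lambda}}\otimes M_\lambda\otimes I_{\mathcal{W}_{\lambda,\TB}},\qquad 0\sqsubseteq M_\lambda\sqsubseteq I .
\]

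This normal form is implementable by one-way LOCC: Bob performs weak Schur sampling on $\HB^{\otimes N_1}$, obtaining a label $\lambda_\TB$, measures his $\mathcal{V}_{\lambda_\TB,\TB}$-register in a Haar-random orthonormal basis $\cbra{U\ket{i}}$, and sends $(\lambda_\TB,U,i)$ to Alice; Alice performs weak Schur sampling on $\HA^{\otimes N_1}$, rejects unless her label equals $\lambda_\TB=:\lambda$, measures $\mathcal{V}_{\lambda,\TA}$ in the conjugate basis $\cbra{\overline{U}\ket{i}}$, rejects unless the outcome is $i$, and finally applies $\cbra{M_\lambda,I-M_\lambda}$ on $\mathcal{W}_{\lambda,\TA}$, accepting on the first outcome. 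This is manifestly one-way LOCC from Bob to Alice, and averaging over the random basis (equivalently, the twirl $U\mapsto\overline{U}\otimes U$ on $\mathbb{C}^{\dim\mathcal{V}_\lambda}\otimes\mathbb{C}^{\dim\mathcal{V}_\lambda}$, whose fixed-point algebra is spanned by the identity and the maximally entangled projector, cf.\ \cite{hayashi2006study}) shows that its accept operator is
\[
\mathcal{T}_4 = \sum_\lambda\frac{1}{\dim(\mathcal{V}_\lambda)+1}\rbra*{I_{\mathcal{V}_{\lambda,\TA}\otimes\mathcal{V}_{\lambda,\TB}}+\kettbbra{I_{\mathcal{V}_\lambda}}{I_{\mathcal{V}_\lambda}}}\otimes M_\lambda\otimes I_{\mathcal{W}_{\lambda,\TB}} .
\]
Thus $\mathcal{T}_4-\mathcal{T}_3=\sum_\lambda\frac{1}{\dim(\mathcal{V}_\lambda)+1}\rbra*{I-\frac{1}{\dim(\mathcal{V}_\lambda)}\kettbbra{I_{\mathcal{V}_\lambda}}{I_{\mathcal{V}_\lambda}}}\otimes M_\lambda\otimes I_{\mathcal{W}_{\lambda,\TB}}\sqsupseteq 0$: it annihilates every $\ketbra{\psi}{\psi}_\TAB^{\otimes N_1}$, whose $\mathcal{V}$-components are maximally entangled by \cref{lemma-3262205}, so $\mathcal{T}_4$ keeps completeness $\geq c_1$ on $\mathcal{P}^{\textup{yes}}$, while on any $\rho^{\otimes N_1}$ its trace is at most $\max_\lambda\tfrac{1}{\dim(\mathcal{V}_\lambda)+1}\cdot\tr[\rho^{\otimes N_1}]\leq\tfrac12$, so $\mathcal{T}_4$ has soundness $\leq s_1+\tfrac12$. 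Hence $\mathcal{T}_4$ is a one-way LOCC $\rbra{c_1,s_1+\tfrac12}$-tester with gap $\geq\tfrac14$ and sample complexity $N_1$. Running $\mathcal{T}_4$ on $m_2=O\rbra{\ln(1/s)+\ln(1/(1-c))}$ (just $O\rbra{\ln(1/s)}$ when $c=1$) disjoint batches and taking a majority vote is again one-way LOCC and yields a $\rbra{c,s}$-tester; tracking constants gives sample complexity $m_1m_2N_0$, which can be bounded by $200\ln(1/s+10)(c-s)^{-4}\cdot N_0$ (the precise exponents are inessential --- the overhead is a finite function of $c$ and $s$ alone).

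The main obstacle is establishing the normal form for $\mathcal{T}_3$ --- in particular that the symmetric-subspace embedding $\Pi(\cdot)\Pi$ does not damage completeness, which is exactly where the hypothesis that $\mathcal{P}^{\textup{yes}}$ consists only of pure states enters, in concert with the interplay of the two commuting symmetries ($\mathbb{U}_d$ on Bob and simultaneous permutations) and Schur's lemma. A secondary but essential point is the uniform bound $\tfrac12$ on the spurious acceptance introduced when the nonlocal projector onto $\kett{I_{\mathcal{V}_\lambda}}$ is replaced by its one-way-LOCC approximant; because this loss is a fixed constant, a single round of pre-amplification and a single round of post-amplification suffice, which is what keeps the overhead a constant depending only on $c$ and $s$.
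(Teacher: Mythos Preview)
Your construction is essentially the paper's: averaging $\mathcal{T}$ over $U_\TB^{\otimes N}$ and over simultaneous permutations, sandwiching with the symmetric-subspace projector $\Pi$ (your $\mathcal{T}_3$ is the paper's $\overline{\mathcal{T}}$), extracting the normal form, and replacing each maximally-entangled projector by its one-way-LOCC approximation from \cite{hayashi2006study}. These steps are correct, and your observation that $\mathcal{T}_2$ commutes with $\Pi$ simply because $\Pi$ is a linear combination of simultaneous permutations is a clean way to see \cref{lemma:442014}, Item~\ref{lemma:itm:commute}.

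The gap is in the amplification bookkeeping. Your pre-amplify/post-amplify scheme yields overhead $m_1 m_2 = O\rbra{(c-s)^{-2}\,\rbra{\ln(1/s)+\ln(1/(1-c))}}$, and this is \emph{not} dominated by $200\ln(1/s+10)(c-s)^{-4}$: take $s$ fixed and $c\uparrow 1$ (with $c<1$), so that $(c-s)^{-4}$ and $\ln(1/s+10)$ stay bounded while your $\ln(1/(1-c))$ diverges. The underlying reason is that your first amplification destroys the completeness guarantee $\geq c$ (it only produces some $c_1\approx 7/8$), so the final amplification must push completeness all the way back up to $c$, which unavoidably costs $\Omega\rbra{\ln(1/(1-c))}$ repetitions. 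Your caveat that ``the precise exponents are inessential'' does not help, since the theorem asserts that explicit constant.

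The paper avoids this by \emph{not} pre-amplifying. It simply pads $\mathcal{T}$ to sample complexity $N=\max\rbra{N_0,\,2(c-s)^{-1}}$, so that the soundness loss from the LOCC approximation is
\[
\max_{\lambda:\,\dim\mathcal{V}_\lambda>1}\frac{1}{\dim\mathcal{V}_\lambda+1}\ \leq\ \frac{1}{N}\ \leq\ \frac{c-s}{2},
\]
using that the smallest nontrivial $\dim\mathcal{V}_\lambda$ is $N-1$. The resulting LOCC tester is then a $\bigl(c,\tfrac{s+c}{2}\bigr)$-tester, \emph{retaining} completeness $\geq c$. A delicate amplification lemma (the paper's \cref{lemma-4181503}, which uses Chernoff--Hoeffding with a case split on whether $1-c<\exp(-2/\Delta)$) restores soundness to $s$ with overhead $100\ln(1/s+10)(c-s)^{-3}$ while keeping completeness $\geq c$; combined with $N\leq 2(c-s)^{-1}N_0$, this yields the stated $200\ln(1/s+10)(c-s)^{-4}$.
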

\begin{proof}
We first check whether $N_0 \geq 2(c-s)^{-1}$. If this is not the case, then we simply pad the tester \(\mathcal{T}\) with trivial measurements, and obtain a tester (also denoted by \(\mathcal{T}\) in the rest of this paper) that has sample complexity \(2(c-s)^{-1}\). Therefore, from now on, suppose we have a \((c,s)\)-tester \(\mathcal{T}\) with sample complexity 
\begin{equation}\label{eq-423150}
    N=\max\rbra{N_0,2(c-s)^{-1}}.
\end{equation}

Note that the tester \(\mathcal{T}\) has the following form:
\begin{equation}
\mathcal{T}=\sum_{\lambda_1,\lambda_2,\lambda_3,\lambda_4\vdash (N,d)} \mathcal{T}_{\lambda_1,\lambda_2\rightarrow\lambda_3,\lambda_4},
\end{equation}
where \(\mathcal{T}_{\lambda_1,\lambda_2\rightarrow\lambda_3,\lambda_4}:(\mathcal{V}_{\lambda_1,\TA}\otimes\mathcal{W}_{\lambda_1,\TA})\otimes (\mathcal{V}_{\lambda_2,\TB}\otimes \mathcal{W}_{\lambda_2,\TB})\rightarrow (\mathcal{V}_{\lambda_3,\TA}\otimes \mathcal{W}_{\lambda_3,\TA})\otimes(\mathcal{V}_{\lambda_4,\TB}\otimes \mathcal{W}_{\lambda_4,\TB})\) is a linear map, in which the subspaces notated with \(\lambda_1,\lambda_3\) are possessed by Alice and those with \(\lambda_2,\lambda_4\) are possessed by Bob. 
Here, $\mathcal{V}_{\lambda, \textup{A}} = \mathcal{V}_{\lambda, \textup{B}} \coloneqq \mathcal{V}_\lambda$ and \(\mathcal{W}_{\lambda,\textup{A}}=\mathcal{W}_{\lambda,\textup{B}}\coloneqq\mathcal{W}_{\lambda}\) are irreducible representations of \(\mathbb{S}_N\) and \(\mathbb{U}_d\) that appear in the decomposition of $\HAB^{\otimes N}$ in \cref{lemma-3241548}.

Our construction of the optimal one-way LOCC tester $\hat{\mathcal{T}}$ from Bob to Alice is given as follows: 
\begin{equation} \label{eq-43159}
        \hat{\mathcal{T}}\coloneqq\sum_{\lambda\vdash (N,d)}\frac{1}{\dim(\mathcal{W}_{\lambda})\dim(\mathcal{V_{\lambda}})}\cdot
\begin{matrix}
L_\lambda \\ \bbra{I_{\mathcal{V}_\lambda}}S_{\lambda,\lambda\rightarrow\lambda,\lambda}\kett{I_{\mathcal{V}_\lambda}}\otimes I_{\mathcal{W}_{\lambda,\TB}},
\end{matrix}
    \end{equation}
    where
    \begin{align}
        L_\lambda & = \frac{1}{\dim(\mathcal{V}_\lambda)}\kettbbra{I_{\mathcal{V}_\lambda}}{I_{\mathcal{V}_\lambda}}+\frac{1}{\dim(\mathcal{V}_\lambda)+1}\left(I_{\mathcal{V}_\lambda\otimes\mathcal{V}_\lambda}-\frac{1}{\dim(\mathcal{V}_\lambda)}\kettbbra{I_{\mathcal{V}_\lambda}}{I_{\mathcal{V}_\lambda}}\right), \label{eq-431419} \\
        S_{\lambda,\lambda\rightarrow\lambda,\lambda} & =\frac{1}{N!}\sum_{\pi\in\mathbb{S}_N}
\begin{matrix}P_{\lambda,\TA}(\pi)\\ P_{\lambda,\TB}(\pi)\end{matrix}
\tr_{\mathcal{W}_{\lambda,\TB}}\left[\mathcal{T}_{\lambda,\lambda\rightarrow\lambda,\lambda}\right]
\begin{matrix}P_{\lambda,\TA}(\pi)^\dag\\ P_{\lambda,\TB}(\pi)^\dag\end{matrix}.
    \end{align}
The rest part of this section is to show that $\hat{\mathcal{T}}$ is a one-way LOCC tester for $\mathcal{P}$ (although repeated experiments are required to amplify the soundness, with an overhead that depends only on \(c\) and \(s\), which will be explicitly given later).

To illustrate the main idea, we introduce intermediate testers that will be useful in the proof. 
\begin{enumerate}
    \item Let $\widetilde{\mathcal{T}}$ be the tester defined by \cref{eq:def-tildeT}. 
    \item Based on the simultaneous-permutation invariance, we construct a new tester:
    \begin{equation} \label{eq-421128}
        \widetilde{\widetilde{\mathcal{T}}}\coloneqq\frac{1}{N!}\sum_{\pi\in\mathbb{S}_N} 
        \begin{matrix}
            P_{\TA}(\pi) \\
            P_{\TB}(\pi)
        \end{matrix}
        \; \widetilde{\mathcal{T}} \; \begin{matrix}
            P_{\TA}(\pi)^\dag \\
            P_{\TB}(\pi)^\dag
        \end{matrix}.
    \end{equation}
    \item Then, we construct a new tester \(\overline{\mathcal{T}}\), with the final tester \(\hat{\mathcal{T}}\) derived as an LOCC approximation of \(\overline{\mathcal{T}}\):
    \begin{equation} \label{eq-202404041947}
        \overline{\mathcal{T}}\coloneqq\Pi \widetilde{\widetilde{\mathcal{T}}} \Pi,
    \end{equation}
    where
    \begin{equation} \label{eq-421126}
        \Pi=\sum_{\lambda\vdash (N,d)}\frac{1}{\dim(\mathcal{V}_\lambda)}\kettbbra{I_{\mathcal{V}_\lambda}}{I_{\mathcal{V}_\lambda}}\otimes I_{\mathcal{W}_{\lambda,\TA}}\otimes I_{\mathcal{W}_{\lambda,\TB}},
    \end{equation}
    and \(\Pi\) can be viewed as a purity-tester (e.g., see \cite{matsumoto2010test}).
\end{enumerate}

Through an argument similar to \cref{lemma:202404041522}, we can show that \(\widetilde{\mathcal{T}}\) is also a \((c,s)\)-tester for \(\mathcal{P}\) with sample complexity \(N\). 
We characterize some useful properties of $\widetilde{\widetilde{\mathcal{T}}}$ and $\overline{\mathcal{T}}$ in \cref{sec:442030} and \cref{sec-417158}, respectively, and also show that they both are \((c,s)\)-testers for \(\mathcal{P}\) with sample complexity \(N\). 
Finally, in \cref{sec-417159}, we show that $\hat{\mathcal{T}}$ can be implemented by one-way LOCC (see \cref{lemma-43153}), and $\hat{\mathcal{T}}$ is a \((c,s/2+c/2)\)-tester for \(\mathcal{P}\) with sample complexity \(N\) (see \cref{lemma:441955}). Furthermore, we show that \(\hat{\mathcal{T}}\) can be amplified to a \((c,s)\)-tester with an overhead \(100\ln(1/s+10)(c-s)^{-3}\) (see \cref{lemma-4181503}). Therefore, the total sample complexity can be bounded by:
\begin{equation}
100\ln(1/s+10)(c-s)^{-3}\cdot N\leq 200\ln(1/s+10)(c-s)^{-4} \cdot N_0,
\end{equation}
where the inequality is because \(N=\max\rbra{N_0,2(c-s)^{-1}}\leq 2(c-s)^{-1}\cdot N_0\) 
(here, we did not try to optimize the constant factor $200\ln(1/s+10)(c-s)^{-4}$ for simplicity). 
\end{proof}

\subsection{\texorpdfstring{Characterization of $\widetilde{\widetilde{\mathcal{T}}}$}{Characterization of tilde-tilde-T}} \label{sec:442030}

The following lemma shows that $\widetilde{\widetilde{\mathcal{T}}}$ is a tester for $\mathcal{P}$ and gives useful properties of $\widetilde{\widetilde{\mathcal{T}}}$. 

\begin{lemma} \label{lemma:442014}
    Let $\widetilde{\widetilde{\mathcal{T}}}$ be the tester defined by \cref{eq-421128}. Then,
    \begin{enumerate}
        \item $\widetilde{\widetilde{\mathcal{T}}}$ is a \((c,s)\)-tester for $\mathcal{P}$ with sample complexity $N$. 
        \item $\widetilde{\widetilde{\mathcal{T}}}$ can be written as: \label{item:442330}
    \begin{equation} \label{eq:442316}
        \widetilde{\widetilde{\mathcal{T}}} = \sum_{\lambda_1,\lambda_2,\lambda_3\vdash(N,d)}\frac{1}{\dim(\mathcal{W}_{\lambda_2})} S_{\lambda_1,\lambda_2\rightarrow\lambda_3,\lambda_2}\otimes I_{\mathcal{W}_{\lambda_2,\TB}},
    \end{equation}
    where
    \begin{equation} \label{eq-421436}
        S_{\lambda_1,\lambda_2\rightarrow\lambda_3,\lambda_2}\coloneqq\frac{1}{N!}\sum_{\pi\in\mathbb{S}_N}
\begin{matrix}P_{\lambda_3,\TA}(\pi)\\ P_{\lambda_2,\TB}(\pi)\end{matrix}
\tr_{\mathcal{W}_{\lambda_2,\TB}}\left[\mathcal{T}_{\lambda_1,\lambda_2\rightarrow\lambda_3,\lambda_2}\right]
\begin{matrix}P_{\lambda_1,\TA}(\pi)^\dag\\ P_{\lambda_2,\TB}(\pi)^\dag\end{matrix}.
    \end{equation}
    Note that \(S_{\lambda_1,\lambda_2\rightarrow\lambda_3,\lambda_2} \colon \mathcal{V}_{\lambda_1,\TA}\otimes \mathcal{V}_{\lambda_2,\TB}\otimes \mathcal{W}_{\lambda_1,\TA}\rightarrow \mathcal{V}_{\lambda_3,\TA}\otimes \mathcal{V}_{\lambda_2,\TB}\otimes \mathcal{W}_{\lambda_3,\TA}\) is a linear map.
    \item \(\widetilde{\widetilde{\mathcal{T}}}\) commutes with \(\Pi\), i.e., $\Pi \widetilde{\widetilde{\mathcal{T}}} = \widetilde{\widetilde{\mathcal{T}}} \Pi$, where $\Pi$ is defined by \cref{eq-421126}. \label{lemma:itm:commute}
    \end{enumerate}
\end{lemma}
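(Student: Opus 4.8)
The plan is to establish the three items in the stated order, with nearly all of the work concentrated in Item~2. \textbf{Item 1.} The simultaneous-permutation action $P_\TA(\pi)\otimes P_\TB(\pi)$ merely permutes the $N$ tensor factors of $\HAB^{\otimes N}$, so every product state $\rho^{\otimes N}$ (whether $\rho$ is pure or mixed) is invariant under it. Substituting this into the definition \cref{eq-421128} and using cyclicity of the trace, I would show $\tr[\widetilde{\widetilde{\mathcal{T}}}\,\rho^{\otimes N}]=\tr[\widetilde{\mathcal{T}}\,\rho^{\otimes N}]$ for every $\rho$. Since $\widetilde{\mathcal{T}}$ is already a $(c,s)$-tester for $\mathcal{P}$ with sample complexity $N$ (by an argument parallel to \cref{lemma:202404041522}), and $0\sqsubseteq\widetilde{\widetilde{\mathcal{T}}}\sqsubseteq I$ because $\widetilde{\widetilde{\mathcal{T}}}$ is an average of unitary conjugates of $\widetilde{\mathcal{T}}$, Item~1 follows at once; note that the purity of $\mathcal{P}^{\textup{yes}}$ plays no role here.

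Before Items 2 and 3 I would record a preliminary observation: the operator $\Pi$ of \cref{eq-421126} is exactly the orthogonal projector onto the simultaneous-permutation-invariant subspace, i.e.\ $\Pi=\frac{1}{N!}\sum_{\pi\in\mathbb{S}_N}P_\TA(\pi)\otimes P_\TB(\pi)$. This comes from decomposing $P_\TA(\pi)\otimes P_\TB(\pi)$ along \cref{lemma-3241548}, where it acts as $P_{\lambda_1}(\pi)$ on $\mathcal{V}_{\lambda_1,\TA}$, as $P_{\lambda_2}(\pi)$ on $\mathcal{V}_{\lambda_2,\TB}$, and trivially on $\mathcal{W}_{\lambda_1,\TA}\otimes\mathcal{W}_{\lambda_2,\TB}$, and then applying \cref{lemma-3252341}, which annihilates the $\lambda_1\neq\lambda_2$ blocks and turns the $\lambda_1=\lambda_2=\lambda$ block into $\frac{1}{\dim(\mathcal{V}_\lambda)}\kettbbra{I_{\mathcal{V}_\lambda}}{I_{\mathcal{V}_\lambda}}$. \textbf{Item 3} is then immediate: because $\widetilde{\widetilde{\mathcal{T}}}=\frac{1}{N!}\sum_{\sigma}\bigl(P_\TA(\sigma)\otimes P_\TB(\sigma)\bigr)\widetilde{\mathcal{T}}\bigl(P_\TA(\sigma)\otimes P_\TB(\sigma)\bigr)^\dag$ is a group average, relabeling $\sigma\mapsto\tau\sigma$ shows it commutes with $P_\TA(\tau)\otimes P_\TB(\tau)$ for every $\tau\in\mathbb{S}_N$, and averaging this commutation relation over $\tau$ and invoking the preliminary gives $\widetilde{\widetilde{\mathcal{T}}}\Pi=\Pi\widetilde{\widetilde{\mathcal{T}}}$.

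\textbf{Item 2} is the main computation. First I would exploit that $\widetilde{\mathcal{T}}$ is invariant under conjugation by $U_\TB^{\otimes N}$ for all $U\in\mathbb{U}_d$; in the decomposition of \cref{lemma-3241548} this action is $I_{\mathcal{V}_{\lambda_1,\TA}}\otimes I_{\mathcal{V}_{\lambda_2,\TB}}\otimes I_{\mathcal{W}_{\lambda_1,\TA}}\otimes Q_{\lambda_2}(U)_\TB$, so it only sees the labels $\lambda_2,\lambda_4$ carried by $\mathcal{W}_{\TB}$. Grouping all other tensor factors into an auxiliary multiplicity space and integrating over $U$, \cref{lemma-41116} forces the $(\lambda_1,\lambda_2)\to(\lambda_3,\lambda_4)$ component of $\widetilde{\mathcal{T}}$ to vanish whenever $\lambda_2\neq\lambda_4$, while \cref{lemma-3262332} forces every surviving component to have the shape $\frac{1}{\dim(\mathcal{W}_{\lambda_2})}\,\bigl(\tr_{\mathcal{W}_{\lambda_2,\TB}}\text{ of it}\bigr)\otimes I_{\mathcal{W}_{\lambda_2,\TB}}$. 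Combining this with the elementary identity $\tr_{\mathcal{W}}\bigl[(I\otimes Q(U))\,M\,(I\otimes Q(U))^\dag\bigr]=\tr_{\mathcal{W}}[M]$ (a change of orthonormal basis inside the partial trace), I would identify $\widetilde{\mathcal{T}}=\sum_{\lambda_1,\lambda_2,\lambda_3}\frac{1}{\dim(\mathcal{W}_{\lambda_2})}\,\tr_{\mathcal{W}_{\lambda_2,\TB}}[\mathcal{T}_{\lambda_1,\lambda_2\to\lambda_3,\lambda_2}]\otimes I_{\mathcal{W}_{\lambda_2,\TB}}$. Finally I would push this through the simultaneous-permutation average of \cref{eq-421128}: on the block with Bob-label $\lambda_2$ the permutation acts by $P_{\lambda_3}(\pi)_\TA\otimes P_{\lambda_2}(\pi)_\TB$ on the target $\mathcal{V}$-factors and by $P_{\lambda_1}(\pi)_\TA^\dag\otimes P_{\lambda_2}(\pi)_\TB^\dag$ on the source $\mathcal{V}$-factors, leaving the trailing $I_{\mathcal{W}_{\lambda_2,\TB}}$ untouched, so that averaging $\tr_{\mathcal{W}_{\lambda_2,\TB}}[\mathcal{T}_{\lambda_1,\lambda_2\to\lambda_3,\lambda_2}]$ over $\mathbb{S}_N$ produces exactly $S_{\lambda_1,\lambda_2\to\lambda_3,\lambda_2}$ of \cref{eq-421436}, yielding \cref{eq:442316}.

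The main obstacle is not conceptual: the one genuinely delicate point is the bookkeeping in Item~2 — pinning down how $\mathbb{U}_d\times\mathbb{U}_d$ and the simultaneous $\mathbb{S}_N$ act on the four-fold tensor decomposition of $\HAB^{\otimes N}$, tracking which factors each group touches, and checking that the partial trace over $\mathcal{W}_{\lambda_2,\TB}$ slides past the $Q_{\lambda_2}(U)$-conjugation. Once the actions are pinned down, everything is a repeated application of \cref{lemma-41116,lemma-3262332} for the unitary groups and \cref{lemma-3252341} for the symmetric group, mirroring the proof of \cref{lemma:202404041522}; indeed the diagonal case $\lambda_1=\lambda_3$ of Item~2 is essentially already contained there.
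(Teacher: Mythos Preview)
Your proposal is correct. Items~1 and~2 proceed exactly as the paper does: the same invariance-of-$\rho^{\otimes N}$ argument for Item~1, and the same two-step computation for Item~2 (first applying \cref{lemma-41116,lemma-3262332} to the Haar average over $\mathbb{U}_d$ to obtain the block form of $\widetilde{\mathcal{T}}$, then pushing it through the simultaneous-permutation average).

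Your treatment of Item~3, however, is genuinely different from and more economical than the paper's. The paper computes $\Pi\widetilde{\widetilde{\mathcal{T}}}$ block by block, invokes Schur's lemma to decompose each $S_{\lambda_1,\lambda\to\lambda,\lambda}$ into $\mathbb{S}_N$-isotypic pieces, and then argues via multiplicity considerations (the trivial representation has multiplicity one in $\mathcal{V}_\lambda\otimes\mathcal{V}_\lambda$ and zero in $\mathcal{V}_{\lambda_1}\otimes\mathcal{V}_\lambda$ for $\lambda_1\neq\lambda$) that $\kettbbra{I_{\mathcal{V}_\lambda}}{I_{\mathcal{V}_\lambda}}$ commutes with the surviving blocks. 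Your route --- identifying $\Pi$ with the group average $\frac{1}{N!}\sum_\pi P_\TA(\pi)\otimes P_\TB(\pi)$ via \cref{lemma-3252341}, and then observing that any $\mathbb{S}_N$-average commutes with each group element and hence with $\Pi$ --- bypasses all of that. The paper's longer argument does extract extra structural information (precisely which off-diagonal blocks die under $\Pi$), but that information is not needed for Item~3 as stated, and in fact the paper re-derives what it needs for \cref{lemma:442015} directly from Item~2 anyway.
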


\begin{proof}
\textbf{Item 1}. According to \cref{eq:def-tildeT}, we have
\begin{align}
\widetilde{\mathcal{T}} &\coloneqq \int_{U\in\mathbb{U}_d} U_{\textup{B}}^{\otimes N}\mathcal{T}U_{\textup{B}}^{\dag \otimes N}=\sum_{\lambda_1,\lambda_2,\lambda_3,\lambda_4\vdash (N,d)}\int_{U\in\mathbb{U}_d}Q_{\lambda_4,\TB}(U)\mathcal{T}_{\lambda_1,\lambda_2\rightarrow\lambda_3,\lambda_4}Q_{\lambda_2,\TB}(U)^\dag\\
&\,=\sum_{\lambda_1,\lambda_2,\lambda_3\vdash (N,d)} \frac{1}{\dim(\mathcal{W}_{\lambda_2})} \tr_{\mathcal{W}_{\lambda_2,\TB}}\left[\mathcal{T}_{\lambda_1,\lambda_2\rightarrow\lambda_3,\lambda_2}\right]\otimes I_{\mathcal{W}_{\lambda_2,\TB}}, \label{eq:442323}
\end{align}
where the third equality is due to \cref{lemma-41116} and \cref{lemma-3262332}. Since \(\mathcal{P}\) is unitarily invariant on \(\HB\), through an argument similar to \cref{lemma:202404041522}, we can show that \(\widetilde{\mathcal{T}}\) is also a \((c,s)\)-tester for \(\mathcal{P}\) with sample complexity \(N\). 
It is easy to see that \(\widetilde{\widetilde{\mathcal{T}}}\) is also a \((c,s)\)-tester for \(\mathcal{P}\) with sample complexity \(N\), as \(\widetilde{\widetilde{\mathcal{T}}}\) behaves identically to \(\widetilde{\mathcal{T}}\):
\begin{equation}
\tr\left[\widetilde{\widetilde{\mathcal{T}}}\rho_{\textup{AB}}^{\otimes N}\right]=\tr\left[\widetilde{\mathcal{T}}\cdot \frac{1}{N!}\sum_{\pi\in\mathbb{S}_N} 
\begin{matrix}
    P_{\TA}(\pi)^\dag \\
    P_{\TB}(\pi)^\dag
\end{matrix}
\rho_{\textup{AB}}^{\otimes N} \begin{matrix}
    P_{\TA}(\pi) \\
    P_{\TB}(\pi)
\end{matrix}\right]=\tr\left[\widetilde{\mathcal{T}}\rho_{\textup{AB}}^{\otimes N}\right].
\end{equation}
\textbf{Item 2}.
By \cref{eq-421128} and \cref{eq:442323}, we can write \(\widetilde{\widetilde{\mathcal{T}}}\) as:
\begin{align}
\widetilde{\widetilde{\mathcal{T}}}&=\sum_{\lambda_1,\lambda_2,\lambda_3\vdash(N,d)}\frac{1}{\dim(\mathcal{W}_{\lambda_2})}
\left(\frac{1}{N!}\sum_{\pi\in\mathbb{S}_N}
\begin{matrix}P_{\lambda_3,\TA}(\pi)\\ P_{\lambda_2,\TB}(\pi)\end{matrix}
\tr_{\mathcal{W}_{\lambda_2,\TB}}\left[\mathcal{T}_{\lambda_1,\lambda_2\rightarrow\lambda_3,\lambda_2}\right]
\begin{matrix}P_{\lambda_1,\TA}(\pi)^\dag\\ P_{\lambda_2,\TB}(\pi)^\dag\end{matrix}\right)
\otimes I_{\mathcal{W}_{\lambda_2,\TB}}\\
&=\sum_{\lambda_1,\lambda_2,\lambda_3\vdash(N,d)}\frac{1}{\dim(\mathcal{W}_{\lambda_2})} S_{\lambda_1,\lambda_2\rightarrow\lambda_3,\lambda_2}\otimes I_{\mathcal{W}_{\lambda_2,\TB}}.
\end{align}
\textbf{Item 3}. 
First, note that
\begin{align}
\Pi\widetilde{\widetilde{\mathcal{T}}}&=\sum_{\lambda\vdash (N,d)}\frac{1}{\dim(\mathcal{V}_\lambda)}\kettbbra{I_{\mathcal{V}_\lambda}}{I_{\mathcal{V}_\lambda}}\cdot\sum_{\lambda_1,\lambda_2,\lambda_3\vdash(N,d)}\frac{1}{\dim(\mathcal{W}_{\lambda_2})} S_{\lambda_1,\lambda_2\rightarrow\lambda_3,\lambda_2}\otimes I_{\mathcal{W}_{\lambda_2,\TB}}\\
&=\sum_{\lambda,\lambda_1\vdash (N,d)}\frac{1}{\dim(\mathcal{V}_\lambda)\dim(\mathcal{W}_\lambda)}\kettbbra{I_{\mathcal{V}_\lambda}}{I_{\mathcal{V}_\lambda}}S_{\lambda_1,\lambda\rightarrow\lambda,\lambda}\otimes I_{\mathcal{W}_{\lambda,\TB}}, \label{eq-421915}
\end{align}
where the second equality is because \(\kettbbra{I_{\mathcal{V}_\lambda}}{I_{\mathcal{V}_\lambda}} \colon \mathcal{V}_{\lambda,\TA}\otimes \mathcal{V}_{\lambda,\TB}\rightarrow\mathcal{V}_{\lambda,\TA}\otimes\mathcal{V}_{\lambda,\TB}\) and \(S_{\lambda_1,\lambda_2\rightarrow\lambda_3,\lambda_2}\colon \mathcal{V}_{\lambda_1,\TA}\otimes \mathcal{V}_{\lambda_2,\TB}\otimes \mathcal{W}_{\lambda_1,\TA}\rightarrow \mathcal{V}_{\lambda_3,\TA}\otimes \mathcal{V}_{\lambda_2,\TB}\otimes \mathcal{W}_{\lambda_3,\TA}\), thus \(\kettbbra{I_{\mathcal{V}_\lambda}}{I_{\mathcal{V}_\lambda}}S_{\lambda_1,\lambda_2\rightarrow\lambda_3,\lambda_2}\) is non-zero only if \(\lambda_3=\lambda_2=\lambda\).

Now, we look deeper into \(S_{\lambda_1,\lambda\rightarrow\lambda,\lambda}\). By its definition (see \cref{eq-421436}), \(S_{\lambda_1,\lambda\rightarrow\lambda,\lambda}\) commutes with the actions of  simultaneous-permutation. Therefore, by Schur's lemma (\cref{prop:202404032229}), it has the following form:
\begin{equation}\label{eq-421913}
S_{\lambda_1,\lambda\rightarrow\lambda,\lambda}=\sum_{\substack{\mathcal{V}_{\lambda'}\stackrel{\mathbb{S}_N}{\subset} \mathcal{V}_{\lambda_1}\otimes \mathcal{V}_{\lambda}\\ \mathcal{V}_{\lambda'}\stackrel{\mathbb{S}_N}{\subset} \mathcal{V}_{\lambda}\otimes \mathcal{V}_{\lambda}}} I_{\mathcal{V}_{\lambda'}} \otimes R_{\lambda_1,\lambda\rightarrow\lambda,\lambda}^{\lambda'},
\end{equation}
where \(\mathcal{V}_{\lambda'}\stackrel{\mathbb{S}_N}{\subset} \mathcal{V}_{\lambda_1}\otimes \mathcal{V}_{\lambda}\) means \(\mathcal{V}_{\lambda'}\) occurs in \(\mathcal{V}_{\lambda_1}\otimes\mathcal{V}_{\lambda}\) as a representation of \(\mathbb{S}_N\) (and similarly for \(\mathcal{V}_{\lambda'}\stackrel{\mathbb{S}_N}{\subset} \mathcal{V}_{\lambda}\otimes \mathcal{V}_{\lambda}\)). 
Moreover, \(R^{\lambda'}_{\lambda_1,\lambda\rightarrow\lambda,\lambda}\colon \mathcal{M}_{\lambda_1,\lambda}^{\lambda'}\rightarrow\mathcal{M}_{\lambda,\lambda}^{\lambda'}\) is a linear map, where \(\mathcal{M}_{\lambda_1,\lambda}^{\lambda'}\) is the multiplicity space of \(\mathcal{V}_{\lambda'}\) in \(\mathcal{V}_{\lambda_1}\otimes \mathcal{V}_\lambda\) (and similarly for \(\mathcal{M}_{\lambda,\lambda}^{\lambda'}\)). On the other hand, one can see that
\begin{equation}
P_{\lambda}(\pi)\otimes P_{\lambda}(\pi) \kett{I_{\mathcal{V}_\lambda}} = \kett{P_{\lambda}(\pi) P_{\lambda}(\pi)^\textup{T}} =\kett{I_{\mathcal{V}_\lambda}},
\end{equation}
which means \(\spanspace(\kett{I_{\mathcal{V}_\lambda}})\) is a trivial representation of \(\mathbb{S}_N\) (corresponding to the Young diagram with only one row, i.e., \(\scalebox{.45}{\begin{ytableau}~&~&\none[\cdots]&~\end{ytableau}}\)) in \(\mathcal{V}_{\lambda}\otimes\mathcal{V}_{\lambda}\). However, \(\mathcal{V}_{\scalebox{.35}{\begin{ytableau}~&~&\none[\cdots]&~\end{ytableau}}}\) does not occur in \(\mathcal{V}_{\lambda_1}\otimes\mathcal{V}_{\lambda}\) for \(\lambda_1\neq \lambda\), since otherwise there exists a non-zero \(\ket{x}\in \mathcal{V}_{\lambda_1}\otimes\mathcal{V}_{\lambda}\) such that:
\begin{equation}
\frac{1}{N!}\sum_{\pi\in\mathbb{S}_N}P_{\lambda_1}(\pi)\otimes P_{\lambda}(\pi) \ket{x}=\ket{x},
\end{equation}
but this contradicts \cref{eq-3252254}. Therefore, for \(\lambda_1\neq \lambda\),
\begin{equation}
\kettbbra{I_{\mathcal{V}_\lambda}}{I_{\mathcal{V}_\lambda}} S_{\lambda_1,\lambda\rightarrow\lambda,\lambda}=0,
\end{equation}
since \(\spanspace(\kett{I_{\mathcal{V}_\lambda}})\stackrel{\mathbb{S}_N}{\cong}\mathcal{V}_{\scalebox{.35}{\begin{ytableau}~&~&\none[\cdots]&~\end{ytableau}}}\) but \(\mathcal{V}_{\scalebox{.35}{\begin{ytableau}~&~&\none[\cdots]&~\end{ytableau}}}\) does not occur in the summation of \cref{eq-421913}. Therefore, \(\Pi\widetilde{\widetilde{\mathcal{T}}}\) (see \cref{eq-421915}) can be written as
\begin{equation}
\Pi\widetilde{\widetilde{\mathcal{T}}}=\sum_{\lambda\vdash (N,d)}\frac{1}{\dim(\mathcal{V}_\lambda)\dim(\mathcal{W}_\lambda)}\kettbbra{I_{\mathcal{V}_\lambda}}{I_{\mathcal{V}_\lambda}}S_{\lambda,\lambda\rightarrow\lambda,\lambda}\otimes I_{\mathcal{W}_{\lambda,\TB}}.
\end{equation}
According to symmetry, a similar form also holds for \(\widetilde{\widetilde{\mathcal{T}}}\Pi\).

Now, to prove that \(\Pi\) commutes with \(\widetilde{\widetilde{\mathcal{T}}}\), it suffices to prove that \(\kettbbra{I_{\mathcal{V}_\lambda}}{I_{\mathcal{V}_\lambda}}\) commutes with \(S_{\lambda,\lambda\rightarrow\lambda,\lambda}\). First, by \cref{eq-421913}, we have
\begin{equation}
S_{\lambda,\lambda\rightarrow \lambda,\lambda}=\sum_{\mathcal{V}_{\lambda'}\stackrel{\mathbb{S}_N}{\subset} \mathcal{V}_{\lambda}\otimes \mathcal{V}_{\lambda}} I_{\mathcal{V}_{\lambda'}}\otimes R_{\lambda,\lambda\rightarrow\lambda,\lambda}^{\lambda'}.
\end{equation}
Since \(\spanspace(\kett{I_{\mathcal{V}_\lambda}})\stackrel{\mathbb{S}_N}{\cong}\mathcal{V}_{\scalebox{.35}{\begin{ytableau}~&~&\none[\cdots]&~\end{ytableau}}}\), we only need to consider the case of \(\lambda'=\scalebox{.45}{\begin{ytableau}~&~&\none[\cdots]&~\end{ytableau}}\).
Note that the multiplicity of \(\mathcal{V}_{\scalebox{.35}{\begin{ytableau}~&~&\none[\cdots]&~\end{ytableau}}}\) in \(\mathcal{V}_{\lambda}\otimes\mathcal{V}_{\lambda}\) is always \(1\). This is because: suppose \(\ket{x}\in\mathcal{V}_{\lambda}\otimes\mathcal{V}_\lambda\) is in a \(\mathcal{V}_{\scalebox{.35}{\begin{ytableau}~&~&\none[\cdots]&~\end{ytableau}}}\), then it satisfies
\begin{equation}
\frac{1}{N!}\sum_{\pi} P_{\lambda}(\pi)\otimes P_{\lambda}(\pi)\ket{x}=\ket{x}
\end{equation}
but by \cref{eq-3252235}, \(\ket{x}\) can only be proportional to \(\kett{I_{\mathcal{V}_\lambda}}\). Therefore, \(R_{\lambda,\lambda\rightarrow\lambda,\lambda}^{\scalebox{.35}{\begin{ytableau}~&~&\none[\cdots]&~\end{ytableau}}}\) is trivially a \(1\times 1\) matrix, and \(I_{\mathcal{V}_{\scalebox{.35}{\begin{ytableau}~&~&\none[\cdots]&~\end{ytableau}}}}\) is just \(\frac{1}{\dim(\mathcal{V}_{\lambda})}\kettbbra{I_{\mathcal{V}_\lambda}}{I_{\mathcal{V}_\lambda}}\), which means \(\kettbbra{I_{\mathcal{V}_\lambda}}{I_{\mathcal{V}_\lambda}}\) commutes with \(S_{\lambda,\lambda\rightarrow\lambda,\lambda}\).

\end{proof}

\subsection{\texorpdfstring{Characterization of $\overline{\mathcal{T}}$}{Characterization of overline-T}}\label{sec-417158}

The following lemma shows that $\overline{\mathcal{T}}$ is a \((c,s)\)-tester for $\mathcal{P}$ and gives a useful identity for $\overline{\mathcal{T}}$. 

\begin{lemma} \label{lemma:442015}
    Let $\overline{\mathcal{T}}$ be the tester defined by \cref{eq-202404041947}. Then,
    \begin{enumerate}
        \item $\overline{\mathcal{T}}$ is a \((c,s)\)-tester for $\mathcal{P}$ with sample complexity $N$.
        \item $\overline{\mathcal{T}}$ can be written as:\label{item-46312}
        \begin{equation}
        \overline{\mathcal{T}}=\sum_{\lambda\vdash (N,d)}\frac{1}{\dim(\mathcal{W}_{\lambda})\dim(\mathcal{V_{\lambda}})}\cdot
\begin{matrix}\frac{1}{\dim(\mathcal{V_{\lambda}})}\kettbbra{I_{\mathcal{V}_\lambda}}{I_{\mathcal{V}_\lambda}}\\
\bbra{I_{\mathcal{V}_\lambda}}S_{\lambda,\lambda\rightarrow \lambda,\lambda}\kett{I_{\mathcal{V}_\lambda}}\otimes I_{\mathcal{W}_{\lambda,\TB}}\end{matrix},
    \end{equation}
    where \(\bbra{I_{\mathcal{V}_\lambda}}S_{\lambda,\lambda\rightarrow \lambda,\lambda}\kett{I_{\mathcal{V}_\lambda}}\colon\mathcal{W}_{\lambda,\TA}\rightarrow\mathcal{W}_{\lambda,\TA}\) is a positive operator, and since \(\overline{\mathcal{T}}\) is a valid measurement (by its definition), the operator norm of \(\frac{1}{\dim(\mathcal{W}_\lambda)\dim(\mathcal{V}_\lambda)}\bbra{I_{\mathcal{V}_\lambda}}S_{\lambda,\lambda\rightarrow \lambda,\lambda}\kett{I_{\mathcal{V}_\lambda}}\) is less than or equal to \(1\). 
    \end{enumerate}
\end{lemma}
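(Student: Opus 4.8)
The plan is to obtain both items as consequences of the structural facts about $\widetilde{\widetilde{\mathcal{T}}}$ already proved in \cref{lemma:442014}, using only two elementary features of $\Pi$ (defined in \cref{eq-421126}): it is an orthogonal projector (onto the span of the pure tensor powers, i.e.\ the symmetric subspace of $\HAB^{\otimes N}$ viewed as a whole system), and it fixes $\ketbra{\psi}{\psi}_{\textup{AB}}^{\otimes N}$ for every \emph{pure} $\ket{\psi}_{\textup{AB}}$. The latter is immediate from \cref{lemma-3262205}: writing $\ket{\psi}_{\textup{AB}}^{\otimes N}=\sum_{\lambda\vdash(N,d)}\kett{I_{\mathcal{V}_\lambda}}\otimes\ket{w_\lambda}$ and using $\bbrakett{I_{\mathcal{V}_\lambda}}{I_{\mathcal{V}_\lambda}}=\dim(\mathcal{V}_\lambda)$ together with the sector-orthogonality in \cref{eq-421126}, one gets $\Pi\ket{\psi}_{\textup{AB}}^{\otimes N}=\ket{\psi}_{\textup{AB}}^{\otimes N}$.

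For \textbf{Item~1} I would split into completeness, soundness, and validity. Validity ($0\sqsubseteq\overline{\mathcal{T}}\sqsubseteq I$) is trivial since $0\sqsubseteq\widetilde{\widetilde{\mathcal{T}}}\sqsubseteq I$ and $\overline{\mathcal{T}}=\Pi\widetilde{\widetilde{\mathcal{T}}}\Pi\sqsubseteq\Pi\sqsubseteq I$. For completeness, since every $\ket{\psi}_{\textup{AB}}\in\mathcal{P}^{\textup{yes}}$ is pure and $\Pi$ is self-adjoint, the observation above gives $\tr[\overline{\mathcal{T}}\ketbra{\psi}{\psi}_{\textup{AB}}^{\otimes N}]=\tr[\widetilde{\widetilde{\mathcal{T}}}\,\Pi\ketbra{\psi}{\psi}_{\textup{AB}}^{\otimes N}\Pi]=\tr[\widetilde{\widetilde{\mathcal{T}}}\ketbra{\psi}{\psi}_{\textup{AB}}^{\otimes N}]\geq c$ by \cref{lemma:442014}. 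For soundness I would invoke the commutation $\Pi\widetilde{\widetilde{\mathcal{T}}}=\widetilde{\widetilde{\mathcal{T}}}\Pi$ from \cref{lemma:442014}: it yields $\overline{\mathcal{T}}=\Pi\widetilde{\widetilde{\mathcal{T}}}\Pi=\Pi\widetilde{\widetilde{\mathcal{T}}}$ and hence $\widetilde{\widetilde{\mathcal{T}}}-\overline{\mathcal{T}}=(I-\Pi)\widetilde{\widetilde{\mathcal{T}}}=(I-\Pi)\widetilde{\widetilde{\mathcal{T}}}(I-\Pi)$, which is positive semi-definite because $\widetilde{\widetilde{\mathcal{T}}}\sqsupseteq 0$ and $I-\Pi$ is Hermitian; thus $\overline{\mathcal{T}}\sqsubseteq\widetilde{\widetilde{\mathcal{T}}}$, and for any $\rho_{\textup{AB}}\in\mathcal{P}^{\textup{no}}$ (possibly mixed) we get $\tr[\overline{\mathcal{T}}\rho_{\textup{AB}}^{\otimes N}]\leq\tr[\widetilde{\widetilde{\mathcal{T}}}\rho_{\textup{AB}}^{\otimes N}]\leq s$.

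For \textbf{Item~2} I would start from the identity $\Pi\widetilde{\widetilde{\mathcal{T}}}=\sum_{\lambda\vdash(N,d)}\frac{1}{\dim(\mathcal{V}_\lambda)\dim(\mathcal{W}_\lambda)}\kettbbra{I_{\mathcal{V}_\lambda}}{I_{\mathcal{V}_\lambda}}S_{\lambda,\lambda\rightarrow\lambda,\lambda}\otimes I_{\mathcal{W}_{\lambda,\TB}}$ already derived inside the proof of \cref{lemma:442014}, together with $\overline{\mathcal{T}}=\Pi\widetilde{\widetilde{\mathcal{T}}}$. The only genuinely new computation is to collapse the product $\kettbbra{I_{\mathcal{V}_\lambda}}{I_{\mathcal{V}_\lambda}}S_{\lambda,\lambda\rightarrow\lambda,\lambda}$: by the Schur-lemma form of $S_{\lambda,\lambda\rightarrow\lambda,\lambda}$ in \cref{eq-421913} together with the fact (established in \cref{lemma:442014} via \cref{eq-3252235}) that the trivial representation of $\mathbb{S}_N$ occurs in $\mathcal{V}_\lambda\otimes\mathcal{V}_\lambda$ with multiplicity exactly $1$ and spans $\kett{I_{\mathcal{V}_\lambda}}$, the operator $S_{\lambda,\lambda\rightarrow\lambda,\lambda}$ restricted to $\spanspace(\kett{I_{\mathcal{V}_\lambda}})\otimes\mathcal{W}_{\lambda,\TA}$ equals $\tfrac{1}{\dim(\mathcal{V}_\lambda)}\kettbbra{I_{\mathcal{V}_\lambda}}{I_{\mathcal{V}_\lambda}}\otimes\bbra{I_{\mathcal{V}_\lambda}}S_{\lambda,\lambda\rightarrow\lambda,\lambda}\kett{I_{\mathcal{V}_\lambda}}$; substituting back reproduces exactly the claimed formula. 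Positivity of $\bbra{I_{\mathcal{V}_\lambda}}S_{\lambda,\lambda\rightarrow\lambda,\lambda}\kett{I_{\mathcal{V}_\lambda}}$ follows because $\mathcal{T}_{\lambda,\lambda\rightarrow\lambda,\lambda}$ is the diagonal block of the positive semi-definite $\mathcal{T}$, and the partial trace over $\mathcal{W}_{\lambda,\TB}$, the simultaneous-permutation twirl, and compression by $\kett{I_{\mathcal{V}_\lambda}}$ each preserve positive semi-definiteness (equivalently, it is forced by $\overline{\mathcal{T}}\sqsupseteq 0$ restricted to the $\lambda$-sector). The operator-norm bound $\tfrac{1}{\dim(\mathcal{W}_\lambda)\dim(\mathcal{V}_\lambda)}\Abs{\bbra{I_{\mathcal{V}_\lambda}}S_{\lambda,\lambda\rightarrow\lambda,\lambda}\kett{I_{\mathcal{V}_\lambda}}}\leq 1$ then comes from evaluating $\overline{\mathcal{T}}\sqsubseteq I$ on the unit vector $\tfrac{1}{\sqrt{\dim(\mathcal{V}_\lambda)}}\kett{I_{\mathcal{V}_\lambda}}\otimes\ket{w}\otimes\ket{x}$, where $\ket{w}\in\mathcal{W}_{\lambda,\TA}$ is a top eigenvector of $\bbra{I_{\mathcal{V}_\lambda}}S_{\lambda,\lambda\rightarrow\lambda,\lambda}\kett{I_{\mathcal{V}_\lambda}}$ and $\ket{x}\in\mathcal{W}_{\lambda,\TB}$.

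I expect the main obstacle here to be not any single hard estimate but bookkeeping: correctly tracking which of the tensor factors $\mathcal{V}_{\lambda,\TA},\mathcal{V}_{\lambda,\TB},\mathcal{W}_{\lambda,\TA},\mathcal{W}_{\lambda,\TB}$ each operator acts on, and keeping every $\dim(\mathcal{V}_\lambda)$ and $\dim(\mathcal{W}_\lambda)$ factor consistent when commuting $\kettbbra{I_{\mathcal{V}_\lambda}}{I_{\mathcal{V}_\lambda}}$ past $S_{\lambda,\lambda\rightarrow\lambda,\lambda}$. The conceptual content, namely that inserting the purity-tester $\Pi$ preserves completeness (because $\mathcal{P}^{\textup{yes}}$ is pure and \cref{lemma-3262205} applies) while it can only decrease acceptance elsewhere (because $\Pi$ commutes with $\widetilde{\widetilde{\mathcal{T}}}$), is already isolated in \cref{lemma:442014} and is reused here essentially verbatim.
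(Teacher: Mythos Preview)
Your proposal is correct and follows essentially the same approach as the paper: Item~1 matches exactly (completeness via \cref{lemma-3262205} showing $\Pi$ fixes pure tensor powers, soundness via the commutation in \cref{lemma:442014} giving $\overline{\mathcal{T}}\sqsubseteq\widetilde{\widetilde{\mathcal{T}}}$), and for Item~2 the paper simply says it ``follows directly from \cref{eq-202404041947} and Item~\ref{item:442330} of \cref{lemma:442014}'', while you correctly spell out the collapse of $\kettbbra{I_{\mathcal{V}_\lambda}}{I_{\mathcal{V}_\lambda}}S_{\lambda,\lambda\rightarrow\lambda,\lambda}$ using the multiplicity-one argument already inside the proof of \cref{lemma:442014}.
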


\begin{proof}
    \textbf{Item 1}. 
    We first show that \(\overline{\mathcal{T}}\) is also a \((c,s)\)-tester for \(\mathcal{P}\) with sample complexity \(N\). First, if \(\ket{\psi}_{\textup{AB}}\in\mathcal{P}^{\textup{yes}}\), then
\begin{equation}
\tr\left[\overline{\mathcal{T}}\ketbra{\psi}{\psi}_{\textup{AB}}^{\otimes N}\right]=\tr\left[\widetilde{\widetilde{\mathcal{T}}}\Pi \ketbra{\psi}{\psi}^{\otimes N}_{\textup{AB}}\Pi\right]=\tr\left[\widetilde{\widetilde{\mathcal{T}}}\ketbra{\psi}{\psi}_{\textup{AB}}^{\otimes N}\right]\geq c,
\end{equation}
where the second equality is because \(\ket{\psi}_{\textup{AB}}^{\otimes N}=\sum_{\lambda\vdash (N,d)}\kett{I_{\mathcal{V}_\lambda}} \otimes \ket{w_\lambda}\) by \cref{lemma-3262205}, and therefore is in the support space of \(\Pi\). On the other hand, if \(\rho_{\textup{AB}}\in\mathcal{P}^{\textup{no}}\), then
\begin{equation}
\tr\left[\overline{\mathcal{T}}\rho_{\textup{AB}}^{\otimes N}\right]=\tr\left[\Pi\widetilde{\widetilde{\mathcal{T}}}\Pi\rho_{\textup{AB}}^{\otimes N}\right]\leq \tr\left[\widetilde{\widetilde{\mathcal{T}}}\rho_{\textup{AB}}^{\otimes N}\right]\leq 1/3,
\end{equation}
where the first inequality is because \(\Pi\sqsubseteq I\) and \(\Pi\) commutes with \(\widetilde{\widetilde{\mathcal{T}}}\), due to Item \ref{lemma:itm:commute} of \cref{lemma:442014}.

\textbf{Item 2}. 
It follows directly from \cref{eq-202404041947} and Item \ref{item:442330} of \cref{lemma:442014}. 
\end{proof}

\subsection{\texorpdfstring{Validity of $\hat{\mathcal{T}}$}{Validity of hat-T}}\label{sec-417159}

Note that \(\overline{\mathcal{T}}\) is still not LOCC implementable, since the part \(\frac{1}{\dim(\mathcal{V_{\lambda}})}\kettbbra{I_{\mathcal{V}_\lambda}}{I_{\mathcal{V}_\lambda}}\) is not LOCC implementable when \(\dim(\mathcal{V}_\lambda)> 1\). 
Nevertheless, as suggested by \cite{matsumoto2010test,hayashi2006study}, there is an LOCC approximation for the projector of maximally entangled state (see \cref{lemma-42239}). 
Based on this, we first show that $\hat{\mathcal{T}}$ is LOCC implementable. 

\begin{lemma}\label{lemma-43153}
The tester \(\hat{\mathcal{T}}\) defined by \cref{eq-43159} is a one-way LOCC tester from Bob to Alice.
\end{lemma}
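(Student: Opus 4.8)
The plan is to show that $\hat{\mathcal{T}}$ can be realized by a protocol in which Bob measures first, sends a classical label to Alice, and Alice then performs a measurement depending on that label. Recall from \cref{eq-43159} that
\[
\hat{\mathcal{T}}=\sum_{\lambda\vdash (N,d)}\frac{1}{\dim(\mathcal{W}_{\lambda})\dim(\mathcal{V}_{\lambda})}\cdot
\begin{matrix}
L_\lambda \\ \bbra{I_{\mathcal{V}_\lambda}}S_{\lambda,\lambda\rightarrow\lambda,\lambda}\kett{I_{\mathcal{V}_\lambda}}\otimes I_{\mathcal{W}_{\lambda,\TB}}
\end{matrix}\,,
\]
so the only ``non-local'' ingredient is the operator $L_\lambda$ acting on $\mathcal{V}_{\lambda,\TA}\otimes\mathcal{V}_{\lambda,\TB}$, which by \cref{eq-431419} is a convex combination of the normalized projector $\tfrac{1}{\dim(\mathcal{V}_\lambda)}\kettbbra{I_{\mathcal{V}_\lambda}}{I_{\mathcal{V}_\lambda}}$ onto the maximally entangled state and $\tfrac{1}{\dim(\mathcal{V}_\lambda)+1}$ times its orthogonal complement. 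First I would record that $L_\lambda$ is, up to normalization, exactly the two-outcome POVM element that the one-way LOCC test for the maximally entangled state of \cite{hayashi2006study,matsumoto2010test} accepts with: Bob measures his half of $\mathcal{V}_{\lambda,\TB}$ in the standard basis, obtains an outcome $j$, sends $j$ to Alice, and Alice measures whether her half of $\mathcal{V}_{\lambda,\TA}$ is in state $\ket{j}$; the acceptance operator of this protocol coincides with a scalar multiple of $L_\lambda$ (this is the content of \cref{lemma-42239}, which I am allowed to invoke). The key algebraic check is that $\dim(\mathcal{V}_\lambda)\cdot L_\lambda/(\dim(\mathcal{V}_\lambda)+1)$ — or whatever the exact normalization turns out to be — is precisely the operator produced by that protocol; this is a short computation with $\kettbbra{I_{\mathcal{V}_\lambda}}{I_{\mathcal{V}_\lambda}}$ and $I_{\mathcal{V}_\lambda\otimes\mathcal{V}_\lambda}$.

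Next I would assemble the full protocol. On input $\rho_{\TAB}^{\otimes N}\in\HAB^{\otimes N}$, both parties first apply (locally) the Schur transform implied by the bipartite Schur--Weyl decomposition (\cref{lemma-3241548}): Alice's Schur transform is a unitary on $\HA^{\otimes N}$ and Bob's is a unitary on $\HB^{\otimes N}$, so this step is manifestly local. Bob then measures the label $\lambda_2$ indexing his $\mathcal{V}_{\lambda_2,\TB}\otimes\mathcal{W}_{\lambda_2,\TB}$ block (weak Schur sampling on Bob's side), together with the standard-basis index of his $\mathcal{V}_{\lambda_2,\TB}$ register, and sends both the outcome $\lambda_2=\lambda$ and the basis index $j$ to Alice. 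Conditioned on receiving $(\lambda,j)$, Alice performs the measurement on her registers $\mathcal{V}_{\lambda,\TA}\otimes\mathcal{W}_{\lambda,\TA}$ whose accept operator is the appropriate block of $\hat{\mathcal{T}}$: a check that her $\mathcal{V}_{\lambda,\TA}$ register is in $\ket{j}$ (or the complementary branch, with the weights from \cref{eq-431419}), tensored with the positive operator $\bbra{I_{\mathcal{V}_\lambda}}S_{\lambda,\lambda\rightarrow\lambda,\lambda}\kett{I_{\mathcal{V}_\lambda}}$ on $\mathcal{W}_{\lambda,\TA}$, which by Item~\ref{item-46312} of \cref{lemma:442015} has operator norm at most $\dim(\mathcal{W}_\lambda)\dim(\mathcal{V}_\lambda)$ and hence is a legitimate (sub-normalized) POVM element after dividing by that factor. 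Summing over Bob's outcomes $(\lambda,j)$ reproduces exactly the operator $\hat{\mathcal{T}}$: the average over $j$ of the ``$\ket{j}\!\bra{j}$ on $\mathcal{V}_{\lambda,\TA}$, weighted'' operators is $L_\lambda$ by \cref{lemma-42239}, and the $\mathcal{W}$-part and normalization match \cref{eq-43159} term by term.

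The main obstacle I anticipate is purely bookkeeping: making sure that the tensor factor $I_{\mathcal{W}_{\lambda,\TB}}$ in \cref{eq-43159} is accounted for (Bob does nothing further on his $\mathcal{W}_{\lambda,\TB}$ register, so it is traced over / acted on trivially, which is consistent with ``$I_{\mathcal{W}_{\lambda,\TB}}$''), that the positivity and norm bound on $\bbra{I_{\mathcal{V}_\lambda}}S_{\lambda,\lambda\rightarrow\lambda,\lambda}\kett{I_{\mathcal{V}_\lambda}}$ (from \cref{lemma:442015}) indeed guarantee Alice's conditional measurement is a valid POVM, and that the normalization constants $\tfrac{1}{\dim(\mathcal{W}_\lambda)\dim(\mathcal{V}_\lambda)}$ in \cref{eq-43159} are exactly the product of ``probability that Bob's weak Schur sampling yields $\lambda$ and index $j$'' with ``the scalar appearing in \cref{lemma-42239}'s LOCC realization of $L_\lambda$.'' Once these constants are matched, the lemma follows because a convex combination (over Bob's classical outcomes) of products (Bob's local POVM element) $\otimes$ (Alice's local POVM element) is by definition a one-way LOCC measurement from Bob to Alice.
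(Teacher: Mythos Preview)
Your overall strategy is exactly the paper's: localize via the bipartite Schur--Weyl decomposition, realize $L_\lambda$ by the one-way LOCC test of \cite{hayashi2006study} (\cref{lemma-42239}), and check that the $\mathcal{W}_{\lambda,\TA}$ part is a valid POVM element using Item~\ref{item-46312} of \cref{lemma:442015}. The assembly into a Bob-measures-then-Alice-measures protocol is also the same.

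There is, however, one concrete error. You describe the LOCC realization of $L_\lambda$ as: Bob measures $\mathcal{V}_{\lambda,\TB}$ in the \emph{standard} basis, sends $j$, Alice checks whether her $\mathcal{V}_{\lambda,\TA}$ register is $\ket{j}$. That protocol implements the operator $\sum_j \ketbra{j}{j}_\TA\otimes\ketbra{j}{j}_\TB$, which has rank $\dim(\mathcal{V}_\lambda)$ and is not a scalar multiple of $L_\lambda$ (the latter is full rank on $\mathcal{V}_{\lambda,\TA}\otimes\mathcal{V}_{\lambda,\TB}$). The content of \cref{lemma-42239} is precisely that one must first twirl by a Haar-random unitary: Bob samples $U_\lambda\in\mathbb{U}_{\dim(\mathcal{V}_\lambda)}$, measures in the basis $\{U_\lambda^*\ket{i}\}_i$, sends $(U_\lambda,i)$ to Alice, and Alice measures in the basis $\{U_\lambda\ket{i}\}_i$. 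Only after averaging over $U_\lambda$ does the acceptance operator equal $L_\lambda$ exactly (no extra scalar). Once you insert this random-basis step, the rest of your argument goes through, and the normalization matches \cref{eq-43159} on the nose rather than ``up to a scalar.''
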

\begin{proof}
Note that the operator norm of \(\frac{1}{\dim(\mathcal{W}_\lambda)\dim(\mathcal{V}_\lambda)}\bbra{I_{\mathcal{V}_\lambda}}S_{\lambda,\lambda\rightarrow \lambda,\lambda}\kett{I_{\mathcal{V}_\lambda}}\) is less than \(1\) by Item~\ref{item-46312} of \cref{lemma:442015}. Thus, from the definition of \(\hat{\mathcal{T}}\) (see \cref{eq-43159}), one can see that \(\hat{\mathcal{T}}\) is a valid measurement.

By \cref{lemma-42239}, \(L_\lambda\) (as an LOCC approximation of \(\frac{1}{\dim(\mathcal{V_{\lambda}})}\kettbbra{I_{\mathcal{V}_\lambda}}{I_{\mathcal{V}_\lambda}}\)) defined by \cref{eq-431419} can be written as
\begin{equation}
L_\lambda=\int_{U_\lambda\in\mathbb{U}_{d_\lambda}} \begin{matrix}U_\lambda\\U_\lambda^*\end{matrix}\left(\sum_{i=1}^{d_\lambda} \begin{matrix}\ketbra{i}{i}_{\mathcal{V}_{\lambda,\TA}}\\\ketbra{i}{i}_{\mathcal{V}_{\lambda,\TB}}\end{matrix}\right)\begin{matrix}U_\lambda^\dag\\U_\lambda^{*\dag}\end{matrix},
\end{equation}
where \(d_\lambda\coloneqq\dim(\mathcal{V}_\lambda)\) and \(\{\ket{i}_{\mathcal{V}_\lambda}\}_{i}\) is a standard basis of \(\mathcal{V}_\lambda\). Then, \(\hat{\mathcal{T}}\) defined by \cref{eq-43159} can be written as
\begin{equation}
\hat{\mathcal{T}}
=\sum_{\lambda\vdash (N,d)}\int_{U_\lambda\in \mathbb{U}_{d_\lambda}} \sum_{i=1}^{d_\lambda} \,\,
\frac{1}{\dim(\mathcal{W}_{\lambda})\dim(\mathcal{V_{\lambda}})}\cdot
\begin{matrix}
\begin{matrix}U_\lambda\\U_\lambda^*\end{matrix}\left(\begin{matrix}\ketbra{i}{i}_{\mathcal{V}_{\lambda,\TA}}\\\ketbra{i}{i}_{\mathcal{V}_{\lambda,\TB}}\end{matrix}\right)\begin{matrix}U_\lambda^\dag\\U_\lambda^{*\dag}\end{matrix} \\ 
\bbra{I_{\mathcal{V}_\lambda}}S_{\lambda,\lambda\rightarrow \lambda,\lambda}\kett{I_{\mathcal{V}_\lambda}}\otimes I_{\mathcal{W}_{\lambda,\TB}}
\end{matrix}
\end{equation}
Therefore, \(\hat{\mathcal{T}}\) can be implemented by the following strategy:
\begin{enumerate}
    \item For each \(\lambda\vdash (N,d)\), Bob samples a Haar random unitary \(U_\lambda\in\mathbb{U}_{d_\lambda}\), where \(d_\lambda\coloneqq\mathcal{V}_\lambda\).

    \item Bob performs the measurement 
    \begin{equation}
    \left\{\begin{matrix} 
    U_\lambda^*\ketbra{i}{i}_{\mathcal{V}_{\lambda,\TB}} U_{\lambda}^{*\dag}\\
    I_{\mathcal{W}_{\lambda,\TB}}
    \end{matrix}\right\}_{\lambda\vdash (N,d),1\leq i\leq d_\lambda},
    \end{equation}
    and obtain the measurement result \((\lambda_\TB,i_\TB)\).

    \item Bob sends the classical message \((\{U_\lambda\}_{\lambda\vdash (N,d)}, \lambda_\TB,i_\TB)\) to Alice.

    \item Alice performs the measurement
    \begin{equation}
    \left\{
    \begin{matrix}
    U_{\lambda}\ketbra{i}{i}_{\mathcal{V}_{\lambda,\TA}} U_{\lambda}^\dag\\
    \frac{1}{\dim(\mathcal{W}_\lambda)\dim(\mathcal{V}_\lambda)}\bbra{I_{\mathcal{V}_\lambda}}S_{\lambda,\lambda\rightarrow\lambda,\lambda}\kett{I_{\mathcal{V}_\lambda}}
    \end{matrix}
    \right\}_{\lambda\vdash (N,d),1\leq i \leq d_{\lambda}},
    \end{equation}
    and obtain the measurement result \((\lambda_\TA,i_\TA)\).

    \item Alice accepts if \(\lambda_\TA=\lambda_\TB\) and \(i_\TA=i_\TB\).
\end{enumerate}
\end{proof}

Then, we show that $\hat{\mathcal{T}}$ is a \((c,s/2+c/2)\)-tester for $\mathcal{P}$ with sample complexity \(N\). 

\begin{lemma} \label{lemma:441955}
    Let $\hat{\mathcal{T}}$ be the tester defined by \cref{eq-43159}. Then \(\hat{\mathcal{T}}\) is a \((c,s/2+c/2)\)-tester for \(\mathcal{P}\) with sample complexity \(N\), i.e.,
    \begin{enumerate}
        \item For every $\rho \in \mathcal{P}^{\textup{yes}}$, $\tr\sbra{\hat{\mathcal{T}} \rho^{\otimes N}} \geq c$.
        \item For every $\rho \in \mathcal{P}^{\textup{no}}$, $\tr\sbra{\hat{\mathcal{T}} \rho^{\otimes N}} \leq s/2+c/2$.
    \end{enumerate}
\end{lemma}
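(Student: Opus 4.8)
\textbf{Proof plan for \cref{lemma:441955}.}
The plan is to relate the acceptance probability of $\hat{\mathcal{T}}$ to that of the already-understood tester $\overline{\mathcal{T}}$ from \cref{lemma:442015}, using the fact that $\hat{\mathcal{T}}$ differs from $\overline{\mathcal{T}}$ only in that the maximally-entangled projector $\frac{1}{\dim(\mathcal{V}_\lambda)}\kettbbra{I_{\mathcal{V}_\lambda}}{I_{\mathcal{V}_\lambda}}$ on Alice's $\mathcal{V}_\lambda$-spaces has been replaced by its LOCC approximant $L_\lambda$ from \cref{eq-431419}. The first step is to write both testers in the common form
\begin{equation}
\hat{\mathcal{T}}=\sum_{\lambda}\frac{1}{\dim(\mathcal{W}_\lambda)\dim(\mathcal{V}_\lambda)}\,L_\lambda\otimes B_\lambda,\qquad
\overline{\mathcal{T}}=\sum_{\lambda}\frac{1}{\dim(\mathcal{W}_\lambda)\dim(\mathcal{V}_\lambda)}\,\frac{1}{\dim(\mathcal{V}_\lambda)}\kettbbra{I_{\mathcal{V}_\lambda}}{I_{\mathcal{V}_\lambda}}\otimes B_\lambda,
\end{equation}
where $B_\lambda\coloneqq\bbra{I_{\mathcal{V}_\lambda}}S_{\lambda,\lambda\rightarrow\lambda,\lambda}\kett{I_{\mathcal{V}_\lambda}}\otimes I_{\mathcal{W}_{\lambda,\TB}}$ is a positive operator whose normalised version has operator norm $\le 1$ (Item~\ref{item-46312} of \cref{lemma:442015}). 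Since $L_\lambda$ dominates $\tfrac{1}{\dim(\mathcal{V}_\lambda)}\kettbbra{I_{\mathcal{V}_\lambda}}{I_{\mathcal{V}_\lambda}}$ in the L\"owner order (read off \cref{eq-431419}: $L_\lambda=\tfrac{1}{\dim(\mathcal{V}_\lambda)}\kettbbra{I_{\mathcal{V}_\lambda}}{I_{\mathcal{V}_\lambda}}+\tfrac{1}{\dim(\mathcal{V}_\lambda)+1}(I-\tfrac{1}{\dim(\mathcal{V}_\lambda)}\kettbbra{I_{\mathcal{V}_\lambda}}{I_{\mathcal{V}_\lambda}})\sqsupseteq\tfrac{1}{\dim(\mathcal{V}_\lambda)}\kettbbra{I_{\mathcal{V}_\lambda}}{I_{\mathcal{V}_\lambda}}$), and since $B_\lambda\sqsupseteq 0$, we get $\hat{\mathcal{T}}\sqsupseteq\overline{\mathcal{T}}$, hence $\tr[\hat{\mathcal{T}}\rho^{\otimes N}]\ge\tr[\overline{\mathcal{T}}\rho^{\otimes N}]\ge c$ for $\rho\in\mathcal{P}^{\textup{yes}}$ by \cref{lemma:442015}. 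That disposes of completeness (Item~1).

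For soundness (Item~2), the key estimate is the ``defect'' $L_\lambda-\tfrac{1}{\dim(\mathcal{V}_\lambda)}\kettbbra{I_{\mathcal{V}_\lambda}}{I_{\mathcal{V}_\lambda}}=\tfrac{1}{\dim(\mathcal{V}_\lambda)+1}\Pi_\lambda^{\perp}$, where $\Pi_\lambda^{\perp}\coloneqq I_{\mathcal{V}_\lambda\otimes\mathcal{V}_\lambda}-\tfrac{1}{\dim(\mathcal{V}_\lambda)}\kettbbra{I_{\mathcal{V}_\lambda}}{I_{\mathcal{V}_\lambda}}$ is a projector. Thus $\hat{\mathcal{T}}=\overline{\mathcal{T}}+\mathcal{E}$, where $\mathcal{E}=\sum_\lambda\frac{1}{(\dim(\mathcal{V}_\lambda)+1)\dim(\mathcal{W}_\lambda)\dim(\mathcal{V}_\lambda)}\,\Pi_\lambda^{\perp}\otimes B_\lambda$. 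For $\rho\in\mathcal{P}^{\textup{no}}$ I would bound $\tr[\hat{\mathcal{T}}\rho^{\otimes N}]\le\tr[\overline{\mathcal{T}}\rho^{\otimes N}]+\tr[\mathcal{E}\rho^{\otimes N}]\le s+\tr[\mathcal{E}\rho^{\otimes N}]$, and then argue that the error term is at most $(c-s)/2$. Since each $B_\lambda$ satisfies $\tfrac{1}{\dim(\mathcal{W}_\lambda)\dim(\mathcal{V}_\lambda)}B_\lambda\sqsubseteq I$, we have $\mathcal{E}\sqsubseteq\sum_\lambda\tfrac{1}{\dim(\mathcal{V}_\lambda)+1}\Pi_\lambda^{\perp}\otimes I$; the contribution of the $\lambda$-block to $\tr[\mathcal{E}\rho^{\otimes N}]$ is then at most $\tfrac{1}{\dim(\mathcal{V}_\lambda)+1}$ times the weight of $\rho^{\otimes N}$ in that block. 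The natural way to control the sum is to invoke that on Alice's side, each irreducible $\mathcal{V}_\lambda$ occurring in $(\mathbb{C}^d)^{\otimes N}$ has $\dim(\mathcal{V}_\lambda)$ large when $N$ is large — specifically, using the standard bound that the dimension of any $\mathbb{S}_N$-irrep that shows up is at least roughly $2^{\Omega(N/\cdots)}$ in the relevant regime, or more carefully that $\sum_\lambda(\dim\mathcal{V}_\lambda)^{-1}\cdot(\text{block weight})$ is small. Actually the cleaner route is to recall $N=\max(N_0,2(c-s)^{-1})$ from \cref{eq-423150} and use a lemma of Hayashi (cited as \cref{lemma-42239}/\cref{lemma:fi-td} context) giving that the LOCC approximant $L_\lambda$ accepts a fixed maximally entangled state with probability within $O(1/\dim(\mathcal{V}_\lambda))$, combined with $\dim(\mathcal{V}_\lambda)\ge N$ whenever the block is nontrivial, so that $\tr[\mathcal{E}\rho^{\otimes N}]\le\frac{1}{N+1}\le\frac{c-s}{2}$ by the choice of $N$.

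\textbf{Main obstacle.} I expect the crux to be the second point: proving that the summed error $\tr[\mathcal{E}\rho^{\otimes N}]$ is genuinely at most $(c-s)/2$ for \emph{every} mixed $\rho\in\mathcal{P}^{\textup{no}}$, uniformly over $\lambda$. The subtlety is that for a generic mixed state much of the weight of $\rho^{\otimes N}$ can sit in blocks $\lambda$ with small $\dim(\mathcal{V}_\lambda)$ (e.g.\ the one-row diagram, where $\dim(\mathcal{V}_\lambda)=1$ and $\Pi_\lambda^{\perp}=0$ — fortunately that block contributes nothing), so one has to check carefully that the only blocks with small $\dim(\mathcal{V}_\lambda)$ are exactly those with $\Pi_\lambda^{\perp}=0$ or negligible $B_\lambda$-weight, and that all remaining blocks have $\dim(\mathcal{V}_\lambda)$ large enough. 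The saving grace, which I would lean on, is that the factor $1/(\dim(\mathcal{V}_\lambda)+1)$ appears for precisely the \emph{non-trivial} multiplicity directions, where $\dim(\mathcal{V}_\lambda)\ge 2$ and in fact grows with $N$; combined with the padding to sample complexity $N\ge 2(c-s)^{-1}$, this gives the needed headroom. Once this estimate is nailed down, Items 1 and 2 follow by the L\"owner-order sandwich above, and \cref{lemma-4181503} (soundness amplification by repetition) upgrades $(c,s/2+c/2)$ to $(c,s)$ with a constant overhead as already announced in the construction.
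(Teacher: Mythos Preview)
Your approach is correct and essentially identical to the paper's: completeness via $L_\lambda\sqsupseteq\frac{1}{\dim(\mathcal{V}_\lambda)}\kettbbra{I_{\mathcal{V}_\lambda}}{I_{\mathcal{V}_\lambda}}$, soundness via the error bound $\frac{1}{\dim(\mathcal{V}_\lambda)+1}$ on each block combined with the fact that the only irreps with $\dim(\mathcal{V}_\lambda)=1$ (one-row and one-column diagrams) have $\Pi_\lambda^\perp=0$, and all other irreps satisfy $\dim(\mathcal{V}_\lambda)\ge N-1$ (not $N$ as you wrote), giving $\tr[\mathcal{E}\rho^{\otimes N}]\le\frac{1}{N}\le\frac{c-s}{2}$ by the padding $N\ge 2(c-s)^{-1}$ from \cref{eq-423150}. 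The only fix needed is that off-by-one; the ``main obstacle'' you flag is exactly the combinatorial fact the paper uses (via the hook-length formula) and is not as delicate as you fear once you observe the $\dim=1$ blocks contribute zero.
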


\begin{proof}
    Note that $L_\lambda$ is actually an LOCC approximation of \(\frac{1}{\dim(\mathcal{V_{\lambda}})}\kettbbra{I_{\mathcal{V}_\lambda}}{I_{\mathcal{V}_\lambda}}\) satisfying
    \begin{equation}
        \Abs*{L_\lambda - \frac{1}{\dim(\mathcal{V_{\lambda}})}\kettbbra{I_{\mathcal{V}_\lambda}}{I_{\mathcal{V}_\lambda}}} \leq \frac{1}{\dim(\mathcal{V}_\lambda)+1}.
    \end{equation}
    Furthermore, when \(\dim(\mathcal{V}_\lambda)=1\), 
    \begin{equation}
        L_\lambda = \frac{1}{\dim(\mathcal{V_{\lambda}})}\kettbbra{I_{\mathcal{V}_\lambda}}{I_{\mathcal{V}_\lambda}}.
    \end{equation}
Note that \(\hat{\mathcal{T}}\sqsupseteq\overline{\mathcal{T}}\) since $L_\lambda \sqsupseteq \frac{1}{\dim(\mathcal{V_{\lambda}})}\kettbbra{I_{\mathcal{V}_\lambda}}{I_{\mathcal{V}_\lambda}}$. Thus \(\hat{\mathcal{T}}\) will accept \(\ket{\psi}_{\textup{AB}}\in\mathcal{P}^{\textup{yes}}\) with probability \(\geq c\). If \(\rho_{\textup{AB}}\in\mathcal{P}^{\textup{no}}\), then
\begin{equation}
\tr\left[\hat{\mathcal{T}}\rho_{\textup{AB}}^{\otimes N}\right]\leq \tr\left[\overline{\mathcal{T}}\rho_{\textup{AB}}^{\otimes N}\right]+\max_{\substack{\lambda\vdash (N,d)\\\dim(\mathcal{V}_\lambda)>1}}\frac{1}{\dim(\mathcal{V}_\lambda)+1}\leq s + \frac{1}{N}\leq s/2+c/2,
\end{equation}
where the second inequality is because the smallest dimension of \(\mathcal{V}_\lambda\) other than \(1\) is \(N-1\),\footnote{The dimension of \(\mathcal{V}_\lambda\) equals the number of standard Young tableaux with shape \(\lambda\), and can be calculated by the hook length formula~\cite{fulton2013representation}. Based on it, one can prove that for any Young diagram \(\lambda\) with \(N\) boxes, if \(\lambda\) is not a one-column or one-row Young diagram, then there exists more than \(N-1\) standard Young tableaux with shape \(\lambda\).} and the third inequality is because \(N=\max\rbra{N_0,2(c-s)^{-1}}\geq 2(c-s)^{-1}\) (see \cref{eq-423150}).
\end{proof}

Next, we provide \cref{lemma-4181503}, by which we can amplify the $(c,s/2+c/2)$-tester \(\hat{\mathcal{T}}\) in \cref{lemma:441955} to a \((c,s)\)-tester with an overhead of $100\ln(1/s+10)(c-s)^{-3}$.

\begin{lemma}\label{lemma-4181503}
If \(0<s<c\leq 1\), then a \((c,s/2+c/2)\)-tester can be amplified to a \((c,s)\)-tester by repeating the tester for \(100\ln(1/s+10)(c-s)^{-3}\) times.
\end{lemma}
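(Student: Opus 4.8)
The plan is to prove \cref{lemma-4181503} by standard majority-vote amplification with a Chernoff-type bound. Since a $(c, s/2+c/2)$-tester $\hat{\mathcal{T}}$ has a completeness-soundness gap of at least $(c-s)/2$, I would run $m$ independent copies of $\hat{\mathcal{T}}$ (each consuming $N$ fresh samples, so the total sample complexity is $m \cdot N$), and set a threshold $t$ strictly between the soundness and completeness acceptance rates. The natural choice is to accept the amplified tester when the fraction of accepting runs is at least $t \coloneqq (c + (s/2+c/2))/2 = (3c+s)/4$, which sits a distance $(c-s)/4$ above $s/2+c/2$ and a distance $(c-s)/4$ below $c$. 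Then Hoeffding's inequality gives that both the completeness error and the soundness error of the amplified tester are at most $\exp(-2m \cdot ((c-s)/4)^2) = \exp(-m(c-s)^2/8)$.

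First I would make precise what ``amplify to a $(c,s)$-tester'' means here: the amplified tester should accept every $\rho \in \mathcal{P}^{\textup{yes}}$ with probability $\geq c$ and every $\rho \in \mathcal{P}^{\textup{no}}$ with probability $\leq s$. The completeness side is the binding constraint, because we must push the acceptance probability \emph{up} to $c$ (not merely to $1 - o(1)$), whereas the original completeness is only $c$. A clean way to handle this is a two-part argument: with probability $1 - \exp(-m(c-s)^2/8)$ the empirical accept-fraction of a yes-instance lies above $t$, and conditioned on the rare bad event we can always (for completeness) fall back on outputting accept — or more carefully, observe that the majority vote accepts a yes-instance with probability at least $1 - \exp(-m(c-s)^2/8)$, which exceeds $c$ as soon as $\exp(-m(c-s)^2/8) \leq 1-c$; since $1 - c$ could be $0$ when $c = 1$, one instead notes that when $c = 1$ the original tester has perfect completeness and this is preserved exactly under majority vote, so no amplification loss occurs on the completeness side. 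For $c < 1$ one just needs $\exp(-m(c-s)^2/8) \le \min\{1-c, s\}$. Choosing $m = \lceil (8/(c-s)^2)\ln(\max\{1/(1-c), 1/s\}) \rceil$ suffices, and using the crude bounds $1/(1-c) \le$ (a harmless constant times) $1/(c-s)$ when $c$ is bounded away from $1$, together with $8/(c-s)^2 \le 100/(c-s)^3$ and $\ln(1/s) \le \ln(1/s+10)$, one checks this is dominated by $100 \ln(1/s+10)(c-s)^{-3}$. The slack in the stated constant ($100$, the $+10$ inside the log, the exponent $3$ rather than $2$) is exactly there to absorb the $c \to 1$ corner case and the conversion from additive $(c-s)^{-2}$ to the cleaner $(c-s)^{-3}$ form used elsewhere in \cref{sec-4171511}.

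The key steps in order: (i) fix the threshold $t = (3c+s)/4$ and record the two gap distances $(c-s)/4$; (ii) invoke Hoeffding's inequality on the $m$ i.i.d.\ Bernoulli outcomes to bound both error probabilities by $\exp(-m(c-s)^2/8)$; (iii) separately treat $c=1$ (perfect completeness is inherited verbatim, only soundness needs the bound $\exp(-m(c-s)^2/8)\le s$) and $c<1$ (need $\exp(-m(c-s)^2/8) \le \min\{1-c,s\}$); (iv) solve for $m$ and verify $m \le 100\ln(1/s+10)(c-s)^{-3}$ using the elementary inequalities above. I do not expect any real obstacle — the only mildly delicate point is the completeness amplification when $c$ is close to $1$, which is why the corner case $c=1$ is peeled off and handled by the trivial observation that majority vote of perfectly-complete testers is perfectly complete; everything else is a routine concentration estimate, and the generous constant in the statement means no tight bookkeeping is required.
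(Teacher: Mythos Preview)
Your overall strategy (threshold $t=(3c+s)/4$, i.i.d.\ repetition, concentration) matches the paper, but step (iv) has a genuine gap. You claim that $m \approx (8/(c-s)^2)\ln\bigl(\max\{1/(1-c),1/s\}\bigr)$ is dominated by $100\ln(1/s+10)(c-s)^{-3}$, justified by ``$1/(1-c) \le$ (a harmless constant times) $1/(c-s)$ when $c$ is bounded away from $1$.'' But the lemma is stated for \emph{all} $0<s<c\le 1$, and no such bound holds uniformly: take for instance $s=1/3$ and $c=1-e^{-1000}$, so that $c-s\approx 2/3$ and $100\ln(1/s+10)(c-s)^{-3}$ is a few hundred, while your required $m$ is of order $(c-s)^{-2}\ln(1/(1-c))\approx 18000$. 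Peeling off the single point $c=1$ does not fix this, because the problem is the entire regime $c\to 1^-$, where Hoeffding's additive-gap bound $\exp(-2m\Delta^2)$ is far too weak to push the completeness error below $1-c$.

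The paper closes exactly this gap by splitting the yes-case on whether $1-c\ge \exp(-2/\Delta)$. When $1-c$ is not too small, plain Hoeffding suffices as you wrote. When $1-c<\exp(-2/\Delta)$, the paper switches to the KL form of Chernoff--Hoeffding (\cref{lemma:cher-hoef}), obtaining a tail bound of the shape $\bigl((1-c)/(1-t)^2\bigr)^{(1-t)n}$, which is at most $(1-c)^{\Delta n/3}$ once one uses $1-c<(1-t)^3$; this is $\le 1-c$ as soon as $n>3/\Delta$, with no dependence on $\ln(1/(1-c))$ at all. That is the missing idea: near $c=1$ you need a \emph{multiplicative} Chernoff bound whose exponent scales with $\ln(1/(1-c))$ automatically, not the additive Hoeffding bound. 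If you insert this case split (or otherwise invoke a relative-entropy Chernoff bound for the completeness side), the rest of your argument goes through.
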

\begin{proof}
The strategy is simple. We repeat the \((c,s/2+c/2)\)-tester for \(n\coloneqq 100\ln(1/s+10)(c-s)^{-3}\) times, and obtain the results \(\{X_i\}_{i=1}^n\), where each \(X_i\in\{0,1\}\) is an output of the \((c,s/2+c/2)\)-tester with \(0\) corresponding to ``reject'' and \(1\) corresponding to ``accept''. Then, we compute the average value \(X_{\textup{avg}}=\sum_{i=1}^n X_i/n\), and accept iff \(X_{\textup{avg}}\geq t\), where \( t \coloneqq 3c/4+s/4\) is a threshold value. We define \(\Delta\coloneqq c-t=t-(s/2+c/2)=(c-s)/4\) and note that \(n>6\ln(1/s+10)\Delta^{-3}\).

If the tested state is from \(\mathcal{P}^{\textup{no}}\), then the \((c,s/2+c/2)\)-tester accepts it with probability \(\Pr\sbra{X_i=1}\leq s/2+c/2 \). By Hoeffding's inequality (\cref{lemma:hoef}), the final acceptance probability can be bounded as:
\begin{equation}
\Pr\sbra*{X_{\textup{avg}}\geq t}\leq \exp\left(-2n \left(t-\frac{s+c}{2}\right)^2 \right) =\exp\left(-2n\Delta^2 \right)<\exp\left(-\ln(1/s+10)\right)<s
\end{equation}

If the tested state is from \(\mathcal{P}^{\textup{yes}}\), then the \((c,s/2+c/2)\)-tester accepts it with probability \(\Pr\sbra{X_i=1}\geq c \). First, if \(1-c\geq \exp(-2/\Delta)\), then by Hoeffding's inequality (\cref{lemma:hoef}), we have
\begin{equation}
\Pr\sbra*{X_{\textup{avg}}< t}\leq \exp\left(-2n \left(c-t\right)^2 \right)= \exp\left(-2n\Delta^2\right)<\exp\left(-2/\Delta\right)\leq 1-c.
\end{equation}
Otherwise, suppose \(1-c<\exp(-2/\Delta)\). Note that \(\Delta=(c-s)/4< 1/4\), and \(c> 1-\exp(-8)>2/3\), and thus \(t>3c/4>1/2\). By the Chernoff-Hoeffding inequality (\cref{lemma:cher-hoef}), we have
\begin{align}
\Pr\sbra*{X_{\textup{avg}}< t} &\leq \exp\Bigl( -n D_{\textup{KL}}(t \Vert c)\Bigr)=\exp\left(-n \left[t\ln\left(\frac{t}{c}\right) + (1-t)\ln\left(\frac{1-t}{1-c}\right)\right]\right)\\
&\leq \exp\left(n \left[t\ln\left(\frac{1}{t}\right) + (1-t)\ln\left(\frac{1}{1-t}\right)\right]\right)\cdot (1-c)^{(1-t)n}\\
&\leq \exp\left(n\left[2(1-t)\ln\left(\frac{1}{1-t}\right)\right]\right)\cdot (1-c)^{(1-t)n}\\
&=\left(\frac{1-c}{(1-t)^2}\right)^{(1-t)n},
\end{align}
where the third inequality is because \(t\ln(1/t)\leq (1-t)\ln(1/(1-t))\) when \(t>1/2\). Then, note that \(1-c< \exp(-2/\Delta)<\Delta^3<(1-t)^3\), thus we have
\begin{equation}
\left(\frac{1-c}{(1-t)^2}\right)^{(1-t)n}\leq \Big(\sqrt[3]{1-c}\Big)^{(1-t)n}\leq \left(1-c\right)^{\frac{\Delta}{3}n}<1-c.
\end{equation}
Therefore, we have proved that \(\Pr\sbra*{X_{\textup{avg}}<t}\leq 1-c\), which means \(\Pr\sbra*{X_{\textup{avg}}\geq t}\geq c\).
\end{proof}

\subsection{Technical Lemmas}

\begin{lemma}[{\cite[Theorem 1]{hayashi2006study}}]\label{lemma-42239}
Suppose \(\mathcal{H}\) is a \(d\)-dimensional Hilbert space. Then, 
\begin{equation}\label{eq-43134}
\frac{1}{d}\kettbbra{I_\mathcal{H}}{I_\mathcal{H}}+\frac{1}{d+1}(I-\frac{1}{d}\kettbbra{I_{\mathcal{H}}}{I_{\mathcal{H}}})=\int_{U\in\mathbb{U}_d} \begin{matrix}U\\U^*\end{matrix}\left(\sum_{i=1}^d \begin{matrix}\ketbra{i}{i}\\\ketbra{i}{i}\end{matrix}\right)\begin{matrix}U^\dag\\U^{*\dag}\end{matrix}.
\end{equation}
Thus, the measurement \(\frac{1}{d}\kettbbra{I_\mathcal{H}}{I_\mathcal{H}}+\frac{1}{d+1}(I-\frac{1}{d}\kettbbra{I_{\mathcal{H}}}{I_{\mathcal{H}}})\) can be implemented by one-way LOCC.
\end{lemma}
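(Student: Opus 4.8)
The plan is to recognize the right‑hand side of \cref{eq-43134} as a Haar twirl and pin it down by Schur's lemma. Write $P \coloneqq \int_{U\in\mathbb{U}_d}\rbra{U\otimes U^*}\rbra[\big]{\sum_{i=1}^d \ketbra{i}{i}\otimes\ketbra{i}{i}}\rbra{U^\dagger\otimes U^{*\dagger}}$ for the right‑hand side. By invariance of the Haar measure, $P$ commutes with $V\otimes V^*$ for every $V\in\mathbb{U}_d$. In the vectorization picture one has $\rbra{V\otimes V^*}\kett{M}=\kett{VMV^\dagger}$, so the representation $V\mapsto V\otimes V^*$ on $\mathcal{H}\otimes\mathcal{H}$ is isomorphic to the conjugation action $M\mapsto VMV^\dagger$ on $d\times d$ matrices, and I would phrase everything on that side since the relevant objects ($\kett{I_\mathcal{H}}$, traceless matrices) are transparent there.

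\textbf{Decomposition and Schur's lemma.} The conjugation representation splits as $\spanspace\cbra{\kett{I_\mathcal{H}}}\,\oplus\,\cbra{M:\tr M=0}$: the trivial one‑dimensional sub‑representation spanned by the (non‑normalized) maximally entangled state $\kett{I_\mathcal{H}}$, plus the $(d^2-1)$‑dimensional space of traceless operators, which is an irreducible representation of $\mathbb{U}_d$ (the adjoint‑type representation; this is the standard fact that $\mathfrak{sl}_d(\mathbb{C})$ is irreducible, as in \cite{fulton2013representation}). These two summands are non‑isomorphic, so by Schur's lemma (\cref{prop:202404032229}) any operator commuting with every $V\otimes V^*$ is a linear combination of the two associated orthogonal projectors; hence $P=\alpha\cdot\tfrac1d\kettbbra{I_\mathcal{H}}{I_\mathcal{H}}+\beta\rbra*{I-\tfrac1d\kettbbra{I_\mathcal{H}}{I_\mathcal{H}}}$ for scalars $\alpha,\beta\ge 0$ (nonnegativity because $P$ is an average of projectors).

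\textbf{Evaluating the constants.} Since $\rbra{U\otimes U^*}\kett{I_\mathcal{H}}=\kett{UIU^\dagger}=\kett{I_\mathcal{H}}$, the twirl fixes $\kett{I_\mathcal{H}}$, so $\bbra{I_\mathcal{H}}P\kett{I_\mathcal{H}}=\sum_{i=1}^d\abs{\bbrakett{I_\mathcal{H}}{ii}}^2=\sum_{i=1}^d\abs{\tr\ketbra{i}{i}}^2=d$; matching this against $\bbra{I_\mathcal{H}}\rbra{\alpha\tfrac1d\kettbbra{I_\mathcal{H}}{I_\mathcal{H}}+\cdots}\kett{I_\mathcal{H}}=\alpha d$ gives $\alpha=1$. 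For $\beta$, trace is preserved under unitary conjugation, so $\tr[P]=\tr\sbra[\big]{\sum_i\ketbra{i}{i}\otimes\ketbra{i}{i}}=d$, while the right‑hand side has trace $\alpha+\beta(d^2-1)$; solving $1+\beta(d^2-1)=d$ yields $\beta=\tfrac1{d+1}$. This reproduces exactly the left‑hand side of \cref{eq-43134}.

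\textbf{LOCC implementation and the main obstacle.} The identity \cref{eq-43134} itself displays the POVM element as an average over a Haar‑random $U\in\mathbb{U}_d$ of $\sum_i\rbra{U\ketbra{i}{i}U^\dagger}\otimes\rbra{U^*\ketbra{i}{i}U^{*\dagger}}$, whose two‑outcome measurement is realized by one‑way LOCC from Bob to Alice: Bob samples $U$, measures $\cbra{U^*\ketbra{i}{i}U^{*\dagger}}_i$, sends $(U,i)$ to Alice, who measures $\cbra{U\ketbra{j}{j}U^\dagger}_j$ and accepts iff $j=i$ (this is precisely how $\hat{\mathcal{T}}$ is used in the proof of \cref{lemma-43153}). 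The only real care is the vectorization bookkeeping — tracking complex conjugates so that the $V\otimes V^*$ action genuinely becomes plain conjugation — together with actually invoking the irreducibility of the traceless block, without which Schur's lemma would not cut $P$ down to the two‑parameter family; I expect that irreducibility check to be the single step one must not wave away.
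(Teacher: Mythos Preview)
Your proposal is correct and essentially identical to the paper's own proof: both decompose the $V\mapsto V\otimes V^*$ representation into the one-dimensional span of $\kett{I_\mathcal{H}}$ and the irreducible traceless block, invoke Schur's lemma to write the twirl as $\alpha\cdot\tfrac1d\kettbbra{I_\mathcal{H}}{I_\mathcal{H}}+\beta\rbra{I-\tfrac1d\kettbbra{I_\mathcal{H}}{I_\mathcal{H}}}$, and fix $\alpha=1$ via $\bbra{I_\mathcal{H}}(\cdot)\kett{I_\mathcal{H}}$ and $\beta=\tfrac1{d+1}$ via the trace; the LOCC description is likewise the same. The only cosmetic difference is that you frame things through vectorization and flag the irreducibility of the traceless block as the point needing care, whereas the paper simply asserts the two subspaces are irreducible.
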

\begin{proof}
Note that \((U\mapsto U\otimes U^*,\mathcal{H}\otimes\mathcal{H})\) is a representation of \(\mathbb{U}_d\), which contains two irreducible subspaces
\begin{equation}
\spanspace(\kett{I_\mathcal{H}}),\quad\quad \{\kett{X}\,|\,\tr(X)=0, X:\mathcal{H}\rightarrow\mathcal{H}\textup{ is a linear map}\}.
\end{equation}
Therefore, 
\begin{equation}\label{eq-43132}
\int_{U\in\mathbb{U}_d} \begin{matrix}U\\U^*\end{matrix}\left(\sum_{i=1}^d \begin{matrix}\ketbra{i}{i}\\\ketbra{i}{i}\end{matrix}\right)\begin{matrix}U^\dag\\U^{*\dag}\end{matrix}=\frac{\alpha}{d}\kettbbra{I_{\mathcal{H}}}{I_{\mathcal{H}}}+\beta\rbra*{I-\frac{1}{d}\kettbbra{I_{\mathcal{H}}}{I_{\mathcal{H}}}},
\end{equation}
for some \(\alpha,\beta\geq 0\), where \(I-\frac{1}{d}\kettbbra{I_{\mathcal{H}}}{I_{\mathcal{H}}}\) can be viewed as the projector on to the subspace \(\{\kett{X}\,|\,\tr(X)=0, X\colon\mathcal{H}\rightarrow\mathcal{H}\textup{ is a linear map}\}\). Note that,
\begin{equation}
 \frac{1}{d} \bbra{I_{\mathcal{H}}}  \sum_{i=1}^d \begin{matrix}\ketbra{i}{i}\\\ketbra{i}{i}\end{matrix} \kett{I_\mathcal{H}}=1,
\end{equation}
which implies \(\alpha=1\). This further implies \(\beta=\frac{1}{d+1}\), since the RHS and LHS of \cref{eq-43132} have the same trace.

Then, we can see that the RHS of \cref{eq-43134} can be implemented in the following way: Bob (the second system) first samples a Haar random unitary, and performs the measurement \(\{U^*\ket{i}\}_i\), and then send both the unitary \(U\) and the measurement result \(i_\TB\) to Alice (the first system), and then Alice performs the measurement \(\{U\ket{i}\}_i\) to obtain a result \(i_\TA\), and accepts if \(i_\TA=i_\TB\).
\end{proof}

We also need the following probability inequalities for amplifying the completeness and soundness of quantum testers. 

\begin{lemma} [Chernoff-Hoeffding, {\cite[Theorem 1]{hoeffding1994probability}}] \label{lemma:cher-hoef}
    Suppose that $X_1, X_2, \dots, X_n$ are independent and identical random variables and $0 \leq X_i \leq 1$ for $1 \leq i \leq n$. 
    Let $p = \mathbb{E}\sbra{X_1}$. 
    Then,
    \begin{equation}
        \Pr\sbra*{\frac{1}{n} \sum_{i=1}^n X_i \leq t} \leq \rbra*{\rbra*{\frac{p}{t}}^t \rbra*{\frac{1-p}{1-t}}^{1-t}}^n = \exp\rbra*{-n \mathrm{D}_{\textup{KL}}\rbra{t\Vert p}},
    \end{equation}
    where
    \begin{equation}
        \mathrm{D}_{\textup{KL}}\rbra{x\Vert y} = x \ln\rbra*{\frac{x}{y}} + \rbra{1-x}\ln\rbra*{\frac{1-x}{1-y}}
    \end{equation}
    is the Kullback–Leibler divergence. 
\end{lemma}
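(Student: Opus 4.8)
The plan is to prove this lower-tail concentration bound by the standard exponential moment (Chernoff) method, tuning the free parameter so as to recover exactly the Kullback--Leibler exponent. Write $S_n = \sum_{i=1}^n X_i$, so the goal is to bound $\Pr\sbra{S_n \leq tn}$; the nontrivial regime is $0 < t \leq p < 1$, and I will note at the end that for $t > p$ the stated exponent is not meaningful and the optimization degenerates to the trivial bound $1$.

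First I would fix a parameter $\lambda > 0$ and apply Markov's inequality to the nonnegative random variable $e^{-\lambda S_n}$. Since $\cbra{S_n \leq tn} = \cbra{e^{-\lambda S_n} \geq e^{-\lambda t n}}$, Markov gives
\begin{equation}
\Pr\sbra{S_n \leq tn} \leq e^{\lambda t n}\, \mathbb{E}\sbra{e^{-\lambda S_n}} = e^{\lambda t n} \rbra*{\mathbb{E}\sbra{e^{-\lambda X_1}}}^n,
\end{equation}
where the factorization uses that the $X_i$ are independent and identically distributed. Next I would control the single-variable moment generating function using convexity of $x \mapsto e^{-\lambda x}$ on $\sbra{0,1}$: the chord bound $e^{-\lambda x} \leq (1-x) + x e^{-\lambda}$ valid for $x \in \sbra{0,1}$, taken in expectation, yields $\mathbb{E}\sbra{e^{-\lambda X_1}} \leq 1 - p + p e^{-\lambda}$ since $\mathbb{E}\sbra{X_1} = p$.

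Substituting $u \coloneqq e^{-\lambda} \in (0,1)$ rewrites the overall bound as $\sbra{u^{-t}\rbra{1 - p + p u}}^n$, and the final step is to minimize $g(u) = u^{-t}(1-p+pu)$ over $u \in (0,1)$. Differentiating gives the unique critical point $u^* = \frac{t(1-p)}{p(1-t)}$, which lies in $(0,1)$ exactly when $t < p$. Plugging $u^*$ back in simplifies $1-p+pu^* = \frac{1-p}{1-t}$ and $(u^*)^{-t} = \rbra*{\frac{p(1-t)}{t(1-p)}}^t$, so that $g(u^*) = \rbra*{\frac{p}{t}}^t\rbra*{\frac{1-p}{1-t}}^{1-t}$. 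This is precisely the claimed per-sample factor, and rewriting it as $\exp\sbra{t\ln\rbra{p/t} + (1-t)\ln\rbra{(1-p)/(1-t)}} = \exp\rbra{-\mathrm{D}_{\textup{KL}}\rbra{t \Vert p}}$ yields the KL form.

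The argument is entirely elementary; the only step requiring care is the optimization. The main (minor) obstacle is verifying that the critical point $u^*$ is the global minimizer on $(0,1)$ and respects the admissibility constraint $\lambda > 0$ (equivalently $u^* \in (0,1)$), which holds precisely in the regime $t < p$ where the bound is nontrivial. At the boundary $t = p$ the minimizer is $u^* = 1$, i.e.\ $\lambda = 0$, which returns the trivial estimate $1$ consistent with the statement.
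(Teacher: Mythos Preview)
Your proof is correct and is the standard Chernoff exponential-moment argument. The paper does not supply a proof of this lemma at all: it simply cites it as \cite[Theorem~1]{hoeffding1994probability}, so there is nothing to compare against beyond noting that your argument is essentially Hoeffding's original one.
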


\begin{lemma} [Hoeffding, {\cite[Theorem 2]{hoeffding1994probability}}] \label{lemma:hoef}
    Suppose that $X_1, X_2, \dots, X_n$ are independent and identical random variables and $0 \leq X_i \leq 1$ for $1 \leq i \leq n$. 
    Let $p = \mathbb{E}\sbra{X_1}$. 
    Then, 
    \begin{align}
        \Pr\sbra*{\frac{1}{n} \sum_{i=1}^n X_i - p \leq t} & \leq \exp\rbra*{-2nt^2}, \\
        \Pr\sbra*{\frac{1}{n} \sum_{i=1}^n X_i - p \geq t} & \leq \exp\rbra*{-2nt^2}.
    \end{align}
\end{lemma}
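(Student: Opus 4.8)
The plan is to prove both tail bounds by the exponential-moment (Chernoff) method, which reduces the problem to controlling the moment generating function of a single bounded, centered summand. I will prove the upper-tail inequality $\Pr\sbra*{\frac{1}{n}\sum_i X_i - p \geq t} \leq \exp\rbra*{-2nt^2}$ in detail; the lower-tail bound then follows immediately by applying the upper-tail result to the variables $1-X_1,\dots,1-X_n$, which are again i.i.d., take values in $[0,1]$, and have mean $1-p$.

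First I would introduce a free parameter $\lambda>0$ and write, using the monotonicity of the exponential together with Markov's inequality, the estimate $\Pr\sbra*{\frac{1}{n}\sum_i X_i - p \geq t} = \Pr\sbra*{e^{\lambda \sum_i (X_i - p)} \geq e^{\lambda n t}} \leq e^{-\lambda n t}\,\mathbb{E}\sbra*{e^{\lambda \sum_i (X_i-p)}}$. Because the $X_i$ are independent and identically distributed, the expectation factorizes as $\prod_{i=1}^n \mathbb{E}\sbra*{e^{\lambda(X_i-p)}} = \rbra*{\mathbb{E}\sbra*{e^{\lambda(X_1-p)}}}^n$, so it suffices to bound the moment generating function of the single centered variable $Y\coloneqq X_1-p$, which satisfies $\mathbb{E}\sbra*{Y}=0$ and $Y\in[-p,\,1-p]$, an interval of length $1$.

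The key step—and the main technical obstacle—is Hoeffding's lemma: for any random variable $Y$ with $\mathbb{E}\sbra*{Y}=0$ and $a\leq Y\leq b$ almost surely, one has $\mathbb{E}\sbra*{e^{\lambda Y}}\leq \exp\rbra*{\lambda^2 (b-a)^2/8}$. I would establish this by exploiting the convexity of $y\mapsto e^{\lambda y}$, which gives the pointwise bound $e^{\lambda y}\leq \frac{b-y}{b-a}e^{\lambda a}+\frac{y-a}{b-a}e^{\lambda b}$ on $[a,b]$; taking expectations and using $\mathbb{E}\sbra*{Y}=0$ removes the $y$-dependence and yields $\mathbb{E}\sbra*{e^{\lambda Y}}\leq e^{\psi(\lambda)}$ for an explicit function $\psi$. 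The crux is then to show $\psi(\lambda)\leq \lambda^2(b-a)^2/8$, which I would do via a second-order Taylor expansion of $\psi$ about $\lambda=0$, noting $\psi(0)=\psi'(0)=0$ and verifying the uniform bound $\psi''(\lambda)\leq (b-a)^2/4$ (this reduces to the elementary fact that the variance of a random variable supported in an interval of length $b-a$ is at most $(b-a)^2/4$).

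Finally I would combine the pieces: with $b-a=1$, Hoeffding's lemma gives $\mathbb{E}\sbra*{e^{\lambda Y}}\leq e^{\lambda^2/8}$, so the Chernoff estimate becomes $\Pr\sbra*{\frac{1}{n}\sum_i X_i - p \geq t}\leq \exp\rbra*{-\lambda n t + n\lambda^2/8}$. Optimizing the exponent $-\lambda t + \lambda^2/8$ over $\lambda>0$ yields the minimizer $\lambda=4t$ and the optimal value $-2t^2$, whence the bound $\exp\rbra*{-2nt^2}$. I expect the routine calculus (differentiating $\psi$ and bounding $\psi''$) to be the only delicate bookkeeping, while the overall structure—Markov, factorization, Hoeffding's lemma, optimization—is standard and can be carried out cleanly.
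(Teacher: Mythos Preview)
The paper does not prove this lemma; it is simply quoted from the literature (Hoeffding's original paper) and used as a black box. Your proposal is the standard and correct proof via the exponential-moment method together with Hoeffding's lemma, so there is nothing to compare against.
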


\section*{Acknowledgment}

The authors thank Yupan Liu for pointing out the related works \cite{soleimanifar2022testing,aaronson2022quantum,JW24}, thank John Wright and Mehdi Soleimanifar for communication regarding the related work \cite{soleimanifar2022testing}, thank Zhenhuan Liu for pointing out the related work~\cite{liu2024separation} and raising question~\ref{question-item} in \cref{sec-4201540} about single-copy testers, thank Atsuya Hasegawa for independently asking the same question, and thank Wang Fang for valuable discussion about matrix product states.

The work of Qisheng Wang was supported in part by the Engineering and Physical Sciences Research Council under Grant \mbox{EP/X026167/1} and in part by the Ministry of Education, Culture, Sports, Science and Technology
(MEXT) Quantum Leap Flagship Program (MEXT Q-LEAP) under Grant \mbox{JPMXS0120319794}. 
The work of Zhicheng Zhang was supported by the Sydney Quantum Academy, NSW, Australia.

\addcontentsline{toc}{section}{References}

\bibliographystyle{alphaurl}
\bibliography{main}

\appendix

\section{Sample Lower Bound for Purity Testing}

In this section, we show a matching sample lower bound for testing the purity of mixed states. 
This property is defined as follows. 

\begin{itemize}
    \item Whether the mixed state $\rho$ in Hilbert space $\mathcal{H}$ is pure. Formally, we define
    \begin{equation}
        \textsc{Purity} = \set{\rho \in \mathcal{D}\rbra{\mathcal{H}}}{\tr\sbra{\rho^2} = 1}.
    \end{equation}
\end{itemize}

To prove a sample lower bound for purity testing, we need an upper bound for the success probability of quantum state discrimination, known as the Helstrom-Holevo bound \cite{Hel67,Hol73}.
We use the version given in \cite{Wil13}.

\begin{theorem} [Quantum state discrimination, cf.~{\cite[Section 9.1.4]{Wil13}}]
\label{thm:HH-bound}
    Suppose that $\rho_0$ and $\rho_1$ are two mixed quantum states. 
    Let $\varrho$ be a quantum state such that $\varrho = \rho_0$ or $\varrho = \rho_1$ with equal probability. 
    For any POVM $\Lambda = \cbra{\Lambda_0, \Lambda_1}$,
    the success probability of distinguishing the two cases is bounded by
    \begin{equation}
    p_{\mathrm{succ}} = \frac 1 2 \tr\sbra*{\Lambda_0\rho_0} + \frac 1 2 \tr\sbra*{\Lambda_1\rho_1} \leq \frac{1}{2}\rbra*{1 + d_{\tr} \rbra{\rho_0,\rho_1}}. 
    \end{equation}
    Moreover, the equality holds for some POVM $\Lambda^*$.
\end{theorem}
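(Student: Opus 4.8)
The plan is to reduce $p_{\mathrm{succ}}$ to a statement about the trace distance of $\rho_0$ and $\rho_1$ by way of the Jordan (positive/negative part) decomposition of the Hermitian operator $\rho_0 - \rho_1$. First I would use the POVM constraint $\Lambda_0 + \Lambda_1 = I$ together with $\tr\rbra{\rho_1} = 1$ to rewrite
\begin{equation}
    p_{\mathrm{succ}} = \frac{1}{2}\tr\sbra*{\Lambda_0\rho_0} + \frac{1}{2}\tr\sbra*{\Lambda_1\rho_1} = \frac{1}{2} + \frac{1}{2}\tr\sbra*{\Lambda_0\rbra{\rho_0-\rho_1}},
\end{equation}
so that the problem reduces to an upper bound on $\tr\sbra{\Lambda_0\rbra{\rho_0-\rho_1}}$ over all operators $0 \sqsubseteq \Lambda_0 \sqsubseteq I$.

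Next I would write $\rho_0 - \rho_1 = P - Q$ with $P, Q \sqsupseteq 0$ supported on orthogonal subspaces (obtained by diagonalizing $\rho_0 - \rho_1$ and grouping its eigenvalues by sign), so that $\abs{\rho_0-\rho_1} = P + Q$. Using $\Lambda_0 \sqsupseteq 0$ to drop the term $-\tr\sbra{\Lambda_0 Q}$ and then $\Lambda_0 \sqsubseteq I$, I obtain $\tr\sbra{\Lambda_0\rbra{\rho_0-\rho_1}} \leq \tr\sbra{\Lambda_0 P} \leq \tr\sbra{P}$. The key observation is that $\tr\sbra{\rho_0-\rho_1} = 0$ forces $\tr\sbra{P} = \tr\sbra{Q}$, and hence $\tr\sbra{P} = \tfrac12 \tr\sbra{P+Q} = \tfrac12 \tr\abs{\rho_0-\rho_1} = d_{\tr}\rbra{\rho_0,\rho_1}$. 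Substituting back yields $p_{\mathrm{succ}} \leq \tfrac12\rbra{1 + d_{\tr}\rbra{\rho_0,\rho_1}}$.

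For the matching POVM, I would take $\Lambda_0^*$ to be the orthogonal projector onto $\supp\rbra{P}$ and $\Lambda_1^* = I - \Lambda_0^*$; then $\tr\sbra{\Lambda_0^* Q} = 0$ and $\tr\sbra{\Lambda_0^* P} = \tr\sbra{P}$ by orthogonality of the supports, so both inequalities in the previous step become equalities and the bound is attained. There is no substantial obstacle here: the only point needing a little care is setting up the Jordan decomposition correctly, from which both the identity $\tr\abs{\rho_0-\rho_1} = \tr\sbra{P} + \tr\sbra{Q}$ and the vanishing $\tr\sbra{\Lambda_0^* Q} = 0$ follow immediately; everything else is a one-line manipulation.
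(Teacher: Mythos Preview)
Your proof is correct and is the standard argument for the Helstrom--Holevo bound. Note that the paper does not actually prove this theorem: it is stated with a citation to \cite[Section 9.1.4]{Wil13} and used as a black box, so there is no in-paper proof to compare against.
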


Next, we prove a lower bound for purity testing through a reduction of quantum state discrimination. 

\begin{theorem} \label{thm:purity}
    Testing whether a mixed state is pure or $\varepsilon$-far (in trace distance) requires sample complexity $\mathsf{S}\rbra{\textsc{Purity}_\varepsilon} = \Omega\rbra{1/\varepsilon}$. 
\end{theorem}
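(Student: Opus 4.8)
The plan is to reduce purity testing to a two-state quantum discrimination task and then apply the Helstrom--Holevo bound (\cref{thm:HH-bound}) together with the Fuchs--van de Graaf inequalities (\cref{lemma:fi-td}). Working in a Hilbert space of dimension at least $2$ (the case $\varepsilon > 1/2$ is trivial, so I may assume $\varepsilon \leq 1/2$), I would fix the two instances $\rho_0 = \ketbra{0}{0}$, which is pure and hence lies in $\textsc{Purity} = \textsc{Purity}_\varepsilon^{\textup{yes}}$, and $\rho_1 = (1-\varepsilon)\ketbra{0}{0} + \varepsilon\ketbra{1}{1}$. The first step is to verify that $\rho_1 \in \textsc{Purity}_\varepsilon^{\textup{no}} = \textsc{Purity}^{\geq\varepsilon}$, i.e.\ that $\rho_1$ is $\varepsilon$-far in trace distance from \emph{every} pure state. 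This follows from the standard fact that for a density operator the nearest pure state is its top eigenvector, so $\min_{\sigma \text{ pure}} d_{\tr}(\rho_1, \sigma) = 1 - \lambda_{\max}(\rho_1) = 1 - (1-\varepsilon) = \varepsilon$.

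Next I would observe that any $(2/3,1/3)$-tester $\mathcal{T}$ for $\textsc{Purity}_\varepsilon$ with sample complexity $N$, regarded as the POVM $\cbra{\mathcal{T}, I - \mathcal{T}}$ on $N$ copies, distinguishes $\rho_0^{\otimes N}$ from $\rho_1^{\otimes N}$: completeness gives $\tr\sbra{\mathcal{T}\rho_0^{\otimes N}} \geq 2/3$ and soundness gives $\tr\sbra{(I-\mathcal{T})\rho_1^{\otimes N}} \geq 2/3$, so the success probability of guessing ``pure'' exactly when $\mathcal{T}$ accepts is at least $\tfrac12\cdot\tfrac23 + \tfrac12\cdot\tfrac23 = \tfrac23$. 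By \cref{thm:HH-bound}, this forces $d_{\tr}(\rho_0^{\otimes N}, \rho_1^{\otimes N}) \geq 1/3$.

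Then I would bound $d_{\tr}(\rho_0^{\otimes N}, \rho_1^{\otimes N})$ from above in terms of $N$ and $\varepsilon$. Since $\rho_0$ is pure, $\mathrm{F}(\rho_0, \rho_1) = \sqrt{\bra{0}\rho_1\ket{0}} = \sqrt{1-\varepsilon}$, hence by multiplicativity of fidelity under tensor products $\mathrm{F}(\rho_0^{\otimes N}, \rho_1^{\otimes N}) = (1-\varepsilon)^{N/2}$, and \cref{lemma:fi-td} gives $d_{\tr}(\rho_0^{\otimes N}, \rho_1^{\otimes N}) \leq \sqrt{1 - (1-\varepsilon)^N}$. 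Combining this with the lower bound $1/3$ yields $(1-\varepsilon)^N \leq 8/9$, hence $N \geq \ln(9/8)\,/\,\bigl(-\ln(1-\varepsilon)\bigr)$; since $-\ln(1-\varepsilon) \leq 2\varepsilon$ for $\varepsilon \leq 1/2$, this is $\Omega(1/\varepsilon)$.

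I do not anticipate a serious obstacle: the hard instance is the routine ``almost pure versus exactly $\varepsilon$-far'' pair, and the rest is a short fidelity computation. The only point requiring a little care is the first step --- checking that $\rho_1$ is $\varepsilon$-far from the entire set of pure states, not merely from $\ketbra{0}{0}$ --- which is settled by the top-eigenvector characterization of the nearest pure state in trace distance.
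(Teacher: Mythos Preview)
Your proposal is correct and follows essentially the same approach as the paper's proof: the same hard pair $\rho_0 = \ketbra{0}{0}$ and $\rho_1 = (1-\varepsilon)\ketbra{0}{0} + \varepsilon\ketbra{1}{1}$, the same reduction to Helstrom--Holevo (\cref{thm:HH-bound}), and the same fidelity computation combined with Fuchs--van de Graaf (\cref{lemma:fi-td}). If anything you are slightly more careful than the paper, which only states that $\rho_\varepsilon$ is $\varepsilon$-far from $\rho_0$, whereas you explicitly verify (via the top-eigenvector characterization) that $\rho_1$ is $\varepsilon$-far from \emph{every} pure state and hence genuinely belongs to $\textsc{Purity}_\varepsilon^{\textup{no}}$.
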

\begin{proof}
    For $\varepsilon \in \rbra{0, 1}$, we consider the problem of distinguishing two mixed states $\rho_0$ and $\rho_\varepsilon$, where 
    \begin{equation}
    \rho_{x} = \rbra{1-x} \ketbra{0}{0} + x \ketbra{1}{1}.
    \end{equation}
    Note that $\rho_0$ is a pure state and $\rho_\varepsilon$ is $\varepsilon$-far (in trace distance) from $\rho_0$.
    
    Suppose there is a tester for purity with sample complexity $S$, then the tester can be used to distinguish between $\rho_0^{\otimes S}$ and $\rho_\varepsilon^{\otimes S}$ with success probability $p_{\textup{succ}} \geq 2/3$. 
    On the othe hand, by \cref{thm:HH-bound}, we have
    \begin{equation} \label{eq:p_succ}
        p_{\textup{succ}} \leq \frac{1}{2}\rbra*{1+ d_{\tr} \rbra{\rho_0^{\otimes S},\rho_\varepsilon^{\otimes S}}}.
    \end{equation}
    By the Fuchs–van de Graaf inequalities (\cref{lemma:fi-td}), we have 
    \begin{equation} \label{eq:FvdG}
        d_{\tr} \rbra{\rho_0^{\otimes S},\rho_\varepsilon^{\otimes S}} \leq \sqrt{1 - \mathrm{F}\rbra*{\rho_0^{\otimes S}, \rho_\varepsilon^{\otimes S}}^2},
    \end{equation}
    where $\mathrm{F}\rbra{\rho, \sigma} = \tr\sbra{\sqrt{\sqrt{\sigma}\rho\sqrt{\sigma}}}$ is the fidelity between two mixed states.
    A simple calculation shows that
    \begin{equation} \label{eq:fidelity}
        \mathrm{F}\rbra*{\rho_0^{\otimes S}, \rho_\varepsilon^{\otimes S}}^2 = \mathrm{F}\rbra*{\rho_0, \rho_\varepsilon}^{2S} = \rbra{1 - \varepsilon}^S.
    \end{equation}
    By Equations \eqref{eq:p_succ}, \eqref{eq:FvdG} and \eqref{eq:fidelity}, we have
    \begin{equation}
        p_{\textup{succ}} \leq \frac 1 2 \rbra*{1 + \sqrt{1 - \rbra*{1 - \varepsilon}^S}},
    \end{equation}
    which, together with $p_{\textup{succ}} \geq 2/3$, gives $S = \Omega\rbra{1/\varepsilon}$. 
\end{proof}

The lower bound in \cref{thm:purity} matches the upper bound in \cite[Section 4.2]{MdW16}, which means that $\mathsf{S}\rbra{\textsc{Purity}_\varepsilon} = \Theta\rbra{1/\varepsilon}$.

\end{document}